\title{Integer Programming with GCD Constraints}
\author[1,2]{R\'emy Defossez}
\author[3]{Christoph Haase}
\author[1\ \href{mailto:alessio.mansutti@imdea.org}{\Letter}]{Alessio Mansutti}
\author[4]{Guillermo A.~Perez}
\affil[1]{IMDEA Software Institute, Spain}
\affil[2]{\'Ecole Normale sup\'erieure, France}
\affil[3]{University of Oxford, UK}
\affil[4]{University of Antwerp, Belgium}
\date{}
\begin{document}

\maketitle

\begin{abstract}
  \noindent
We study the non-linear extension of integer programming with greatest
common divisor constraints of the form $\gcd(f,g) \sim d$, where $f$
and $g$ are linear polynomials, $d$ is a positive integer, and $\sim$
is a relation among $\leq, =, \neq$ and $\geq$.
We show that the feasibility problem for these systems is in \np,
and that an optimal solution minimizing a linear objective function,
if it exists, has polynomial bit length.
To show these
results, we identify an expressive fragment of the existential theory
of the integers with addition and divisibility that admits solutions
of polynomial bit length. It was shown by Lipshitz [\emph{Trans.\ Am.\
Math.\ Soc.}, 235, pp.~271--283, 1978] that this theory adheres to a
local-to-global principle in the following sense: a formula $\Phi$ is
equi-satisfiable with a formula $\Psi$ in this theory such that $\Psi$
has a solution if and only if $\Psi$ has a solution modulo every prime
$p$. We show that in our fragment, only a polynomial number of primes
of polynomial bit length need to be considered, and that the solutions
modulo prime numbers can be combined to yield a solution to $\Phi$ of
polynomial bit length. As a technical by-product, we establish a
Chinese-remainder-type theorem for systems of congruences and
non-congruences showing that solution sizes do not depend on
the magnitude of the moduli of non-congruences.

\end{abstract}

\clearpage

\section{Background and overview of main results}\label{sec:introduction}
Integer programming, the problem of finding an (optimal)
solution over the integers to a systems of linear
inequalities $A \cdot \vec x \le \vec b$, is a central
problem computer science and operations research.
Feasibility of its 0-1 variant constituted one of Karp's 21
seminal \np-complete problems~\cite{Karp72}. In the 1970s,
membership of the unrestricted problem in \np was
established independently by Borosh and Treybig~\cite{BT76},
and von zur Gathen and Sieveking~\cite{vzGS78}. To show
membership in \np, both groups of authors established a
small witness property: if an instance of integer
programming is feasible then there is a solution whose bit
length is polynomially bounded in the size of the instance.
Reductions to integer programming have become a standard
tool to show membership of numerous problems in \np. In this
paper, we study a non-linear generalization of integer
programming which additionally allows to constrain the
numerical value of the greatest common divisor (GCD) of two
linear terms.

Throughout this paper, denote by $\RR$ the set of real
numbers, $\ZZ$ the set of integers, $\NN$ the set of
non-negative integers including zero, and $\PP$ the set of
all prime numbers. For $R\subseteq \RR$, denote by $R_+
\coloneqq \{ r \in R : r > 0 \}$. Formally, an instance of
integer programming with GCD constraints (IP-GCD) is a
mathematical program of the following form:
\begin{align*}
  \text{minimize}~~~ & \vec c^\intercal \vec x\\
  \text{subject to}~~~ & A \cdot \vec x \le \vec b\\
                       & \gcd(f_i(\vec{x}),g_i(\vec{x})) \sim_i d_i, & 1 \le i \le k \,,
\end{align*}
where $\vec{c} \in \ZZ^n$, $A\in \ZZ^{m\times n}$, $\vec
b\in \ZZ^m$, $d_i \in \pZZ$, $\vec x=(x_1,\ldots,x_n)$ is a
vector of unknowns, the $f_i$ and $g_i$ are linear
polynomials with integer coefficients, and ${\sim_i} \in
\{{\leq},{=},{\neq},$ ${\geq}\}$.
%, and every $y_i$ and $z_i$ is some $x_j$.
We call $\vec a \in \ZZ^n$ a solution if setting $\vec x =
\vec a$ respects all constraints; $\vec a$ is an optimal
solution if the value of $\vec c^\intercal \vec a$ is
minimal among all solutions. We will first and foremost
focus on the feasibility problem of IP-GCD and discuss
finding optimal solutions later on in this paper. The main
result of this paper is to establish a small witness
property for IP-GCD and consequently membership of the
problem in \np.%
\begin{theorem}\label{thm:small-model} If an instance of
  IP-GCD is feasible then it has a solution (and an optimal
  solution, if one exists) of polynomial bit length. Hence,
  IP-GCD feasibility is \np-complete.
\end{theorem}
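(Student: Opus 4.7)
The strategy is to reduce IP-GCD feasibility to satisfiability of a positive-existential formula over $(\ZZ, +, {\mid})$, and then to invoke a local-to-global principle in the style of Lipshitz to extract a small global witness from small modular witnesses. \np-hardness is inherited from plain integer programming, so the content of the theorem lies entirely in the small-witness property.

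The first step is to rewrite every GCD atom $\gcd(f,g) \sim d$ as a positive-existential combination of linear and divisibility constraints. Using B\'ezout's identity, $\gcd(f,g) = d$ becomes $d \mid f \wedge d \mid g \wedge \exists u, v \colon u f + v g = d$; the case $\gcd(f,g) \geq d$ is handled by $\exists e \geq d \colon e \mid f \wedge e \mid g$ (with the degenerate case $f = g = 0$ excluded separately, since $d \geq 1$); $\gcd(f,g) \leq d$ becomes $(f = 0 \wedge g = 0) \vee \exists u, v \colon 1 \leq u f + v g \leq d$; and $\gcd(f,g) \neq d$ is the disjunction of $\neg(d \mid f \wedge d \mid g)$ with $\exists e > d \colon e \mid f \wedge e \mid g$. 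All newly introduced witnesses are tied to $f, g, d$ by equations or divisibilities and therefore inherit any bit-length bound on $\vec x$.

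The second step is to apply Lipshitz's transformation, producing an equi-satisfiable formula $\Psi$ in the same fragment that is satisfiable over $\ZZ$ iff it is satisfiable modulo every prime. Within the structured fragment produced by step one, I would then isolate a polynomial-sized set of \emph{relevant} primes of polynomial bit length: the ``positive'' atoms $d \mid t$ only constrain residues at the prime divisors of the constants appearing in the input, while the ``negative'' atoms (non-divisibilities, strict gcd bounds) only need to forbid residues modulo a polynomial number of additional small primes, isolated by a pigeonhole and prime-counting argument.

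The third step is to combine the local solutions. Using the announced Chinese-remainder-type theorem for mixed systems of congruences and non-congruences, I would glue the residues at the chosen primes into a single integer vector $\vec a$ whose bit length depends only on the moduli of the \emph{positive} congruences. I would then shift $\vec a$ within its residue class, using the classical small-solution bounds of Borosh--Treybig and von zur Gathen--Sieveking applied to the integer kernel of the linear part, so that $A \vec x \leq \vec b$ is also satisfied; optimization then follows from feasibility by binary search on $\vec c^\intercal \vec x$. The hardest part is precisely this CRT-for-non-congruences step: arguing that every forbidden residue class can be avoided without the moduli of the non-congruences ever entering the size bound, and dovetailing this with the local-to-global transformation and the kernel shift so that polynomial bit length is preserved throughout.
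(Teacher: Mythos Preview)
Your high-level plan---translate to divisibility constraints, apply a Lipshitz-style local-to-global principle, and glue local solutions via a CRT that tolerates non-congruences---matches the paper's architecture, but two of the steps as you describe them would not deliver a polynomial bound.

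First, invoking ``Lipshitz's transformation'' generically is not enough. That transformation puts a system into increasing form, but in general the resulting system can force solutions of exponential bit length (the paper's family $\Phi_n$ is increasing yet has smallest solution $\geq 2^{2^n}$). The paper's actual leverage is a new notion of \emph{$r$-increasing} form together with the bound $(\bitlength{\Phi} + \cdots)^{O(r)}$ on solutions: it shows that the specific divisibility systems arising from IP-GCD can be made \emph{$3$-increasing}, and this is not automatic---it takes a dedicated algorithm (Step~II in Section~\ref{sec:trans-into-3-inc-sys}) that detects when some divisibility module meets $\ZZ$ nontrivially and branches on the finitely many values the offending variables can then take. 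Your sketch has no mechanism that keeps $r$ bounded, so the local-to-global step would only recover the known \nexptime{} bound, not \np.

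Second, your order of operations---solve the divisibilities first, then ``shift within the residue class'' to land inside $\{A\vec x \le \vec b\}$---does not work. After gluing, the solution lives in a coset of a lattice of period $\prod p^{\mu_p+1}$; shifting by lattice vectors preserves the constant-modulus congruences but \emph{not} the variable-left-hand-side divisibilities $e \mid f$, $e \mid g$ that your own encoding of $\gcd(f,g)\ge d$ introduces, and there is no reason the coset meets the polyhedron at a point of polynomial size. The paper does the opposite: it first eliminates $A\vec x \le \vec b$ by parameterizing the integer polyhedron via von~zur~Gathen--Sieveking, which also forces every polynomial in the remaining GCD constraints to have coefficients and constant of a single sign. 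That sign condition is precisely what makes the increasingness-forcing step terminate (variables in a positive-coefficient left-hand side that must divide a constant are bounded) and what drives the small-witness argument. A related point: your B\'ezout encoding $uf + vg = d$ is bilinear and hence outside the linear divisibility fragment; the paper linearizes it to $c\mid f \land c\mid g \land f\mid w \land g\mid w+c$, and it is this specific shape (together with the analogous one for $\gcd \geq c$) that produces the three-level variable partition $\vec z \incord \vec y \incord \vec w$ underlying $3$-increasingness.
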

We remark that IP-GCD feasibility is \np-hard even for a
single variable, in contrast to classical integer
programming, which is polynomial-time decidable for any
fixed number of variables~\cite{Lenstra83}. It is shown
in~\cite[Theorem~5.5.7]{BachShallit96} that deciding a
univariate system of non-congruences ${x \not\equiv a_i
\pmod {m_i}}$, $1\le i\le k$, is an \np-hard problem.
Hardness of IP-GCD then follows from observing that a
non-congruence $x \not\equiv a \pmod m$ is equivalent to
$\gcd(x-a,m) \neq m$.

\subsection{The NP upper bound at a
glance}\label{ssec:np-at-a-glance}

Even decidability of the IP-GCD feasibility problem is far
from obvious, but can be approached by observing that
deciding a GCD constraint is a \emph{``Diophantine problem
{`in disguise'}''}~\cite{Koen14}. It follows from B\'ezout's
identity that $\gcd(x,y)=d$ if and only if there are
$a,b,u,v\in \ZZ$ such that ${u\cdot d=x}$, ${v\cdot d=y}$,
and $d=a\cdot x+b\cdot y$. While arbitrary systems of
quadratic Diophantine equations are
undecidable~\cite{Mat70}, we see that the unknowns $a,b,u,v$
are only used to express divisibility properties. Hence,
those equations can equivalently be expressed in the
existential fragment of the first-order theory of the
structure $\Ldiv=(\ZZ,0,1,+,\le,\divides)$, where $m \div n$
holds whenever there exists a unique\footnote{This
definition implies that $0 \div n$ does not hold for any
$n\in \ZZ$, $0$ included. Throughout this paper, we assume
wlog.\ that $f\neq 0$ for any divisibility $f\div g$. For
GCD, we instead use the standard interpretation where
$\gcd(0,n) = n$ for any $n \in \mathbb{N}$; this mismatch
between the interpretation of divisibility and GCD is for
technical convenience only.} integer $q$ such that $n = q
\cdot m$:
\[
  u \cdot d=x \wedge v\cdot d=y \wedge d=a\cdot x+b\cdot y \iff
  \exists s\, \exists t\colon
  d \divides x ~\land~
  d \divides y ~\land~
  x \divides s ~\land~
  y \divides t ~\land~
  d = s + t\,.
\]

The full first-order theory of $\Ldiv$ is easily seen
to be undecidable~\cite{Robinson49}. However, decidability of its
existential fragment was independently shown by
Lipshitz~\cite{Lipshitz78,Lip81} and
Bel'tyukov~\cite{Beltyukov80}, and later also studied by van
den Dries and Wilkie~\cite{DW03}, Lechner et
al.~\cite{LechnerOW15}, and
Starchak~\cite{Starchak21B,Starchak21C}. The precise
complexity of the existential fragment is a long-standing
open problem. It is known to be \np-complete for a fixed
number of variables~\cite{Lip81,LechnerOW15}, and membership
in \nexptime has only more recently been
established~\cite{LechnerOW15}. In particular, the bit
length of smallest solutions can be
exponential~\cite{LechnerOW15}, as demonstrated by the
family of formulae $\Phi_n \coloneqq {x_n>1} \land
\bigwedge_{i=0}^{n-1} {x_i > 1} \land {x_i\divides x_{i+1}}
\land {x_i+1\divides x_{i+1}}$, for which any solution
satisfies $x_n \geq 2^{2^{n}}$. From those results, it is
possible to derive that IP-GCD feasibility is decidable in
\nexptime. However, IP-GCD does not require the full
expressive power of $\Ldiv$. In fact, the first-order theory
of $\Ldiv$ can be seen to be equivalent to the theory of
$(\ZZ,0,1,+,\le,\gcd)$ in which the divisibility predicate
is replaced by a full ternary relation $\gcd(x,y) = z$.  In
contrast, IP-GCD only requires countably many binary
predicates $(\gcd(\cdot, \cdot)=d)_{d\in \pZZ}$ and
$(\gcd(\cdot, \cdot)\ge d)_{d\in \pZZ}$ with the obvious
interpretation. Several expressiveness results concerning
(fragments of) the existential theory of the structure
$(\ZZ,0,1,+,\le,(\gcd(\cdot,\cdot)=d)_{d \in \pZZ})$ have
recently been provided by Starchak~\cite{Starchak21A}. The
question of whether this theory admits solutions of
polynomial bit length is explicitly stated as open
in~\cite{Starchak21A}. \Cref{thm:small-model} answers this
question positively.

Our starting point for establishing \Cref{thm:small-model}
is Lipshitz'~\cite{Lipshitz78,Lip81} decision procedure for
the existential theory of $\Ldiv$ that was later refined by
Lechner et al.~\cite{LechnerOW15}. Given a system of
divisibility constraints $\Phi(\vec x) \coloneqq \bigwedge_{i=1}^m
f_i(\vec x) \div g_i(\vec x)$ for linear polynomials~$f_i$
and $g_i$, Lipshitz' algorithm first computes from $\Phi$ an
equi-satisfiable formula $\Psi$ in so-called
\emph{increasing form}. Informally speaking, $\Psi$ is in
increasing form whenever $\Psi$ is a system of
divisibility constraints augmented with constraints imposing a total
(semantic) ordering on the values of the variables in
$\Psi$, and whenever the largest variable with respect to
that ordering occurring in any non-trivial divisibility~$f
\div g$ implied by $\Psi$ only appears in the right-hand
side~$g$. For instance, the system $x < y \land x + 1 \div y
- 2$ is in increasing form, but adding $x + 1 \div x + y$
results in a non-increasing system, since $x + 1 \div y - 2
\,\land\, x + 1 \div x + y$ implies $x+1 \div x + y - (y -
2)$, i.e., $x+1 \div x + 2$. Such implied divisibilities are
captured in~\cite{LechnerOW15} by the notion of a
\emph{divisibility module} that we later formalize
in~\Cref{sec:intro-local-global}. One conceptual
contribution of this paper is to identify a weaker notion of
formulae in increasing form that is syntactic in nature, as
it does not explicitly enforce a particular ordering among
the variables. Informally speaking, a system of
divisibility constraints $\Psi$ is \emph{$r$-increasing} whenever
there exists a partial order $\incord$ over the free
variables of $\Psi$ whose longest chain is of length at most
$r-1$, and for any non-trivial divisibility $f \div g$
implied by $\Psi$, the set of variables occurring in $f \div
g$ has a $\incord$-maximal variable that only appears in the
right-hand side $g$. Referring to the previous example, we
observe that $x + 1 \div y - 2$ is $2$-increasing, witnessed
by the (total) order $x \incord y$. This concept is
fundamental for establishing~\Cref{thm:small-model}, since,
as we discuss below, for fixed $r$, any satisfiable
$r$-increasing formula $\Psi$ of $\Ldiv$ has a smallest
solution of polynomial bit length, and \Ldiv formulae
resulting from IP-GCD instances are $3$-increasing.

Returning to Lipshitz' approach, the key property of
existential $\Ldiv$ formulae in increasing form is that they
enable appealing to a local-to-global property: Lipshitz
shows that any $\Phi$ in increasing form has a solution over
$\ZZ$ if and only if $\Phi$ has a solution in the $p$-adic
integers $\ZZ_p$ for every prime $p$ belonging to a finite
set of difficult primes $\pzero(\Phi)$, the other primes
being ``easy'' in the sense that a $p$-adic solution for
them always exists and that they do not influence the bit
length of the minimal solution of $\Phi$.  In order to
combine the $p$-adic solutions to an integer solution of
$\Phi$, Lipshitz invokes (a generalized version of) the
Chinese Remainder Theorem (CRT):
\begin{theorem}[CRT]
  \label{thm:crt}
  \label{proposition:standard-crt}
  Let $M=\{m_1,\ldots,m_k\},~b_1,\ldots,b_k\in \ZZ$ be such
  that $m_i$ and $m_j$ are coprime for all $1\le i\neq j\le
  k$. The system of simultaneous congruences $x \equiv b_i
  \bmod m_k$, $1\le i\le k$, has a solution, and all
  solutions lie on the shifted lattice $a+\ZZ \cdot \setprod
  M$ for some $a\in \ZZ$.
\end{theorem}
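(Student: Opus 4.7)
The plan is to prove existence of a solution by induction on $k$, and then to characterize the solution set via a standard divisibility argument leveraging pairwise coprimality.

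For existence, the base case $k=1$ is immediate: take $x = b_1$. For the inductive step, assume we have a solution $a'$ to the first $k-1$ congruences, valid modulo $N_{k-1} \coloneqq m_1 \cdots m_{k-1}$. Since each $m_i$ for $i < k$ is coprime to $m_k$, and coprimality is multiplicative, $\gcd(N_{k-1}, m_k) = 1$. By B\'ezout's identity, there exist $u, v \in \mathbb{Z}$ with $u \cdot N_{k-1} + v \cdot m_k = 1$. I would then exhibit the explicit solution
\[
  a \coloneqq b_k \cdot u \cdot N_{k-1} + a' \cdot v \cdot m_k,
\]
and verify directly that $a \equiv a' \equiv b_i \pmod{m_i}$ for $i < k$ (since $v \cdot m_k \equiv 1 \pmod{m_i}$ as $v \cdot m_k = 1 - u \cdot N_{k-1}$ and $m_i \mid N_{k-1}$), and that $a \equiv b_k \pmod{m_k}$ analogously.

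For the description of the solution set, I would show that if $x, y$ are both solutions, then $m_i \mid (x - y)$ for every $i$. Then the key step is to lift pairwise coprimality to a conclusion about the product: since the $m_i$ are pairwise coprime, $m_1 m_2 \mid (x-y)$, and iterating (each new $m_j$ is coprime to the previously-accumulated product, again by multiplicativity of coprimality), we obtain $\prod M \mid (x - y)$. Conversely, for any solution $a$ and any $t \in \mathbb{Z}$, the integer $a + t \cdot \prod M$ also satisfies each congruence. Together these show the solution set equals $a + \mathbb{Z} \cdot \prod M$.

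There is no real obstacle here since this is the classical CRT; the only place one has to be a bit careful is to check that pairwise coprimality (rather than the stronger collective coprimality) suffices to conclude $\prod M \mid (x-y)$, which follows cleanly from the iterated application of the fact that if $\gcd(a,b) = 1$ and $a \mid n$, $b \mid n$, then $ab \mid n$.
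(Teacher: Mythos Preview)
Your proof is correct and follows the classical textbook approach to the Chinese Remainder Theorem. However, the paper does not actually provide its own proof of this statement: it is stated as the well-known CRT and invoked as a black box throughout (e.g., in the base case of the proof of Theorem~\ref{theorem:local-to-global} and in the proof of Theorem~\ref{thm:mixed-crt}). There is thus nothing to compare against.
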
       
\noindent
Here and below, for a finite set $M \subseteq \ZZ$, we
denote by $\setprod M$ the product of all elements in $M$.
It follows that the smallest non-negative solution of the
system of congruences is of polynomial bit length. 
As a key
technical contribution of this paper, required to establish
\Cref{thm:small-model}, we develop the following
Chinese-remainder-style theorem that includes additional
non-congruences and yields a bound for the smallest solution
that is, in certain settings, substantially better than the
one that can be achieved by the CRT. For a finite set $S$,
we write $\card{S}$ for its cardinality.

\begin{restatable}{theorem}{ThmMixedCRT}\label{thm:mixed-crt}
  Let $d\in \pZZ$, $M \subseteq \pZZ$ finite, and
  $Q\subseteq \PP$ be a non-empty finite set of primes such
  that the elements of $M \cup Q$ are pairwise coprime, $M
  \cap Q = \emptyset$, and $\min(Q) > d$.
  Consider the univariate system of simultaneous congruences
  and non-congruences $\mcS$ defined by
  \begin{align*}
    x \equiv b_m & \pmod m & m \in M \phantom{\,.}\\
    x \not\equiv c_{q,i} & \pmod q & q \in Q,~1 \le i \le d\,.
  \end{align*}
  Then, for every $k\in \ZZ$, $\mcS$ has a solution in the
  interval $\left\{k,\ldots, k+ \setprod M \cdot \ecrtf(Q,d)
  \right\}$, where
  \begin{center}
    \vspace{-3pt}
  $
    \ecrtf(Q,d) \coloneqq \big((d+1) \cdot \card{Q}\big)^{4 (d+1)^2 (3 + \ln \ln (\card{Q}+1))}.
  $
  \end{center}
\end{restatable}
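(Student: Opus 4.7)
The plan is to first apply the standard CRT (\Cref{thm:crt}) to reduce $\mcS$ to a purely non-congruential system. The congruences $x \equiv b_m \pmod m$ pin $x$ to a shifted lattice $a + \setprod M \cdot \ZZ$ for some $a \in \ZZ$. Setting $x = a + t \cdot \setprod M$ transforms the target interval for $x$ into an interval of length $\ecrtf(Q,d)$ for $t$, and uses that $\setprod M$ is invertible modulo each $q \in Q$ (by the hypothesis $M \cap Q = \emptyset$ together with pairwise coprimality) to rewrite every non-congruence $x \not\equiv c_{q,i} \pmod q$ as an equivalent one $t \not\equiv c'_{q,i} \pmod q$. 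The task thus reduces to a generalized Jacobsthal-type statement: in every interval of $\ecrtf(Q,d)$ consecutive integers, some $t$ avoids, for each prime $q \in Q$, a prescribed set of at most $d$ residues modulo $q$.

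My approach to the Jacobsthal-type bound is a Brun-style combinatorial sieve. Set $n = |Q|$ and $L = \ecrtf(Q,d)$, and count the good $t \in [0, L)$ by Bonferroni-truncated inclusion-exclusion over the ``bad events'' $E_q = \{t : t \equiv c'_{q,i} \pmod q \text{ for some } i \le d\}$. A naive union bound fails since $\sum_{q \in Q} d/q$ can be of order $d \ln \ln n$ by Mertens-type estimates; but truncating the alternating sum at an appropriate depth $r$ yields a lower bound of the form
\[
  \#\{\text{good } t \in [0, L)\} \;\ge\; L \prod_{q \in Q}\bigl(1 - \tfrac{d}{q}\bigr) \;-\; L \cdot \!\!\sum_{\substack{S \subseteq Q\\|S| = r+1}} \prod_{q \in S} \tfrac{d}{q} \;-\; O(n^{r}),
\]
where the last term accounts for boundary rounding effects in each sub-sieve. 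The main term $\prod(1 - d/q)$ is bounded below by a Mertens-style estimate using $\min(Q) > d$, and choosing $r$ of order $(d+1)^2(3 + \ln\ln(n+1))$ makes both error terms strictly smaller than the main term, leaving a positive count of good $t$.

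The main obstacle will be tuning the Bonferroni depth $r$ and choice of $L$ to match the stated exponent $4(d+1)^2(3+\ln\ln(n+1))$ exactly. Two subtleties require care: (i) $Q$ may be an arbitrary sparse subset of the primes rather than a complete initial segment, so the Mertens-style lower bound on $\prod_{q \in Q}(1 - d/q)$ must be obtained via a worst-case argument (placing the primes of $Q$ as small as the hypothesis $\min(Q) > d$ allows) rather than a direct application of Mertens' theorem; and (ii) the dependence on $d$ must be propagated correctly through the sieve since each prime contributes $d$ forbidden residues rather than one. A natural alternative to consider is a direct iterative restriction of the interval to a sub-arithmetic-progression avoiding the forbidden residues prime-by-prime, but this route naively yields a bound of order $\setprod Q$, so some form of sieving appears essential for the quasi-polynomial bound $\ecrtf(Q,d)$.
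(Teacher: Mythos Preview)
Your proposal is correct and follows essentially the same approach as the paper: reduce via CRT to a pure non-congruence problem, then apply Brun's pure sieve with truncation depth of order $(d+1)^2(3+\ln\ln(|Q|+1))$, bounding the main term $\prod_{q\in Q}(1-d/q)$ from below by a Mertens-type estimate exploiting $\min(Q)>d$. The only cosmetic difference is that the paper first abstracts the sieve as a standalone lemma (\Cref{lem:extended-brun}) with an ``independence/density'' interface and applies it to $A=[k,k+z]\cap S_M$ directly rather than substituting $x=a+t\cdot\setprod M$, but the underlying combinatorics and the calibration of the truncation depth against the Mertens factor are the same.
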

\noindent
The strength of \Cref{thm:mixed-crt} can be seen as follows.
While it is possible to deduce from the classical CRT that
the solutions of $\mcS$ are periodic with period $\setprod
Q\cdot \setprod M$, we have ${\setprod Q \gg \ecrtf(Q,d)}$
as the magnitude of the primes in $Q$ grows, as in
particular $\ecrtf(Q,d)$ only depends on $\card Q$ and $d$.
We further discuss some results used to establish
\Cref{thm:mixed-crt} in \Cref{sec:intro-crt} below.

Another key technical contribution towards establishing
\Cref{thm:small-model} is to propose a refinement of the set
of difficult primes~$\pzero(\Phi)$. The definition of this
set was changed from~\cite{Lipshitz78} to~\cite{LechnerOW15}
to decrease its bit length from doubly to singly
exponential. We refine the definition once more, and show
that we obtain a set of polynomially many primes of
polynomial bit length.
%a new estimate on the cardinality of $\pzero(\Psi)$
%\am{It's a bit more complicated than this, the definition
%also changed.}\ch{I'd need more guidance to resolve this.}.
%Lipshitz showed that $\pzero(\Psi)$ has cardinality at most
%doubly exponential in the bit length of~$\Psi$; a bound
%that was later improved to singly exponential
%in~\cite{LechnerOW15}. We show that $\#\pzero(\Psi)$ can in
%fact be polynomially bounded and that $\pzero(\Psi)$ only
%consists of primes of polynomial bit length.
This result is achieved by an in-depth analysis of how the
integer solution for $\Phi$ is constructed starting from the
\mbox{$p$-adic} solutions. The bound on $\pzero(\Phi)$ also
enables us to derive an \np algorithm for increasing
formulae. It is shown in~\cite{GuepinHW19} that, for every
prime $p\in \PP$, the existential theory of the $p$-adic
integers with linear $p$-adic valuation constraints is
decidable in \np. Deciding an increasing $\Phi$ thus reduces
to a polynomial number of independent queries to an \np
algorithm and is hence in \np. It is worth mentioning that
the family of formulae $\Phi_n$ above is increasing only for
the ordering $x_1 \incord x_2 \incord \dots \incord x_n$
(i.e., it is $n$-increasing but not $(n-1)$-increasing).
Hence, even though the smallest solution of $\Phi_n$ has
exponential bit length, our bound on $\pzero(\Phi)$ enables
us to witness the \emph{existence} of a solution in~\np.

Moreover, this bound leads to a further main result of this
paper, showing that we can construct an integer solution for
$\Phi$ from the relevant $p$-adic solutions that is
asymptotically smaller when compared to the existing
local-to-global approaches~\cite{Lipshitz78,LechnerOW15}.
These improved bounds also crucially rely
on~\Cref{thm:mixed-crt}. To formally state this result, we
require some further definitions. Given $\vec v\in \ZZ^d$,
denote by $\norm{\vec v}$ the maximum absolute value of the
components of $\vec v$, and by $\bitlength{\cdot}$ the bit
length encoding an object under some reasonable standard
encoding in which numbers are encoded in binary.
Furthermore, for a system of divisibility constraints
$\Phi\coloneqq\bigwedge_{i=1}^m f_i \div g_i$, denote by
$\pdiff(\Phi)$ the set of all primes that are less or equal
than $m$ or that divide some number occurring in $\Phi$. 
%\ch{Shall we state that $\pdiff(\Psi)\subseteq
%  \pzero(\Psi)$? Not strictly necessary...}
For $p\in \PP$ and $a\in \ZZ \setminus \{ 0\}$, we write
$v_p(a)$ for the largest $k\in \NN$ such that $a = p^k b$
for some $b\in \ZZ$, and $v_p(0)\coloneqq \infty$. We say
that $\Phi$ has a solution modulo $p$ if %whenever
there is some $\vec b_p \in \ZZ^d$ such that $f_i(\vec
b_p)\neq 0$ and $v_p(f_i(\vec b_p)) \le v_p(g_i(\vec b_p))$
for all $1\le i\le m$. Note that every integer solution is a
solution modulo $p$ for all $p \in \PP$, and therefore if
$\Phi$ does not have a solution modulo some prime~$p$, then
$\Phi$ is unsatisfiable over~$\ZZ$.
The following theorem now gives
bounds on the bit length of an integer solution of $\Phi$ in
terms of solutions modulo $p$ for primes in $\pdiff(\Phi)$.
\begin{restatable}{theorem}{TheoremLocalToGlobal}\label{theorem:local-to-global}
  Let $\Phi(\vec x)$ be an $r$-increasing system of
  divisibility constraints %over $d$ variables
  such that $\Phi$ has a solution $\vec b_p \in \ZZ^d$
  modulo~$p$ for every prime $p \in \pdiff(\Phi)$. Then
  $\Phi$ has infinitely many solutions, and a solution $\vec
  a \in \NN^d$ such that $\maxbl{\vec a}\le
  (\bitlength{\Phi} + \max\{ \maxbl{\vec b_p} : p \in
  \pdiff(\Phi) \})^{O(r)}$.
\end{restatable}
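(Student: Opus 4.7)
The plan is to argue by induction on $r$, processing the variables of $\Phi$ in a bottom-up linear extension of a partial order $\incord$ witnessing that $\Phi$ is $r$-increasing. At each step a variable $x_j$ becomes $\incord$-maximal in the remaining formula, so by the $r$-increasing property every non-trivial divisibility $f \div g$ in which it appears has $x_j$ only on the right-hand side. Hence once the variables below $x_j$ have been fixed, the divisibilities touching $x_j$ reduce to univariate constraints of the form $f^\star \div (c\, x_j + h^\star)$ with $f^\star, h^\star \in \ZZ$. From each $p$-adic solution $\vec b_p$ I would read off the residue of $x_j$ modulo a suitable power $p^{e_p}$, where $e_p$ is controlled by the valuation $v_p(f^\star)$ demanded on the right-hand side.

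The next step is to invoke \Cref{thm:mixed-crt} on a univariate system combining these congruences over all primes $p \in \pdiff(\Phi)$ together with non-congruences that rule out residues making some $f^\star$ vanish modulo $p$ (necessary because every divisibility presupposes a nonzero left-hand side) or violating $v_p(f^\star) \le v_p(c\, x_j + h^\star)$. Using the refinement of $\pzero$ sketched above the theorem, $\pdiff(\Phi)$ contains only polynomially many primes of polynomial bit length, so $\setprod M$ in \Cref{thm:mixed-crt} is polynomially bounded; the number $d$ of non-congruences per prime and their moduli are polynomially bounded in $\bitlength{\Phi}$ and in the sizes of the $\vec b_p$. The theorem therefore yields a value $a_j$ of polynomial bit length in $B := \bitlength{\Phi} + \max_{p \in \pdiff(\Phi)}\maxbl{\vec b_p}$, and in fact delivers solutions in arbitrarily long intervals, which gives the infinitude of solutions.

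After substituting $x_j = a_j$, the residual formula $\Phi'$ is $(r-1)$-increasing, $\pdiff(\Phi') \subseteq \pdiff(\Phi)$, and the projections of $\vec b_p$ still witness mod-$p$ solvability (after a short additive adjustment absorbed into $B$). Each elimination inflates $B$ by only a polynomial factor, so iterating $r$ times gives the claimed $B^{O(r)}$ bound. For primes $p \notin \pdiff(\Phi)$ the required inequalities $v_p(f_i(\vec a)) \le v_p(g_i(\vec a))$ hold automatically via Lipshitz's easy-prime argument, since such primes divide none of the coefficients of $\Phi$ and exceed the number of its constraints. \textbf{The main obstacle} is the first step of each iteration: producing a univariate system of congruences and non-congruences from the $p$-adic data that is (i)~simultaneously satisfiable, (ii)~within the parameter regime required by \Cref{thm:mixed-crt}, and (iii)~strong enough that any $a_j$ it admits extends to a bona fide solution of $\Phi$, not merely a modular one for each prime separately; this is where the refined local-to-global analysis must be combined with the $r$-increasing structure to keep the per-level blow-up polynomial.
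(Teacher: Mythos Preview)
Your outline --- induction on $r$, bottom-up assignment, and \Cref{thm:mixed-crt} as the engine --- is the paper's. But two claims in your inductive step are false as stated, and they hide the actual work. First, $\pdiff(\Phi') \subseteq \pdiff(\Phi)$ does not hold: once lower variables are substituted, left-hand-side polynomials collapse to integers that typically acquire \emph{new} prime factors, and for those primes you have no given mod-$p$ solution. Second, $\Phi'$ is not automatically $(r{-}1)$-increasing: the example with $u,v,x,y,z$ and the substitution $u=2$ in \Cref{sec:intro-local-global} shows that a substitution can make a right-hand side coincide with an existing left-hand side, which then cascades through the divisibility module and destroys increasingness.

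The paper handles both problems with one mechanism. Before touching $X_1$ it closes $\Phi$ under the \emph{elimination property} (\Cref{lemma:add-elimination-property}) and extends the mod-$p$ solutions to the larger set $\pzero(\Psi)$ via \Cref{lemma:simple-primes}. Then, as each variable of $X_1$ is fixed, the mixed-CRT system carries --- besides the congruences over $\pzero(\Psi)$ --- non-congruences \emph{over the set $Q$ of primes newly dividing already-evaluated $S$-terms}, forcing every polynomial in $S(\sterms(\Psi))$ with that leading variable to be nonzero modulo each $q\in Q$ (invariants IH2 and IH3 in the proof). This is exactly what keeps the residual system increasing (\Cref{claim:still-increasing}) and what lets \Cref{claim:new-primes-are-ok} \emph{manufacture} mod-$p$ solutions for $\Phi'$ at the new primes. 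Your non-congruences, by contrast, live over $\pdiff(\Phi)$ and only aim to keep left-hand sides nonzero; they do not touch $S$-terms or the new primes, so neither obstruction is addressed. A minor bookkeeping point: you eliminate one variable per step yet speak of ``iterating $r$ times''; the paper's outer induction runs over the $r$ levels $X_1,\dots,X_r$, with an inner loop over the variables of $X_1$ that contributes no factor in~$r$.
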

\noindent
The bound achieved in \Cref{theorem:local-to-global}
primarily improves upon existing upper bounds by being
exponential only in $r$, as opposed to exponential in
$\poly{d}$ as established in~\cite{LechnerOW15}, where $d$
is the number of variables of $\Phi$. In particular, for $r$
fixed, as is the case for systems of divisibility constraints resulting from IP-GCD systems,
\Cref{theorem:local-to-global} yields small solutions of
polynomial bit length. Observe
that~\Cref{theorem:local-to-global} does not explicitly
invoke the set of difficult primes $\pzero(\Phi)$, but
rather the set $\pdiff(\Phi)$. The latter is the subset of
those primes $p$ in $\pzero(\Phi)$ for which solutions
modulo $p$ might not exist, and one of the initial steps in
the proof~\Cref{theorem:local-to-global} is to compute
solutions modulo $q$ for every prime $q \in \pzero(\Phi)
\setminus \pdiff(\Phi)$. We give further details on the
proof of~\Cref{theorem:local-to-global} in
\Cref{sec:intro-local-global} and then outline in
\Cref{sec:intro-gcd-ip} how it can be used to obtain the NP
upper bound for \Cref{thm:small-model}. But first, we
continue with the promised discussion on some details on
\Cref{thm:mixed-crt}.

\subsection{Small solutions to systems of congruences and
non-congruences}\label{sec:intro-crt}

Let us introduce some notation. Given
$a,b \in \ZZ$, we define $[a,b] \coloneqq
\{a,a+1,\dots,b\}$. We write $\divisors(a) \subseteq \NN$
for the (positive) divisors of $a$ and $\PP(a)$ for $\PP \cap \divisors(a)$.
A function $m \colon \pZZ \to \RR_+$ is
\emph{multiplicative} if $m(a \cdot b) = m(a) \cdot
m(b)$ for all $a,b \in \NN$ coprime (so, $m(1) = 1$). 
%Let us first introduce some notation and definitions. Given
%two integers $a,b \in \ZZ$, we define $[a,b] \coloneqq
%\{a,a+1,\dots,b\}$. We write $\divisors(a) \subseteq \NN$
%for the set of all (positive) divisors of $a$, and $\PP(a)
%\subseteq \PP$ for the set of all \emph{prime} divisors
%of~$a$.  A function $m \colon \pZZ \to \RR_+$ is
%\emph{multiplicative} whenever $m(a \cdot b) = m(a) \cdot
%m(b)$ for every $a,b \in \NN$ coprime (so, $m(1) = 1$).

The proof of~\Cref{thm:mixed-crt} is based on an abstract
version of Brun's pure sieve~\cite{Brun1915}. Similarly to
other results in sieve theory, Brun's pure sieve considers a
finite set $A \subseteq \ZZ$ and a finite set of primes~$Q$,
and (subject to some conditions) derives bounds on the
cardinality of the set $A \setminus \bigcup_{q \in Q} A_q$,
where $A_q$ is the subset of the elements in $A$ that are
divisible by $q$. In other words, the sieve studies the
number of $x \in A$ satisfying $x \not\equiv 0 \pmod q$ for
every $q \in Q$. In comparison, \Cref{thm:mixed-crt}
requires $x$ to be non-congruent modulo~$q$ to multiple
integers, instead of non-congruent to just~$0$. The key
insight in overcoming this difference is to notice that
Brun's result can be established for arbitrary sets $A_q$,
as long as a simple \emph{independence} property holds
together with Brun's \emph{density} property (a formal
statement is given below). 
%To prove~\Cref{thm:mixed-crt} we will set $A_q$ to be
%defined as $\{ a \in A : a \equiv c_{q,i} \pmod q \text{
%for some } i\in [1,d] \}$.
%
A second technical issue concerns the bounds obtained from
Brun's sieve. In its standard formulation (see
e.g.~\cite[Ch.~6]{CojocaruM05}), given an arbitrary $u \in
\pZZ$, the sieve gives an estimate on the cardinality of the
set $A \setminus \bigcup_{q \in Q \cap [2,u]} A_q$ that
depends on $u$; and to estimate $\card{\big(A \setminus
\bigcup_{q \in Q} A_q \big)}$ one sets $u$ as the largest
prime in $Q$. The resulting bound is, however, inapplicable
in our setting as we seek to be independent of the bit
length of the primes in~$Q$. This issue is overcome by
revisiting the analysis of Brun's pure sieve
from~\cite{CojocaruM05}, and by requiring an additional
hypothesis: the multiplicative function $m \colon \pZZ \to
\RR_+$ used to express Brun's \emph{density} property must
satisfy $m(q) \leq q - 1$ for~all~$q \in Q$. Those insights
and requirements lead us to the following sieve.

\begin{restatable}{lemma}{LemExtendedBrun}
  \label{lem:extended-brun}
  Let $A \subseteq \ZZ$ and $Q \subseteq \PP$ be non-empty
  finite sets, and let $n \coloneqq \setprod Q$ and $d \in
  \pZZ$. Consider a multiplicative function $m \colon \pZZ
  \to \RR_+$ satisfying $m(q) \leq q-1$ on all $q \in Q$,
  and an (error) function $\sigma \colon \NN \to \RR$. Let
  $(A_r)_{r \in \divisors(n)}$ be a family of subsets of $A$
  satisfying the following two properties:
  \begin{description}
    \item[independence:] $A_{r \cdot s} = A_r \cap A_{s}$,
    for every $r,s \in \divisors(n)$ coprime, and $A_1 = A$; 
    \item[density:] $\card{A_r} = \card{A} \cdot
    \frac{m(r)}{r} + \sigma(r)$, for every $r \in
    \divisors(n)$.
  \end{description}
  Assume $\abs{\sigma(r)} \leq m(r)$,  and $m(q) \leq d$,
  for every $r \in \divisors(n)$ and $q \in Q$. Then,
  \[
    \frac{1}{2} \cdot \card{A} \cdot W_m(Q) - \brunf(Q,d) \, \leq \, \card{\Big(A
    \setminus \bigcup\nolimits_{q \in Q} A_q\Big)} \, \leq \, \frac{3}{2} \cdot
    \card{A} \cdot W_m(Q) + \brunf(Q,d),
  \]
  where $W_m(Q) \coloneqq \prod_{q \in
    Q}\left(1- \frac{m(q)}{q}
  \right)$ and $\brunf(Q,d)
  \coloneqq ( d \cdot \card{Q})^{4  (d+1)^2  (2 + \ln
  \ln(\card Q+1))+2}$.
\end{restatable}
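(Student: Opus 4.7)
My plan is to follow the classical Brun pure sieve argument, adapted to the abstract family $(A_r)_{r \mid n}$ at hand, and arranged so that the final error depends only on $\card{Q}$ and $d$. Since $n = \setprod Q$ is squarefree, every divisor $r \mid n$ is a squarefree product of primes in $Q$, so the \emph{independence} hypothesis forces $A_r = \bigcap_{q \mid r} A_q$. Ordinary inclusion--exclusion then gives
\[
  \card{A \setminus \bigcup\nolimits_{q \in Q} A_q} \;=\; \sum_{r \mid n} \mu(r) \card{A_r}.
\]
For a parameter $t \in \NN$ to be chosen later, the Bonferroni inequalities upper-bound the left-hand side by the truncation of the right-hand side to divisors with $\omega(r) \le t$ when $t$ is even, and lower-bound it when $t$ is odd. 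Substituting the \emph{density} hypothesis $\card{A_r} = \card{A}\, m(r)/r + \sigma(r)$ into either truncation splits it into an \emph{algebraic} part $\card{A} \sum_{r,\, \omega(r) \le t} \mu(r) m(r)/r$ and an \emph{error} part $\sum_{r,\, \omega(r) \le t} \mu(r) \sigma(r)$.

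The error part is controlled directly using $\abs{\sigma(r)} \le m(r) \le d^{\omega(r)}$, which gives $\bigl|\sum_{r,\omega(r)\le t} \mu(r)\sigma(r)\bigr| \le \sum_{j \le t} \binom{\card Q}{j} d^j \le (d\,\card Q)^{O(t)}$. For the algebraic part, the full sum over $r \mid n$ equals $W_m(Q)$ exactly, and its tail beyond $\omega(r) > t$ is bounded in absolute value by
\[
  \sum_{j>t} e_j\bigl(m(q)/q \colon q \in Q\bigr) \;\le\; \sum_{j>t} \frac{S^{j}}{j!}, \qquad S \coloneqq \sum_{q \in Q} \tfrac{m(q)}{q},
\]
via the standard bound $e_j \le S^j/j!$ for non-negative inputs, which for $t \ge 2S$ is dominated by $2\,S^{t+1}/(t+1)!$ thanks to a geometric-series comparison.

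The main obstacle is to choose $t$ so that this tail is at most $\tfrac{1}{2} W_m(Q)$ -- forcing the algebraic part into $[\tfrac{1}{2} W_m(Q), \tfrac{3}{2} W_m(Q)]$ -- while simultaneously keeping the error part within $\brunf(Q,d)$. Two ingredients are needed, and both exploit the hypothesis $m(q) \le q-1$ essentially. First, a prime-independent bound on $S$: since $m(q) \le d$ gives $S \le d \sum_{q \in Q} 1/q$, and because $1/q$ is decreasing in $q$ and $Q$ contains $\card Q$ distinct primes, the sum $\sum_{q \in Q} 1/q$ is maximised when $Q$ consists of the first $\card Q$ primes; Mertens' theorem then yields $S = O(d \ln\ln\card Q + d)$. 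Second, a lower bound on $W_m(Q)$ in terms of $S$ and $d$: isolating the primes $q \le 2d$ in $Q$, whose combined contribution to $W_m(Q)$ is at least $\prod_{q \le 2d} 1/q \ge 1/(2d)!$ using $m(q) \le q-1$, and applying $\log(1-x) \ge -2x$ for $x \le 1/2$ to the remaining factors (for which $m(q)/q \le d/q \le 1/2$), one obtains $W_m(Q) \ge e^{-O(S) - O(d \log d)}$. Combining these estimates with a Stirling-type analysis of $S^{t+1}/(t+1)!$, a choice of $t$ of order $(d+1)^{2}(\ln\ln\card Q + 1)$ suffices to drive the tail below $\tfrac{1}{2} W_m(Q)$ while keeping $(d\,\card Q)^{O(t)}$ within $\brunf(Q,d)$. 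Applying the argument at both an even and an odd $t$ of this order yields the claimed two-sided estimate.
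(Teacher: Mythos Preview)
Your proposal is correct and follows the same high-level Brun pure-sieve skeleton as the paper: truncate inclusion--exclusion at some level $t$, split via the density hypothesis into a main term and an error term, bound the error by $(e\,\card Q/t)^t d^t$, and choose $t$ of order $(d+1)^2(2+\ln\ln\card Q)$ using a Mertens-type estimate on $\sum_{q\in Q} 1/q$.

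The one genuine difference is in how you control the main term. The paper does not bound the tail of $\sum_{r\mid n}\mu(r)m(r)/r$ in absolute value and then compare against a separately-proved lower bound on $W_m(Q)$. Instead it uses a M\"obius-inversion identity (its Claim~1) to rewrite the \emph{truncated} sum exactly as
\[
  W_m(Q)\Bigl(1 + \sum_{\omega(s)>\ell}\psi_\ell(s)\prod_{q\mid s}\tfrac{m(q)}{q-m(q)}\Bigr),
\]
so that $W_m(Q)$ factors out automatically and only the inner sum needs to be driven below $1/2$. That inner sum is then bounded by $(e\alpha/\ell)^\ell\,\alpha\,e^\alpha$ with $\alpha=\sum_q m(q)/(q-m(q))$, and the hypothesis $m(q)\le q-1$ enters precisely to keep $\alpha$ of order $(d+1)^2\ln\ln\card Q$. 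Your route is more elementary --- it avoids the M\"obius manipulation and works directly with the symmetric-function bound $e_j\le S^j/j!$ --- but it pays for this by needing the auxiliary lower bound $W_m(Q)\ge e^{-O(S)-O(d\log d)}$, which the paper never states because it never needs it. Both arguments land on the same choice of cutoff and the same final shape of $\brunf(Q,d)$; the paper's version is tidier for tracking the explicit constants that appear in the statement.
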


Note that setting $A_r = \{ a \in A : r \div a \}$ for every
${r \in \divisors(n)}$, as usually done in sieve theory,  
results in a family of subsets of $A$ satisfying the
\emph{independence} property. We defer the proof
of~\Cref{lem:extended-brun} and only
sketch
%show
here how to establish~\Cref{thm:mixed-crt}.
Both proofs are given in full details in~\Cref{section:CRT}.

%\iffalse \begin{proof}[Proof of~\Cref{thm:mixed-crt}]
\paragraph*{Proof sketch of~\Cref{thm:mixed-crt}.}
  Below, the set of primes $Q$ and $d \in \pZZ$ 
  defined in the
  statement of~\Cref{thm:mixed-crt} coincide with their
  homonyms in~\Cref{lem:extended-brun}.
  %
  %We first provide a sketch of the proof.
  Let $n \coloneqq \setprod Q$. By the CRT, the system of
  congruences ${\forall m \in M}$, ${x \equiv b_m \pmod m}$
  has a solution set $S_M$ that is a shifted lattice with
  period $\setprod M$. Fix some $k\in \ZZ$. 
  We consider the parametric set~${B(z) \coloneqq [k,k+z] \cap S_M}$, and
  find a small value for $z \in \NN$ ensuring that~$B(z)$
  contains at least one solution to~$\mcS$. To do so we rely
  on~\Cref{lem:extended-brun}: we set $A \coloneqq B(z)$,
  and for every~${q \in Q}$, define ${A_q \coloneqq \{ a \in A :
  \text{there is } {i \in [1,d]} \text{ s.t.}~a \equiv
  c_{q,i} \pmod  q \}}$. By definition, the sieved set $A \setminus
  \bigcup_{q \in Q} A_q$ corresponds to the set of solutions
  of $\mcS$ that belong in $[k,k+z]$. The definition of
  $A_q$ is extended to every $r \in \divisors(n)$ not prime as
  $A_r \coloneqq A \cap \bigcap_{q \in \PP(r)} A_q$. We
  establish that these sets satisfy the \mbox{\emph{independence}} and
  \emph{density} properties~of~\Cref{lem:extended-brun},
  subject to the following multiplicative function: ${m(r)
  \coloneqq \prod_{q \in \PP(r)} \#\{ c_{q,i} \mod q : i \in
  [1,d] \}}$, i.e., $m(r)$ is the product of the number of
  distinct values $(c_{q,i} \mod q)$, for every $q \in
  \PP(r)$. By hypothesis $\min(Q) > d$, hence $m(q) \leq d
  \leq q-1$ for every~${q \in Q}$. Furthermore, we show that
  $m$ and the error function $\sigma(r) \coloneqq \card{A_r}
  - \card{A} \cdot \frac{m(r)}{r}$ satisfy the assumption
  $\abs{\sigma(r)} \leq m(r)$, for all $r \in \divisors(n)$.
  Hence, by~\Cref{lem:extended-brun}, we obtain a lower
  bound on the sieved set $A \setminus \bigcup_{q \in Q}
  A_q$. Lastly, we show that taking $z = \ecrtf(Q,d)$ makes
  the lower bound strictly positive, concluding the proof.

\subsection{Small solutions to $r$-increasing systems of divisibility constraints}
\label{sec:intro-local-global}

We now provide an overview on the technical machinery underlying
\Cref{theorem:local-to-global}. Our main goal here is to formalize 
the notion of difficult primes $\pzero(\Phi)$ and 
to sketch the proof of \Cref{theorem:local-to-global}. 
The full proof is given in~\Cref{sec:divisibility-algo}.
We first need several key definitions and auxiliary notation. Subsequently,
$\ZZ[x_1,\dots,x_d]$ denotes the set of \emph{linear}
polynomials $f(x_1,\dots,x_d) = {a_1\cdot x_1 + \dots + a_d
\cdot x_d + c}$, often written as $f(\vec x) = \vec
a^\intercal \vec x + c$; when clear from the context, we
omit the vector of variables $\vec x$ and write $f$ instead
of $f(\vec x)$. The integers $a_1,\dots,a_d$ are the \emph{coefficients} of $f$, 
$c$ is its \emph{constant}.
A polynomial~$f$ is
\emph{primitive} if it is non-zero and $\gcd(f) = 1$,
where $\gcd(f) \coloneqq \gcd(a_1,\dots,a_d,c)$. For any
$b\in \ZZ$, we write $b \cdot f\coloneqq b\cdot \vec
a^\intercal \vec x + b\cdot c$, and $\ZZ f \coloneqq \{ b
\cdot f : b \in \ZZ \}$. The \emph{primitive part} of a
polynomial $g$ is the unique primitive polynomial $f$ such
that $g = \gcd(g) \cdot f$. Let~$\Phi(\vec x)
\coloneqq \bigwedge_{i = 1}^m f_i(\vec x) \div g_i(\vec x)$ be a system of \emph{divisibility constraints}. 
We let
${\terms(\Phi) \coloneqq \{f_i,g_i : 1 \le i \le m \}}$, and,
given a finite sequence $\{(n_i,x_i)\}_{i \in I}$ of
integer-variable pairs, write $\Phi\substitute{n_i}{x_i :
i \in I}$ for the system obtained from $\Phi$ by evaluating
$x_i$ as $n_i$, for all~$i \in I$.

\paragraph*{Divisibility modules and $r$-increasing form.}
As stated in~\Cref{ssec:np-at-a-glance}, when dealing with a
system of divisibility constraints~$\Phi(\vec x)$ one has to
consider all divisibility constraints that are implied by~$\Phi$. This
is done by relying on the notion of divisibility module. The
\emph{divisibility module} of a primitive polynomial $f$
with respect to~$\Phi$, denoted by $\module_f(\Phi)$, is the
smallest set such that
\begin{enumerate*}[(i)]
  \item\label{divmod:prop1}$f \in \module_f(\Phi)$;
  \item\label{divmod:prop2}$\module_f(\Phi)$ is a $\ZZ$-module, i.e., $\module_f(\Phi)$ is closed
    under integer linear combinations; and
  \item\label{divmod:prop3}if $g \div h$ is a divisibility constraint in $\Phi$ and $b \cdot g \in
    \module_f(\Phi)$ for some $b \in \ZZ$, then $b \cdot h \in
    \module_f(\Phi)$.
\end{enumerate*}
The following property holds: for every $g \in \module_f(\Phi)$ and
solution $\vec a$ to $\Phi$, the integer $f(\vec a)$ divides
$g(\vec a)$. The divisibility module
$\module_f(\Phi)$ is a vector subspace, hence it is spanned
by linear polynomials $h_1,\dots,h_\ell \in
\ZZ[x_1,\ldots,x_d]$, that is $\module_f(\Phi) = \ZZ h_1 +
\dots + \ZZ h_\ell$; where $+$ is the Minkowski sum. 

%Using the definition of divisibility module, 
We can now formalize
the key concept of $r$-increasing formula. Let $\incord$ be a
syntactic order on variables $\vec x = (x_1,\ldots,x_d)$.
Given $f \in \ZZ[x_1,\ldots,x_d]$, we write
$\lv_{\incord}(f)$ for the \emph{leading variable} of $f$, that is the variable with non-zero
coefficient in $f$ that is maximal wrt.~$\incord$; if $f$ is constant then
$\lv_{\incord}(f)\coloneqq \bot$, and we postulate $\bot
\incord x_i$ for all $1\le i\le d$. We omit the
subscript $\incord$ when it is clear from the context. A
system of divisibility constraints $\Phi$ is in
\emph{increasing form (wrt.~$\incord$)} whenever
${\module_f(\Phi) \cap \ZZ[x_1,\dots,x_k] = \ZZ f}$ for every
primitive polynomial $f$ with $\lv(f) = x_k$, for every
$1\le k \le d$.  Given a partition $X_1,\dots,X_r$ of the
variables~$\vec x$, we write $(X_1 \incord \dots \incord
X_r)$ for the set of all orders~$\incord$ on~$\vec x$ with the property that for any two~$x,x'$, if $x\in X_i$ and $x'\in X_j$ for some $i<j$ then $x
\incord x'$.
\begin{restatable}{definition}{DefRIncreasing}
  A system of divisibility constraints $\Phi(\vec x)$ is
  \emph{$r$-increasing} if there exists a partition
  $X_1,\ldots, X_r$ of $\vec x$ such that $\Phi$ is in
  increasing form wrt.\ every ordering $\incord$ in $(X_1
  \incord \dots \incord X_r)$.
\end{restatable}    
\noindent
Observe that for any $\incord$ from $(X_1\incord \dots
\incord X_r)$, we have that for every primitive linear
polynomial~$f$ and~$g \in \module_f(\Phi)$, if $g \not\in
\ZZ f$ then $\lv_{\incord}(f) \in X_i$ and $\lv_{\incord}(g)
\in X_j$ for some $i < j$.

% The tuple $(X_1,\dots,X_r)$ is said to be a \emph{minimal
% increasing witness} for $\Phi(\vec x)$ whenever $\Phi$ is
% increasing for $\incordset$ and no partition
% $\{Y_1,\dots,Y_{s}\}$ of the variables in $\vec x$, such
% that $s < r$ and $\Phi$ is increasing for $Y_1 \incord
% \dots \incord Y_{s}$, exists. Although $\Phi$ might have
% multiple minimal increasing witnesses, by definition they
% all have the same length. If the minimal increasing
% witnesses of $\Phi$ have length $r$, then $\Phi$ is said
% to be \emph{$r$-increasing}. 

\paragraph*{The elimination property and $S$-terms.}
To handle systems in increasing form,
two more concepts are required in the context of the
local-to-global property. First, to compute the ``global'' integer
solution starting from the ``local'' solutions modulo
primes, the divisibility modules of all primitive parts of
polynomials in a system of divisibility
constraints~$\Phi$ need to be taken into account. One way
to do this, introduced in~\cite{LechnerOW15}, is to add
bases for these modules directly to~$\Phi$. This leads to
the notion of elimination property: $\Phi(\vec
x)$ has the \emph{elimination property} for the order ${x_1
\incord \dots \incord x_d}$ of the variables in~$\vec x$
whenever for every primitive part $f$ of a polynomial
appearing in the left-hand side of some divisibility in
$\Phi$, and for every $0\le k\le d$, $\{ g : {\lv(g)
\incordeq x_k} \text{ and } {f \div g} \text{ appears in }
\Phi \}$ is a set of linearly independent polynomials that
forms a basis for $\module_f(\Phi) \cap \ZZ[x_1,\dots,x_k]$,
where $x_0 \coloneqq \bot$. We show that closing a formula 
under the elimination property can be done in polynomial time.
\begin{restatable}{lemma}{LemmaAddEliminationPropertySummary}
  \label{lemma:add-elimination-property-summary}
  There is a polynomial-time algorithm that, given a system
  of divisibility constraints $\Phi(\vec x) \coloneqq
  \bigwedge_{i=1}^m f_i \div g_i$ and an order $x_1 \incord
  \dots \incord x_d$ for $\vec x$, computes $\Psi(\vec x)
  \coloneqq \bigwedge_{i=1}^n f_i' \div g_i'$ with the
  elimination property for~$\incord$ that is equivalent to
  $\Phi(\vec x)$, both over~$\ZZ$ and modulo each $p \in
  \PP$.
\end{restatable}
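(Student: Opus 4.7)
The plan is to compute, for each primitive polynomial $f$ that appears as the primitive part of a left-hand side in $\Phi$, a Hermite normal form (HNF) basis of the divisibility module $\module_f(\Phi)$, from which bases of the intersections $\module_f(\Phi) \cap \ZZ[x_1,\dots,x_k]$ can be read off for every $k$. The output $\Psi$ is obtained from $\Phi$ by deleting every divisibility whose LHS is already primitive and coincides with one of these $f$'s, and inserting in its place the divisibilities $f \div g$ where $g$ ranges over the computed HNF basis of $\module_f(\Phi)$; divisibilities of $\Phi$ with non-primitive LHS are copied over to $\Psi$ unchanged.

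To compute $\module_f(\Phi)$, we identify linear polynomials with vectors in $\ZZ^{d+1}$ and build an ascending chain of sublattices $M_0 \coloneqq \ZZ f \subseteq M_1 \subseteq \dots$ by iterating the closure rule: at each step, for every divisibility $g \div h$ appearing in $\Phi$, we compute the generator $b_g \in \NN$ of the principal ideal $\{a \in \ZZ : a \cdot g \in M_i\}$ (via linear algebra combined with HNF reduction) and, if $b_g \neq 0$, add $b_g \cdot h$ to the basis. Each $M_i$ is maintained in HNF, so that these membership and generator computations run in polynomial time in the current representation size. Termination in polynomially many iterations follows from two observations: the rank of $M_i$ can grow at most $d+1$ times, and between rank jumps a ``density'' potential (for instance, the product of the HNF pivots) strictly decreases with every strict enlargement while remaining bounded above by a quantity of bit length polynomial in $\bitlength{\Phi}$. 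Classical size bounds for HNF reduction ensure that every intermediate basis stays polynomially bounded in bit length.

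Once $M \coloneqq \module_f(\Phi)$ is in HNF with pivots ordered according to $\bot \prec x_1 \prec \dots \prec x_d$, each basis element has a distinct leading variable, and the subset of basis elements with $\lv(g) \incordeq x_k$ is linearly independent and spans $M \cap \ZZ[x_1,\dots,x_k]$, which is exactly the elimination property required of $\Psi$. Equivalence of $\Psi$ with $\Phi$, both over $\ZZ$ and modulo each prime $p$, follows from a routine induction on the definition of the divisibility module which shows that every $g \in \module_f(\Phi)$ satisfies $v_p(f(\vec a)) \le v_p(g(\vec a))$ at every (integer or mod-$p$) solution $\vec a$ of $\Phi$. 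Hence the added divisibilities are sound consequences of $\Phi$; conversely, every deleted primitive-LHS divisibility $f \div g'$ is a $\ZZ$-linear combination of the new basis and therefore still implied by $\Psi$, while non-primitive-LHS divisibilities are preserved verbatim.

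The main obstacle is the polynomial-time analysis of the saturation step: proving precise bounds on the ``density'' potential used for termination, and showing that the new generators $b_g \cdot h$ injected at each iteration cannot cause the intermediate HNF representations to blow up in bit length. Once these invariants are in place, the remaining pieces (HNF reduction, reading off $k$-wise bases, and verifying equivalence) are essentially bookkeeping.
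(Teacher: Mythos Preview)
Your proposal is correct and takes essentially the same approach as the paper: saturate $\module_f(\Phi)$ by a fixpoint computation, then read off an HNF basis whose triangular shape yields the elimination property for every prefix $\ZZ[x_1,\dots,x_k]$; the equivalence argument (over $\ZZ$ and modulo each $p$) is exactly as you describe.

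The one place where the paper is more specific is the size obstacle you flag. Rather than maintaining $M_i$ in HNF throughout, the paper tracks only the vector $(v_1,\dots,v_m)\in\NN^m$ where $v_i$ generates the ideal $\{a:a\cdot f_i\in M\}$, so that the current module is always spanned by $\{f,\,v_1 g_1,\dots,v_m g_m\}$; the HNF is taken once at the end. The bound on the $v_i$ comes from a small lemma: if $K$ and $K'$ are integer-kernel bases of $A$ and of $A\cdot\diag(c_1,\dots,c_d)$ respectively, then $\gcd(K'[i])$ divides $\lcm(c_1,\dots,c_d)\cdot\gcd(K[i])$. Applying this with the $c_j$ replaced by their $\{0,1\}$-indicators relates each $v_i$ to a kernel GCD of a matrix whose entries come straight from $\Phi$, giving a uniform bound on all $v_i$ (hence on all $b_g$ in your language) that is independent of the iteration count. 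This simultaneously closes both the termination and the intermediate-blow-up gaps you point to; your product-of-pivots potential would also work for termination, but you would still need an argument of this kind to cap the potential after each rank jump.
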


\noindent
In a nutshell, for every primitive part $f$ of a polynomial
appearing in the left-hand side of a divisibility in~$\Phi$,
the algorithm first computes a finite set $S$ spanning
$\module_f(\Phi)$. %The set $S$ can be represented as the
%matrix~$A \in \ZZ^{(d+1) \times (m+1)}$ in which each column
%$(a_d,\dots,a_1,c)$ contains the coefficients and the
%constant of a distinct element of $S$, with $a_i$ being the
%coefficient of~$x_i$ for $1\le i \le d$, and $c$ being the
%constant term of the polynomial. 
The algorithm then uses the 
%puts $A$ in
%column-style 
Hermite normal form of a matrix, whose entries are the coefficients and
constant of the elements of $S$, %obtaining
to obtain linearly
independent polynomials $h_1,\dots,h_\ell$ with different
leading variables with respect to $\incord$. The system
$\Psi$ is then obtained by replacing divisibility constraints of the
form $f \div g$ appearing in~$\Phi$ with the divisibilities
$f \div h_1,\dots,f \div h_\ell$. Full details are given
in~\Cref{appendix:module-basis}.

The second concept is related to
how~\Cref{theorem:local-to-global} is proven. In a nutshell,
in the proof we iteratively assign values to the variables
in a way that guarantees the system of divisibility
constraints to stay in increasing form. To do that,
additional polynomials need to be considered. For an
example, consider the following system of divisibility
constraints~$\Phi$ in increasing form for the order $u
\incord v \incord x \incord y \incord z$, and with the
elimination property for that order: 
\begin{center}
    $\Phi \ \coloneqq \ v \div u + x + y \ \land \
    v \div x \ \land \
    y+2 \div z+1 \ \land \
    v \div z\,.$
\end{center}
From the first two divisibility constraints, we have $(u+y) \in
\module_v(\Phi)$; i.e., $(u-2)+(y+2) \in \module_v(\Phi)$.
Therefore, if $u$ were to be instantiated as $2$, the
resulting formula $\Phi'$ would satisfy $(y+2) \in
\module_v(\Phi')$ and hence $(z+1) \in \module_v(\Phi')$,
from the third divisibility constraint. Then, $1 \in \module_v(\Phi')$
would follow from the last divisibility, violating the
constraints of the increasing form. The reason why
increasingness is lost when setting $u = 2$ stems from the
fact that in $\Phi'$ we have an implied divisibility $v \div
y + 2$, where $y+2$ is a left-hand side that was not present
in $\module_v(\Phi)$. We can avoid this problem by
considering the polynomial $u - 2$ and forcing it to be
non-zero. The main issue is then to identify all such problematic
polynomials, which is done with the following notion of $S$-terms. Less refined versions of
this notion, as considered in~\cite{Lipshitz78,LechnerOW15}, result in
exponentially larger sets of polynomials.

Given polynomials $f(\vec x)$ and $g(\vec x)$ with
$\lv(f) = x_l$ and $\lv(g) = x_k$, we define their
\mbox{\emph{S-polynomial}} $S(f, g) \coloneqq b_{k} \cdot f - a_l
\cdot g$, where $a_l$ and $b_k$ are coefficients of $x_l$ in
$f$ and $x_k$ in $g$, respectively. For constant $f$
(resp.~$g$), i.e.,~$\lv(f) = \bot$, above $a_l \coloneqq f$
(resp.~$b_k \coloneqq g)$. Note that if $f$ and $g$ are
non-constant and $\lv(f) = \lv(g)$ then $\lv(S(f,g)) \incord
\lv(f)$. For any $X \subseteq \ZZ[x_1,\dots,x_n]$, we define
$S(X) \coloneqq X \cup \{S(f,g) : f,g \in X\}$.
Given a system of divisibility constraints $\Phi$ with the
elimination property for $\incord$ and a primitive
polynomial~$f$, we define the set of \emph{$S$-terms for $f$}, denoted as
$\sfterms{f}{\Phi}$, to be the smallest set such that
\begin{enumerate*}[(i)]
    \item $\terms(\Phi) \subseteq \sfterms{f}{\Phi}$, and
    \item if $f \mid g$ occurs in $\Phi$ and $h \in
    \sfterms{f}{\Phi}$ with $\lv(g) = \lv(h)$, then $S(g,h)
    \in \sfterms{f}{\Phi}$. 
\end{enumerate*}
We write $\sterms(\Phi)$ for the set of all $S$-terms for~$f$, 
where $f$ is any primitive part of a
polynomial in $\terms(\Phi)$.

\paragraph*{The set of difficult primes.}
We now turn towards 
identifying a %sufficiently 
small set of
difficult primes~$\pzero(\Phi)$ of polynomial bit length.
%In
%short, t
There are two categories of difficult primes: those
for which a solution to $\Phi$ modulo $p$ is not guaranteed
to exist, and those for which such a solution %to $\Phi$ modulo
%$p$ 
always exists, but which still influences the size of the
minimal integer solution for $\Phi$. The former is %first category is
the set %of primes 
$\PP(\Phi)$ defined in
\Cref{ssec:np-at-a-glance}. The next lemma shows that $\Phi$
has a solution modulo any prime not in $\PP(\Phi)$.
\begin{restatable}{lemma}{LemmaSimplePrimes}
  \label{lemma:simple-primes} Let $\Phi(\vec x) \coloneqq
  \bigwedge_{i=1}^m f_i \div g_i$ and $p \in \PP \setminus
  \pdiff(\Phi)$. Then, $\Phi$ has a solution~$\vec b \in
  \NN^d$ modulo $p$ such that $v_p(f_i(\vec b)) = 0$ for
  every $1\le i \le m$, and $\norm{\vec b} \le p-1$.
\end{restatable}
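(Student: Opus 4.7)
The plan is to prove the lemma by a straightforward counting argument, exploiting the fact that $p \in \PP \setminus \pdiff(\Phi)$ is large relative to $m$ and does not divide any of the numerical data of $\Phi$.

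First, I would unpack the hypothesis. By definition of $\pdiff(\Phi)$, the prime $p$ satisfies $p > m$, and $p$ does not divide any nonzero integer appearing as a coefficient or constant in $\Phi$. Combined with the paper's standing convention that $f_i \neq 0$ for every divisibility $f_i \div g_i$, at least one coefficient or the constant of each $f_i$ is a nonzero integer not divisible by $p$. Consequently, the reduction of $f_i$ modulo $p$ is a \emph{nonzero} linear polynomial in $\mathbb{F}_p[x_1,\ldots,x_d]$.

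Second, I would count the zero sets. For each $i$, the set $Z_i \coloneqq \{\vec b \in \{0,\ldots,p-1\}^d : f_i(\vec b) \equiv 0 \pmod p\}$ has cardinality at most $p^{d-1}$: if $f_i$ reduces to a nonzero constant then $Z_i = \emptyset$, and otherwise $Z_i$ is an affine hyperplane in $\mathbb{F}_p^d$ of size exactly $p^{d-1}$. Applying the union bound,
\[
\Bigl| \bigcup_{i=1}^m Z_i \Bigr| \;\leq\; m \cdot p^{d-1} \;<\; p \cdot p^{d-1} \;=\; p^d,
\]
where the strict inequality uses $p > m$.

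Third, I would conclude. Since $|\{0,\ldots,p-1\}^d| = p^d$, there exists $\vec b \in \{0,\ldots,p-1\}^d$ not belonging to any $Z_i$; that is, $f_i(\vec b) \not\equiv 0 \pmod p$ for every $1 \leq i \leq m$. This immediately gives $v_p(f_i(\vec b)) = 0$ for all $i$, and hence $v_p(f_i(\vec b)) = 0 \leq v_p(g_i(\vec b))$ trivially. Furthermore, $f_i(\vec b) \neq 0$ since it is nonzero modulo $p$. Therefore $\vec b$ is a solution to $\Phi$ modulo $p$, and by construction $\norm{\vec b} \leq p - 1$.

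No serious obstacle arises; the only subtle point is ensuring each $f_i$ reduces to a nonzero polynomial over $\mathbb{F}_p$, which needs both the standing assumption $f_i \neq 0$ and the definition of $\pdiff(\Phi)$ to rule out the pathological case where every coefficient and the constant of $f_i$ happen to be divisible by $p$.
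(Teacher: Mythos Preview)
Your proof is correct. You use a global counting argument over $\mathbb{F}_p^d$: each $f_i$ reduces to a nonzero linear form, its zero set has size at most $p^{d-1}$, and the union bound together with $p>m$ leaves a point outside all of them. The paper instead builds $\vec b$ inductively, one variable at a time: having fixed $b_1,\dots,b_{k-1}$, it observes that for each $f_i$ with $\lv(f_i)=x_k$ the non-congruence $f_i(b_1,\dots,b_{k-1},x_k)\not\equiv 0\pmod p$ becomes $x_k\not\equiv -c_f^{-1}f'\pmod p$ (the leading coefficient is invertible since $p\notin\pdiff(\Phi)$), and since there are at most $m<p$ such constraints a suitable $b_k\in[0,p-1]$ exists.

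Both routes rest on the same two facts ($p>m$ and $p$ not dividing the left-hand-side data), so they are close in spirit; the difference is packaging. Your union-bound argument is shorter and more self-contained for this lemma in isolation. The paper's variable-by-variable construction is more explicitly algorithmic and, more importantly, mirrors the inductive scheme used later in the proof of Theorem~\ref{theorem:local-to-global} and in Algorithm~\ref{func:new-solveincreasing}, where values are also assigned following an order~$\incord$; having the same pattern here makes the overall narrative more uniform.
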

\noindent
The proof of~\Cref{lemma:simple-primes} is given
in~\Cref{appendix:bound-pzero}. In a nutshell, $v_p(f_i(\vec
b)) = 0$ holds if and only if ${f_i(\vec b) \not\equiv 0
\pmod p}$, meaning that the solution $\vec b$ can be
computed by considering a system of at most $m$
non-congruences; one for each left-hand side of $\Phi$.
Consider an ordering $\incord$ of the variables in $\vec x$.
Since $p \not\in \pdiff(\Phi)$, $p$ does not divide any
coefficient or constant appearing in some~$f_i$. This means
that if $f_i(\vec x) = f_i' + a \cdot x$, with $x =
\lv_{\incord}(f_i)$, we can rewrite $f_i(\vec x) \not\equiv
0 \pmod p$ as $x \not\equiv -a^{-1} f_i' \pmod p$,
where $a^{-1}$ is the inverse of $a$ modulo $p$. Then,
%thanks to the fact that 
since $p > m$, one can find $\vec b$ by
picking suitable residues in $\{0,\dots,p-1\}$; this can be
done inductively, starting from the $\incord$-minimal
%smallest
variable.
%wrt.~$\incord$.

Extending $\pdiff(\Phi)$ into $\pzero(\Phi)$, hence
capturing the second of the two categories above, is a
delicate matter. In fact, while $\pdiff(\Phi)$ is defined
for an arbitrary system of divisibility constraints, the set
$\pzero(\Phi)$ can only meaningfully be defined on systems
that have the elimination property for an order~$\incord$.
%Below, the order is always clear from the context and we
%hence omit it to aide readability.
For systems without the elimination property, one must first
appeal to~\Cref{lemma:add-elimination-property-summary}.
Let $\Phi$ be a system of divisibility constraints with the
elimination property. The set of \emph{difficult
primes}~$\pzero(\Phi)$ is the set of primes~${p \in \PP}$
satisfying at least one the following conditions:
\begin{itemize}
  \item[\textlabel{(P1)}{pzero:1}] $p \leq
  \card{S(\sterms(\Phi))}$,
  \item[\textlabel{(P2)}{pzero:2}] $p$ divides any non-zero
  coefficient or constant of a polynomial in
  $S(\sterms(\Phi))$, or 
  \item[\textlabel{(P3)}{pzero:3}] $p$ divides the smallest
  (in absolute value) non-zero $\lambda \in \ZZ$ such that
  $\lambda \cdot g \in \module_f(\Phi)$ for some primitive
  polynomial $f$ occurring in $\Phi$ and $g \in
  \sfterms{f}{\Phi}$ (if such a $\lambda$ exists). 
\end{itemize}
Note that~\ref{pzero:1} and~\ref{pzero:2} imply
$\pdiff(\Phi) \subseteq \pzero(\Phi)$. 
The following lemma
establishes bounds on these two sets that are central to the
proof of \Cref{theorem:local-to-global}.

\begin{restatable}{lemma}{LemmaBoundOnPzero}
  \label{lemma:bound-on-pzero}
  Consider a system of divisibility constraints~$\Phi(\vec x)$
  in $d$ variables. Then, the set of primes $\pdiff(\Phi)$ satisfies $\log_2(\Pi \pdiff(\Phi)) \leq m^2
  (d+2) \cdot (\maxbl{\Phi}+2)$. Furthermore, if $\Phi$ has
  the elimination property for an order~$\incord$ on $\vec
  x$, then the set of primes $\pzero(\Phi)$ satisfies $\log_2(\Pi \pzero(\Phi)) \leq 64 \cdot m^5
  (d+2)^4 (\maxbl{\Phi}+2)$.
\end{restatable}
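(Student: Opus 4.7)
I would split the argument into the two bounds. For $\pdiff(\Phi)$, direct counting suffices. The primes $p \leq m$ contribute at most $\log_2 m! \leq m \log_2 m$ to $\log_2(\Pi \pdiff(\Phi))$, while every other prime in $\pdiff(\Phi)$ divides some non-zero coefficient or constant of an $f_i$ or $g_i$, so the product of those primes is at most the product of those integers themselves. Since $\Phi$ contains at most $2m(d+1)$ such integers, each of bit length at most $\maxbl{\Phi}$, this contributes $O(m d \maxbl{\Phi})$ to the logarithm, giving the claimed bound comfortably.

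For $\pzero(\Phi)$ the main technical work is to control the cardinality and bit length of $\sterms(\Phi)$. Fix a primitive part $f$ of a left-hand side polynomial of $\Phi$. Using the elimination property, the divisibilities $f \div g$ appearing in $\Phi$ have pairwise distinct leading variables: together they form a basis of $\module_f(\Phi)\cap\ZZ[x_1,\dots,x_k]$ for each $k$, and since the rank of this intersection can grow by at most one as $k$ increases by one, at most one such $g$ per leading variable can appear. Consequently each $h \in \sfterms{f}{\Phi}$ has at most one $S$-polynomial child $S(g,h)$, whose leading variable is strictly smaller than $\lv(h)$. Hence $\sfterms{f}{\Phi}$ decomposes as at most $\card{\terms(\Phi)} \leq 2m$ chains of length at most $d+1$, yielding $\card{\sfterms{f}{\Phi}} \leq 2m(d+1)$. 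Summing over the $\leq m$ primitive parts gives $\card{\sterms(\Phi)} \leq 2m^2(d+1)$ and $\card{S(\sterms(\Phi))} = O(m^4 d^2)$. For bit length, iterating the estimate $\maxbl{S(g,h)} \leq \maxbl{g}+\maxbl{h}+1$ along each chain (where the $g$'s all come from $\Phi$) yields $\maxbl{h} = O(d\cdot \maxbl{\Phi})$ for every $h \in \sterms(\Phi)$, and hence also for polynomials in $S(\sterms(\Phi))$.

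With these bounds in hand I would dispatch \ref{pzero:1}, \ref{pzero:2}, \ref{pzero:3} in turn. For \ref{pzero:1}, primes at most $N \coloneqq \card{S(\sterms(\Phi))}$ multiply to at most $N^N$, contributing $O(m^4 d^2 \log(md))$. For \ref{pzero:2}, the product of primes dividing some non-zero coefficient of a polynomial in $S(\sterms(\Phi))$ is bounded by the product of those coefficients themselves; there are $O(m^4 d^3)$ of them, each of bit length $O(d \maxbl{\Phi})$, for a total contribution of $O(m^4 d^4 \maxbl{\Phi})$. For \ref{pzero:3}, fix a pair $(f,g)$ with $\lambda g \in \module_f(\Phi)$ for some $\lambda \neq 0$. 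The elimination property supplies a basis $h_1,\dots,h_l$ of $\module_f(\Phi)$ with $l \leq d+1$, each $h_j$ appearing as a right-hand side of a divisibility in $\Phi$ and hence satisfying $\maxbl{h_j} \leq \maxbl{\Phi}$. Restricting the linear system $\lambda \vec g = \sum_j \alpha_j \vec h_j$ to a full-rank $l \times l$ subsystem and applying Cramer's rule bounds the smallest such $\lambda$ by the absolute value of an $l \times l$ determinant whose entries come from the $h_j$'s; Hadamard's inequality then gives $\log_2 \lambda \leq (d+1) \maxbl{\Phi} + O(d \log d)$. Since there are $O(m^2 d)$ such pairs, the total contribution from \ref{pzero:3} is $O(m^2 d^2 \maxbl{\Phi})$. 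Adding the three contributions fits comfortably within $64 m^5 (d+2)^4 (\maxbl{\Phi}+2)$.

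The main obstacle is the cardinality bound on $\sterms(\Phi)$: without the elimination property, the $S$-polynomial closure could branch exponentially, as many divisibilities $f \div g$ could share a leading variable and each produce its own child for each $h$. It is the at-most-one-reducer-per-leading-variable consequence of the elimination form that keeps the closure polynomially small, and once that is in place the remaining contributions reduce to routine determinant and Hadamard estimates.
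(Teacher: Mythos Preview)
Your proposal is correct and follows essentially the same route as the paper: a direct count for $\pdiff(\Phi)$, then for $\pzero(\Phi)$ the key observation that the elimination property forces at most one divisibility $f\div g$ per leading variable (the paper's Lemma~\ref{lemma:elim-prop-lv}), which makes $\sfterms{f}{\Phi}$ decompose into $\leq 2m$ chains of length $\leq d+1$ and yields the polynomial bounds on $\card{\sterms(\Phi)}$ and $\maxbl{\sterms(\Phi)}$ (the paper's Lemma~\ref{lemma:bound-sterms}), after which \ref{pzero:1}--\ref{pzero:3} are handled exactly as you outline. The only cosmetic difference is in \ref{pzero:3}: the paper bounds the minimal $\lambda$ via the GCD of a kernel row (Corollary~\ref{corollary:kernel-gcd}, ultimately Proposition~\ref{prop:vzGS}) rather than your Cramer--Hadamard argument, but both are standard determinantal estimates and give the same order of bound.
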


\noindent
The proof of~\Cref{lemma:bound-on-pzero} is given
in~\Cref{appendix:bound-pzero}. Note that $\bitlength{S} =
\bigO{\log_2(\Pi S)}$ for any finite set $S$ of positive
integers, and therefore the above lemma bounds
$\bitlength{\pdiff(\Phi)}$ and $\bitlength{\pzero(\Phi)}$
polynomially.

\paragraph*{Proof sketch of \Cref{theorem:local-to-global}.}
Recall that \Cref{theorem:local-to-global} establishes a
local-to-global property for \mbox{$r$-increasing} systems of divisibility constraints $\Phi(\vec x)$: if such a system
has a solution $\vec b_p \in \ZZ^d$ modulo~$p$ for every
prime $p \in \pdiff(\Phi)$, then it has infinitely many integer
solutions, and a solution $\vec a \in \NN^d$ such that
$\maxbl{\vec a}\le (\bitlength{\Phi} + \max\{ \maxbl{\vec
b_p} : p \in \pdiff(\Phi) \})^{O(r)}$.  We give a high-level
overview of the proof of this result,
focusing on the part of the statement that constructs a
solution over $\NN$. The full proof is given
in~\Cref{section:algo:local-to-global}. Fix an order
$\incord$ in $X_1 \incord \dots \incord X_r$. We compute a
map $\vec \nu \colon \big(\bigcup_{j=1}^{r} X_j\big) \to
\pZZ$ such that $\vec \nu(\vec x)$ is a solution for $\Phi$
by induction on $r$, populating $\vec \nu$ according the
order~$\incord$. 

%In the base case of 
If $r = 1$, the system $\Phi$ is of the
form ${\bigwedge_{i=1}^\ell c_i \div g_i(\vec x) \land
\bigwedge_{j = \ell+1}^m f_j(\vec x) \div a_j \cdot f_j(\vec
x)}$, with $c_i \in \ZZ \setminus \{0\}$ and $a_j \in \ZZ$,
and $\vec \nu$ can be computed using the CRT. Given ${p \in
\pdiff(\Phi)}$, one considers the natural number $\mu_p
\coloneqq  \max \big\{ v_p(f(\vec b_p)) : f(\vec x) \text{
left-hand side of a divisibility in } \Phi \}$, which
determines up to what power of~$p$ the integer solution
given by $\vec \nu$ has to agree with the solution~$\vec
b_p$. Then, the CRT instance to be solved is $x_k \equiv
b_{p,k} \pmod {p^{\mu_p+1}}$ for every $p \in \pdiff(\Phi)$
and $1 \leq k \leq d$, where $x_1 \incord \dots \incord x_d$
are the variables in $\Phi$ and $b_{p,1},\dots,b_{p,d}$ are
their related values in $\vec b_p$.

When $r \geq 2$, the construction is much more involved. The
goal is to define $\vec \nu$ for the variables in $X_1$ in
such a way that the formula $\Phi' \coloneqq
\Phi\substitute{\vec \nu(x)}{x : x \in X_1}$ is increasing
for $X_2 \incord \dots \incord X_r$, and has solutions
modulo $p$ for every $p \in \pdiff(\Phi')$. This allows us
to invoke~\Cref{theorem:local-to-global} inductively,
obtaining a solution $\vec \xi  \colon
\big(\bigcup_{j=2}^{r} X_j\big) \to \pZZ$ for $\Phi'$. An
integer solution for $\Phi$ is then given by the union $\vec
\nu \sqcup \vec \xi$ of $\vec \nu$ and $\vec \xi$, i.e., the
map defined as $\vec \nu(x)$ for $x \in X_1$ and as $\vec
\xi(y)$ for $y \in \bigcup_{j=2}^r X_j$. To construct~$\vec
\nu$ for $X_1$, we first close $\Phi$ under the elimination
property
following~\Cref{lemma:add-elimination-property-summary},
obtaining an equivalent system~$\Psi$, and extend the
solutions $\vec b_p$ to every $p \in \pzero(\Psi)$ thanks
to~\Cref{lemma:simple-primes}. We then populate $\vec \nu$
following the order~$\incord$, starting from the smallest
variable. In the proof, this is done with a second
induction. Values for the variables in $X_1$ are found 
using~\Cref{thm:mixed-crt}.
When a new value $a_k \in \pZZ$ for a variable $x_k \in X_1$
is found, new primes need to be taken into account, since
substituting $a_k$ for $x_k$ yields a complete evaluation of
the polynomials in $S(\sterms(\Phi))$ with leading variable
$x_k$, i.e.,~these polynomials become integers that may be
divisible by primes not belonging to~$\pzero(\Psi)$. For
subsequent variables in~$X_1$, we make sure to pick values
that keep the evaluated polynomials as ``coprime as
possible'' with respect to these new primes. This condition
is necessary to obtain the new solutions~$\vec b_p$ for the
formula $\Phi'$, modulo every $p \in \pdiff(\Phi')$. The
precise system of (non-)congruences considered when
computing $x_k$ is 
\[ 
  \begin{cases} x_{k} \equiv b_{p,k} 
    \hspace{1.07cm} {\pmod
      {p^{\mu_p+1}}} &p \in \pzero(\Psi)\\
      g(\vec \nu(\vec y),x_{k}) \not\equiv 0 {\pmod
      q} &q \in Q \setminus \pzero(\Psi),\ g(\vec y,
      x_{k}) \in S(\Delta(\Psi)) \text{ with }
      \lv_{\incord}(g) = x_{k} 
  \end{cases}
\]
where $Q$ is the set of new primes obtained when fixing the
variables $\vec y = (x_1,\dots,x_{k-1})$, and $\mu_p
\coloneqq  \max \big\{ v_p(f(\vec b_p)) : f(\vec x) \text{
left-hand side of a divisibility in } \Psi \}$. 
\Cref{thm:mixed-crt} can be applied on the system above because primes in $Q \setminus \pzero(\Psi)$ do not satisfy the properties~\ref{pzero:1} and~\ref{pzero:2}.

To %Showing 
show that~\Cref{theorem:local-to-global} can be applied inductively
on~$\Phi'$, % requires rather long technical arguments. 
%Briefly, 
we rely on~\ref{pzero:3} and the elimination
property of $\Psi$ to show
%the fact 
that~$\Phi'$ has solutions modulo every
$p \in \pdiff(\Phi')$,
and on properties of $S$-terms and again on the elimination
property of $\Psi$
to show that $\Phi'$ is increasing for~${X_2 \incord \dots \incord X_r}$.

\subsection{Solving an instance of IP-GCD}
\label{sec:intro-gcd-ip}

We now briefly discuss the proof of~\Cref{thm:small-model},
full details are deferred to~\Cref{sec:ip-gcd-small-model}. In
a nutshell, this result is shown by giving an algorithm that
reduces an \emph{IP-GCD system}~$\Phi(\vec x)
\coloneqq {A \cdot \vec x \leq \vec b} \land
\bigwedge_{i=1}^k \gcd(f_i(\vec x), g_i(\vec x)) \sim_i c_i$
into an equi-satisfiable disjunction of
several \mbox{$3$-increasing} systems of divisibility
constraints with coefficients and constants of polynomial
bit length. We then study bounds on the solutions of each of these
systems modulo the primes required by the local-to-global
property, and conclude that IP-GCD has a small
witness property over the integers directly
from~\Cref{theorem:local-to-global}.

Our arguments heavily rely on syntactic properties of the
systems of divisibility constraints we obtain when
translating an IP-GCD system~$\Phi$. These syntactic properties are
captured in~\Cref{sec:ip-gcd-small-model} with the notion of
\emph{\gcdtodiv} triple. The formal definition is rather 
lengthy, for this overview it suffices to know that
a triple $(\Psi,\vec u,E)$ is a \gcdtodiv triple if $\Psi$
is a system of divisibility constraints in which all numbers appearing are positive,
%with variables interpreted over the naturals and having
%polynomials with only positive coefficients and constants,
and $\vec u$ and $E$ are a vector and a matrix
%; $\vec u$ and $E$ 
that act as a change of variables between the variables
in~$\Psi$ and the variables in $\Phi$. The following
proposition formalizes the role of \gcdtodiv triples.

\begin{proposition}
  \label{prop:to-three-increasing}
  Let $\Phi$ be an IP-GCD system in $d$ variables. There is
  a set $C$ of \gcdtodiv triples such that 
  the set of integer solutions to $\Phi$ is 
  $\{ \vec u +
    E \cdot \vec \lambda : (\Psi,\vec u, E) \in C \text{ and }
    \vec \lambda \in \NN^m \text{ solution to } \Psi \}$.
  Every $(\Psi,\vec u, E) \in C$ has bit length polynomial
  in $\bitlength{\Phi}$ and is such that $\Psi$ is in
  $3$-increasing form.
\end{proposition}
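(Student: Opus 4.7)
My plan is to produce $C$ via a sequence of case-splitting and parameterization steps on $\Phi$, and then exhibit a natural three-level partition of the resulting variables that makes each output system of divisibility constraints $3$-increasing.

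\textbf{Reducing GCD to divisibility.} For each constraint $\gcd(f_i,g_i) \sim_i d_i$ I introduce auxiliary variables according to the B\'ezout-style characterization recalled in \Cref{ssec:np-at-a-glance}: for $\sim_i$ being $=$, I add $s_i$ with $d_i \div f_i$, $d_i \div g_i$, $f_i \div s_i$ and $g_i \div (d_i - s_i)$; for $\leq d_i$, I introduce a ``candidate GCD'' variable $e_i$ with $1 \leq e_i \leq d_i$ and use the same pattern with $e_i$ replacing $d_i$; for $\geq d_i$, I add $e_i \geq d_i$ together with $e_i \div f_i$ and $e_i \div g_i$; and for $\neq d_i$, I case-split into $\gcd \leq d_i-1$ and $\gcd \geq d_i+1$. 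After this step $\Phi$ is equivalent to a finite disjunction of systems, each consisting of linear (in)equalities and divisibility constraints on an extended variable set.

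\textbf{Parameterizing and positivizing.} For each disjunct I introduce slack variables to reduce the linear part to a system $A' \vec y = \vec b',\ \vec y \geq \vec 0$, then use the Hermite normal form together with the classical small-witness bounds for integer programming from \cite{BT76,vzGS78} to produce an affine parameterization $\vec y = \vec u + E\,\vec \lambda$ with $\vec \lambda \in \NN^m$ and $\vec u, E$ of polynomial bit length. Substituting into the divisibility constraints yields a candidate $\Psi$ in the $\vec \lambda$-variables. A further case split on the sign patterns of the finitely many polynomials appearing in $\Psi$, combined with sign-flipping (which leaves divisibilities unchanged), fulfils the positivity requirement of the \gcdtodiv triple definition. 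Each resulting triple $(\Psi, \vec u, E)$ has polynomial bit length, while $C$ itself may be of exponential size, which the statement permits.

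\textbf{Three-level partition and main obstacle.} The $\vec \lambda$-coordinates inherit a natural partition $X_1 \sqcup X_2 \sqcup X_3$ where $X_1$ groups the coordinates representing the candidate GCDs $e_i$, $X_2$ those coming from the original $\vec x$, and $X_3$ those coming from the B\'ezout auxiliaries $s_i$. For every ordering $\incord \in (X_1 \incord X_2 \incord X_3)$, each basic divisibility of $\Psi$ has its $\incord$-maximal variable on the right-hand side: $e_i \div f_i$ has this variable in $X_2$, and $f_i \div s_i$ or $g_i \div (e_i - s_i)$ has $s_i \in X_3$ as the maximum. The technical core, and what I expect to be the main obstacle, is lifting this property to every element of every divisibility module $\module_f(\Psi)$, as the $3$-increasing condition requires $\module_f(\Psi) \cap \ZZ[x_1,\dots,x_k] = \ZZ f$ for $x_k = \lv_\incord(f)$ and for \emph{every} $\incord$ consistent with the partition. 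I would prove this by induction on the generation rules \ref{divmod:prop1}--\ref{divmod:prop3}, showing that every application of the chaining rule~\ref{divmod:prop3} strictly raises the level of the leading variable (for instance, chaining $e_i \div f_i$ with $f_i \div s_i$ moves from $X_1$ directly to $X_3$), and that no $\ZZ$-linear combination can undo this upward motion while staying primitive --- a property that hinges on the absence of ``backwards'' divisibilities in our B\'ezout-style construction, so cancellations across levels cannot bring a module element back into $\ZZ[X_1]$ except as a multiple of $f$ itself.
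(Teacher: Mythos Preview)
Your Step~I is broadly in the spirit of the paper's Algorithm~\ref{algorithm:from-IP-GCD-to-systems-of-divisibilities}, but your ``Three-level partition'' step contains a genuine gap: the systems produced by Step~I are \emph{not} automatically $3$-increasing, and your inductive argument on the module rules fails. The failure is exactly in your claim that ``no $\ZZ$-linear combination can undo this upward motion''. Take a single constraint $\gcd(f,g)=c$ where, after the change of variables, $f$ and $g$ happen to have the same primitive part $h$ (say $f=a_1 h$, $g=a_2 h$; this occurs whenever the linear part of $\Phi$ forces $f$ and $g$ to be proportional). In your translation you obtain $f\div s$ and $g\div(c-s)$; since $a_1 h,a_2 h\in\module_h(\Psi)$, rule~\ref{divmod:prop3} gives $s,(c-s)\in\module_h(\Psi)$ and hence the non-zero constant $c\in\module_h(\Psi)$. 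Thus $\module_h(\Psi)\cap\ZZ[x_1,\dots,x_k]\supsetneq\ZZ h$ for \emph{every} order, so $\Psi$ is not increasing at all. An analogous failure occurs for your $\geq$-translation when $(\ZZ f+\ZZ g)\cap\ZZ\neq\{0\}$.

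The paper handles this with an explicit second phase (Algorithm~\ref{algorithm:from-IP-GCD-to-increasing-systems} and \Cref{lemma:gcd-non-increasing,lemma:from-IP-GCD-to-increasing-systems}): whenever $\Psi$ is not increasing, one finds a non-constant primitive left-hand side $f$ with $\module_f(\Psi)\cap\ZZ\ni c>0$; because the \gcdtodiv structure guarantees $f$ has only non-negative coefficients and a positive constant, $\Psi$ entails $f\div c$ and hence every variable of $f$ is bounded in $[0,c]$. Branching over these finitely many values eliminates the variables of $f$ and strictly shrinks the variable set; iterating terminates with $3$-increasing systems. The polynomial bit-length bound then requires a careful analysis (\Cref{lemma:from-IP-GCD-to-increasing-systems}) showing that the constants $c$ arising in this loop stay polynomially bounded, which in turn uses the precise syntactic shape of \gcdtodiv triples. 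Your proposal is missing this entire mechanism; the positivity you mention is needed not merely to satisfy the triple definition but, crucially, to make this variable-elimination step sound. A secondary issue: your post-hoc sign case-split on polynomials in $\vec\lambda$ does not obviously yield polynomials with uniformly-signed coefficients (it only fixes the sign of their \emph{values}), so the positivity requirement of \gcdtodiv triples is not met either; the paper instead fixes signs of the original variables \emph{before} applying the parameterization.
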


\noindent
Above, $m$ is the number of free variables in $\Psi$, which
is also the number of columns in $E$. The algorithm showing
this proposition, cf.~\Cref{lemma:gcd-to-div}
and~\Cref{lemma:from-IP-GCD-to-increasing-systems}
in~\Cref{sec:ip-gcd-small-model}, performs a series of
equivalence-preserving syntactic transformations of $\Phi$
that are mainly divided into two steps: we first compute
from $\Phi$ a set of \gcdtodiv triples $B$ satisfying $\{
\vec x \in \ZZ^d : \vec x \text{ solution to } \Phi \} = \{
\vec u + E \cdot \vec \lambda : (\Psi,\vec u, E) \in B
\text{ and } \vec \lambda \in \NN^m \text{ solution to }
\Psi \}$, and then obtains $C$ by manipulating every system of
divisibility constraints
in $B$ to make it $3$-increasing. Below we give a summary of
these two steps.

\paragraph*{Step I: from IP-GCD to divisibility constraints.}
This step is split into three sub-steps: 
\begin{enumerate}
\item Reduce the input IP-GCD system $\Phi$ into an
  equi-satisfiable
  disjunction of \mbox{IP-GCD} system having GCD
  of the form $\gcd(f(\vec x), g(\vec x)) = c$
  or~${\gcd(f(\vec x), g(\vec x)) \geq c}$, and a system of
  inequalities $A \cdot \vec x \leq \vec b$ fixing
  a sign for every polynomial $h(\vec
  x)$ appearing in a GCD constraint, i.e., $A \cdot \vec x
  \leq \vec b$ %contains %a row of the form 
  has
  either $h(\vec x) \leq
  -1$ or $h(\vec x) \geq 1$ as a row.
  
%  \gap{This step relies on simple tautologies.
%  For instance,
%  constraints of the form~$\gcd(f, g) \leq c$ are replaced
%  with $\bigvee_{j = 1}^c \gcd(f, g) = j$. For the polarity,
%  the  algorithm ``guesses'' a case among $h(\vec x) \leq
%  -1$, $h(\vec x) = 0$ or $h(\vec x) \geq 1$. In the case of
%  $h = 0$, GCD constraints $\gcd(f,h) \sim c$ and $\gcd(h,f)
%  \sim c$ are replaced with $(f \geq 0 \land f \sim c) \lor
%(f < 0 \land -f \sim c)$.}

  \item Let $G$ be the set of %IP-GCD
    systems computed at the
  previous step. The algorithm erases the system of
  inequalities $A \cdot \vec x \leq \vec b$ from every
  IP-GCD system~$\Psi \in G$ by performing a change of
  variables. In particular, relying on a well-known result
  by von zur Gathen and Sieveking~\cite{vzGS78}, the
  algorithm computes a finite set $\{(\vec u_i, E_i) : i \in
  I_{\Psi}\}$ such that $\{\vec x \in \ZZ^d :  A \cdot \vec
  x \leq \vec b \} = \{ \vec u_i + E_i \cdot \vec \lambda :
  {\vec \lambda \in \NN^m},\, i \in I_\Psi \}$. For every $i
  \in I_\Psi$, the algorithm constructs a system of GCD
  constraints~$\Psi_i$ by
  %taking all GCD constraints
  %in~$\Psi$ and 
  replacing $\vec x$ in all GCD constraints of $\Psi$ with $\vec u_i + E_i
  \cdot \vec y$, where $\vec y$ is a family of fresh
  variables.
  %Note that every constraint $\gcd(f(\vec y),
  %g(\vec y)) \sim c$ in $\Psi_i$ is such that every
  %coefficient and constant appearing in $f$ have the same
  %polarity, because of the guesses done in the first step of
  %the algorithm (and the same holds true for $g$).
  %Using this, together with properties of the GCD,
  %the algorithm 
  The latter transformation also ensures that all numbers in the $\Psi_i$ are positive.
  %If the
  %polarity is negative, the algorithm replaces this
  %constraint with $\gcd(-f,g) \sim c$, so that all numbers
  %appearing in $\Psi_i$ are now positive.

  \item The algorithm translates every GCD
  constraint in every $\Psi_i$ into a divisibility. Each
  constraint $\gcd(f(\vec y), g(\vec y)) = c$ is replaced by
  $\exists z \in \NN :~ {c \div f} \, \land\, {c \div g} \,\land\, {f
  \div z} \,\land\, g \div z+c\,,$ following B\'ezout's
  identity, whereas $\gcd(f(\vec y), g(\vec y)) \geq c$
  becomes ${\exists z \in \NN :~ z + c \div f \,\land\, z + c
  \div g}$. %Above, $w$ and $z$ are fresh variables.
  The
  triple $(\Psi_i, \vec u_i, E_i)$ obtained after these
  replacements is a \gcdtodiv triple. 
\end{enumerate}

\paragraph{Step II: enforcing increasingness.} The algorithm
considers each \gcdtodiv triple $(\Psi, \vec u, E)$ computed
in the previous step and further manipulates it, producing a
set of \gcdtodiv triples $D$ having only systems of
divisibility constraints in $3$-increasing form, and satisfying 
\begin{equation}
  \label{eq:to-increasing:key-equivalence-summary}
  \{ 
    \vec u + E \cdot \vec \lambda : \vec \lambda \in \NN^m \text{ solution for } \Psi
  \} 
  = 
  \{
    \vec u' + E' \cdot \vec \lambda : (\Psi',\vec u', E') \in D,\, \vec \lambda \in \NN^{m'} \text{ solution for } \Psi'
  \}.
\end{equation}
The set $D$ is computed as follows. If $\Psi$ is already
$3$-increasing, then $D \coloneqq \{(\Psi, \vec u, E)\}$.
Otherwise, properties of \gcdtodiv triples ensure that there
is a non-constant primitive polynomial $f$ with positive
coefficients and constant such that $\module_f(\Psi) \cap \ZZ
\neq \{0\}$. 
The algorithm computes the smallest positive integer $c$
belonging to $\module_f(\Psi)$. We have that 
$\Psi$ entails
%$c \in
%\module_f(\Psi)$ means that $\Psi$ entails 
$f \div c$.
Let
$\lambda_1,\dots,\lambda_j$ be all the variables
in $f$. Since the coefficients and constant of $f$ are all positive
and variables are now interpreted over the naturals, such a
divisibility constraint can only be satisfied by assigning to each
variable %in $f$ 
an 
%positive 
integer in $[0,c]$. The algorithm
iterates over each assignment~$\vec \nu \colon
\{\lambda_1,\dots,\lambda_j\} \to [0,c]$ satisfying $f \div
c$, computing from $(\Psi, \vec u, E)$ the \gcdtodiv triple
$(\Psi_{\vec \nu}, \vec u_{\vec \nu}, E_{\vec \nu})$ where
$\Psi_{\vec \nu} \coloneqq \Psi\substitute{\vec
\nu(\lambda_i)}{\lambda_i : i \in [1,j]}$, and $\vec u_{\vec
\nu}$ and 
%$ \coloneqq \vec u + \sum_{i=1}^j \vec \nu(\lambda_i)
%\cdot \vec p_i$ where $\vec p_i$ is the column of $E$
%corresponding to the variable~$\lambda_i$ (this column is
%$\vec 0$ if $\lambda_i$ is a variable $w$ or $z$ introduced
%when translating GCD constraints into divisibilities), and
$E_{\vec \nu}$
%is obtained from $E$ by removing the columns
are obtained from $\vec u$ and $E$ based on $\vec \nu$ too.
%corresponding to $\lambda_1,\dots,\lambda_j$. 
%Defining $D
%\coloneqq \{(\Psi_{\vec \nu}, \vec u_{\vec \nu}, E_{\vec
%\nu}) : {\vec \nu \colon \{\lambda_1,\dots,\lambda_j\} \to
%[0,c]} \text{ such that } f(\vec \nu(\lambda_1),\dots,\vec
%\nu(\lambda_j)) \text{ divides } c \}$ yield a set that
%satisfied~\Cref{eq:to-increasing:key-equivalence-summary}.
All such triples are added to $D$ to replace $(\Psi,\vec u, E)$.
However, some newly added system $\Psi_{\vec \nu}$ may not be
$3$-increasing. If that is the case, Step~II is iteratively performed on $(\Psi_{\vec \nu},
\vec u_{\vec \nu}, E_{\vec \nu})$. Termination is guaranteed
%by the fact that~$f$ is non-constant and so each
because $\Psi_{\vec \nu}$ has strictly fewer variables than~$\Psi$
%On termination, %the algorithm gathers all \gcdtodiv triples
%the 
and the set of computed \gcdtodiv triples %, obtaining the set $C$
is the set $C$
from~\Cref{prop:to-three-increasing}. 

\paragraph{Bounds on the solutions modulo primes and proof sketch of~\Cref{thm:small-model}.}
Following~\Cref{prop:to-three-increasing}, what is left to
apply \Cref{theorem:local-to-global} is to
compute the solutions modulo primes in
$\pdiff(\Psi)$, for all $(\Psi, \vec u, E) \in C$. In~\Cref{sec:ip-gcd:small-bound-p-adic-solutions} we rely on properties of \gcdtodiv triples to show
the result below.

\begin{restatable}{lemma}{LemmaIPGCDSmallSolutionsModulo}
  \label{lemma:ip-gcd:small-solutions-modulo}
  Let $(\Psi,\vec u, E)$ be a \gcdtodiv triple in which
  $\Psi$ has $d$ variables, and consider~${p \in
  \PP(\Psi)}$. If~$\Psi$ has a solution modulo~$p$, then it
  has a solution $\vec b_p \in \ZZ^d$ modulo $p$ with
  $\norminf{\vec b_p} \leq (d+1) \cdot 
  \norminf{\Psi}^3 p^2$.
\end{restatable}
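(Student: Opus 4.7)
The plan is to start from any mod-$p$ solution $\vec b_p \in \ZZ^d$ of $\Psi$ and reduce it modulo a suitable power of $p$. More precisely, I would set $K \coloneqq \max\{v_p(f(\vec b_p)) : f \text{ is an LHS of a divisibility in } \Psi\}$ and define $\vec b_p'$ to be the entry-wise reduction of $\vec b_p$ modulo $p^{K+1}$, with representatives in $[0, p^{K+1})$. A routine $p$-adic computation would then show that $\vec b_p'$ is again a mod-$p$ solution: since the perturbation $\vec b_p - \vec b_p'$ lies in $p^{K+1} \ZZ^d$ and $K + 1 > v_p(f(\vec b_p))$ for every LHS $f$, we get $v_p(f(\vec b_p')) = v_p(f(\vec b_p))$; and for every RHS $g$ of a divisibility $f \div g$, either $v_p(g(\vec b_p)) \leq K$ and $v_p(g(\vec b_p'))$ is preserved, or $v_p(g(\vec b_p)) \geq K + 1$ and then $v_p(g(\vec b_p')) \geq K + 1 > v_p(f(\vec b_p'))$. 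This yields $\norminf{\vec b_p'} < p^{K+1}$.

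The technical core will be to construct $\vec b_p$ so that $K$ is small. I would exploit the explicit B\'ezout form of $\Psi$ guaranteed by Step~I of the algorithm in~\Cref{prop:to-three-increasing}: every LHS in $\Psi$ is either (a) a positive constant $c \leq \norminf{\Psi}$ arising from $c \div f$; (b) a polynomial with positive coefficients occurring in a four-divisibility block $c \div f \,\land\, c \div g \,\land\, f \div z \,\land\, g \div z + c$ coming from a GCD $= c$ constraint; or (c) a polynomial of the form $z + c$ with $c > 0$ and $z$ a fresh variable (from GCD $\geq c$). Case (a) directly gives $v_p(c) \leq \log_p \norminf{\Psi}$. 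For case (c), the fresh $z$ can be shifted so that $v_p(z+c) = 0$, contributing nothing to $K$.

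The delicate case will be (b). Here I would first perform a case analysis on $v_p(z)$ within the four-divisibility block to conclude that the existence of a mod-$p$ solution forces $\min(v_p(f(\vec b_p)), v_p(g(\vec b_p))) = v_p(c) \leq \log_p \norminf{\Psi}$, so at least one of $v_p(f)$, $v_p(g)$ is already small. I would then argue that by suitably modifying $\vec b_p$ along a coordinate in which $f$ (respectively $g$) has a coefficient coprime to $p$, and by re-adjusting the fresh variable $z$ via the CRT so that the compatibility condition on $v_p(z)$ is restored, the \emph{other} valuation can also be made $O(\log_p \norminf{\Psi})$, yielding a uniform bound $K = O(\log_p \norminf{\Psi})$. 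Positivity of all numbers in $\Psi$, ensured by the \gcdtodiv construction, is essential here since it forbids the cancellations that would otherwise make $f(\vec b_p)$ or $g(\vec b_p)$ vanish and force $v_p = \infty$.

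Combining these bounds, the main variables of $\vec b_p'$ will have norm at most $p^{K+1} = O(p \cdot \norminf{\Psi}^2)$. The remaining factor of $p \cdot (d+1)$ in the statement stems from bookkeeping on the auxiliary $z$-variables: once the main variables are fixed, solving $f \div z$ and $g \div z + c$ by CRT produces a $z$ with $|z| \leq p \cdot \max(|f(\vec b_p')|, |g(\vec b_p')|) \leq p (d+1) \norminf{\Psi} \cdot \norminf{\vec b_p'}$, and substituting gives the announced bound $(d+1) \norminf{\Psi}^3 p^2$. The main obstacle I anticipate is the case (b) analysis: it must jointly control $v_p(f)$ and $v_p(g)$ across possibly several GCD blocks that share main variables, while simultaneously respecting the lower bounds $v_p(c) \leq v_p(f), v_p(g)$ coming from $c \div f$ and $c \div g$ and keeping $f(\vec b_p), g(\vec b_p)$ nonzero integers.
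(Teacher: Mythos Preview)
Your reduction step (replace $\vec b_p$ by its residues mod $p^{K+1}$) is fine, and your treatment of cases (a) and (c) is correct. The gap is exactly where you anticipate it: case (b). You need every left-hand side $f_i(\vec y),g_i(\vec y)$ from a four-divisibility block to have $p$-adic valuation $O(\log_p\norminf{\Psi})$, and you propose to achieve this by perturbing $\vec b_p$ in a coordinate where the offending polynomial has a coefficient coprime to $p$. But the $\vec y$-variables are shared across blocks, so a perturbation that tames $v_p(g_i)$ in one block can raise $v_p(f_j)$ or $v_p(g_j)$ in another, and can also destroy the congruences $v_p(c_j)\le v_p(f_j)$ coming from the $c_j\mid f_j$ part. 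You would need a simultaneous argument across all blocks, and nothing in the proposal supplies one. (A secondary issue: for $p\in\PP(\Psi)$ a coordinate with coefficient coprime to $p$ need not exist at all.)

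The paper's proof avoids this entirely by never trying to bound $v_p(f_i(\vec b_p))$ or $v_p(g_i(\vec b_p))$. It sets $\mu\coloneqq\max_i v_p(c_i)$, which is automatically $\le\log_p\norminf{\Psi}$ since every $c_i$ is an explicit constant of $\Psi$; reduces only the $\vec y$-coordinates modulo $p^{\mu+1}$; sets each $z_i\in\{0,1\}$ so that $v_p(z_i+c_i)=0$; and then \emph{constructs} each $w_i$ directly as either $p^{\mu+1}f_i(\vec\nu'(\vec y))$ or $p^{\mu+1}g_i(\vec\nu'(\vec y))-c_i$, depending on whether $p^{\mu_i+1}\mid f_i(\vec\nu(\vec y))$. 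One of the two divisibilities $f_i\mid w_i$, $g_i\mid w_i+c_i$ then holds over $\ZZ$ by construction, and the other follows from a two-line $p$-adic computation using only $v_p(c_i)\le\mu$. No control on $v_p(f_i)$ or $v_p(g_i)$ is needed, so your case (b) simply disappears. Your final paragraph (``solving $f\mid z$ and $g\mid z+c$ by CRT'') is already pointing in this direction; the fix is to make that the primary construction for the $w$-variables rather than post-hoc bookkeeping, and to reduce the $\vec y$-variables modulo $p^{\mu+1}$ rather than $p^{K+1}$.
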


\Cref{prop:to-three-increasing}, and \Cref{lemma:ip-gcd:small-solutions-modulo,lemma:bound-on-pzero} 
imply the part of~\Cref{thm:small-model} not concerning
optimization as a corollary
of~\Cref{theorem:local-to-global}.
For optimization, consider a linear objective $\vec
c^\intercal \vec x$ to be minimized (the argument is
analogous for maximization) subject to an IP-GCD system
$\Phi(\vec x)$, and let $C$ be the set of \gcdtodiv triples
computed from $\Phi$
following~\Cref{prop:to-three-increasing}. We show in
\Cref{sec:proof-of-small-model} the following
characterization that implies the optimization part
of~\Cref{thm:small-model}: an optimal solution exists if and
only if 
\begin{enumerate*}[(i)]
  \item there is $(\Psi, \vec u, E) \in C$ such that $\Psi$ satisfiable over
    $\NN$, and
  \item for every $(\Psi, \vec u, E) \in C$ with $\Psi$
satisfiable over~$\NN$, ${\vec c^\intercal
(\vec u + E \cdot \vec \lambda)}$ has no variable with
a strictly negative coefficient.
\end{enumerate*}
Moreover, if there is an
optimal solution, then there is one with polynomial bit
length with respect to~$\bitlength{\Phi}$ 
and~$\bitlength{\vec c}$.
Briefly, %\Cref{ip-gcd:optim:case2-summary} 
the double implication comes from the fact that 
the construction required to establish~\Cref{theorem:local-to-global}
also shows that for each variable in $\vec \lambda$ there are infinitely many values 
that yield a solution to $\Psi$, both in the positive and negative direction, and therefore 
the existence of a variable in~${\vec c^\intercal
(\vec u + E \cdot \vec \lambda)}$ having a negative coefficient
entails the non-existence of an optimum. 
For %~\Cref{ip-gcd:optim:case3-summary}, 
the bound,
%under the hypothesis that~\Cref{ip-gcd:optim:case1-summary} and~\Cref{ip-gcd:optim:case2-summary} 
%do not hold, 
one shows that~$\min\{ \vec c^\intercal \vec u : (\Psi, \vec u, E)
\in C \}$ is a lower bound to every solution of $\Phi$.
Then, the %optimum exists, and has 
polynomial bound follows %bit length 
directly from~\Cref{prop:to-three-increasing}.

\subsection{Conclusion and future work}
We have established a polynomial small witness property for
integer programming with additional
GCD~constraints over linear polynomials. 
%The core part of
Our work also sheds new light on the feasibility problem for
systems of divisibility constraints between linear polynomials over the
integers, and more broadly on the existential fragment of
the first-order theory of the structure
$\Ldiv=(\ZZ,0,1,+,\le,\divides)$, which is known to be \np-hard
and decidable in \nexptime~\cite{Lip81,LechnerOW15}. \Cref{prop:increasing-in-np}
shows that systems of divisibility constraints in increasing form
are decidable in \np. Thus, in order to improve the known
\nexptime upper bound of existential \Ldiv, it would suffice to
provide an algorithm that translates an arbitrary existential \Ldiv
formula in increasing form without the exponential blow-up that
existing algorithms incur~\cite{Lipshitz78,LechnerOW15}.

Our work may also enable obtaining improved complexity results for
other problems that reduce to the existential theory of \Ldiv.  For
instance, \cite{LinM21} Lin and Majumdar reduce deciding a
special class of word equations with length constraints and
regular constraints to existential \Ldiv, hence obtaining an
\nexptime for their problem. The formulas resulting from their
reduction are of a special shape, and showing them to be
$r$-increasing for some fixed $r$ would directly yield a 
\pspace decision procedure for the aforementioned class of word equations.

\section{A Chinese remainder theorem with non-congruences}
\label{section:CRT}
In this section, we prove our Chinese-remainder-style
theorem for simultaneous congruences and non-congruences
(\Cref{thm:mixed-crt}) as well as the abstract version of
Brun's pure sieve (\Cref{lem:extended-brun}). Throughout
this paper, $e$ is reserved for Euler's number, and $\exp(x)
\coloneqq e^x$.

We start by providing the proof of~\Cref{lem:extended-brun},
which following the original proof by Brun is established by
analyzing a truncated inclusion-exclusion principle.

\LemExtendedBrun*

\begin{proof}
  We define $S(A, Q) \coloneqq \card{\big(A \backslash
  \bigcup_{q \in Q}A_q\big)}$. 
  % As Brun's pure sieve, the
  % sieve in~\Cref{lem:extended-brun} is based on a truncated
  % inclusion-exclusion principle. 
  By definition of $S(A,Q)$
  we have:
  \begin{flalign*}
      S(A, Q) & = \card{A} - \sum_{q \in Q} \card{A_q} +
      \sum_{s \neq r \in Q} \card{(A_s \cap A_{r})} - \dots
      \pm \card{\Big(\bigcap_{p \in Q}A_p \Big)}\,\\ 
      & = \card{A_1} - \sum_{q \in Q} \card{A_q} + \sum_{s
      \neq r \in Q} \card{A_{s \cdot r}} - \dots \pm
      \card{A_{\setprod Q}} &\hspace{-12pt}\text{by the
      \emph{independence} property}.
  \end{flalign*}
  Truncating the inclusion-exclusion sequence above, after
  an even (resp.~odd) number of terms results in a lower
  bound (resp.~upper bound) for $S(A,Q)$. Truncating the
  sequence too early would result in a useless bound;
  e.g.,~stopping at the second term might result in a
  negative lower bound for $Q$ sufficiently large.
  Conversely, truncating it too late would make the
  hypotheses of the lemma too weak. To emphasize better this
  point, let us first clarify the truncation. Let $\omega(r)
  \coloneqq \card{\PP(r)}$ be the prime omega function and,
  given $k \in \NN$, define $Q(k) \coloneqq \{ r \in
  \divisors(\setprod Q) : \omega(r) \leq k \}$. Fix $\ell
  \in \NN_+$. We consider the (truncated) sequence
  $T(\ell,A,Q)$ given by 
  \[ 
    T(\ell,A,Q) \coloneqq \card{A_1} - \sum_{q \in Q} \card{A_q} + \sum_{s \neq r \in Q} \card{A_{s \cdot r}} - \dots \pm \sum_{\substack{r  \text{ product of}\\ \text{$\ell$ distinct primes in $Q$}}}
    \card{A_r} 
  \]
  which can be also written as $\sum_{r \in Q(\ell)}
  (-1)^{\omega(r)} \card{A_r}$. From the \emph{density}
  property, $T(\ell,A,Q)$ equals
  \begin{equation}
    \label{eq:trunc}
    \card{A} \cdot \sum_{r \in Q(\ell)} \frac{(-1)^{\omega(r)} m(r)}{r} 
    \, + \, \sum_{r \in Q(\ell)} 
    (-1)^{\omega(r)} \sigma(r).
  \end{equation}
  Note that $\mu(x) \coloneqq (-1)^{\omega(x)}$ is the
  M\"obius function~\cite{Hardy75}, which is multiplicative.
  Let us look at the two sides of the addition above. Note
  that for $\ell= \card{Q}$ the left term $\card A \cdot
  \sum_{r \in Q(\ell)} \frac{(-1)^{\omega(r)}
  m(r)}{r}$ can be factorized as $\card A \cdot \prod_{q \in
  Q} \big(1 + \frac{\mu(q) \cdot m(q)}{q}\big)$, because
  both $\mu$ and $m$ are multiplicative. This is equal to
  $\card A \cdot W_m(Q)$, by definition of $W_m(Q)$ and
  using the fact that $\mu(q) = -1$ for $q$ prime. In
  practice, the higher the~$\ell$, the closer the left term
  of the addition in~\eqref{eq:trunc} becomes to $\card A
  \cdot W_m(Q)$. However, increasing $\ell$ comes at the
  cost of increasing the error term given by the right term
  in the addition. Indeed, note that for $\ell = \card{Q}$
  the sum $\sum_{r \in Q(\ell)} (-1)^{\omega(r)}\sigma(r)$
  can a priori be larger than $\sigma(\setprod Q)$, which
  from the hypotheses can at best be bounded as
  $|\sigma(\setprod Q)| \leq m(\setprod Q) \leq
  d^{\card{Q}}$. Hence, to obtain the bounds in the
  statement of~\Cref{lem:extended-brun}, we need to find a
  value of $\ell$ making the left term in~\eqref{eq:trunc}
  close enough to $\card A \cdot W_m(Q)$ while keeping the
  error term small (in absolute value). Below, we first
  analyze the two terms of the addition in~\eqref{eq:trunc},
  and then optimize the value of~$\ell$. For brevity, we
  focus on computing the lower bound of $S(A,Q)$ (which is
  all we need for~\Cref{thm:mixed-crt}); thus setting $\ell$
  to be odd, so that $S(A,Q) \geq T(\ell,A,Q)$. The
  computation of the upper bound is analogous.

  \proofparagraph{Lower bound on the error term
  of~\eqref{eq:trunc}} Since $|\sigma(r)| \leq m(r) \leq
  d^{\omega(r)} \leq d^\ell$ when $\omega(r) \leq \ell$,
    \begin{equation} 
      \label{eq:error-term}
      \sum_{r \in Q(\ell)}
        \mu(r) \cdot \sigma(r) \geq \sum_{r \in Q(\ell)} - |\sigma(r)|
        \geq \sum_{r \in Q(\ell)} -d^{\ell}
        \geq - \Big(\frac{e \cdot \card{Q}}{\ell}\Big)^\ell d^\ell,
    \end{equation}
  where the rightmost inequality is derived by applying a
  well-known upper bound on the partial sums of binomial
  coefficients: $\card{Q(\ell)} = \sum_{i=0}^{\ell}
  {\card{Q} \choose i} \leq \big(\frac{e \cdot
  \card{Q}}{\ell}\big)^\ell$.

  \proofparagraph{Lower bound on the left term
  of~\eqref{eq:trunc}} Correctly computing a lower bound for
  this term requires a long manipulation using properties of
  the M\"obius function and bounds on prime numbers. The
  following claim (proven in~\Cref{appendix:sieve})
  summarizes this computation.

  \begin{restatable}{claim}{BrunLBLeftTerm}
    \label{lem:extended-brun:left-term}
    $\displaystyle\sum_{r \in Q(\ell)}\frac{\mu(r) \cdot
      m(r)}{r} \geq W_m(Q) \left( 1 - \left(\frac{e \cdot
      \alpha}{\ell}\right)^\ell \alpha \cdot
      e^{\alpha}  \right)$, with $\alpha \coloneqq (d+1)^2
      (2 + \ln \ln (\card{Q} + 1))$.
  \end{restatable}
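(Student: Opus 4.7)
The plan is to bound the tail of the Möbius series for $W_m(Q)$. Since both $\mu$ and the restriction of $m$ to squarefree integers are multiplicative, expanding the product yields
\[
  W_m(Q) \;=\; \prod_{q \in Q} \bigl(1 - m(q)/q\bigr) \;=\; \sum_{r \mid \setprod Q} \mu(r)\, \frac{m(r)}{r},
\]
so the target sum differs from $W_m(Q)$ by the tail $R(\ell) \coloneqq \sum_{r \mid \setprod Q,\, \omega(r) > \ell} \mu(r) m(r)/r$. Setting $a_q \coloneqq m(q)/q$ and letting $e_k$ denote the $k$-th elementary symmetric polynomial of $(a_q)_{q \in Q}$, the tail rewrites as $R(\ell) = \sum_{k > \ell} (-1)^k e_k$, so proving the claim reduces to showing $R(\ell) \leq W_m(Q)\cdot (e\alpha/\ell)^\ell \alpha\, e^\alpha$.

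For the tail estimate I use two standard facts about $e_k$: the multinomial bound $e_k \leq S^k/k!$ with $S \coloneqq \sum_{q \in Q} a_q$, and the recursion $e_{k+1} \leq S \cdot e_k/(k+1)$, both derived from expanding $S^k = \bigl(\sum_q a_q\bigr)^k$. The claim is only non-trivial when $\ell \geq e\alpha \geq S$ (otherwise its right-hand side is negative), so the recursion guarantees that $(e_k)_{k > \ell}$ is strictly decreasing. By the alternating-series criterion together with $(\ell+1)! \geq ((\ell+1)/e)^{\ell+1}$,
\[
  |R(\ell)| \;\leq\; e_{\ell+1} \;\leq\; S^{\ell+1}/(\ell+1)! \;\leq\; (eS/\ell)^{\ell},
\]
where the last step uses $eS/(\ell+1) \leq 1$ to fold the extra power into the base.

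To inject the factor $W_m(Q)$ I would separately show $S \leq \alpha$ and $-\log W_m(Q) \leq \alpha$ via Mertens-style estimates. For $S$: since $m(q) \leq d$ and $\sum_{q \in Q} 1/q$ is maximised when $Q$ consists of the first $\card{Q}$ primes, Mertens' theorem gives $S \leq d\bigl(\ln \ln(\card{Q}+1) + O(1)\bigr) \leq \alpha/(d+1)$. For $-\log W_m(Q)$ I split $Q$ at the threshold $2d$: for $q > 2d$ the bound $m(q)/q \leq 1/2$ yields $-\log(1 - m(q)/q) \leq 2\,m(q)/q$, contributing at most $2S$; the at most $\pi(2d) = O(d/\log d)$ primes $q \leq 2d$ each contribute at most $\log(2d)$, summing to $O(d)$. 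Both parts are absorbed by $\alpha = (d+1)^2(2 + \ln\ln(\card{Q}+1))$, the extra $(d+1)$ factor supplying the slack for the small-prime contribution. Hence $W_m(Q) \geq e^{-\alpha}$, and chaining the estimates yields $|R(\ell)| \leq (e\alpha/\ell)^\ell \leq W_m(Q) \cdot \alpha\, e^\alpha (e\alpha/\ell)^\ell$, from which the claim follows by substituting the definition of $R(\ell)$.

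The main obstacle is precisely this $W_m(Q)$ factor: the naive sieve bound $|R(\ell)| \leq (e\alpha/\ell)^\ell$ is strictly weaker than the claim when $W_m(Q) \ll 1$, so recovering the $W_m(Q)$ multiplier must come from a separate lower bound $W_m(Q) \geq e^{-\alpha}$. Deriving that lower bound forces the split between small and large primes and explains why $\alpha$ carries a $(d+1)^2$ prefactor rather than just $(d+1)$: one $(d+1)$ controls $S$ via Mertens, and the second absorbs the $O(d)$ contribution from primes $q \leq 2d$ to $-\log W_m(Q)$.
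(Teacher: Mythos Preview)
Your approach is genuinely different from the paper's, and mostly correct, but your dismissal of the small-$\ell$ case is an actual gap in the proof of the claim as stated.

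\textbf{What the paper does versus what you do.} The paper first establishes the \emph{exact} identity
\[
  \sum_{r \in Q(\ell)} \frac{\mu(r)\, m(r)}{r} \;=\; W_m(Q) \Bigl( 1 + \sum_{\substack{s \mid \Pi Q \\ \omega(s) > \ell}} \frac{\psi_\ell(s)\, m(s)}{s \cdot W_m(\PP(s))} \Bigr),
\]
obtained by Möbius-inverting the truncated function $\mu_\ell$ (with $\psi_\ell(s)=\sum_{r\mid s}\mu_\ell(r)$); the factor $W_m(Q)$ falls out algebraically, and what remains is to bound the error sum in absolute value, which the paper does via $\sum_{q\in Q} m(q)/(q-m(q))\le\alpha$ and $k!\ge(k/e)^k$. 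You skip the identity entirely: you bound the raw tail $|R(\ell)|\le e_{\ell+1}\le(eS/\ell)^\ell$ by the alternating-series estimate, and then \emph{recover} the $W_m(Q)$ factor after the fact from a separate lower bound $W_m(Q)\ge e^{-\alpha}$. This is more elementary (no Möbius inversion, no $\psi_\ell$) and lands on the same numerical conclusion; the Mertens-type input is essentially the same in both proofs, just deployed at a different stage.

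\textbf{The gap.} Your sentence ``the claim is only non-trivial when $\ell\ge e\alpha$ (otherwise its right-hand side is negative)'' is not a proof of the small-$\ell$ case. The left-hand side $T_\ell=\sum_{k\le\ell}(-1)^k e_k$ can itself be negative (already $T_1=1-S$, and $S$ can exceed $1$), so a negative right-hand side does not make the inequality automatic; and your alternating-series bound $|R(\ell)|\le e_{\ell+1}$ needs $(e_k)_{k>\ell}$ decreasing, which fails for $\ell<S$. The paper's identity-based argument works uniformly in $\ell$ precisely because the $W_m(Q)$ factor is exact rather than reconstituted through $W_m(Q)\ge e^{-\alpha}$. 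That said, the surrounding proof of the lemma only \emph{uses} the claim with $\ell$ chosen in $[1.44\,e\alpha,\,1.44\,e\alpha+2]$, so your argument is enough for the application even though it does not establish the claim for all odd $\ell$. (Separately, the $O(d)$ in your bound on the small-prime contribution to $-\log W_m(Q)$ needs explicit constants to sit under $\alpha$, but that is routine.)
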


  \proofparagraph{Optimizing the value of~$\ell$} 
  To obtain the lower bound for $S(A,Q)$ presented in the
  statement of the lemma, we want $\ell$ to be chosen so
  that 
  \[
    \card{A} \cdot \sum_{r \in Q(\ell)}\frac{\mu(r) \cdot m(r)}{r} \geq \frac{1}{2} \cdot \card{A} \cdot W_m(Q).
  \]
  Following~\Cref{lem:extended-brun:left-term}, it suffices
  to pick an $\ell$ making the inequality ${\left(\frac{e
  \cdot \alpha}{\ell}\right)^\ell \alpha \cdot e^\alpha \leq
  \frac{1}{2}}$ true. Note that, since $d \geq 1$ and
  $\card{Q} \geq 1$, we have $\alpha > 6.5$\,. Then, we see
  that $\ell \geq 1.44 \cdot e \cdot \alpha$ does the job:
  \begin{align*}
    \left(\frac{e \cdot \alpha}{\ell}\right)^\ell
    \alpha \cdot e^{\alpha}
    \leq 
    \left(\frac{1}{1.44}\right)^{1.44 \cdot e \cdot \alpha} \cdot e^{\alpha + \ln \alpha}
    \leq \frac{e^{\alpha + \ln \alpha}}{1.44^{1.44 \cdot e \cdot \alpha}}
    \leq \frac{e^{1.3 \cdot \alpha}}{1.44^{1.44 \cdot e \cdot \alpha}}
    \leq
    \left(\frac{e^{1.3}}{1.44^{1.44 \cdot e}} \right)^{6.5}
    \leq \frac{1}{2}.
  \end{align*}
  Hence, we pick $\ell$ to be an odd number in $[1.44 \cdot
  e \cdot \alpha,\ 1.44 \cdot e \cdot \alpha + 2]$.
  From~\Cref{eq:error-term} we obtain 
  \begin{align*}
    \sum_{r \in Q(\ell)} \mu(r) \cdot \sigma(r) 
    \geq -\Big(\frac{e \cdot \card{Q}}{1.44 \cdot e \cdot \alpha +2}\Big)^{1.44 \cdot e \cdot \alpha + 2} \cdot d^{1.44 \cdot e \cdot \alpha + 2}
    \geq -\big(d \cdot \card{Q}\big)^{4 (d+1)^2  (2 + \ln \ln(\card Q+1))+2}.
  \end{align*}
  As $S(A,Q) \geq T(\ell,A,Q) = \card{A} \cdot \sum_{r \in
  Q(\ell)} \frac{\mu(r) \cdot m(r)}{r} \, + \, \sum_{r \in
  Q(\ell)} \mu(r) \cdot \sigma(r)$, that completes the
  proof.
\end{proof}

We now move to the proof of~\Cref{thm:mixed-crt}.

\ThmMixedCRT*

% We remark that one can also employ the CRT to study the
% solution set of the system~$\mcS$. Initially, pick for
% every $q \in Q$ a remainder $b_{q} \in [0,q-1]$ different
% from every $c_{q,i}$ modulo $q$ (with ${i \in [1,d]}$).
% The existence of such a $b_q$ is guaranteed since
% $\min(Q)>d$. Subsequently, consider the system of
% congruences defined by $x\equiv b_r \pmod r$ for every $r
% \in M \cup Q$. By applying the CRT to this system, one
% deduces that the set of solutions of~$\mcS$ is periodic in
% $\setprod Q \cdot \setprod M$, and thus $\mcS$ has a
% solution $x$ in the interval ${[k,k+\setprod Q  \cdot
% \setprod M]}$ for every $k \in \ZZ$. \Cref{thm:mixed-crt}
% implies that, in fact, a solution $x$ can be found in the
% interval ${[k,k+ \setprod M \cdot \min(\setprod Q,
% \ecrtf(Q,d))]}$. Observe that the bounds given by
% $\setprod Q$ and $\ecrtf(Q,d)$ are rather orthogonal: the
% former depends (exponentially) on the bit length of $Q$
% whereas the latter depends on $d$ and~$\card{Q}$.
% Therefore, when the bit length of the primes in $Q$ is
% comparatively much larger than $d$---as it is the case in
% this article---\Cref{thm:mixed-crt} gives a bound that is
% asymptotically stronger than the one of the CRT. 

\begin{proof}
  Expanding on the sketch of the proof given
  in~\Cref{sec:intro-crt}, recall that the set of primes $Q$
  and $d \in \pZZ$ defined in the statement
  of~\Cref{thm:mixed-crt} coincide with their homonyms
  in~\Cref{lem:extended-brun}. Furthermore, we let $n
  \coloneqq \setprod Q$, and define: 
  \begin{itemize}
    \item $S_M$ to be the solution set to the system of
    congruences ${\forall m \in M}$, ${x \equiv b_m \pmod
    m}$, which is a shifted lattice with period $\Pi M$ by
    the CRT, 
    \item $B(z) \coloneqq [k,k+z] \cap S_M$, where $k$ is
    the integer in the statement of the theorem,
    \item some integer $z$ to be optimized. We will show
    that $z = \ecrtf(Q,d)$ yield the theorem,
    \item $A \coloneqq B(z)$, and given $q \in Q$, ${A_q
    \coloneqq \{ a \in A : \text{there is } {i \in [1,d]}
    \text{ s.t.}~a \equiv c_{q,i} \pmod  q \}}$,
    \item for $r \in \divisors(n)$ not prime, ${A_r
    \coloneqq A \cap \bigcap_{q \in \PP(r)} A_q}$,
    \item for $r \in \divisors(n)$, ${m(r) \coloneqq
    \prod_{q \in \PP(r)} \#\{ c_{q,i} \mod q : i \in [1,d]
    \}}$, which is a multiplicative function,  
    \item and we take $\sigma(r) \coloneqq \card{A_r} -
    \card{A} \cdot \frac{m(r)}{r}$ as an error function.
  \end{itemize}
  Note that, by definition, $A \setminus \bigcup_{q \in Q}
  A_q$ corresponds to the set of solutions of $\mcS$ that
  belong to $[k,k+z]$. We show that the objects above
  satisfy the hypothesis of~\Cref{lem:extended-brun}, and
  that taking $z = \ecrtf(Q,d)$ makes the cardinality of $A
  \setminus \bigcup_{q \in Q} A_q$ strictly positive,
  yielding~\Cref{thm:mixed-crt}. 

  \proofparagraph{The assumptions
  of~\Cref{lem:extended-brun} hold} By hypothesis $\min(Q) >
  d$, hence $m(q) \leq d \leq q-1$ for every $q \in Q$.
  Below, we show that the \emph{independence} and
  \emph{density} properties are satisfied, and that
  $\abs{\sigma(r)} \leq m(r)$ for every ${r \in
  \divisors(n)}$. This allows us to
  apply~\Cref{lem:extended-brun} in the second part of the
  proof. The \emph{independence} property is trivially
  satisfied: given $r,s \in \divisors(n)$ coprime, we have 
  \[
  A_{r \cdot s} = A \cap \bigcap_{q \in \PP(r
  \cdot s)} A_q = \Big(A \cap \bigcap_{q \in \PP(r)} A_q\Big)
  \cap \Big(A \cap \bigcap_{p \in \PP(s)} A_p\Big) = A_{r} \cap
  A_{s}.
  \]
  Below, fix~${r \in \divisors(n)}$. The \emph{density}
  property and the condition $\abs{\sigma(r)} \leq m(r)$ are
  proved together. By definition of~$A_r$, 
  \[ 
    A_r = \bigcup_{\alpha \colon \PP(r) \to [1,d]} (A \cap S_{\alpha,r})\,,
    \qquad\text{where }
    S_{\alpha,r} \coloneqq \{ \ell \in \ZZ : \text{ for every } q \in \PP(r),\, \ell \equiv c_{q,\alpha(q)} \pmod q \}.
  \]
  The following claim bounds the cardinality of each $(A
  \cap S_{\alpha,r})$. It is proven in~\Cref{appendix:crt}.
  \begin{restatable}{claim}{ThmMixedCRTClaimOne}
    \label{thm:mixed-crt:claim1}
    $\displaystyle \frac{\card{A}}{r} - 1 \leq \card{(A \cap
    S_{\alpha,r})} \leq \frac{\card{A}}{r} + 1$.
  \end{restatable}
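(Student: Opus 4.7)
The plan is to identify $A \cap S_{\alpha,r}$ as the intersection of the interval $[k, k+z]$ with a single arithmetic progression of common difference $\setprod M \cdot r$, and then conclude by a routine counting argument for arithmetic progressions hit by a residue class.

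The first step uses the CRT twice. Applied to $\{x \equiv b_m \pmod m : m \in M\}$, it shows that $S_M$ is a shifted lattice of period $\setprod M$, so $A = [k, k+z] \cap S_M$ is an arithmetic progression of common difference $\setprod M$; I would write it explicitly as $A = \{a_0 + i \cdot \setprod M : 0 \leq i \leq \card{A} - 1\}$ for some base point $a_0$. Applied instead to $\{x \equiv c_{q,\alpha(q)} \pmod q : q \in \PP(r)\}$, using that $\PP(r) \subseteq Q$ consists of distinct primes (so $r$ is squarefree), it shows that $S_{\alpha, r}$ is a shifted lattice of period $r$.

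The key input from the hypotheses of \Cref{thm:mixed-crt} is that the elements of $M \cup Q$ are pairwise coprime and $M \cap Q = \emptyset$, which forces $\gcd(\setprod M, r) = 1$. Hence, as $i$ ranges over $\{0, 1, \dots, r-1\}$, the residues $(a_0 + i \cdot \setprod M) \bmod r$ form a complete set of residues modulo $r$. Writing $\card{A} = q \cdot r + s$ with $0 \leq s < r$, every residue class modulo $r$ meets $A$ in either $q$ or $q + 1$ points; in particular $q \leq \card{(A \cap S_{\alpha, r})} \leq q + 1$. Since $q \leq \card{A}/r$ and $q \geq (\card{A} - r + 1)/r > \card{A}/r - 1$, both inequalities of the claim follow. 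I do not anticipate any real obstacle here; the only point requiring care is to track the coprimality hypothesis of \Cref{thm:mixed-crt} so that both CRT applications and the bijectivity argument modulo $r$ go through simultaneously.
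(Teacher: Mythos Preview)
Your proof is correct and follows essentially the same approach as the paper: both use the CRT together with the coprimality hypothesis on $M \cup Q$ to reduce to counting how often a fixed residue class modulo $r$ is hit by an arithmetic progression of common difference $\setprod M$. The only cosmetic difference is in the counting step: the paper merges $S_M \cap S_{\alpha,r}$ into a single progression of period $r \cdot \setprod M$, introduces a reference point $u$ just below $k$, and manipulates floor-function identities, whereas you keep $A$ as a finite progression and count residues modulo $r$ directly via a pigeonhole argument---your version is arguably cleaner, but the underlying idea is identical.
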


  \noindent
  Directly form their definition, given two functions
  $\alpha_1,\alpha_2 \colon \PP(r) \to [1,d]$, the sets
  $S_{\alpha_1,r}$ and $S_{\alpha_2,r}$ satisfy one of the
  two following properties:
  \begin{itemize}
    \item 
    $S_{\alpha_1,r} \cap S_{\alpha_2,r} = \emptyset$ \ (this
    occurs when $c_{q,\alpha_1(q)} \not\equiv
    c_{q,\alpha_2(q)} \pmod q$ for some~$q \in \PP(r)$), or
    \item $S_{\alpha_1,r} = S_{\alpha_2,r}$ \ (this occurs
    when $c_{q,\alpha_1(q)} \equiv c_{q,\alpha_2(q)} \pmod
    q$, for every $q \in \PP(r)$).
  \end{itemize}
  With this in mind, we note that the number of disjoint
  sets in $ \{S_{\alpha,r} : \alpha \colon \PP(r) \to
  [1,d]\}$ corresponds to the value of the multiplicative
  function $m(r)$. Then, by~\Cref{thm:mixed-crt:claim1},
  $(\frac{\card{A}}{r} - 1) \cdot m(r) \leq \card{A_r} \leq
  (\frac{\card{A}}{r} + 1) \cdot m(r)$. This implies that
  $\sigma(r) = \card{A_r} - \card{A} \cdot \frac{m(r)}{r}$
  is such that $\abs{\sigma(r)} \leq m(r)$, as required, and
  also shows that the \emph{density} property holds.

  \proofparagraph{Applying~\Cref{lem:extended-brun}} The
  previous part of the proof shows that we can
  apply~\Cref{lem:extended-brun}, from which  we obtain
  $\card{\Big(A \setminus \bigcup_{q \in Q} A_q\Big)} \geq
  \frac{1}{2} \cdot \card{A} \cdot W_m(Q) - \brunf(Q,d)$.
  Remember that $A = [k,k+z] \cap S_M$ and that $A \setminus
  \bigcup_{q \in Q} A_q$ corresponds to the set of solutions
  of $\mcS$ that belong to $[k,k+z]$. To conclude the proof
  it suffices to make $\frac{1}{2} \cdot \card{A} \cdot
  W_m(Q) - \brunf(Q,d)$ greater or equal to $1$ by
  opportunely selecting the value of the parameter~$z$. We
  want $\card([k,k+z] \cap S_M) \geq 2 \cdot W_m(Q)^{-1}
   (1+\brunf(Q,d))$ which, from the fact that $S_M$ is
  periodic in $\setprod M$, holds as soon as $z \geq 2 \cdot
  W_m(Q)^{-1}  (1+\brunf(Q,d)) \cdot \setprod M$. 

  The following claim on an upper bound for $W_m(Q)^{-1}$ is
  proven in~\Cref{appendix:crt}.

  \begin{restatable}{claim}{ClaimCRTBoundOnW}
    \label{claim:CRT:bound-on-W}
    $\displaystyle W_m(Q)^{-1} \leq (d+1)^{10 d}
    \ln(\card{Q}+1)^{3 d}$. 
  \end{restatable}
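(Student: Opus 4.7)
The plan is to reduce the claim to an arithmetic estimate on $\prod_{q \in Q}\frac{q}{q-d}$ by using the hypothesis $m(q) \le d$ available from \Cref{lem:extended-brun}, and then to evaluate this product by splitting $Q$ into primes that are ``close to'' and ``far from'' $d$. Since $m(q) \le d < q$ for every $q\in Q$, we get $\frac{q}{q-m(q)} \le \frac{q}{q-d}$, hence
\[
  W_m(Q)^{-1} \;=\; \prod_{q\in Q}\frac{q}{q-m(q)} \;\le\; \prod_{q\in Q}\frac{q}{q-d},
\]
so it suffices to bound the right-hand side. I would split $Q$ at $2d$, writing $Q_1 \coloneqq \{q \in Q : q \le 2d\}$ and $Q_2 \coloneqq Q \setminus Q_1$, and estimate each subproduct separately.

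On the small side, since $\min(Q) > d$ every $q \in Q_1$ lies in $(d, 2d]$, a range containing at most $d$ integers and therefore at most $d$ primes. Each factor is crudely bounded by $\frac{q}{q-d} \le q \le 2d$, so $\prod_{q \in Q_1} \frac{q}{q-d} \le (2d)^{d}$, which is comfortably absorbed into the target factor $(d+1)^{10d}$. On the large side, every $q \in Q_2$ satisfies $q - d > q/2$, so $\frac{q}{q-d} = 1 + \frac{d}{q-d} \le \exp(2d/q)$ and
\[
  \prod_{q \in Q_2} \frac{q}{q-d} \;\le\; \exp\Big(2d \cdot \sum\nolimits_{q \in Q_2} \tfrac{1}{q}\Big).
\]
The sum $\sum_{q \in Q_2} 1/q$ is maximized when $Q_2$ consists of the first $|Q_2|$ primes above $2d$; combining Mertens' theorem with the Rosser--Schoenfeld estimate $p_k = O(k \ln k)$ on the $k$-th prime, it is at most $\ln \ln p_{|Q_2|} + O(1) \le \ln \ln(|Q|+1) + O(1)$. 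This yields $\prod_{q \in Q_2} \frac{q}{q-d} \le e^{O(d)} \cdot \ln(|Q|+1)^{2d}$, and multiplying by the $Q_1$-contribution while absorbing constants into the factors $(d+1)^{10d}$ and $\ln(|Q|+1)^{3d}$ gives the claim.

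The main technical subtlety is that the classical Mertens bound $\sum_{p \le x} 1/p \le \ln\ln x + O(1)$ controls the sum only in terms of the largest prime in $Q$, whereas the stated inequality must depend on $|Q|$ alone. Passing from one to the other via the explicit Rosser--Schoenfeld bound on $p_k$ is precisely what ensures that the resulting estimate is insensitive to enlarging the individual primes of $Q$, which is the property that drives \Cref{thm:mixed-crt} to a bound independent of the bit length of the primes involved.
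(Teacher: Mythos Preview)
Your proof is correct and follows essentially the same strategy as the paper's: bound $W_m(Q)^{-1}$ by $\prod_{q\in Q} q/(q-d)$, split off at most $d$ ``small'' primes near $d$ (you cut at the value $2d$, the paper instead peels off the $d$ smallest elements of $Q$ and uses $\tfrac{q}{q-d}\le d+1$), and control the tail by reducing to $\sum_q 1/q$ and passing from the largest prime to $\card{Q}$ via the Rosser bound $p_k\asymp k\ln k$. The only cosmetic difference is that you invoke Mertens' theorem as a black box where the paper writes out the corresponding Riemann-sum estimate $\sum 1/(i\ln i)\le \int \mathrm{d}x/(x\ln x)$ explicitly.
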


\noindent
  \Cref{claim:CRT:bound-on-W} and the definition of $\brunf$
  show that setting ${z \coloneqq \big((d+1) \cdot
  \card{Q}\big)^{4 (d+1)^2 (3 + \ln \ln (\card{Q}+1))}
  \!\cdot \setprod M}$ suffices to satisfy $z \geq 2 \cdot
  W_m(Q)^{-1} (1+\brunf(Q,d)) \cdot \setprod M$,
  concluding the proof.
\end{proof}

\section{A novel strategy for Lipshitz's local-to-global property}
\label{sec:divisibility-algo}

In this section we establish~\Cref{theorem:local-to-global},
providing an asymptotical improvement over the
local-to-global properties for systems of divisibility
constraints discovered by Lipshitz~\cite{Lipshitz78} and
later refined by Lechner et al.~\cite{LechnerOW15}. Most of
the definitions and some intermediate lemmas required for this
result were already formally presented
in~\Cref{sec:intro-local-global}. To avoid repeating them,
we refer the reader to that section, and consider here only
concepts for which further details are required in order to give
the proof of~\Cref{theorem:local-to-global}. On a
high-level, recall that the main concepts discussed
in~\Cref{sec:intro-local-global} are:
\begin{itemize}
  \item The notions of \emph{divisibility module} and
    \emph{$r$-increasing form}. In general, only systems of
    divisibility constraints in increasing form can be
    solved via the local-to-global property.
  \item The notions of \emph{elimination property},
  \emph{$S$-polynomials} and \emph{$S$-terms}. The first
  notion relies on \emph{divisibility modules} to close a system under a finite representation of all its
  entailed divisibilities. The latter two terms are required
  to establish~\Cref{theorem:local-to-global} inductively;
  we will use them to ensure that increasingness is not lost
  after fixing the value of a variable.
  \item The notion of \emph{difficult primes}
  $\pzero(\Phi)$, that is primes~$p$ for which either the
  system of divisibility constraints~$\Phi$ might not have a
  solution modulo $p$, or the solution always exists but
  still influences the minimal integer solution for $\Phi$. 
\end{itemize}

\noindent
Except for~\Cref{theorem:local-to-global}, we defer all
proofs of intermediate results
to~\Cref{appendix:module-basis,appendix:bound-pzero}. 

\paragraph*{Assumptions and further basic definitions.}
Let $\Phi(\vec x) \coloneqq \bigwedge_{i = 1}^m f_i(\vec x)
\div g_i(\vec x)$ be a system of divisibility constraints in
$d$ variables. Throughout the section, wlog.~we tacitly
assume the systems to be non-empty ($m \geq 1$) and
\emph{reduced}, that is such that the GCD of all
coefficients and constants appearing in divisibilities $f
\div g$ is $1$, i.e., $\gcd(\gcd(f),\gcd(g))=1$. Recall
that we assume that $f_i\neq 0$ for all $1\le i\le m$.

Given $\vec b \in \ZZ^{i}$ and a polynomial
$f(x_1,\dots,x_d)$, we write $f(\vec b,x_{i+1},\dots,x_d)$
for the polynomial in variables $(x_{i+1},\dots,x_{d})$
obtained from $f$ by evaluating $x_j$ as the $j$-th entry of
$\vec b$, for all~$j \in [1,i]$.  
Given $\vec v = (v_1,\dots,v_n) \in \ZZ^d$, $\norminf{\vec
v} \coloneqq \max\{ \abs{v_i} : i \in [1,n] \}$ stands for
the \emph{(infinity) norm} of~$\vec v$. 

We define
$\norminf{S} \coloneqq \max\{ \norminf{s} : s \in S\}$, for
every finite set $S$ of objects having a defined notion of
infinity norm. The norm $\norminf{A}$ of a matrix $A$ is the
norm of the set of its columns. Given a polynomial $f = \vec
a^\intercal \vec x + c$, $\norminf{f} \coloneqq
\max(\norminf{\vec a}, \abs{c})$. For a system of
divisibility constraints $\Phi$, $\norminf{\Phi} \coloneqq
\norminf{\terms(\Phi)}$. 

We write $\bitlength{a} \coloneqq 1
+ \ceil{\log_2(\abs{a}+1)}$ for the bit length of $a \in
\ZZ$. The bit length of a set (or vector) $S$ of $n$ objects
$s_1,\dots,s_n$ having a defined notion of bit length
$\bitlength{.}$ is itself defined as $\bitlength{S}
\coloneqq n + \sum_{i=1}^n \bitlength{s_i}$. We define
$\bitlength{f} \coloneqq \bitlength{\vec a} + \bitlength{c}
+ 1$ and $\bitlength{\Phi} \coloneqq
\bitlength{\terms(\Phi)}$ for the bit length of a polynomial
$f = \vec a^\intercal \vec x + c$ and of a system of
divisibility constraints~$\Phi$, respectively. Note that
$\bitlength{\norminf{S}}$ is simply the bit length of the
infinity norm of $S$; where $S$ is any object having a
defined notion of infinity norm.

\subsection{Bounds on divisibility modules, elimination property, $S$-terms, and $\pzero(\Phi)$}
\label{subsec:divisibility:notion-increasing}
\label{subsec:divisibility:notion-elimination}
\label{subsec:divisibility:difficult-primes}

For the proof of~\Cref{theorem:local-to-global} we need to refine some of the bounds given in~\Cref{sec:intro-local-global}.
In that section we have briefly discussed
the existence of an algorithm to close a system of
divisibility constraints under the elimination property
(\Cref{lemma:add-elimination-property-summary}). This
algorithm relies on a procedure computing a span for the
\emph{divisibility module} $\module_f(\Phi)$ of a primitive
polynomial $f$ with respect to a system of divisibility
constraints~$\Phi$. Recall that $\module_f(\Phi)$ is a
vector subspace encoding all the divisibilities of the form
$f \div g$ implied by $\Phi$. From the formal definition of
divisibility module, it is simple to convince
ourselves that a set spanning $\module_f(\Phi)$ can be found
by taking $f$ together with a subset of the right-hand
sides of the divisibilities in~$\Phi$,
possibly scaled. In~\Cref{appendix:module-basis} we show
that computing such a span can be done in polynomial-time
by a fix-point algorithm chaining computations of
integer kernels.

\begin{restatable}{lemma}{LemmaModuleSpan}
  \label{lemma:module-span}
  There is a polynomial-time algorithm that, given a system~$\Phi(\vec x)
  \coloneqq \bigwedge_{i=1}^m f_i \div g_i$
  and a primitive polynomial $f$, computes
   $c_1,\dots,c_m\in \NN^m$ such that
  $\{f,\,c_1 \cdot g_1,\, \dots,\, c_m \cdot g_m\}$ spans
  $\module_f(\Phi)$ and 
  $c_i \leq ((m+3) \cdot
  (\norminf{\Phi}+2))^{(m+3)^3}$ for all $1\le i\le m$.
\end{restatable}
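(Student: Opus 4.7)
}
The plan is a saturation procedure that iteratively refines a candidate spanning set, followed by a careful compounding analysis on the sizes of the coefficients it produces. Identify each linear polynomial $h(\vec x) = \vec a^\intercal \vec x + c$ with the vector $(\vec a,c) \in \ZZ^{d+1}$, and maintain tentative coefficients $c_1,\dots,c_m \in \NN$ initialized to $0$ together with the spanning set $S \coloneqq \{f\} \cup \{c_i \cdot g_i : c_i > 0\}$. At each iteration, for every $i \in [1,m]$ I would use the Hermite normal form of the matrix whose columns are the vectors of $S$ to compute the smallest positive integer $b$ with $b \cdot f_i$ in the $\ZZ$-span of $S$; if no such $b$ exists, leave $c_i$ unchanged, and otherwise update $c_i \coloneqq b$ (possibly replacing a previous value with a smaller one) and refresh $S$ accordingly. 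The procedure halts when no $c_i$ changes during a full sweep, and returns the final tuple $(c_1,\dots,c_m)$.

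For correctness, a straightforward induction on the number of iterations shows that $S \subseteq \module_f(\Phi)$ holds throughout. Conversely, at the fixed point the $\ZZ$-span of~$S$ satisfies properties~(i)--(iii) in the definition of $\module_f(\Phi)$: (i) because $f \in S$; (ii) by definition of span; and (iii) because the fixed-point condition forces every $b > 0$ with $b \cdot f_i$ in the span to be a positive multiple of $c_i$, whence $b \cdot g_i = (b/c_i) \cdot c_i g_i$ also lies in the span. Since $\module_f(\Phi)$ is by definition the smallest set satisfying (i)--(iii), the two coincide.

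For termination in polynomial time, two standard facts suffice. First, the rational rank of the span of~$S$ is bounded by $m+1$ (there are at most $m+1$ generators), so it can strictly grow at most $m$ times. Second, between rank jumps every update strictly refines a full-rank sublattice, whose positive determinant strictly decreases and has bit length polynomial in the current $\norminf{S}$. An amortized argument then bounds the total number of iterations polynomially, and each iteration performs $m$ Hermite normal form computations on matrices of polynomial size with polynomial-bit-length entries, so the overall running time is polynomial.

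The main obstacle is the explicit bound $c_i \leq ((m+3)\cdot(\norminf{\Phi}+2))^{(m+3)^3}$. At any given iteration, $c_i$ is the minimal common denominator of the rational coordinates of $f_i$ with respect to a basis of the current span; by Cramer's rule together with Hadamard's inequality, and using that the rational span has rank at most $m+1$, it is bounded by roughly $((m+2)\cdot\norminf{S})^{m+2}$. In turn, $\norminf{S}$ grows by at most a factor proportional to $\max_i c_i \cdot \norminf{\Phi}$ per iteration. Carefully unrolling this compounding recurrence across the polynomially many iterations, while maintaining a uniform (and deliberately crude) accounting of the accumulated exponents, constitutes the bulk of the technical work and ultimately produces the claimed exponent $(m+3)^3$ in the final bound.
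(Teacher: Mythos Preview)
Your algorithm and its correctness argument coincide with the paper's: both iterate, updating each $c_i$ to the minimal positive $b$ with $b\cdot f_i$ in the current span, and both verify that the fixed point satisfies properties~(i)--(iii) of $\module_f(\Phi)$. The gap is in the quantitative analysis.

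The recurrence you propose --- $c_i \lesssim ((m{+}2)\,\norminf{S})^{m+2}$ together with $\norminf{S} \lesssim \max_j c_j \cdot \norminf{\Phi}$ --- multiplies the exponent by roughly $m{+}2$ at every one of the (up to $m$) iterations in which some $c_i$ first becomes nonzero. Unrolling it therefore yields an exponent of order $(m{+}2)^m$, not $(m{+}3)^3$; no ``deliberately crude accounting'' converts an exponential-in-$m$ exponent into a cubic one. This also undermines your polynomial-time claim, since intermediate entries then have super-polynomial bit length and your determinant-halving bound on the second phase becomes exponential.

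The paper breaks this compounding with an ingredient you are missing. Let $\widehat S$ be $S$ with each $c_i g_i$ replaced by $g_i$ (so $\widehat S$ depends only on the $0/1$ pattern of the $c_i$'s). A short linear-algebra lemma shows that scaling the columns of a matrix by positive integers $c_1,\dots,c_m$ changes the GCD of any row of a kernel basis by at most a factor of $\lcm(c_1,\dots,c_m)$. Hence the new $c_i^{(\ell)}$ \emph{divides} $\lcm(c_1^{(\ell-1)},\dots,c_m^{(\ell-1)})\cdot \widehat v_i^{(\ell-1)}$, where $\widehat v_i^{(\ell-1)}$ is the analogous quantity for $\widehat S$ and is bounded by a fixed $N \le ((m{+}3)(\norminf{\Phi}{+}2))^{m+3}$ independent of $\ell$. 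Because these are divisibilities rather than mere inequalities, the $\lcm$'s telescope: an induction gives $c_i^{(\ell)} \mid \widehat v_i^{(\ell-1)}\cdot\prod_{k\le \ell-2}\lcm_j \widehat v_j^{(k)} \le N^{m(\ell-1)+1}$. The zero pattern stabilizes after at most $m$ sweeps, after which every nonzero $c_i$ only strictly divides its predecessor; this both freezes the bound at $N^{m^2}$ and caps the remaining iterations at $m^3\log_2 N$. That is where the cubic exponent $(m{+}3)^3$ actually comes from.
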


\noindent
Regarding the computation of formulae with the elimination
property, \Cref{lemma:add-elimination-property-summary} is
not precise enough for our purposes to
establish~\Cref{theorem:local-to-global}. We restate it,
tracking the growth of constants and coefficients, as well
as structural properties of the output system of
divisibility constraints.

\begin{restatable}{lemma}{LemmaAddEliminationProperty}
  \label{lemma:add-elimination-property}
  There is a polynomial-time algorithm that, given a system
  of divisibility constraints $\Phi(\vec x) \coloneqq
  \bigwedge_{i=1}^m f_i \div g_i$ and an order $x_1 \incord
  \dots \incord x_d$ for $\vec x$, computes $\Psi(\vec x)
  \coloneqq \bigwedge_{i=1}^n f_i' \div g_i'$ with the
  elimination property for~$\incord$ that is equivalent to
  $\Phi(\vec x)$, both over~$\ZZ$ and modulo each $p \in
  \PP$. The
  algorithm ensures that:
  \begin{enumerate}
  \item\label{lemma:add-elim-property:item-1} For any
    divisibility constraint $f\div g$ such that $f$ is not primitive,
    $f \div g$ occurs in $\Phi$ if and only if $f \div g$
    occurs in $\Psi$. Moreover, for every $f_i'\div g_i'$ in
    $\Psi$ such that $f_i'$ is primitive, there is some
    $f_j\div g_j$ in $\Phi$ such that $f_j'$ is the
    primitive part of $f_j$.
    \item\label{lemma:add-elim-property:item-2} For every
    primitive polynomial $f$, $\module_f(\Phi) =
    \module_f(\Psi)$ (in particular, if $\Phi$ is increasing
    for some order~$\incord'$ then so is $\Psi$, and vice
    versa).
    \item\label{lemma:add-elim-property:item-3}
    $\norminf{\Psi} \leq (d+1)^{\bigO{d}}(m + \norminf{\Phi}
    +2)^{\bigO{m^3 d}}$ \ and \ $n \leq m \cdot (d+2)$.
  \end{enumerate}
\end{restatable}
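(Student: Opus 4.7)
The algorithm is the one sketched after~\Cref{lemma:add-elimination-property-summary}, which I would make precise as follows. Let $F$ be the set of distinct primitive parts of $f_1, \ldots, f_m$ (so $\card{F} \le m$). For each $f \in F$, invoke~\Cref{lemma:module-span} to obtain integers $c_1^f, \ldots, c_m^f$ such that $S_f \coloneqq \{f, c_1^f g_1, \ldots, c_m^f g_m\}$ spans $\module_f(\Phi)$, with each $c_i^f$ bounded by $((m+3)(\norminf{\Phi}+2))^{(m+3)^3}$. Assemble the polynomials of $S_f$ as rows of an integer matrix $M_f$ of size $(m+1) \times (d+1)$, with columns ordered so that the leftmost column records the coefficient of the $\incord$-maximum variable and the rightmost column records the constant. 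Compute the Hermite normal form of $M_f$ to obtain at most $d+1$ $\ZZ$-linearly independent rows $h_1^f, \ldots, h_{\ell_f}^f$ that still span $\module_f(\Phi)$ and have pairwise distinct leading variables consistent with $\incord$, thanks to the staircase shape of HNF. Build $\Psi$ by keeping every original $f_i \div g_i$ with $f_i$ non-primitive, and adding the divisibilities $f \div h_1^f, \ldots, f \div h_{\ell_f}^f$ for every $f \in F$.

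Items~\ref{lemma:add-elim-property:item-1} and~\ref{lemma:add-elim-property:item-2} are immediate from the construction: non-primitive divisibilities are untouched; each new right-hand side lies in $\module_f(\Phi)$; and every original primitive $g_i$ is a $\ZZ$-combination of the $h_j^{f_i}$'s, so $\module_f(\Psi) = \module_f(\Phi)$. The elimination property for $\incord$ holds because the staircase structure of HNF forces, for each $k$, the $h_j^f$'s with $\lv(h_j^f) \incordeq x_k$ to be a $\ZZ$-basis of $\module_f(\Phi) \cap \ZZ[x_1, \ldots, x_k]$.

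Equivalence over $\ZZ$ splits into two directions. For $\Phi \Rightarrow \Psi$, every integer solution $\vec a$ of $\Phi$ satisfies $f(\vec a) \div h(\vec a)$ for every $h \in \module_f(\Phi)$, and in particular for each $h_j^f$. For $\Psi \Rightarrow \Phi$, non-primitive divisibilities survive verbatim, and for each primitive $f_i = \pm f$ the value $g_i(\vec a)$ is a $\ZZ$-combination of the $h_j^f(\vec a)$'s, each divisible by $f(\vec a)$. Equivalence modulo each prime $p$ uses the same template with ``divides'' replaced by the $p$-adic inequality $v_p(\cdot) \le v_p(\cdot)$, appealing to the ultrametric inequality $v_p\bigl(\sum_j a_j\, h_j^f(\vec a)\bigr) \ge \min_j v_p(h_j^f(\vec a))$ to lift statements from the spanning set to any $\ZZ$-combination, and using $f_i = c \cdot f$ with $c \ne 0$ to transfer non-vanishing of left-hand sides between $\Phi$ and $\Psi$.

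For item~\ref{lemma:add-elim-property:item-3}, each row of $M_f$ has norm at most $\norminf{\Phi} \cdot ((m+3)(\norminf{\Phi}+2))^{(m+3)^3}$. A Hadamard-type bound on entries of the Hermite normal form of an integer matrix of rank $t \le d+1$ with entry bound $M$, giving entries no larger than roughly $(d+1)^{\bigO{d}} M^{d+1}$, then yields $\norminf{\Psi} \le (d+1)^{\bigO{d}}(m + \norminf{\Phi} + 2)^{\bigO{m^3 d}}$; the count $n \le m + \card{F}(d+1) \le m(d+2)$ is immediate. The two hurdles in carrying this out are (i) keeping the HNF entry bound sharp enough to remain single-exponential in $d$, which reduces to a careful application of the Hadamard bound on subdeterminants of $M_f$, and (ii) propagating the equivalence to every modulus $p$, where the subtle point is that eliminating primitive divisibilities $f_i \div g_i$ in favour of the basis $\{f \div h_j^f\}$ is safe only thanks to the ultrametric inequality above.
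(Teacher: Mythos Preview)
Your proposal is correct and follows essentially the same approach as the paper: compute a spanning set for each $\module_f(\Phi)$ via \Cref{lemma:module-span}, reduce it to a basis with distinct leading variables via the Hermite normal form, and rebuild $\Psi$ by replacing the primitive divisibilities with $f \div h_j^f$. The only cosmetic difference is that the paper arranges polynomials as \emph{columns} of a $(d+1)\times(m+1)$ matrix and applies column-style HNF (citing an explicit entry bound of the form $(n\cdot\norminf{A}+1)^{\bigO{n}}$ with $n=d+1$ rows), whereas you transpose and use row-style HNF with a Hadamard-type argument; the resulting bounds and the staircase reasoning for the elimination property are identical.
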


\noindent
Let us sketch this algorithm.
For every primitive part $f$ of a polynomial
appearing in the left-hand side of a divisibility constraint in~$\Phi$,
the algorithm first computes the set $S \coloneqq \{f,\,c_1
\cdot g_1,\, \dots,\, c_m \cdot g_m\}$ spanning
$\module_f(\Phi)$, using the algorithm
of~\Cref{lemma:module-span}. The set $S$ can be represented
as the matrix~$A \in \ZZ^{(d+1) \times (m+1)}$ in which each
column $(a_d,\dots,a_1,c)$ contains the coefficients and the
constant of a distinct element of $S$, with $a_i$ being the
coefficient of~$x_i$ for $i \in [1,d]$, and $c$ being the
constant of the polynomial. The algorithm puts $A$ in
column-style Hermite normal form, obtaining linearly
independent polynomials $h_1,\dots,h_\ell$ with different
leading variables with respect to $\incord$. Because of how
the coefficients and constants are arranged in $A$, we can
obtain the system $\Psi$ by simply replacing 
divisibility constraints of the form $f \div g$ appearing in~$\Phi$
with the divisibility constraints $f \div h_1,\dots,f \div h_\ell$.
\Cref{lemma:add-elim-property:item-1,lemma:add-elim-property:item-2}
are then easily seen to be satisfied,
whereas~\Cref{lemma:add-elim-property:item-3} follows from
the bound on $c_1,\dots,c_m$ given in~\Cref{lemma:module-span} together with known
bounds for putting an integer matrix in Hermite normal
form~\cite{Vanderkellen00}. Full details are given
in~\Cref{appendix:module-basis}, together with the proof of
the following lemma.

\begin{restatable}{lemma}{LemmaSubstitAndElim}
  \label{lemma:substit-and-elim} 
  Let $\Phi(\vec x,\vec y)$ and $\Psi(\vec x,\vec y)$ be
  input and output of the algorithm
  in~\Cref{lemma:add-elimination-property}, respectively.
  For every~$\vec \nu : \vec x \to \ZZ$ and primitive
  polynomial $f$, $\module_f(\Phi(\vec \nu(\vec x),\vec y))
  \subseteq \module_f(\Psi(\vec \nu(\vec x),\vec y))$.
\end{restatable}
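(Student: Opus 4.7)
The plan is to prove the inclusion by induction on the construction of $\module_f(\Phi(\vec \nu(\vec x), \vec y))$ as the smallest set containing $f$, closed under $\ZZ$-linear combinations, and closed under the rule ``$g \div h$ a divisibility in the system and $b \cdot g$ in the module imply $b \cdot h$ in the module.'' It suffices to show that each generating rule, applied in $\Phi(\vec \nu(\vec x), \vec y)$, produces an element already in $\module_f(\Psi(\vec \nu(\vec x), \vec y))$. The base case ($f$ itself) and closure under $\ZZ$-linear combinations transfer immediately since $\module_f(\Psi(\vec \nu(\vec x), \vec y))$ enjoys the same two properties.

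The non-trivial case is the divisibility closure. Consider a constraint $p'(\vec \nu(\vec x), \vec y) \div q'(\vec \nu(\vec x), \vec y)$ in $\Phi(\vec \nu(\vec x), \vec y)$ arising from some $p' \div q'$ in $\Phi$, together with an integer $b$ satisfying $b \cdot p'(\vec \nu(\vec x), \vec y) \in \module_f(\Psi(\vec \nu(\vec x), \vec y))$ by the inductive hypothesis; the task is to derive $b \cdot q'(\vec \nu(\vec x), \vec y) \in \module_f(\Psi(\vec \nu(\vec x), \vec y))$. If $p'$ is not primitive, item~1 of~\Cref{lemma:add-elimination-property} guarantees that $p' \div q'$ occurs verbatim in $\Psi$, so its substituted form is a divisibility of $\Psi(\vec \nu(\vec x), \vec y)$, and the closure property of $\module_f(\Psi(\vec \nu(\vec x), \vec y))$ immediately yields the conclusion. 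If $p'$ is primitive, then the Hermite-based step in the algorithm behind~\Cref{lemma:add-elimination-property} contributes divisibilities $p' \div h_1, \ldots, p' \div h_\ell$ to $\Psi$ whose right-hand sides form a $\ZZ$-spanning set of $\module_{p'}(\Phi) = \module_{p'}(\Psi)$, the equality being item~2 of the same lemma. Since $p' \in \module_{p'}(\Phi)$ and $p' \div q'$ belongs to $\Phi$, we have $q' \in \module_{p'}(\Phi)$, so we can write $q' = \sum_i \lambda_i h_i$ for some integers $\lambda_i$. This polynomial identity is preserved under the substitution $\vec x \mapsto \vec \nu(\vec x)$, giving $q'(\vec \nu(\vec x), \vec y) = \sum_i \lambda_i h_i(\vec \nu(\vec x), \vec y)$. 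Each divisibility $p'(\vec \nu(\vec x), \vec y) \div h_i(\vec \nu(\vec x), \vec y)$ lies in $\Psi(\vec \nu(\vec x), \vec y)$; combining this with the inductive hypothesis $b \cdot p'(\vec \nu(\vec x), \vec y) \in \module_f(\Psi(\vec \nu(\vec x), \vec y))$ and applying the closure rule yields $b \cdot h_i(\vec \nu(\vec x), \vec y) \in \module_f(\Psi(\vec \nu(\vec x), \vec y))$ for every $i$, and summing with coefficients $\lambda_i$ closes the induction.

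The main obstacle is precisely the primitive sub-case: the constraint $p' \div q'$ may have been eliminated when $\Phi$ was rewritten into $\Psi$ via Hermite reduction, and we must simulate its effect using only the retained divisibilities $p' \div h_i$. This is where item~2 of~\Cref{lemma:add-elimination-property} does the essential work, certifying that the Hermite step loses no information about $\module_{p'}$; the other, trivial ingredient is that a $\ZZ$-linear dependence among polynomials in $\vec x, \vec y$ remains valid after evaluating the $\vec x$ coordinates at any integers. Note that we never need the converse inclusion $\module_f(\Psi(\vec \nu(\vec x), \vec y)) \subseteq \module_f(\Phi(\vec \nu(\vec x), \vec y))$, which would fail in general since $\Psi$ may add new divisibilities $p' \div h_i$ that go beyond what $\Phi$ syntactically provides once $\vec x$ has been fixed.
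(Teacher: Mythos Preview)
Your proof is correct and follows essentially the same approach as the paper: both arguments exploit the minimality characterization of the divisibility module to reduce the claim to verifying that $\module_f(\Psi(\vec \nu(\vec x),\vec y))$ is closed under the divisibility rule of $\Phi(\vec \nu(\vec x),\vec y)$, then split on whether the left-hand side is primitive, using item~1 of \Cref{lemma:add-elimination-property} for the non-primitive case and item~2 (plus the elimination property / Hermite basis) for the primitive case to express the right-hand side as a $\ZZ$-linear combination of the retained $h_i$. The only cosmetic difference is that the paper phrases the argument as ``verify the three defining properties and invoke minimality'' rather than as an explicit induction, and it appeals to the elimination property of $\Psi$ directly instead of pointing to the algorithm's Hermite step; the content is the same.
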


\noindent
This lemma, established by relying on the definition of
divisibility module together
with~\Cref{lemma:add-elim-property:item-1,lemma:add-elim-property:item-2}
of~\Cref{lemma:add-elimination-property}, is used in the
proof of~\Cref{theorem:local-to-global} to
establish that if $\Psi(\vec \nu(\vec x),\vec y)$ is in
increasing form for some order, then so is $\Phi(\vec
\nu(\vec x),\vec y)$.

To prove~\Cref{theorem:local-to-global} we also need a bound
on the number of $S$-terms of a system of divisibility
constraints. We have already claimed
in~\Cref{sec:intro-local-global} that systems with the
elimination property only have polynomially many $S$-terms.
The precise bound, computed following the relevant
definitions, is given in the following lemma
(see~\Cref{appendix:bound-pzero} for the complete proof).

\begin{restatable}{lemma}{LemmaBoundSterms}
  \label{lemma:bound-sterms}
  Let $\Phi \coloneqq
  \bigwedge_{i=1}^m f_i \div g_i$ be a system of divisibility constraints
  in $d$ variables with the elimination property for~$\incord$. Then,
  \begin{enumerate*}[(i)]
    \item\label{lemma:bound-sterms-one}
      $\card{\sterms(\Phi)} \leq 2 \cdot m^2 (d+2)$ and
    \item\label{lemma:bound-sterms-two}
      $\maxbl{\sterms(\Phi)} \leq (d+2) \cdot
      (\bitlength{\norminf{\Phi}}+1)$.
  \end{enumerate*}
\end{restatable}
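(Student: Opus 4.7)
\emph{The plan is to} extract a key structural consequence of the elimination property: for each primitive part~$f$ of a left-hand side of~$\Phi$, the set $G_f \coloneqq \{g : f \div g \text{ occurs in } \Phi\}$ consists of polynomials with pairwise distinct leading variables. This follows directly from the elimination property: for each $k$, the subset $\{g \in G_f : \lv(g) \incordeq x_k\}$ is by assumption a basis of $\module_f(\Phi) \cap \ZZ[x_1, \dots, x_k]$, and the latter module has dimension at most one larger than $\module_f(\Phi) \cap \ZZ[x_1, \dots, x_{k-1}]$; hence at most one element of $G_f$ can have leading variable exactly~$x_k$.

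Granting this, I would compute $\sfterms{f}{\Phi}$ iteratively starting from $\terms(\Phi)$ and observe that each polynomial $h$ currently in the set triggers at most one new $S$-polynomial, namely $S(g_k, h)$ where $g_k$ is the unique element of $G_f$ with $\lv(g_k) = \lv(h) = x_k$ (if such a $g_k$ exists). Thus each $t \in \terms(\Phi)$ generates a single deterministic chain $t = h_0, h_1, h_2, \dots$ whose leading variables strictly decrease, yielding a chain of at most $d+2$ entries and therefore at most $d+1$ new polynomials per starting term. Consequently $|\sfterms{f}{\Phi} \setminus \terms(\Phi)| \leq (d+1) \cdot |\terms(\Phi)| \leq 2m(d+1)$ whenever $f$ is a left-hand-side primitive part, while $\sfterms{f}{\Phi} = \terms(\Phi)$ otherwise. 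Summing over the at most $m$ primitive parts of left-hand sides gives $|\sterms(\Phi)| \leq 2m + 2m^2(d+1) \leq 2m^2(d+2)$, proving~\ref{lemma:bound-sterms-one}.

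For~\ref{lemma:bound-sterms-two}, the same chain analysis shows that every $h \in \sterms(\Phi)$ either lies in $\terms(\Phi)$ or arises from some $t \in \terms(\Phi)$ via at most $d+1$ applications of the $S$-polynomial. A direct computation then shows that each such application multiplies the infinity norm by at most $2\norminf{\Phi}$: writing $S(g, h') = c_{h'}\, g - c_g\, h'$ with $c_g, c_{h'}$ the leading coefficients, and using that $g \in G_f \subseteq \terms(\Phi)$ yields $|c_g|, \norminf{g} \leq \norminf{\Phi}$, whereas $|c_{h'}| \leq \norminf{h'}$; hence $\norminf{S(g, h')} \leq |c_{h'}|\norminf{g} + |c_g|\norminf{h'} \leq 2\norminf{\Phi}\,\norminf{h'}$. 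Iterating at most $d+1$ times, starting from $\norminf{t} \leq \norminf{\Phi}$, gives $\norminf{h} \leq (2\norminf{\Phi})^{d+1}\norminf{\Phi} \leq 2^{d+1}\norminf{\Phi}^{d+2}$, from which the bit-length bound $(d+2)(\bitlength{\norminf{\Phi}}+1)$ follows by definition of $\bitlength{\cdot}$. The main technical obstacle is the structural claim opening the argument: once the $S$-polynomial closure is shown to collapse to a single chain per starting term, both the polynomial cardinality bound and the linear-in-$d$ bit-length bound fall out by a direct count and a direct norm estimate.
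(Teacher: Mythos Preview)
Your proposal is correct and follows essentially the same approach as the paper: both rely on the structural consequence of the elimination property that each $G_f$ contains at most one polynomial per leading variable (the paper isolates this as a separate lemma), and then both count the resulting deterministic chains of length at most $d+2$ per starting term. The only cosmetic difference is in part~\ref{lemma:bound-sterms-two}, where the paper works additively in bit lengths via $\maxbl{S(f,g)} \leq \maxbl{f} + \maxbl{g} + 1$, whereas you work multiplicatively in norms via $\norminf{S(g,h')} \leq 2\norminf{\Phi}\,\norminf{h'}$ and convert at the end; both routes yield the same bound.
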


Lastly, let us restate the two lemmas from~\Cref{sec:intro-local-global} analyzing properties of $\pzero(\Phi)$ and $\pdiff(\Phi)$; they are proven in~\Cref{appendix:bound-pzero} and are fundamental to obtain the upper bound in the statement of~\Cref{theorem:local-to-global}. 
Recall that $\pdiff(\Phi) \coloneqq \{ p \in \PP : p \leq m$ or $p$
divides a coefficient or constant appearing in some $f_i
\}$ is the set of primes~$p$ for which $\Phi$ may not have a solution modulo $p$. 
For primes that lie outside~$\pdiff(\Phi)$ we always have a 
small solution:

\LemmaSimplePrimes*

Following the next lemma,
the bit lengths of $\pdiff(\Phi)$ and $\pzero(\Phi)$ are 
polynomially bounded:

\LemmaBoundOnPzero*

\subsection{Proof of~\Cref{theorem:local-to-global}: the local-to-global property}
\label{section:algo:local-to-global}

We are now ready to formalize the local-to-global property
(\Cref{theorem:local-to-global}). Simliar to
Lipshitz' approach~\cite{Lipshitz78}, the proof of this property is constructive and
yields a procedure that given an $r$-increasing system of
divisibility constraints $\Phi$ and solutions for $\Phi$
modulo $p$ for every $p \in \pdiff(\Phi)$, constructs an
integer solution for $\Phi$. \Cref{func:new-solveincreasing}
provides the pseudocode of this procedure, which we mainly
give as a way of summarizing the various steps of the proof
of~\Cref{theorem:local-to-global}.

\begin{algorithm}[!t]
  \caption{An algorithmic summary of the local-to-global property}
  \label{func:new-solveincreasing}
  \begin{algorithmic}[1]
    \medskip
    \Require 
      \begin{minipage}[t]{0.8\linewidth}
        a system of divisibility constraints $\Phi(\vec x)$
        increasing for $X_1 \incord \dots \incord X_r$,\\
        and a solution $\vec b_p$ for $\Phi$ modulo $p$ 
        for every $p \in \pdiff(\Phi)$.
      \end{minipage}
    \Ensure
      a solution $\vec \nu \colon \vec x \to \pZZ$ for $\Phi$.
    \medskip
    \State $\vec \nu \coloneqq \vec \epsilon$
    \itComment{empty map}
    \State let $\incord$ be an ordering in $(X_1 \incord \dots \incord X_r)$
    \State $(x_1,\dots,x_d)$ $\coloneqq$ variables in $X_1$, 
      in increasing order for $\incord$
    \If{$r=1$}\label{func:l-t-g:base-case}
      \itComment{base case}
      \For{$p \in \pdiff(\Phi)$}
        \label{func:l-t-g:base-case:mu}
        $\mu_p \coloneqq  \max
        \big\{ v_p(f(\vec b_p)) : f(\vec
        x) \text{ left-hand side of a divisibility in } \Phi
        \}$
      \EndFor
      \For{$\ell$ from $1$ to $d$}
        \For{$p \in \pdiff(\Phi)$}
          $b_{p,\ell}$ $\coloneqq$ value of $\vec b_p$ for the variable $x_\ell$
        \EndFor
        \State insert $(x_\ell \mapsto
        a)$ in $\vec \nu$ \,\textbf{where} $a \in \pZZ$ is a solution for the system
        \itComment{CRT}\label{func:l-t-g:CRT}
        \State \hspace{1.2cm} 
          $\begin{cases} 
            x_{\ell} \equiv b_{p,\ell} \hspace{0.65cm} {\pmod
            {p^{\mu_p+1}}} &\qquad p \in \pdiff(\Phi)
          \end{cases}$
      \EndFor
      \State \textbf{return} $\vec \nu$
    \Else 
      \itComment{$r \geq 2$, recursive case}
      \State $\Psi$ $\gets$ closure of $\Phi$ for the elimination property for the order $\incord$
        \itComment{\Cref{lemma:add-elimination-property}}
        \label{func:l-t-g:ind:elim-prop}
      \For{$p \in \pzero(\Psi) \setminus \pdiff(\Phi)$}
        \label{func:l-t-g:ind:new-b-ps}
        \State $\vec b_p$ $\coloneqq$ solution for $\Phi$ modulo $p$ satisfying $v_p(f(\vec b_p)) = 0$ for every 
        \State \hspace{1.2cm} $f(\vec
        x)$ in the left-hand side of a divisibility in $\Phi$
        \itComment{\Cref{lemma:simple-primes}}
      \EndFor
      \For{$p \in \pzero(\Psi)$}
        $\mu_p \coloneqq  \max
        \big\{ v_p(f(\vec b_p)) : f(\vec
        x) \text{ left-hand side of a divisibility in } \Psi
        \}$
      \EndFor
      \State $Q \coloneqq \emptyset$
      \For{$\ell$ from $1$ to $d$}\label{func:l-t-g:ind:inner-induction}
        \For{$p \in \pzero(\Psi)$}
          $b_{p,\ell}$ $\coloneqq$ value of $\vec b_p$ for the variable $x_\ell$
        \EndFor
        \State insert $(x_\ell \mapsto
        a)$ in $\vec \nu$ \,\textbf{where} $a \in \pZZ$ is a solution for the system
          \itComment{\Cref{thm:mixed-crt}}
          \label{func:l-t-g:ind:mixed-crt}
        \vspace{2pt}
        \State \quad $\begin{cases} x_{\ell} \equiv b_{p,\ell} 
          \hspace{0.84cm} {\pmod
            {p^{\mu_p+1}}} &p \in \pzero(\Psi)\\
            g(\vec \nu(\vec y),x_{\ell}) \not\equiv 0 {\pmod
            q} &q \in Q \setminus \pzero(\Psi),\ g(\vec y,
            x_{\ell}) \in S(\Delta(\Psi)) \text{ with }
            \lv_{\incord}(g) = x_{\ell} \end{cases}$
        \vspace{6pt}
        \State\label{func:l-t-g:ind:Q} 
          $Q \gets Q \cup \{ p \in \PP : \text{there is } 
          h(\vec y) \in S  (\Delta(\Psi)) \text{ such that}~\lv_{\incord}(h) = x_{\ell} \text{ and } p \div h(\vec \nu(\vec y)) \}$
      \EndFor
      \State $\Phi' \coloneqq \Phi\substitute{\vec \nu(x)}{x
      : x \in X_1}$
      \For{$p \in \pdiff(\Phi')$}
          $\vec b_{p}'$ $\coloneqq$ solution for $\Phi'$ modulo $p$ 
          \itComment{\Cref{claim:phi-prime:new-primes-are-ok}}
          \label{func:l-t-g:ind:b-p-for-call}
      \EndFor
      \State\label{func:l-t-g:ind:call}
        $\vec \xi \coloneqq$ result of calling~\Cref{func:new-solveincreasing} on $\Phi'$, $X_2 \incord \dots \incord X_r$ and $\{\vec b_p' : p \in \pdiff(\Phi')\}$
      \State \textbf{return} $\vec \nu \sqcup \vec \xi$
      \itComment{union of disjoint functions}
      \label{func:l-t-g:return}
    \EndIf
  \end{algorithmic}
\end{algorithm}

\TheoremLocalToGlobal*

\begin{proof}
  
Throughout the proof, fix and order $(\incord) \in (X_1 \incord \dots \incord X_r)$.
% Remark that, while the statement of the theorem is given
% with respect to the set of primes $\pdiff(\Phi)$,
% from~\Cref{lemma:simple-primes} we can compute solutions
% $\vec b_p$ for $\Phi$ modulo $p$, for every $p \in
% \pzero(\Phi)$. \am{to say this before:} This superset
% of~$\pdiff(\Phi)$ contains the primes that are fundamental
% to construct the integer solution for $\Phi$. 
For simplicity, we focus on the part of the statement that
builds a solution over $\NN$ (in fact, we will build a solution over $\pZZ$).
The fact that there are
infinitely many solutions follows from the
fact that the solution is built by solely relying on systems
of (non-)congruences over the integers.

Let us first expand on the overview of the proof given in~\Cref{sec:intro-local-global} by referring to the pseudocode in~\Cref{func:new-solveincreasing}.
The goal is to compute a map $\vec \nu \colon
\big(\bigcup_{j=1}^{r} X_j\big) \to \pZZ$ such that $\vec
\nu(\vec x)$ is a solution for $\Phi$. The proof proceeds by
induction on $r$, populating the map $\vec \nu$ according
the order $\incord$. 

When $r = 1$ (line~\ref{func:l-t-g:base-case}
in~\Cref{func:new-solveincreasing}) $\vec \nu$ can be
computed using the (standard) Chinese remainder theorem,
with little to no problem (line~\ref{func:l-t-g:CRT}). The
main ingredient here is given by the natural number $\mu_p
\coloneqq  \max \big\{ v_p(f(\vec b_p)) : f(\vec x) \text{
left-hand side of a divisibility in } \Phi \}$
(line~\ref{func:l-t-g:base-case:mu}), that given $p \in
\pdiff(\Phi)$ tells us up to what power of~$p$ should the
integer solution given by $\vec \nu$ agree with the solution
$\vec b_p$.

When $r \geq 2$, the
goal is to define $\vec \nu$ for the variables in $X_1$ in
such a way that the formula $\Phi' \coloneqq
\Phi\substitute{\vec \nu(x)}{x : x \in X_1}$ is increasing
for $X_2 \incord \dots \incord X_r$, and has solutions
modulo $p$ for every $p \in \pdiff(\Phi')$. This allows us
to call for~\Cref{theorem:local-to-global} inductively
(line~\ref{func:l-t-g:ind:call}), obtaining a solution $\vec
\xi  \colon \big(\bigcup_{j=2}^{r} X_j\big) \to \pZZ$ for
$\Phi'$. An integer solution for $\Phi$ is then given by the
union $\vec \nu \sqcup \vec \xi$ of $\vec \nu$ and $\vec
\xi$, i.e., the map defined as $\vec \nu(x)$ for $x \in X_1$
and as $\vec \xi(y)$ for $y \in \bigcup_{j=2}^r X_j$,
(line~\ref{func:l-t-g:return}). To construct $\vec \nu$ for
$X_1$, we first close $\Phi$ under the elimination property
following~\Cref{lemma:add-elimination-property}~(line~\ref{func:l-t-g:ind:elim-prop}),
and extend the solutions $\vec b_p$ to every $p \in
\pzero(\Psi)$ thanks to~\Cref{lemma:simple-primes}
(line~\ref{func:l-t-g:ind:new-b-ps}). We then populate $\vec
\nu$ following the order~$\incord$, starting from the
smallest
variable~(line~\ref{func:l-t-g:ind:inner-induction}). In the
proof, this is done with a second induction. Values for the
variables in $X_1$ are found using~\Cref{thm:mixed-crt}
(line~\ref{func:l-t-g:ind:mixed-crt}). When a new value $a
\in \pZZ$ for a variable $x \in X_1$ is found, new primes
need to be taken into account (line~\ref{func:l-t-g:ind:Q}),
since substituting $a$ for $x$ yields a complete evaluation
of the polynomials in $S(\sterms(\Phi))$ with leading
variable~$x$, and the resulting integers might be divisible by
primes not belonging to~$\pzero(\Psi)$. For subsequent
variables in~$X_1$, we make sure to pick values that keep
the evaluated polynomials as ``coprime as possible'' with
respect to these new primes (see the induction
hypothesis~\eqref{th:l-t-g:internal:IH2} below, as well as the
system of \mbox{(non-)congruences} in
line~\ref{func:l-t-g:ind:mixed-crt}). This condition is
necessary to obtain the new solutions~$\vec b_p$ for the
formula $\Phi'$, modulo every $p \in \pdiff(\Phi')$
(line~\ref{func:l-t-g:ind:b-p-for-call}).

We now formalize the proof. To ease the presentation, we
postpone the analysis on the bound of the minimal positive
solution to after the main induction showing the existence
of such a solution. In a nutshell, the bound fundamentally
comes from repeated applications of~\Cref{thm:mixed-crt}.

\proofparagraph{Base case $r = 1$} 

As $\Phi$ is $1$-increasing, it is of the form
${\bigwedge_{i=1}^\ell c_i \div g_i(\vec x) \land
\bigwedge_{j = \ell+1}^m f_j(\vec x) \div a_j \cdot f_j(\vec
x)}$, where every $c_i$ and $a_j$ are in $\ZZ$. By
hypothesis, every $c_i$ and $f_j$ is non-zero. If $c_i = 1$
for every $i \in [1,\ell]$, then $\vec x = \vec 0$ is
trivially a solution. Otherwise, $\pdiff(\Phi)$ is
non-empty. Let $\vec x = (x_1,\dots,x_d)$ and, given $p \in
\pdiff(\Phi)$, let $\mu_p \coloneqq \max \{ v_p(f(\vec b_p))
: f \text{ is in the left-hand side of a divisibility of }
\Phi \}$. Note that since $\vec b_p$ is a solution for
$\Phi$ modulo $p$, we have $f_j(\vec b_p) \neq 0$ for every
$j \in [\ell+1,m]$, and thus $v_p(f(\vec b_p)) \in \NN$.
Denote with $b_{p,k}$ the value of $\vec b_p$ for the
variable $x_k$, with $p \in \pdiff(\Phi)$ and $k \in [1,d]$.
Consider the system of congruences 
\begin{align}
  \label{eq:l-t-g:base-case}
  x_k \equiv b_{p,k} & \pmod {p^{\mu_p+1}} & p \in \pdiff(\Phi),\ 1 \leq k \leq d.
\end{align}
According to the Chinese remainder theorem, this system has
a positive solution $\vec a = (a_1,\dots,a_d)$. To conclude
the base case, it suffices to show that $f_j(\vec a) \neq 0$
for every $j \in [\ell+1,m]$, and that $c_i \div g_i(\vec
a)$ for every $i \in [1,\ell]$. First, consider $j \in
[\ell+1,m]$ and pick a prime $p \in \pdiff(\Phi)$. From the
system of congruences in~\Cref{eq:l-t-g:base-case} we have
$f_j(\vec a) \equiv f_j(\vec b_p) \pmod {p^{\mu_p+1}}$, and
by definition of~$\mu_p$, $f_j(\vec b_p) \not\equiv 0 \pmod
{p^{\mu_p+1}}$. We conclude that $f_j(\vec a) \not\equiv 0
\pmod {p^{\mu_p+1}}$, and so $f_j(\vec a) \neq 0$.

Consider now~$i \in [1,\ell]$. To prove that $c_i \div
g_i(\vec a)$, concluding the base case, we show that for
every prime $p$ dividing~$c_i$, ${v_p(c_i) \leq v_p(g_i(\vec
a))}$. By definition, any such prime $p$ satisfies $p \in
\pdiff(\Phi)$ and moreover ${v_p(c_i) \leq \mu_p}$. We
distinguish two cases:
\begin{itemize}
  \item if $v_p(g_i(\vec b_p)) \leq \mu_p$, then according
  to~\Cref{eq:l-t-g:base-case} we have $v_p(g_i(\vec b_p)) =
  v_p(g_i(\vec a))$. Since $\vec b_p$ is a solution for
  $\Phi$ modulo $p$, this implies~$v_p(c_i) \leq
  v_p(g_i(\vec a))$.
  \item If $v_p(g_i(\vec b_p)) > \mu_p$, then $g_i(\vec b_p)
  \equiv 0 \pmod {p^{\mu_p+1}}$ and so~$g_i(\vec a) \equiv 0
  \pmod {p^{\mu_p+1}}$ by~\Cref{eq:l-t-g:base-case}.
  Therefore $v_p(g_i(\vec a)) > \mu_p$ and by definition of
  $\mu_p$ we get $v_p(c_i) \leq v_p(g_i(\vec a))$.
\end{itemize}

\proofparagraph{Induction step $r \geq 2$} 
by induction hypothesis, we assume the theorem to be true
for every $s$-increasing system with $s < r$.
By~\Cref{lemma:simple-primes}, for every prime $p \in \PP
\setminus \pdiff(\Phi)$ there is a solution $\vec b_p$ for
$\Phi$ modulo $p$ such that $\max \{ v_p(f(\vec b_p))  : f
\text{ in the left-hand side of a divisibility of } \Phi \}
= 0$. Together with the solutions $\vec b_p$ for primes $p
\in \pdiff(\Phi)$, this means that $\Phi$ has solutions
modulo every prime. We
apply~\Cref{lemma:add-elimination-property} in order to
obtain from $\Phi$ a system $\Psi$ with the elimination
property for~$\incord$. The system $\Psi$ is used to produce
the map $\vec \nu$ for the variables in $X_1$. Adding the
elimination property does not change the set of solutions
(neither over the integers nor modulo a prime), and
therefore the above solutions $\vec b_p$ are still solutions
for $\Psi$ modulo~$p$. Below, among these solutions we only
consider the ones for primes $p \in \pzero(\Psi)$. Given
such a prime~$p \in \pzero(\Psi)$, define $\mu_p \coloneqq
\max \{ v_p(f(\vec b_p)) : f \text{ is in the left-hand side
of a divisibility of } \Psi \}$. As already observed in the
base case, given $f$ left-hand side of a divisibility in
$\Psi$, $f(\vec b_p) \neq 0$ and so $v_p(f(\vec b_p)) \in
\NN$. Moreover, from~\Cref{lemma:add-elim-property:item-1}
in~\Cref{lemma:add-elimination-property} we conclude that
$\mu_p = 0$ for every $p \in \pzero(\Psi) \setminus
\pdiff(\Phi)$.

As $\Psi$ is $r$-increasing
(see~\Cref{lemma:add-elim-property:item-1}
in~\Cref{lemma:add-elimination-property}), it is of the form 
\begin{equation}
  \label{eq:l-t-g:outer-induction}
  \left(\bigwedge_{i = 1}^\ell c_i \mid g_i(\vec  x) \right) \wedge \left( \bigwedge_{i=\ell+1}^{n}f_i(\vec  x) \mid g_i(\vec  x) + g_i'(\vec  y) \right) \wedge \left( \bigwedge_{i=n+1}^{t} f_i(\vec  x) + f_i'(\vec  y) \mid g_i(\vec  x) + g_i'(\vec  y) \right),
\end{equation}
where $\vec x$ are the variables appearing in $X_1$, $\vec
y$ are the variables appearing in $\bigcup_{j=2}^{r} X_j$,
$\ell \leq n \leq t$, and for every $i \in [n+1,t]$,
$f_i'(\vec y)$ and $g_i'(\vec y)$ have $0$ as a constant and
are non-constant. Moreover, since $\Psi$ is increasing, for
every $i \in [\ell+1,n]$ $g_i(\vec x)$ and $g_i'(\vec y)$
are such that either $g_i' = 0$ and $g_i = a \cdot f_i$ for
some $a \in \ZZ$, or $g_i'$ is non-constant and has $0$ as a
constant. Let $X_1 = \{x_1,\dots, x_d \}$, with $x_1
\incord \dots \incord x_d$. Denote by $b_{p,k}$ the value of
$\vec b_p$ for the variable $x_k$, with $p \in \pzero(\Psi)$
and $k \in [1,d]$. We build the map $\vec \nu$ defined on
the variables in $X_1$, inductively starting from $x_1$. In
the induction step, when searching for a value to the
variable $x_{k+1}$, the following induction hypotheses hold:
\begin{description}
  \item[\textlabel{IH1}{th:l-t-g:internal:IH1}:] 
  For every $p \in \pzero(\Psi)$ and $j \in [1,k]$, \ $\vec
  \nu(x_j) \equiv b_{p,j} \pmod {p^{\mu_p + 1}}$,
  \item[\textlabel{IH2}{th:l-t-g:internal:IH2}:] 
  For every prime $p \notin \pzero(\Psi)$, for every $h,h'
  \in \sterms(\Psi)$ with leading variable at most $x_k$, if
  $S(h, h')$ is not identically zero, then $p$ does not
  divide both $h(\vec \nu (x_1,\dots,x_k))$ and $h'(\vec \nu
  (x_1,\dots,x_k))$. 
  \item[\textlabel{IH3}{th:l-t-g:internal:IH3}:]
  $h(\vec \nu (x_1,\dots,x_k)) \neq 0$ for every $h \in
  \sterms(\Psi)$ that is non-zero and with $\lv(h) \incordeq
  x_k$.
\end{description}

\begin{description}
  \item[base case $k = 0$.]  
    In this case, \eqref{th:l-t-g:internal:IH1}
    and~\eqref{th:l-t-g:internal:IH3} trivially hold
    (for~\eqref{th:l-t-g:internal:IH3} note that $h$ is
    constant). In~\eqref{th:l-t-g:internal:IH2} we only
    consider constant polynomials $h,h'$, hence $S(h,h') =
    0$ by definition.

  \item[induction step.] Let us assume that $\vec \nu$ is
  defined for the variables $x_1,\dots,x_{k}$ with $k \in
  [0,d-1]$, so that the induction hypotheses hold. Let us
  provide a value for $x_{k+1}$ so that $\vec \nu$ still
  fulfils the induction hypotheses. We define the following
  set of primes: 
  \[
    P_k \coloneqq \left\{ p \in \PP : p \in \pzero(\Psi) \text{ or } p \mid h(\vec  \nu(x_1,\dots,x_k)) \text{ for }h \in S(\sterms{(\Psi)}) \backslash \{0\} \text{ with }\lv(h) \incordeq x_k \right\}.
  \]
  In the hypothesis that $P_k = \pzero(\Psi)$, we add to
  $P_k$ the smallest prime not in $\pzero(\Psi)$. Hence,
  below, assume $P_k \neq \pzero(\Psi)$. We consider the
  following system of (non-)congruences:
  \begin{align*}
    x_{k+1} &\equiv b_{p, k+1} && \pmod {p^{\mu_p+1}} & p \in \pzero(\Psi)\\
    h(\vec \nu(x_1,\dots,x_k),\,x_{k+1}) &\not\equiv 0 && \pmod q &
    q \in P_k \setminus \pzero(\Psi) \text{ and}\\
    &&&& h \in S(\sterms(\Psi)) \text{ s.t.}~\lv(h) = x_{k+1}.
  \end{align*}
  With respect to the $h$ above, let us write $h(\vec
  \nu(x_1,\dots,x_k),x_{k+1}) = c_h + a_{h} \cdot x_{k+1}$,
  where $c_h$ is the constant term obtained by partially
  evaluating $h$ with respect to $\vec \nu(x_1,\dots,x_k)$,
  and $a_h$ is the coefficient of $x_{k+1}$ in $h$. Since $q
  \in P_k \setminus \pzero(\Psi)$, then $q \nmid a_h$ from
  Condition~\ref{pzero:2}. Then $a_h$ has an inverse
  $a_h^{-1}$ modulo $q$, and the system of (non-)congruences
  above is equivalent to
  \begin{equation}
    \label{eq:l-to-g:non-cong-sys}
    \begin{aligned}
      x_{k+1} &\equiv b_{p, k+1} && \pmod {p^{\mu_p+1}} & p \in \pzero(\Psi)\\
      x_{k+1} &\not\equiv -a_h^{-1} c_h && \pmod q &
      \hspace{-10pt}q \in P_k \setminus \pzero(\Psi) \text{ and } h \in S(\sterms(\Psi)) \text{ s.t.}~\lv(h) = x_{k+1}.
    \end{aligned}
  \end{equation}
  In this system of (non-)congruences, elements in
  $\pzero(\Psi)$ and $P_k \setminus \pzero(\Psi)$ are
  pairwise coprime, $P_k \setminus \pzero(\Psi)$ is a set of
  primes, and moreover $\min(P_k \setminus \pzero(\Psi)) >
  \card{S(\sterms(\Psi))}$ by Condition~\ref{pzero:1}.
  Hence, we can apply~\Cref{thm:mixed-crt} and conclude
  that~\Cref{eq:l-to-g:non-cong-sys} has a solution $w \in
  \pZZ$. Let us update $\vec \nu$ so that $\vec \nu(x_{k+1})
  = w$. We show that $\vec \nu$ satisfies the induction
  hypotheses.
  \begin{enumerate}
    \item By the congruences
    in~\Cref{eq:l-to-g:non-cong-sys}, $\vec \nu(x_{k+1})
    \equiv b_{p,k+1} \pmod {p^{\mu_p+1}}$,
    hence~\eqref{th:l-t-g:internal:IH1} holds.
    \item Consider $h,h' \in \Delta(\Psi)$ such that $\lv(h)
    \incordeq \lv(h') = x_{k+1}$ and $S(h,h')$ is not
    identically zero. Note that the case where $\lv(h')
    \incordeq \lv(h) = x_{k+1}$ is analogous, whereas if
    both $\lv(h)$ and $\lv(h')$ are at most $x_{k}$
    then~\eqref{th:l-t-g:internal:IH2} already holds by
    induction hypothesis. We divide the proof into two
    cases, depending on $\lv(h)$. 
    \begin{itemize}
      \item if $\lv(h) \incord x_{k+1}$, consider $p \not\in
      \pzero(\Psi)$ such that $p \div h(\vec
      \nu(x_1,\dots,x_k))$. By definition, $p \in P_k$, and
      thus from the non-congruences
      in~\Cref{eq:l-to-g:non-cong-sys}, $p \nmid  h(\vec
      \nu(x_1,\dots,x_{k+1}))$.
      \item if $\lv(h) = \lv(h') = x_{k+1}$, assume \emph{ad
      absurdum} that there is $p \not \in \pzero(\Psi)$ such
      that $p \div h(\vec \nu(x_1,\dots,x_{k+1}))$ and $p
      \div h'(\vec \nu(x_1,\dots,x_{k+1}))$. Then, $p \div
      S(h,h')$ by definition of~$S$. However, $S(h,h') \in
      S(\sterms(\Psi)) \setminus \{0\}$ and $\lv(S(h,h'))
      \incordeq x_{k}$, from which we conclude that $p \in
      P_k$. Again from the non-congruences
      in~\Cref{eq:l-to-g:non-cong-sys}, this implies $p
      \nmid h(\vec \nu(x_1,\dots,x_{k+1}))$ and $p \nmid
      h'(\vec \nu(x_1,\dots,x_{k+1}))$, a contradiction.
    \end{itemize}
    In both cases, we conclude
    that~\eqref{th:l-t-g:internal:IH2} holds.
    \item Let $h \in \sterms(\Psi)$ with $\lv(h) = x_{k+1}$
    (else~\eqref{th:l-t-g:internal:IH3} directly holds by
    induction hypothesis). As there is a prime $p \in P_k
    \setminus \pzero(\Psi)$, from the non-congruences
    of~\Cref{eq:l-to-g:non-cong-sys} we conclude $p \nmid
    h(\vec \nu(x_1,\dots,x_{k+1}))$, and thus $h(\vec
    \nu(x_1,\dots,x_{k+1}))$ cannot be $0$.
    Hence,~\eqref{th:l-t-g:internal:IH3} holds.
  \end{enumerate}
\end{description}

The innermost induction we have just completed yields a map
$\vec \nu$ defined for the variables in $X_1$ and
satisfying~\eqref{th:l-t-g:internal:IH1}--\eqref{th:l-t-g:internal:IH3}
for every $k \in [1,d]$. Consider the system $\Psi'(\vec y)
\coloneqq \Psi\substitute{\vec \nu(x)}{x : x \in X_1}$
obtained from $\Psi$ by evaluating as $\vec \nu(x)$ every
variable $x$ in $X_1$. With reference
to~\Cref{eq:l-t-g:outer-induction}, we note that the
subsystem $\bigwedge_{i=1}^\ell c_i \div g_i(\vec \nu(\vec
x))$ evaluates to true (proof as in the base case $r=1$ of
the induction and by using~\eqref{th:l-t-g:internal:IH1}).
Then, $\Psi'(\vec y)$ is of the form 
\begin{equation}
  \label{eq:l-t-g:outer-induction-psi}
  \left( \bigwedge_{i=\ell+1}^{n} \alpha_i \mid \beta_i + g_i'(\vec  y) \right) \wedge 
  \left( \bigwedge_{i=n+1}^{t} \alpha_i + f_i'(\vec  y) \mid \beta_i  + g_i'(\vec  y) \right),
\end{equation}
where $\alpha_i = f_i(\vec \nu(\vec x)) \in \ZZ$ and
$\beta_i = g_i(\vec \nu(\vec x)) \in \ZZ$, for every $i \in
[\ell+1,t]$. Note that $\alpha_i \neq 0$ for every $i \in
[\ell+1,n]$, thanks to~\eqref{th:l-t-g:internal:IH3},
so~$\vec \nu$ satisfies all trivial divisibilities of the
form $f(\vec x) \div a \cdot f(\vec x)$.

The next step is to show that $\Psi'$ is increasing for
$(X_2 \incord \dots \incord X_r)$ and to provide solutions
modulo~$p$ for every $p \in \pzero(\Psi')$. These two
properties, formalized below
in~\Cref{claim:still-increasing}
and~\Cref{claim:new-primes-are-ok}, follow from the
induction
hypotheses~\eqref{th:l-t-g:internal:IH1}--\eqref{th:l-t-g:internal:IH3}
we kept during the construction of $\vec \nu$, together with
the fact that the system~$\Psi$ has the elimination
property. Their proofs are very technical and lengthy, and
we therefore defer them
to~\Cref{appendix:lemmas-local-to-global}. 
Observe that the condition~\ref{pzero:3} of the difficult primes is required to establish~\Cref{claim:new-primes-are-ok}, but otherwise does not appear anywhere else in this proof.

\begin{restatable}{claim}{ClaimStillIncreasing}
  \label{claim:still-increasing}
  The system $\Psi'$ is increasing for $(X_2 \incord \dots
  \incord X_r)$.
\end{restatable}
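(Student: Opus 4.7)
The plan is to reduce Claim~\ref{claim:still-increasing} to the already-established increasing form of $\Psi$ via a lifting argument from $\module_{\hat f}(\Psi')$ back to $\module_{\hat f}(\Psi)$. I would first fix an arbitrary linear extension $\incord'$ of the partial order $(X_2 \incord \dots \incord X_r)$ and extend it to a linear order $\incord^*$ on all of $\vec x \cup \vec y$ by placing every variable of $X_1$ below every variable of $X_2 \cup \dots \cup X_r$ (using the restriction of $\incord$ to~$X_1$). Since $\Psi$ is $r$-increasing for the partition $X_1, \ldots, X_r$, the extension $\incord^*$ witnesses that $\Psi$ is in increasing form with respect to $\incord^*$. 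For a primitive polynomial $\hat f(\vec y) \in \ZZ[\vec y]$ with $\lv_{\incord'}(\hat f) = y_k$ and an arbitrary $g \in \module_{\hat f}(\Psi') \cap \ZZ[y_1, \dots, y_k]$, the goal becomes to prove $g \in \ZZ \hat f$.

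The key step is to view $\hat f$ and $g$ as polynomials in $\ZZ[\vec x, \vec y]$ (with zero $\vec x$-coefficients), so that $\lv_{\incord^*}(\hat f) = y_k$ and the leading variable of $g$ remains $\incord^*$-below $y_k$. It then suffices to establish the lifting lemma that $g$, so viewed, lies in $\module_{\hat f}(\Psi)$; the increasing form of $\Psi$ with respect to $\incord^*$ would then force $g \in \ZZ \hat f$. I would prove this lifting by structural induction on a derivation of $g$ in $\module_{\hat f}(\Psi')$ following the module axioms. The generator $\hat f$ lifts to itself and $\ZZ$-linear combinations lift linearly. For the divisibility axiom, a divisibility in $\Psi'$ of the form $\alpha_i \mid \beta_i + g'_i(\vec y)$ or $\alpha_i + f'_i(\vec y) \mid \beta_i + g'_i(\vec y)$ arises from one in $\Psi$ of the form $f_i(\vec x) \mid g_i(\vec x) + g'_i(\vec y)$ or $f_i(\vec x) + f'_i(\vec y) \mid g_i(\vec x) + g'_i(\vec y)$ with $\alpha_i = f_i(\vec \nu)$ and $\beta_i = g_i(\vec \nu)$, and the lift is constructed by combining the inductive preimage with the module axiom applied to the corresponding divisibility of $\Psi$.

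The main obstacle will be carrying out the divisibility-axiom lifting cleanly, since the inductive preimage of $b \alpha_i$ need not equal $b f_i(\vec x)$ and hence cannot be fed directly into the module axiom for the divisibility $f_i \mid g_i + g'_i$ of $\Psi$. To handle this I would exploit the three ingredients tailored for exactly this purpose: the elimination property of~$\Psi$, which guarantees that $\module_{\hat f}(\Psi)$ intersected with polynomials of bounded leading variable has a triangular basis of right-hand sides appearing in $\Psi$, giving enough handles to rewrite any preimage in terms of explicit generators; the invariant IH3, ensuring $\alpha_i \neq 0$ so that no divisibility of $\Psi'$ degenerates under the substitution; and the invariant IH2, which prevents primes outside $\pzero(\Psi)$ from simultaneously dividing two $S$-term evaluations with the same leading variable in $X_1$, precisely the condition that forbids the substitution $\vec \nu$ from creating new divisibility relations in~$\Psi'$ that are not shadows of relations in~$\Psi$. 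Matching an arbitrary inductive preimage to the actual left-hand side $f_i(\vec x)$ (or $f_i(\vec x) + f'_i(\vec y)$) required by the module axiom is the hardest bookkeeping step, and it should be completed by rewriting the preimage via the triangular basis granted by the elimination property and absorbing the residual lower-order terms using IH2 and IH3.
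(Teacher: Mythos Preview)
Your high-level strategy (structural induction through the module axioms, reducing to the increasing form of $\Psi$) matches the paper, but the target of your lifting lemma is wrong, and this is a genuine gap.

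You propose to show that every $g \in \module_{\hat f}(\Psi')$, viewed in $\ZZ[\vec x,\vec y]$ with zero $\vec x$-coefficients, lies in $\module_{\hat f}(\Psi)$. This fails as soon as the left-hand side in $\Psi$ that gives rise to $\hat f$ actually depends on $\vec x$. Concretely, take $\Psi$ to be $(x_1+y_1)\mid y_2$ with $X_1=\{x_1\}$. After substituting $x_1=c$ you get $\Psi'$ with left-hand side $y_1+c$, so $\hat f=y_1+c$. Then $y_2\in\module_{\hat f}(\Psi')$, but $\hat f=y_1+c$ is not a scalar multiple of the only left-hand side $x_1+y_1$ of $\Psi$, so the divisibility axiom never fires and $\module_{\hat f}(\Psi)=\ZZ(y_1+c)$. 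Your lifting lemma is simply false here, yet the claim holds.

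What the paper does instead is lift into $\module_f(\Psi)$ where $f$ is the primitive part of the \emph{original} left-hand side in $\Psi$ (so $f$ may depend on $\vec x$), and the inclusion proved is $\frac{a'}{a}\,\module_{f'}(\Psi')\subseteq \module_f(\Psi)\substitute{\vec\nu(x)}{x:x\in X_1}$, i.e.\ every $g'\in\module_{f'}(\Psi')$ has a preimage $\widehat g\in\module_f(\Psi)$ with $\widehat g(\vec\nu(\vec x),\vec y)=\frac{a'}{a}g'$. The hard step (your divisibility-axiom case) then becomes: given a divisibility $g\mid h$ in $\Psi$ specializing to $g'\mid h'$ in $\Psi'$, and a preimage $\widehat g\in\module_f(\Psi)$ with $\widehat g(\vec\nu,\vec y)=b'g'=b'g(\vec\nu,\vec y)$, show that in fact $\widehat g=b'g$ \emph{as polynomials}. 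The paper does this by expanding $\widehat g$ in the basis of $\module_f(\Psi)$ guaranteed by the elimination property, then showing by an inner induction that $b'g-\widehat g$ is (up to a nonzero scalar) an iterated $S$-polynomial $S(g,h_\ell,\dots,h_1)\in\sfterms{f}{\Psi}$ with leading variable in $X_1$; since its evaluation at $\vec\nu$ is zero, IH3 forces it to be identically zero.

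A secondary point: IH2 plays no role in this claim. The paper's proof of Claim~\ref{claim:still-increasing} uses only the elimination property and IH3; IH2 is reserved for Claim~\ref{claim:new-primes-are-ok}. Your plan to ``absorb the residual lower-order terms using IH2 and IH3'' misidentifies which invariant does the work.
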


\begin{restatable}{claim}{ClaimNewPrimesAreOk}
  \label{claim:new-primes-are-ok}
  For every $p \in \pzero(\Psi)$, the solution~$\vec b_p$
  for $\Psi$ modulo $p$ is, when restricted to~$\vec y$, a
  solution for~$\Psi'(\vec y)$ modulo~$p$. For every
  prime~${p \not \in \pzero(\Psi)}$, there is a solution
  $\vec b_p$ for $\Psi'$ modulo $p$ such that (i)~every
  entry of~$\vec b_p$ belongs to~${[0,p^{u+1}-1]}$, where
  ${u \coloneqq \max\{ v_p(\alpha_i) : i \in
  [\ell\,{+}\,1,n] \}}$, and (ii)~for every $g \in
  \terms(\Psi')$, $v_p(g(\vec b_p))$ is either $0$ or $u$.
\end{restatable}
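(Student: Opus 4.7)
The plan is to treat the two cases of the claim separately. For $p \in \pzero(\Psi)$, I would show that $\vec b_p$ restricted to $\vec y$ is a valid solution for $\Psi'$ modulo $p$ by leveraging induction hypothesis~\eqref{th:l-t-g:internal:IH1}: the substitution $\vec \nu$ agrees with $\vec b_p$ on the variables of $X_1$ modulo $p^{\mu_p+1}$, so for any left-hand side $f$ of $\Psi$ the $p$-adic valuation after substitution equals $v_p(f(\vec b_p))$, which is at most $\mu_p$ by definition. For any right-hand side $g$, a case split on whether $v_p(g(\vec b_p)) \le \mu_p$ shows that either the valuation is preserved, or it is at least $\mu_p+1$, which is still strictly larger than the valuation of the corresponding left-hand side. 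Either way, the divisibility survives the substitution.

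For $p \not\in \pzero(\Psi)$, the first step is to establish that $v_p(\alpha_i) \in \{0,u\}$ for every $i \in [\ell+1,n]$. Suppose $v_p(\alpha_i), v_p(\alpha_j) > 0$ for some $i \neq j$. Both $f_i, f_j$ lie in $\terms(\Psi) \subseteq \sterms(\Psi)$ and have leading variable in $X_1$, and~\ref{pzero:2} forbids $p$ from dividing any of their coefficients. If $f_i$ and $f_j$ were not proportional, then $S(f_i, f_j)$ would not be identically zero, and induction hypothesis~\eqref{th:l-t-g:internal:IH2} applied at $k = d$ would contradict $p$ dividing both $\alpha_i$ and $\alpha_j$. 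Hence $f_i$ and $f_j$ share a primitive part (up to sign), and since their primitive contents are coprime to $p$ by~\ref{pzero:2}, $v_p(\alpha_i) = v_p(\alpha_j)$. All positive $v_p(\alpha_i)$'s therefore coincide, so they all equal $u$.

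The second step is the constructive part: I would build $\vec b_p$ with entries in $[0, p^{u+1}-1]$ by picking residues modulo $p^{u+1}$ for the variables of $\vec y$ in the order~$\incord$. The residue chosen at each step must (a)~satisfy the linear congruences modulo $p^u$ that force $v_p(\beta_i + g_i'(\vec b_p)) \ge u$ whenever $v_p(\alpha_i) = u$, and (b)~avoid non-congruences modulo $p^{u+1}$ that would place $v_p(g(\vec b_p))$ outside $\{0,u\}$ for some $g \in \terms(\Psi')$ whose leading variable is the current one. By~\ref{pzero:2} the coefficient of the current variable in each such $g$ is invertible modulo $p^{u+1}$, and by~\ref{pzero:1} $p$ exceeds the number of polynomials involved, leaving enough residues for a valid choice. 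The joint consistency of the congruences~(a) across different $i$'s sharing the common primitive part $F$ is secured by~\ref{pzero:3} together with the elimination property of $\Psi$: the smallest $\lambda$ for which $\lambda \cdot (g_i + g_i') \in \module_F(\Psi)$ is coprime to $p$, so the linear system over $\ZZ/p^{u+1}\ZZ$ coming from~(a) has the right structure to be solvable.

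The main obstacle is this second step, particularly verifying that conditions~(a) and~(b) are jointly satisfiable as the induction over the variables of $\vec y$ progresses. Tracking the interplay between the $p^u$-congruences of~(a), which consume most of the freedom modulo $p^u$, and the residual flexibility modulo $p^{u+1}$ needed to handle~(b), will require combining~\ref{pzero:1},~\ref{pzero:2} and~\ref{pzero:3} with the elimination property and with the induction hypotheses~\eqref{th:l-t-g:internal:IH2} and~\eqref{th:l-t-g:internal:IH3} carried over from the construction of $\vec \nu$. This is precisely where the care taken in the definition of $\pzero(\Psi)$ in~\Cref{subsec:divisibility:difficult-primes} pays off.
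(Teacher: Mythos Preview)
Your treatment of $p \in \pzero(\Psi)$ is correct and matches the paper exactly. For $p \not\in \pzero(\Psi)$, your preliminary observation that $v_p(\alpha_i) \in \{0,u\}$ via~\eqref{th:l-t-g:internal:IH2} is also right, and the variable-by-variable strategy is the one the paper uses.

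The gap is in your inductive invariant. Conditions~(a) and~(b) as you state them do not exclude intermediate valuations in $(0,u)$, and the counting argument you allude to cannot close this: once~(a) fixes a residue class modulo $p^u$, the residue modulo $p$ is already determined, so there is no remaining freedom to impose non-congruences modulo $p$ on terms that should get valuation~$0$. You therefore need to \emph{prove} that those terms automatically get valuation~$0$, not avoid bad residues for them. The paper does this via a sharper dichotomy: terms $g \in \terms(\Psi)$ with $\ZZ g \cap \module_f(\Psi) \neq \{0\}$ get valuation exactly~$u$, and terms $h \in \sfterms{f}{\Psi}$ with $\ZZ h \cap \module_f(\Psi) = \{0\}$ get valuation exactly~$0$. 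The second invariant is stated over $\sfterms{f}{\Psi}$, not just $\terms(\Psi')$, and this enlargement is the crux: when a term $h$ with $\lv(h) = y_{k+1}$ unexpectedly satisfies $p \mid h(\dots)$, one takes the module element $g_0$ with the same leading variable, forms $S(h,g_0) \in \sfterms{f}{\Psi}$, and applies the invariant at the lower level to force $\ZZ h \cap \module_f(\Psi) \neq \{0\}$ after all.

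A secondary issue is that your condition~(a) only imposes congruences for the right-hand sides $\beta_i + g_i'$ with $i \in [\ell+1,n]$ and $v_p(\alpha_i) = u$. But terms $f_j + f_j'$ with $j \in [n+1,t]$ can also land in $\module_f(\Psi)$ (up to scalar), and then so do their right-hand sides; these too must receive valuation~$u$, which requires congruences you have not accounted for. The paper's module-based criterion covers all such terms uniformly.
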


Thanks to~\Cref{claim:still-increasing}
and~\Cref{claim:new-primes-are-ok}, we can inductively apply
the statement of~\Cref{theorem:local-to-global} on~$\Psi'$
in order to obtain an integer solution for $\Psi$, and thus
a solution for the original system $\Phi$. While this would
prove the local-to-global property, it is not enough to
obtain the upper bound on the size of the minimal positive
solution stated in~\Cref{theorem:local-to-global}. Instead,
we wish to apply the induction hypothesis on the system
$\Phi'(\vec y) \coloneqq \Phi\substitute{\vec \nu(\vec x)}{x
: x \in X_1}$, hence disregarding the work done to close
$\Phi$ under the elimination property. The main point in
favour of this strategy is that the subsequent applications
of~\Cref{lemma:add-elimination-property}, required to
inductively construct the integer solutions for the
remaining variables $\vec y$, yield smaller systems of
divisibility constraints (for instance, note that~$\Phi'$ has at
most~$m$ divisibilities, whereas~$\Psi'$ can have close to
$m \cdot (d+2)$ divisibilities).

To prove that we can apply the induction hypothesis on
$\Phi'$, we need to show that this system satisfies
properties analogous to the ones
in~\Cref{claim:still-increasing}
and~\Cref{claim:new-primes-are-ok}. While the proofs of
these claims require the elimination property to be
established, we can transfer them to $\Phi'$ thanks to the
fact that $\Psi$ is defined from~$\Phi$ following the
algorithm of~\Cref{lemma:add-elimination-property}.

\begin{claim}
  \label{claim:phi-prime:still-increasing}
  The system $\Phi'$ is increasing for $(X_2 \incord \dots
  \incord X_r)$.
\end{claim}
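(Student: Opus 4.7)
The plan is to derive Claim~\ref{claim:phi-prime:still-increasing} as an essentially immediate consequence of Claim~\ref{claim:still-increasing} combined with Lemma~\ref{lemma:substit-and-elim}. The key observation is that closing $\Phi$ under the elimination property can only enlarge divisibility modules after substitution, so if the larger system $\Psi'$ is in increasing form for an order, so is the smaller system $\Phi'$ for that same order.

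More concretely, I would fix an arbitrary order $\incord'$ in $(X_2 \incord \dots \incord X_r)$ and show that $\Phi'$ is in increasing form with respect to $\incord'$. Unfolding Definition~\ref{DefRIncreasing}, this requires showing that for every primitive polynomial $f \in \ZZ[\vec y]$ with $\lv_{\incord'}(f) = y_k$ (where $y_1 \incord' \cdots \incord' y_n$ enumerates $\vec y$), we have $\module_f(\Phi') \cap \ZZ[y_1,\dots,y_k] = \ZZ f$. The inclusion $\ZZ f \subseteq \module_f(\Phi') \cap \ZZ[y_1,\dots,y_k]$ is immediate, since $f \in \module_f(\Phi')$ by clause~\eqref{divmod:prop1} in the definition of divisibility module and $f \in \ZZ[y_1,\dots,y_k]$ by choice of $k$.

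For the reverse inclusion, I would invoke Lemma~\ref{lemma:substit-and-elim} applied to the substitution $\vec \nu \colon X_1 \to \pZZ$ constructed during the inductive step, which yields $\module_f(\Phi') \subseteq \module_f(\Psi')$. By Claim~\ref{claim:still-increasing}, $\Psi'$ is in increasing form for $(X_2 \incord \dots \incord X_r)$, and in particular for $\incord'$, so $\module_f(\Psi') \cap \ZZ[y_1,\dots,y_k] = \ZZ f$. Chaining these gives $\module_f(\Phi') \cap \ZZ[y_1,\dots,y_k] \subseteq \module_f(\Psi') \cap \ZZ[y_1,\dots,y_k] = \ZZ f$, as desired.

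There is no real obstacle here: the whole work of preserving increasingness after substitution has already been done for $\Psi'$ in Claim~\ref{claim:still-increasing}, and Lemma~\ref{lemma:substit-and-elim} is precisely the bridge that transfers the property back from $\Psi'$ to $\Phi'$. The only point requiring a small check is that the partition witnessing $r$-increasingness of $\Phi$ induces the same partition $X_2,\dots,X_r$ of the free variables of $\Phi'$ as for $\Psi'$, which holds because $\Phi$ and $\Psi$ share the same variable set (cf.~\Cref{lemma:add-elim-property:item-2} of~\Cref{lemma:add-elimination-property}) and the substitution $\vec \nu$ is applied to $X_1$ in both cases.
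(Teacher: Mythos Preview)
Your proposal is correct and follows essentially the same approach as the paper: both proofs combine \Cref{lemma:substit-and-elim} (giving $\module_f(\Phi') \subseteq \module_f(\Psi')$) with \Cref{claim:still-increasing} (giving $\module_f(\Psi') \cap \ZZ[y_1,\dots,y_k] = \ZZ f$). The only cosmetic difference is that the paper phrases the argument by contradiction whereas you argue directly.
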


\begin{proof}
  \emph{Ad absurdum}, assume that $\Phi'(\vec y)$ is not
  increasing for some order $(\incord') \in (X_2 \incord
  \dots \incord X_r)$. Let $\vec y = (y_1,\dots,y_j)$ with
  $y_1 \incord' \dots \incord' y_j$. There is $i \in [1,j]$
  and a primitive term $f$ with $\lv(f) = y_i$ such that
  $\ZZ f \varsubsetneq \module_f(\Phi') \cap
  \ZZ[y_1,\dots,y_i]$. By~\Cref{lemma:substit-and-elim} we
  get $\ZZ f \varsubsetneq \module_f(\Psi') \cap
  \ZZ[y_1,\dots,y_i]$. However, this implies that $\Psi'$ is
  not increasing for $\incord'$,
  contradicting~\Cref{claim:still-increasing}.
\end{proof}

\begin{claim}
  \label{claim:phi-prime:new-primes-are-ok}
  For every $p \in \PP$, the solution $\vec b_p$ for $\Psi'$
  modulo $p$ ensured in~\Cref{claim:new-primes-are-ok} is
  also a solution for $\Phi'$ modulo $p$. If $p \not \in
  \pzero(\Psi)$, then for every polynomial $f'$ appearing in
  the left-hand side of a divisibility of $\Phi'$, we have
  either $v_p(f'(\vec b_p)) = 0$ or $v_p(f'(\vec b_p)) =
  \max\{ v_p(\alpha_i) : i \in [\ell\,{+}\,1,n] \}$.

  % For~$p \,{\in}\, \pzero(\Psi)$, the solution~$\vec b_p$
  % for $\Psi$ modulo $p$ is a solution for~$\Phi'$
  % modulo~$p$. For~${p \,{\in}\, \PP {\setminus}
  % \pzero(\Psi)}$, there is a solution $\vec b_p$ for
  % $\Phi'$ modulo $p$ with entries in ${[0,p^{u+1}-1]}$
  % (${u \coloneqq \max\{ v_p(\alpha_i) : i \in
  % [\ell\,{+}\,1,n] \}}$), and for every~$f$ in the
  % left-hand side of a divisibility of~$\Phi'$, $v_p(f(\vec
  % b_p))$ is either $0$ or~$u$.
\end{claim}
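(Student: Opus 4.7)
The plan is to reduce both statements to~\Cref{claim:new-primes-are-ok} using the structural relationship between $\Phi$ and $\Psi$ provided by~\Cref{lemma:add-elimination-property}. The key observation is that every left-hand side of $\Phi$ is also a left-hand side of $\Psi$: non-primitive divisibilities $f \div g$ are transferred verbatim from $\Phi$ to $\Psi$ by item~\ref{lemma:add-elim-property:item-1}, and for each primitive $f$ appearing as a left-hand side in $\Phi$, the algorithm sketched after~\Cref{lemma:add-elimination-property} adds divisibilities $f \div h_1, \ldots, f \div h_\ell$ to $\Psi$, preserving~$f$ on the left. After substituting $\vec \nu$ for the variables of $X_1$, every left-hand side of $\Phi'$ therefore belongs to $\terms(\Psi')$.

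This observation immediately yields the second part of the claim: for $p \notin \pzero(\Psi)$, \Cref{claim:new-primes-are-ok}(ii) applied to every left-hand side $f'$ of $\Phi'$ gives $v_p(f'(\vec b_p)) \in \{0, u\}$ with $u = \max\{v_p(\alpha_i) : i \in [\ell+1,n]\}$. For the first part, non-vanishing of $f_i'(\vec b_p)$ follows from the same observation combined with the fact that $\vec b_p$ is a solution for $\Psi'$ modulo $p$. The divisibility condition $v_p(f_i'(\vec b_p)) \leq v_p(g_i'(\vec b_p))$ I would prove by case analysis on the original divisibility $f_i \div g_i$ in $\Phi$. If $f_i$ is non-primitive, then $f_i \div g_i$ also occurs in $\Psi$ by item~\ref{lemma:add-elim-property:item-1}, and the inequality is inherited from $\Psi'$. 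If $f_i$ is primitive, then $\Psi$ contains $f_i \div h_1, \ldots, f_i \div h_\ell$ whose right-hand sides form a $\ZZ$-basis of $\module_{f_i}(\Phi) = \module_{f_i}(\Psi)$ (by the elimination property and item~\ref{lemma:add-elim-property:item-2}), so $g_i \in \module_{f_i}(\Phi)$ can be expanded as $g_i = \sum_j \beta_j h_j$ with $\beta_j \in \ZZ$, and the ultrametric inequality then gives
\[
  v_p(g_i'(\vec b_p)) \geq \min_j\bigl(v_p(\beta_j) + v_p(h_j'(\vec b_p))\bigr) \geq \min_j v_p(h_j'(\vec b_p)) \geq v_p(f_i'(\vec b_p)).
\]

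The principal obstacle I anticipate is ensuring that the expansion $g_i = \sum_j \beta_j h_j$ holds over~$\ZZ$ rather than only over the rationals; a merely rational combination would introduce denominators that could corrupt the ultrametric estimate above. This is precisely what the elimination property guarantees, so once this point is settled the remainder of the argument reduces to routine bookkeeping on top of~\Cref{claim:new-primes-are-ok}.
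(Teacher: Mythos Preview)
Your argument is correct, and for the second statement it is essentially identical to the paper's. For the first statement, however, you take a longer route than necessary: the paper simply observes that, by the very statement of \Cref{lemma:add-elimination-property}, $\Phi$ and $\Psi$ are equivalent modulo every prime~$p$, so $(\vec\nu(\vec x),\vec b_p)$ being a solution for $\Psi$ modulo~$p$ immediately makes it a solution for $\Phi$ modulo~$p$, and hence $\vec b_p$ a solution for $\Phi'$ modulo~$p$. Your case analysis on primitive versus non-primitive left-hand sides, together with the basis expansion and the ultrametric inequality, is exactly the argument that proves this equivalence inside the appendix proof of \Cref{lemma:add-elimination-property}; you are re-deriving a fact the lemma already hands you. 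There is nothing wrong with this, and your discussion of the integrality of the coefficients~$\beta_j$ is on point, but you could save the entire case split by citing the modulo-$p$ equivalence directly.
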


\begin{proof}
  For the first statement of the claim, consider a solution
  $\vec b_p$ for $\Psi'(\vec y)$ modulo $p$ (such as the one
  ensured by~\Cref{claim:new-primes-are-ok}). From the
  definition of $\Psi'$, the tuple $(\vec \nu(\vec x), \vec
  b_p)$ is a solution for $\Psi(\vec x, \vec y)$ modulo $p$.
  Then, by~\Cref{lemma:add-elimination-property}, $(\vec
  \nu(\vec x), \vec b_p)$ is a solution for $\Phi(\vec x,
  \vec y)$ modulo $p$; and so by definition of $\Phi'$,
  $\vec b_p$ is a solution for $\Phi'(\vec y)$ modulo $p$.

  The second statement of this claim follows
  from~\Cref{claim:new-primes-are-ok} together with the
  property~\eqref{lemma:add-elim-property:item-1} of
  \Cref{lemma:add-elimination-property}, and  
  by definitions of $\Psi'$ and $\Phi'$. In particular, for
  every polynomial $f'(\vec y)$ occurring in a left-hand
  side of a divisibility of~$\Phi'$, there is a polynomial
  $f(\vec x, \vec y)$ occurring in a left-hand side of
  $\Phi$ such that $f'(\vec y) = f(\vec \nu(\vec x), \vec
  y)$. From~\eqref{lemma:add-elim-property:item-1} of
  \Cref{lemma:add-elimination-property}, $f$ occurs in a
  left-hand side of $\Psi$ and thus $f'$ occurs in a
  left-hand side of $\Psi'$. The statement then follows
  by~\Cref{claim:new-primes-are-ok}. 
\end{proof}

From~\Cref{claim:phi-prime:still-increasing}
and~\Cref{claim:phi-prime:new-primes-are-ok}, and by
induction hypothesis, there is a map $\vec \xi \colon
\big(\bigcup_{j=2}^r X_j \big) \to \ZZ_+$ such that $\vec
\xi(\vec y)$ is a solution for $\Phi'$. Note that in
constructing $\vec \xi$ we can rely on the order~$\incord$
restricted to~$\bigcup_{j=2}^r X_j$; since $\Phi'$ is
increasing for that order. Then, by definition of $\Phi'$, a
positive integer solution for $\Phi$ is given by the union
$\vec \nu \sqcup \vec \xi$ of $\vec \nu$ and $\vec \xi$.
This concludes the proof of existence of a solution. We now
study its bit length.

In what follows, let $\underline{O} \in \pZZ$ be the minimal
positive integer greater or equal than $4$ such that the map
${x \mapsto \underline{O}(x+1)}$ upper bounds the linear
functions hidden in the $\bigO{.}$ appearing
in~\Cref{lemma:add-elimination-property}. We
write~$\Gamma(r,\ell,w,m,d)$, with $r,\ell,w,m,d \in \pZZ$
and $r \leq d$, for the maximum bit length of the minimal
positive solution of any system of divisibility constraints $\Phi$
such that:
\begin{itemize}
  \item $\Phi$ is $r$-increasing.
  \item The maximum bit length of a coefficient or constant
  appearing in $\Phi$, i.e., $\maxbl{\Phi}$, is at most
  $\ell$.
  \item For every $p \in \pdiff(\Phi)$, consider a solution
  $\vec b_p$ of $\Phi$ modulo $p$ minimizing $\mu_p
  \coloneqq \max\{{v_p(f(\vec b_p))} : f \text{ is in the
  left-hand side of a divisibility in } \Phi \}$. Then,
  $\log_2\left(\prod_{p \in \pdiff(\Phi)} p^{\mu_p + 1}
  \right) \leq  w$.
  \item $\Phi$ has at most $m$ divisibilities.
  \item $\Phi$ has at most $d$ variables.
\end{itemize}
The constraint $r \leq d$ is without loss of
generality, as every increasing formula is $d$-increasing.

Since we want to find an upper bound for $\Gamma$, assume
without loss of generality that $\Gamma(r,\ell,w,m,d)$ is
always at least $\min(\ell,w)$.
In~\Cref{appendix:l-t-g:bounds-on-gamma} we study the growth
of $\Gamma$ and prove the following claim.

\begin{restatable}{claim}{EqGammaInductiveBound}
  \label{eq:gamma-inductive-bound}
  $\begin{cases} \Gamma(1,\ell,w,m,d) \leq w + 3\\
        %24 \cdot (2 \cdot m+1)^9 \cdot (d+2)^2 \cdot
        %(\ell+1)^3 \cdot (\eta+1)\\
    \Gamma(r+1,\ell,w,m,d) \leq  
    \begin{aligned}[t]
        \Gamma(&r,\\
          & 2^{105}  m^{27}  (d+2)^{38} 
          \underline{O} \cdot \log_2( \underline{O})^6 
          (\ell + w) \cdot (\log_2(\ell + w))^6,\\
          &  2^{109}  m^{29}  (d+2)^{39} 
          \underline{O} \cdot \log_2( \underline{O})^6 
          (\ell + w) \cdot (\log_2(\ell + w))^6,\\
          &m,\\
          &d).
    \end{aligned}
  \end{cases}$
\end{restatable}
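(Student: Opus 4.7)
}
The plan is to instantiate the construction of \Cref{func:new-solveincreasing} and carefully track the bit length introduced at every step, relating the input parameters $(r+1,\ell,w,m,d)$ of the outer call to the corresponding parameters $(r,\ell',w',m,d)$ of the recursive call on $\Phi'$.  For the \textbf{base case} $r=1$, every variable of $\Phi$ receives its value from the standard CRT applied to the system of congruences $x_\ell\equiv b_{p,\ell}\pmod{p^{\mu_p+1}}$ for $p\in\pdiff(\Phi)$. By definition of $w$ one has $\log_2\bigl(\prod_{p\in\pdiff(\Phi)} p^{\mu_p+1}\bigr)\le w$, so the smallest non-negative solution to each congruence system has bit length at most $w$; the additive $+3$ absorbs the $1+\lceil\log_2(\cdot+1)\rceil$ overhead of our bit-length convention.

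For the \textbf{inductive step} $r+1\mapsto r$, I would follow the proof of \Cref{theorem:local-to-global} essentially line by line. First, apply \Cref{lemma:add-elimination-property} to obtain $\Psi$: this yields $\maxbl{\Psi} \leq \poly(\ell,m,d)$ and at most $m(d{+}2)$ divisibilities. Second, use \Cref{lemma:bound-on-pzero} and \Cref{lemma:simple-primes} to control $\pzero(\Psi)$: the $\vec b_p$ for $p\in\pzero(\Psi)\setminus\pdiff(\Phi)$ have norm at most $p-1$, and since by $\mu_p=0$ on those primes the contribution of $\pzero(\Psi)$ to $\log_2(\prod p^{\mu_p+1})$ is at most $w+\log_2(\Pi\pzero(\Psi))\le w+\poly(\ell,m,d)$. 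Third, for each variable $x_\ell\in X_1$ apply \Cref{thm:mixed-crt} to the system on line~\ref{func:l-t-g:ind:mixed-crt}: the modulus set $M$ consists of the $p^{\mu_p+1}$ with $p\in\pzero(\Psi)$ (so $\log_2\setprod M$ is bounded as above), while the non-congruence primes $Q\setminus\pzero(\Psi)$ and the number $d$ of non-congruences per prime are both polynomial in $m,d$ via \Cref{lemma:bound-sterms} (there are at most $O(m^2 d)$ polynomials in $S(\sterms(\Psi))$, each of bit length $\poly(\ell,d)$, hence only polynomially many new prime factors of polynomial bit length can arise when evaluating them on the previously chosen $\vec\nu(x_i)$). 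This yields a bound $\norminf{\vec\nu}\le 2^{\poly(\ell,w,m,d)}$ and, more importantly, a bit length of the form $(\ell+w)\cdot\poly(\log(\ell+w),m,d)$.

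Fourth, bound the parameters of the recursive call on $\Phi'=\Phi\substitute{\vec\nu(x)}{x\in X_1}$. For $\ell'$, every coefficient/constant of $\Phi'$ is obtained by evaluating a polynomial from $\Phi$ on $\vec\nu$, so $\maxbl{\Phi'}\le \maxbl{\Phi}+\poly(d)\cdot\maxbl{\vec\nu}$ which is absorbed into the announced bound for~$\ell'$. For $w'$, combine \Cref{claim:phi-prime:new-primes-are-ok} with the bound on $\norminf{\vec\nu}$: for $p\in\pzero(\Psi)$ the $\mu_p'$ associated to $\Phi'$ is at most $\mu_p$ plus $v_p$ of the involved polynomials evaluated at~$\vec\nu$; for $p\notin\pzero(\Psi)$ the claim gives $\mu_p'\le u=\max_i v_p(\alpha_i)$. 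Summing $(\mu_p'+1)\log_2 p$ over $p\in\pdiff(\Phi')$ telescopes against $\log_2\prod_i|\alpha_i|$, which is polynomial in $d,\maxbl{\vec\nu}$, again fitting into the form announced for~$w'$. Together with the bit length of $\vec\nu$ itself (which is dominated by $\Gamma(r,\ell',w',m,d)$ under the standing assumption $\Gamma\ge\min(\ell,w)$), this yields the claimed recursion.

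The \textbf{main obstacle} is the bookkeeping of constants: each of the building blocks (\Cref{lemma:add-elimination-property}, \Cref{lemma:bound-sterms}, \Cref{lemma:bound-on-pzero}, \Cref{thm:mixed-crt}) introduces its own polynomial blow-up, and the iteration over the $|X_1|\le d$ variables compounds these, so the challenge is to arrange the computation so that the explicit constants $2^{105}$, $2^{109}$, the exponents $27,29,38,39$, and the factor $\underline{O}\cdot\log_2(\underline O)^6$ suffice. I expect this to require, in particular, using the $\ecrtf(Q,d)$ bound of \Cref{thm:mixed-crt} in the form $((d{+}1)|Q|)^{O((d{+}1)^2\ln\ln|Q|)}$, which is where the $\log_2(\ell+w)^6$ factor originates, and a careful choice of which quantities go inside $\underline O$ versus $m,d$. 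Everything else is routine arithmetic on polynomial bounds that is deferred to \Cref{appendix:l-t-g:bounds-on-gamma}.
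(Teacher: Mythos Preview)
Your overall architecture matches the paper's proof closely: base case via CRT, inductive step via elimination property, bounds on $\pzero(\Psi)$, application of \Cref{thm:mixed-crt} for each variable in $X_1$, and then bounding $\ell',w'$ for the recursive call on $\Phi'$. The treatment of $w'$ via \Cref{claim:phi-prime:new-primes-are-ok} and the splitting into primes in and out of $\pzero(\Psi)$ is also essentially what the paper does.

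There is, however, one genuine gap. You correctly flag as the ``main obstacle'' that the iteration over the $|X_1|\le d$ variables compounds the bounds, but you then dismiss this as ``routine arithmetic on polynomial bounds.'' It is not. The set $Q$ of non-congruence primes at step $k{+}1$ consists of prime divisors of the integers $h(\vec\nu(x_1),\dots,\vec\nu(x_k))$ for $h\in S(\sterms(\Psi))$, so $\card{Q}$ is bounded only by $s\cdot\log_2\bigl((k\cdot 2^{B_k}+1)\cdot t\bigr)$ where $B_k$ is the bit-length bound achieved at step $k$. Plugging this into $\ecrtf(Q,d)$ gives a bound of the form $B_{k+1}\le C\cdot(\log_2 B_k)^2$ for a constant $C$ depending on $s,t,d,\wprime$. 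A naive unfolding of this recurrence over $d$ steps does \emph{not} obviously stay polynomial; the paper's key observation is that this recurrence has a fixed point: whenever $C\ge 45$, every iterate of $x_{i+1}=C\cdot(\log_2 x_i)^2$ starting from $x_0=C$ stays below $B\coloneqq C\cdot(\log_2 C)^3$. This is what allows one to bound $\log_2(\vec\nu(x_{k+1}))\le B$ \emph{uniformly in $k$}, and it is precisely where the $(\log_2(\ell+w))^6$ factor in the final bound comes from (three powers from the fixed point, three more from expanding $\log_2\wprime$). Without this fixed-point argument, your sentence ``only polynomially many new prime factors of polynomial bit length can arise'' is circular, since ``polynomial'' here would have to mean polynomial in $B_k$ itself.
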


\noindent
Let us show that the recurrence system above yields the
bound in the statement of the theorem. Remark that $\Gamma$ is
monotonous by definition. By induction on $k \in [0,r-1]$ we
show that 
\[ 
  \Gamma(r,\ell,w,m,d) \leq \Gamma(r-k,\delta_k,\delta_k,m,d) \text{ where } 
  \delta_k \coloneqq \frac{1}{2} \cdot (2^{110}  m^{29}  (d+2)^{39}  \underline{O} \cdot \log_2( \underline{O})^6  (\ell+w))^{2(k+1)}.
\]
\begin{description}
  \item[base case $k = 0$.] Directly follows from $\delta_0
  \geq \max(\ell,w)$ and the fact that $\Gamma$ is
  monotonous.
  \item[induction case $k \geq 1$.] Let us define $C
  \coloneqq 2^{110}  m^{29}  (d+2)^{39} 
  \underline{O} \cdot \log_2( \underline{O})^6$. By
  induction hypothesis, $\Gamma(r,\ell,w,m,d) \leq
  \Gamma(r-(k-1), \delta_{k-1},\delta_{k-1}, m, d)$. 
  By~\Cref{eq:gamma-inductive-bound} 
  and the monotonicity of $\Gamma$:
  \begin{align*}
    & {\Gamma(r-(k-1),
    \delta_{k-1},\delta_{k-1}, m, d)}\\
    \leq\,&\Gamma(r-k, \ \frac{C}{2} \cdot (2 \cdot \delta_{k-1}) \cdot (\log_2(2 \cdot \delta_{k-1}))^6, \ \frac{C}{2} \cdot (2 \cdot \delta_{k-1}) \cdot (\log_2(2 \cdot \delta_{k-1}))^6, \ m, \ d)\\
    \leq\,&\Gamma(r-k, \ \delta_{k},\ \delta_{k} \ m, \ d),
  \end{align*}
  as indeed 
  \begin{align*}
    & \frac{C}{2} \cdot (2 \cdot \delta_{k-1}) \cdot (\log_2(2 \cdot \delta_{k-1}))^6\\
    \leq\,& 
    \frac{C}{2} \cdot \big(C \cdot (\ell+w)\big)^{2k}  \big(\log_2((C \cdot (\ell+w))^{2k})\big)^6\\
    \leq\,& 
    \frac{C}{2} \cdot \big(C \cdot (\ell+w)\big)^{2k}  (2 \cdot k)^6 \log_2(C \cdot (\ell+w))^6\\
    \leq\,& 
    \frac{C}{2} \cdot \big(C \cdot (\ell+w)\big)^{2k} \cdot \sqrt{C} \cdot \log_2(C \cdot (\ell+w))^6
    &\text{from $k < r \leq d$ and $(2 \cdot d)^6 \leq \sqrt{C}$}\\
    \leq\,& 
    \frac{C}{2} \cdot \big(C \cdot (\ell+w)\big)^{2k} \cdot \sqrt{C} \cdot \sqrt{C \cdot (\ell+w)}
    &\text{from $\log_2(x)^6 \leq \sqrt{x}$ for $x \geq 2^{75}$}\\
    \leq\,& 
    \frac{1}{2} \cdot \big(C \cdot (\ell+w)\big)^{2(k+1)} = \delta_{k}.
  \end{align*}
\end{description}
The inequality we just showed, together with the base case
of the recurrence system, entails 
\begin{equation}
  \label{equation:the-ultimate-bound-on-Gamma}
  \Gamma(r,\ell,w,m,d) \leq (2^{110} m^{29} (d+2)^{39} \underline{O} \cdot \log_2( \underline{O})^6 (\ell+w))^{2 \cdot r}.
\end{equation}

Take now the formula $\Phi$ in the statement of the theorem.
This formula belongs to $\Gamma(r,\ell,w,m,d)$ where $\ell
\coloneqq \maxbl{\Phi}$ and $w \coloneqq \log_2\big(\prod_{p
\in \pdiff(\Phi)} p^{\mu_p + 1} \big)$. We have
\begin{align*}
  w &\leq \max\{ 1+v_p(f(\vec b_p)) : f \text{ is in a left-hand side of } \Phi,\, p \in \pdiff(\Phi)\} \cdot \log_2\Big(\prod_{p \in \pdiff(\Phi)} p\Big)\\
  &\leq \max\{ \bitlength{f(\vec b_p)} : f \text{ is in a left-hand side of } \Phi,\, p \in \pdiff(\Phi)\} \cdot \log_2\Big(\prod_{p \in \pdiff(\Phi)} p\Big)\\
  &\leq (\max\{ \maxbl{\vec b_p} : p \in \pdiff(\Phi)\} + \maxbl{\Phi} + d+1) \cdot \log_2\Big(\prod_{p \in \pdiff(\Phi)} p\Big)\\  
  &\leq (\max\{ \maxbl{\vec b_p} : p \in \pdiff(\Phi)\} + \maxbl{\Phi} + d+1) \cdot m^2 (d+2) \cdot (\maxbl{\Phi}+2)
  &\text{\Cref{lemma:bound-on-pzero}}\\
  &\leq (\max\{ \maxbl{\vec b_p} : p \in \pdiff(\Phi)\} + 1)  \cdot m^2 (d+2)^2 (\maxbl{\Phi}+2)^2.
\end{align*}
Then, following~\Cref{equation:the-ultimate-bound-on-Gamma},
the minimal positive solution of $\Phi$ is bounded by 
\begin{align*}
  \left(2^{111} \underline{O} \cdot \log_2( \underline{O})^6 m^{31} (d+2)^{41} (\maxbl{\Phi}+2)^2 (\max\{ \maxbl{\vec b_p} : p \in \pdiff(\Phi)\} + 2)\right)^{2 r},
\end{align*}
which is in $(\bitlength{\Phi} + \max\{ \maxbl{\vec b_p} : p
\in \pdiff(\Phi) \})^{O(r)}$.
\end{proof}

\begin{remark}
  \label{remark:loc-to-glo:inf-solutions}
  Let us briefly discuss how the infinitely many solutions
  of $\Phi$ ensured by~\Cref{theorem:local-to-global} look
  like. Since solutions are constructed by solving the
  systems of (non-)congruences
  in~\Cref{eq:l-t-g:base-case,eq:l-to-g:non-cong-sys}
  (see~\Cref{func:new-solveincreasing} for a summary),
  \Cref{thm:mixed-crt} ensures that $\Phi$ has infinitely
  many solutions. More precisely, the following property
  holds: let $(\incord) \in (X_1 \incord \dots \incord
  X_r)$, ${x \in \bigcup_{j=1}^r X_j}$, and $\vec \nu \colon
  \bigcup_{j=1}^r X_j \to \ZZ$ be the solution of~$\Phi$
  computed by~\Cref{func:new-solveincreasing}. The system
  ${\Phi\substitute{\vec \nu(y)}{y : y \incord x}}$ has a
  solution for infinitely many positive and negative values
  of $x$.
\end{remark}

\subsection{Deciding systems of divisibility constraints in increasing form in NP}
\label{subsec:deciding-increasing-systems-in-NP}

\Cref{theorem:local-to-global} provides a way of
constructing integer solutions of bit length exponential in
$r$ for \mbox{$r$-increasing} systems of divisibility
constraints. A different approach not relying on
constructing integer solutions shows that the feasibility
problem for systems of divisibility constraints in
increasing form is in $\np$. 

Let $\Phi(\vec x) \coloneqq \bigwedge_{i=1}^m f_i \div g_i$
be a formula in increasing form for an order~$\incord$.
According to~\Cref{theorem:local-to-global}, $\Phi$ is
satisfiable over the integers if and only if there are
solutions $\vec b_p$ for $\Phi$ modulo $p$ for every prime
$p$ belonging to $\pdiff(\Phi)$.
From~\Cref{lemma:bound-on-pzero}, the bit length of
$\pdiff(\Phi)$ is polynomial in $\bitlength{\Phi}$, and
therefore only polynomially many primes of polynomial bit
length need to be considered. Recall that $\Phi$ has a
solution modulo~$p$ whenever the system $\bigwedge_{i=1}^m
v_p(f_i(\vec x)) \leq v_p(g_i(\vec x)) \land f_i(\vec x)
\neq 0$ has a solution. In~\cite{GuepinHW19} it is shown
that the feasibility problem for these constraint systems is
in~\np (this result holds for solutions over the integers,
$p$-adic integers, and $p$-adic numbers), and therefore
there are certificates of feasibility having size polynomial
in~$\bitlength{p}$ and $\bitlength{\Phi}$. The set of these
certificates, one for each prime in $\pdiff(\Phi)$, is a
polynomial size certificate for the feasibility of $\Phi$.

\begin{proposition}
  \label{prop:increasing-in-np}
  Feasibility for systems of divisibility constraints in
  increasing form is in~\np.
\end{proposition}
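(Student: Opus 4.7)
The plan is to use Theorem~\ref{theorem:local-to-global} to reduce the feasibility question for $\Phi$ over $\ZZ$ to checking, for each prime $p$ in the (polynomially bounded) set $\pdiff(\Phi)$, whether $\Phi$ admits a solution modulo $p$, and then to invoke the \np result of~\cite{GuepinHW19} on each such local instance. Concretely, any system of divisibility constraints in increasing form (for some order $\incord$) on $d$ variables is trivially $d$-increasing by taking the partition of $\vec x$ into singletons ordered by $\incord$, so Theorem~\ref{theorem:local-to-global} applies. This gives the equivalence: $\Phi$ is feasible over $\ZZ$ iff it has a solution modulo $p$ for every $p \in \pdiff(\Phi)$.

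Next I would appeal to Lemma~\ref{lemma:bound-on-pzero} to bound $\pdiff(\Phi)$. Since $\log_2(\Pi \pdiff(\Phi)) \le m^2(d+2)(\maxbl{\Phi}+2)$, and every element of $\pdiff(\Phi)$ is at least $2$, both $\card{\pdiff(\Phi)}$ and $\maxbl{\pdiff(\Phi)}$ are polynomially bounded in $\bitlength{\Phi}$. Moreover, $\pdiff(\Phi)$ can be computed in polynomial time directly from its definition (primes up to $m$ together with prime divisors of the numbers appearing in $\Phi$, which can be enumerated since their bit length is polynomially bounded).

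Given $p \in \pdiff(\Phi)$, ``$\Phi$ has a solution modulo $p$'' amounts by definition to the feasibility of the system $\bigwedge_{i=1}^m \bigl(v_p(f_i(\vec x)) \le v_p(g_i(\vec x)) \,\land\, f_i(\vec x) \ne 0\bigr)$, which is an existential formula in the first-order theory of the $p$-adic integers with linear $p$-adic valuation constraints. By~\cite{GuepinHW19}, this feasibility problem is in \np, hence admits a certificate of size polynomial in $\bitlength{p} + \bitlength{\Phi}$.

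Putting this together yields the \np algorithm: guess, for each $p \in \pdiff(\Phi)$, a witness that $\Phi$ has a solution modulo $p$, and verify each of them. The total certificate has polynomial size since $\card{\pdiff(\Phi)}$ and $\maxbl{\pdiff(\Phi)}$ are polynomially bounded and each per-prime witness has polynomial size. By Theorem~\ref{theorem:local-to-global}, the existence of all such witnesses is equivalent to the feasibility of $\Phi$ over $\ZZ$, giving membership in \np. There is no real obstacle beyond assembling these ingredients: the key conceptual point is that while the integer solutions themselves may have exponential bit length (as witnessed by the family $\Phi_n$ from the introduction), the \emph{certificates of existence} remain polynomial because they live locally at each prime of $\pdiff(\Phi)$.
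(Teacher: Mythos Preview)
Your proof follows essentially the same route as the paper: invoke Theorem~\ref{theorem:local-to-global} (noting that increasing for a total order $\incord$ means $d$-increasing via the singleton partition) to reduce integer feasibility to the existence of solutions modulo each $p \in \pdiff(\Phi)$, bound $\pdiff(\Phi)$ via Lemma~\ref{lemma:bound-on-pzero}, and appeal to~\cite{GuepinHW19} for an \np certificate at each prime.

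One small inaccuracy worth flagging: you assert that $\pdiff(\Phi)$ can be \emph{computed in polynomial time} because the prime divisors of the numbers in $\Phi$ ``can be enumerated since their bit length is polynomially bounded.'' Polynomial bit length does not by itself give polynomial-time enumerability---extracting prime divisors is integer factoring, which is not known to be in \ptime. This does not harm the \np argument, however: the set $\pdiff(\Phi)$ (together with factorizations of the relevant coefficients certifying that no prime has been omitted) can simply be guessed as part of the certificate and verified in polynomial time, since primality testing is in \ptime and the total bit length is polynomial by Lemma~\ref{lemma:bound-on-pzero}. The paper's proof glosses over this point as well, so your argument is on equal footing.
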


\noindent
Of course, we know from the family of formulae $\Phi_n$
introduced in~\Cref{ssec:np-at-a-glance} (and the one
after~\Cref{theorem:local-to-global}) that systems in
increasing form might have minimal solutions of exponential
bit length. Therefore,~\Cref{prop:increasing-in-np} is of no
use when establishing~\Cref{thm:small-model}. However, it
still has an interesting implication: if the feasibility
problem for systems of divisibility constraints lies outside
$\np$, then there is no polynomial time non-deterministic
Turing machine for finding an equisatisfiable system in
increasing form.

\section{IP-GCD systems have polynomial size solutions}
\label{sec:ip-gcd-small-model}

In this section we expand the summary
provided~\Cref{sec:intro-gcd-ip} and
establish~\Cref{thm:small-model}, i.e., that every feasible
IP-GCD system has solutions of polynomial bit length, and
that this polynomial bound still holds when looking at
minimization or maximization of linear objectives. As
explained in~\Cref{sec:intro-gcd-ip},
we prove~\Cref{thm:small-model} by designing an algorithm
that reduces an IP-GCD system into a disjunction of
(exponentially many) $3$-increasing systems of
divisibility constraints with coefficients and constants of polynomial
size, to then study bounds on their solutions modulo primes.
Then, the polynomial small witness property follows
from~\Cref{theorem:local-to-global}.

Without loss of generality, throughout the section we
consider IP-GCD systems of the form 
\[ 
  A \cdot \vec x \le \vec b \land \bigwedge_{i=1}^k \gcd(y_i,z_i) \sim_i c_i\,,
\]
where, $A\in \ZZ^{m\times d}$, $\vec b\in \ZZ^m$, $c_i \in
\pZZ$, $\vec x=(x_1,\ldots,x_d)$ is a vector of variables,
$\sim_i{} \in \{{\leq},{=},{\neq},$ ${\geq}\}$, and the
$y_i$ and $z_i$ are variables occurring in $\vec x$. Systems
with GCD constraints~$\gcd(f(\vec w), g(\vec w)) \sim c$ can
be put into this form by replacing~$\gcd(f(\vec w), g(\vec
w)) \sim c$ with $y = f(\vec w) \land z = g(\vec w) \land
\gcd(y,z) \sim c$, where $y$ and $z$ are fresh variables.

\subsection{Translation into 3-increasing systems}
\label{sec:trans-into-3-inc-sys}

The procedure generating the $3$-increasing systems of
divisibility constraints starting from an IP-GCD system~$\Phi$ is
 divided into two steps: we first (\Cref{algorithm:from-IP-GCD-to-systems-of-divisibilities}) compute several
systems of divisibility constraints whose disjunction is equivalent
to~$\Phi$ (under some changes of variables). We now
describe these two steps in detail, and study bounds on the
obtained $3$-increasing formulae
(\Cref{lemma:from-IP-GCD-to-increasing-systems}).
Both~steps
rely on the following notion of \gcdtodiv triple, which
highlights properties of the system of divisibility constraints
obtained by translation from IP-GCD systems. A triple
$(\Psi,\vec u, E)$ is said to be a \emph{\gcdtodiv} triple
whenever there are $d,m \in \NN$ and three disjoint families
of variables~$\vec z$, $\vec y$ and $\vec w$ for which the following properties hold:

\begin{enumerate}
  \item\label{lemma:gcd-to-div:item0} $\Psi(\vec z, \vec y,
  \vec w)$ is a system of divisibility constraints in $m$ variables,
  $\vec u \in \ZZ^d$ and $E \in \ZZ^{d \times m}$, where
  each column of $E$ (implicitly) corresponds to a variable
  in $\Psi$.
  \item\label{lemma:gcd-to-div:item1} Each divisibility in
  $\Psi$ is of the form $h(\vec z) \div f(\vec y)$ or of the
  form $f(\vec y) \div g(\vec w)$, with $g$ being a
  non-constant polynomial. Each polynomial only features
  non-negative coefficients and constants, and each
  left-hand side of a divisibility has a (strictly) positive
  constant.
  \item\label{lemma:gcd-to-div:item2} In $\Psi$, each
    variable in $\vec z$ appears in a single
    polynomial~$h(\vec z)$, where $h(\vec z)$ is of the form
    $z+c$, for some $c \in \pZZ$, and occurs in precisely
    two divisibilities (as left-hand side).
  \item\label{lemma:gcd-to-div:item2.5} In $\Psi$, each
    variable in $\vec w$ appears in exactly two polynomials
    $g_1(\vec w)$ and $g_2(\vec w)$, each occurring in
    $\Psi$ exactly once (as right-hand sides). They have the
    form $g_1(\vec w) = w$ and $g_2(\vec w) = w+c$, for some
    $c \in \pZZ$.
  \item\label{lemma:gcd-to-div:item3} Every column in~$E$
  corresponding to a variable in $\vec z$ or $\vec w$ is
  zero (see line~\ref{algo:ip-to-gcd-a1:add-to-B}
  of~\Cref{algorithm:from-IP-GCD-to-systems-of-divisibilities}).
\end{enumerate}
For a set of \gcdtodiv triples $S$, let $\sem{S} \coloneqq
\{\vec u + E \cdot \vec \lambda \, : \, {(\Psi,\vec u, E)
\in B}$ $\text{and } \vec \lambda \in \NN^{m} \text{
solution to } \Psi\}$.

\paragraph*{Step I: from IP-GCD to systems of divisibility constraints.}

\begin{algorithm}[t]
  \caption{Translate a IP-GCD system into \gcdtodiv triples}
  \label{algorithm:from-IP-GCD-to-systems-of-divisibilities}
  \begin{algorithmic}[1]
    \medskip
    \Require An IP-GCD system $\Phi(\vec x) = A \cdot \vec x \leq \vec b \land \bigwedge_{i=1}^k \gcd(y_i,z_i) \sim_i c_i$ with $\vec x = (x_1,\dots,x_d)$.
    \Ensure 
    \begin{minipage}[t]{0.92\linewidth}
      A finite set $B$ of~\gcdtodiv triples 
      satisfying~$\{\vec a \in \ZZ^d : \vec a \text{ solution to }\Phi \} = \sem{B}$.
    \end{minipage}
    \medskip
    \State\label{algo:ip-to-gcd-a1:setA}%
      $G$ $\coloneqq$ 
      \begin{minipage}[t]{0.9\linewidth}
        $\{\Psi_1(\vec x),\dots,\Psi_\ell(\vec x)\}$ such that $\Phi$ is equivalent to $\bigvee_{i=1}^\ell \Psi_i$ and every $\Psi \in G$ is a IP-GCD\\[-1pt]
        system in which every GCD~constraint $\gcd(y,z) \sim c$ is such that (i) for both $w \in \{y,z\}$
        \\[-1pt]
        either $w \leq -1$ or $w \geq 1$ appear in $\Psi$, and (ii) the relation $\sim$ is either $=$ or $\geq$
      \end{minipage}
    \State $B \coloneqq \emptyset$
    \itComment{Set to be returned by the procedure}
    \For{$\Psi$ \textbf{in} $G$}
      \State\label{algo:ip-to-gcd-a1:lineqPsi}%
        $\Psi'$ $\coloneqq$ linear inequalities in $\Psi$
      \State\label{algo:ip-to-gcd-a1:VZG}% 
        $S \coloneqq \{(\vec u_i, E_i) : i \in I \}$ s.t.~$\bigcup_{i \in I} \{ \vec u_i + E_i \cdot \vec y : \vec y \in \NN^{\ell}\}$ solutions set of $\Psi'$
      \itComment{\Cref{prop:vzGS}}
      \For{$(\vec u, E)$ \textbf{in} $S$}
        \State\label{algo:ip-to-gcd-a1:change-of-variables}% 
          $\Psi'' \coloneqq $ 
            \begin{minipage}[t]{0.8\linewidth}
              system of GCD~constraints obtained from $\Psi$ by performing the change of\\[-1pt]
              variables $\vec x \gets \vec u + E \cdot \vec y$, where $\vec y$ is a vector of fresh variables (over $\NN$)
            \end{minipage}\vspace{4pt}
        \State\label{algo:ip-to-gcd-a1:adjust-polarity}%  
          replace every polynomial~$f$ in $\Psi''$ having only negative coefficients or constant with $-f$\vspace{4pt}
        \State\label{algo:ip-to-gcd-a1:trans-positive}%
          \begin{minipage}[t]{0.9\linewidth}
            replace every constraint $\gcd(f,g) = c$ in $\Psi''$  with 
              $(c \div f) \land (c \div g) \land (f \div w) \land (g \div w+c)$,\\[-1pt]
              where $w$ is a fresh variable
              (distinct GCD~constraints gets distinct fresh variables)
          \end{minipage}\vspace{4pt}
        \State\label{algo:ip-to-gcd-a1:trans-negative}%
          \begin{minipage}[t]{0.9\linewidth}
            replace every constraint $\gcd(f,g) \geq c$ in $\Psi''$ with 
              $(z+c \div f) \land (z+c \div g)$,\\[-1pt]
              where $z$ is a fresh variable
              (distinct GCD~constraints gets distinct fresh variables)
          \end{minipage}\vspace{4pt}
        \State\label{algo:ip-to-gcd-a1:add-to-B}% 
          \begin{minipage}[t]{0.9\linewidth}
            add to $B$ the triple $(\Psi'', \vec u, E')$ where $E'$ is obtained form $E$ by adding a zero column for each auxiliary variable $z$ and $w$ added in lines~\ref{algo:ip-to-gcd-a1:trans-positive} and~\ref{algo:ip-to-gcd-a1:trans-negative}
          \end{minipage}
      \EndFor
    \EndFor
    \State \textbf{return} $B$
  \end{algorithmic}
\end{algorithm}

This step is implemented
by~\Cref{algorithm:from-IP-GCD-to-systems-of-divisibilities}.
As highlighted in its signature, given as input an IP-GCD
system~$\Phi(\vec x)$ having $d$ variables and~$k$
GCD~constraints, this procedure returns a set~$B$
of~\gcdtodiv triples satisfying the equivalence $\{\vec a
\in \ZZ^d : \vec a \text{ solution to } \Phi \} = \sem{B}$.
This equivalence clarifies the role of the vector $\vec u$
and matrix $E$ of a~\gcdtodiv triple~$(\Psi, \vec u, E)$:
they are used to perform a change of variables between the
variables~$(\vec z, \vec y, \vec w)$ in $\Psi$ and the
variables~$\vec x$ in $\Phi$. Note that, according to the
definition of $\sem{B}$, the values of $(\vec z, \vec y,
\vec w)$ range over $\NN$ instead of $\ZZ$. This discrepancy
stems from the use of the forthcoming~\Cref{prop:vzGS}. 

Let us discuss
how~\Cref{algorithm:from-IP-GCD-to-systems-of-divisibilities}
computes $B$. As a preliminary step, the procedure computes
the formula $\bigvee_{i = 1}^\ell \Psi_i$ in
line~\ref{algo:ip-to-gcd-a1:setA}. The role of this formula
is to reduce the problem of translating IP-GCD systems into
systems of divisibility constraints to only those systems in which the
GCD~constraints $\gcd(y,z) \leq c$ and $\gcd(y,z) \neq c$ do
not appear, and given a GCD~constraint $\gcd(y,z) \sim c$
(with $\sim$ either $=$ or $\geq$), the variables $y$ and
$z$ are forced to be positive or negative (in particular,
non-zero). The formula $\bigvee_{i = 1}^\ell \Psi_i$ can be
computed from $\Phi$ by opportunely applying the following
tautologies:
\begin{flalign*}
  &y \leq -1 \lor y = 0 \lor y \geq 1\,, &\hspace{-4cm}
  \gcd(y,z) \neq c+2 \iff \gcd(y,z) \leq c+1 \lor \gcd(y,z)
  \geq c+3 \ \ (c \in \NN)\,,\\
  &\gcd(y,z) \neq 1 \iff y = z = 0 \lor \gcd(y,z) \geq 2\,,
  &\hspace{-15cm}\gcd(y,z) \leq c \iff \bigvee_{j=1}^{c}
  \gcd(y,z) = j\,,\\
  &y = 0 \implies (\gcd(y,z) \sim c \iff |z| \sim c)\,,
  &\hspace{-15cm}y \neq 0 \land z = 0 \implies (\gcd(y,z)
  \sim c \iff |y| \sim c)\,,
\end{flalign*}
where in the last two tautologies $\sim$ is $=$ or $\geq$,
and $|x| \sim c \, \coloneqq \, (x \geq 0 \land x \sim c)
\lor (x < 0 \land -x \sim c)$. Let $G \coloneqq \{ \Psi_1,
\dots, \Psi_\ell \}$ (as defined in
line~\ref{algo:ip-to-gcd-a1:setA}). The next step of the
algorithm is to remove the system of inequalities from every
formula $\Psi \in G$ via changes of variables
(lines~\ref{algo:ip-to-gcd-a1:lineqPsi}--\ref{algo:ip-to-gcd-a1:change-of-variables}).
This can be done thanks to a fundamental result by von zur
Gathen and Sieveking~\cite{vzGS78} that characterises the
set of solutions of linear inequalities as a union of
discrete shifted cones. The following formulation of this
result is from~\cite[Theorem~3]{LechnerOW15}.

\begin{proposition}[{\cite{vzGS78}}]
  \label{prop:vzGS}
  Consider matrices $A \in \ZZ^{m \times d}$, $C \in \ZZ^{n
      \times d}$, and vectors $\vec b \in \ZZ^m$, $\vec d
      \in \ZZ^n$. Let $r \coloneqq \rank(A)$ and $s
      \coloneqq \rank{\left(\begin{smallmatrix} A\\
      C \end{smallmatrix}\right)}$. Then,

  \vspace{-5pt}
  \[ 
  \{ \vec x \in \ZZ^d : A \cdot \vec x = \vec b \land C \cdot \vec x \leq \vec d \} 
  \,=\, \bigcup_{i \in I} \{ \vec u_i + E_i \cdot \vec y \,:\, \vec y \in \NN^{d-r} \}\,,
  \]
  where $I$ is a finite set, $\vec u_i \in \ZZ^d$, $E_i \in
  \ZZ^{d \times (d-r)}$\! and $\norminf{\vec
  u_i},\norminf{E_i} \leq (d+1) (s \cdot
  \max(2,\norminf{A},\norminf{C},\norminf{\vec
  b},\norminf{\vec d}))^s$.
\end{proposition}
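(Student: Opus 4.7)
The plan is to handle the equalities and the inequalities separately, and then combine the resulting bounds. First, I would compute a Hermite normal form of $A$ to decide feasibility of $A \vec x = \vec b$ over $\ZZ$. If feasible, this yields a particular integer solution $\vec x_0 \in \ZZ^d$ together with a basis matrix $B \in \ZZ^{d \times (d-r)}$ of the integer kernel $\ker_\ZZ(A) = \{\vec v \in \ZZ^d : A \vec v = \vec 0\}$, and the well-known Cramer-rule bounds on entries of the Hermite normal form give $\norminf{\vec x_0}, \norminf{B} \le (r \cdot \max(\norminf{A},\norminf{\vec b}))^r$, derived from the $r\times r$ subdeterminants of the augmented matrix $[A \mid \vec b]$. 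Every integer solution of $A\vec x = \vec b$ then takes the form $\vec x = \vec x_0 + B \vec z$ for some $\vec z \in \ZZ^{d-r}$, and substituting into $C \vec x \le \vec d$ produces a pure inequality system $(CB) \vec z \le \vec d - C \vec x_0$ over $\ZZ^{d-r}$ whose coefficients remain polynomial in the original data.

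Next, I would invoke the classical decomposition of the integer points of a rational polyhedron as a finite union of shifted cones. Partition $\ZZ^{d-r}$ according to which subset of inequalities of $(CB)\vec z \le \vec d - C\vec x_0$ is tight after a suitable triangulation of the feasible polyhedron. On each cell the integer solutions are described by a set of the form $\vec w + F \cdot \NN^{d-r}$, where $\vec w$ is an integer representative obtained from a vertex of the triangulation plus an offset inside a fundamental parallelepiped, and where the columns of $F$ generate the recession cone of the cell (padded with zero columns if fewer than $d-r$ generators are present, which is harmless). Pushing this decomposition through the change of variables $\vec x = \vec x_0 + B\vec z$ yields triples $(\vec u_i, E_i) = (\vec x_0 + B\vec w_i,\ B F_i)$ whose union of shifted cones is exactly the solution set in $\ZZ^d$.

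For the bounds, both the vertices $\vec w_i$ and the generators of each $F_i$ can be expressed, by Cramer's rule applied to the full constraint system $\bigl(\begin{smallmatrix}A\\C\end{smallmatrix}\bigr)$, as ratios of $s \times s$ subdeterminants of the augmented matrix $\bigl(\begin{smallmatrix}A & \vec b\\ C & \vec d\end{smallmatrix}\bigr)$. Hadamard's inequality bounds such subdeterminants by $(s \cdot M)^s$ with $M = \max(2,\norminf{A},\norminf{C},\norminf{\vec b},\norminf{\vec d})$. The additional factor $(d{+}1)$ accounts for the absorption of fundamental-domain offsets into the finitely many translates $\vec u_i$ (needed to obtain integrality of all points inside each cone), and for the composition with $B$ whose entries are already bounded within the same estimate.

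The main obstacle is the polyhedral decomposition step: one must show that the integer points of a rational polyhedron decompose into \emph{finitely many} shifted cones of the stated shape, and simultaneously certify the explicit size bounds. Existence is a standard integer Carathéodory / Minkowski--Weyl argument; the quantitative control is delicate and essentially is the content of the von zur Gathen--Sieveking theorem. Once this ingredient is in place, the rest of the proof reduces to tracking how the change of variables $\vec x = \vec x_0 + B \vec z$ propagates the bounds, which is routine and strictly preserves the $(d{+}1)(s \cdot M)^s$ estimate stated in the proposition.
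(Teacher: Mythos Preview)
The paper does not prove this proposition at all: it is quoted as an external result of von zur Gathen and Sieveking~\cite{vzGS78}, in the formulation of~\cite[Theorem~3]{LechnerOW15}, and is used as a black box. So there is no ``paper's own proof'' to compare against.

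On the merits of your sketch: the overall architecture (parametrize the equality set via an HNF, substitute into the inequalities, then decompose the resulting polyhedron into shifted integer cones) is the standard one, and you correctly identify that the hard part is the quantitative polyhedral decomposition, which is precisely the content of the cited theorem. The weak point is the bound-tracking through the two-stage reduction. After the change of variables $\vec x = \vec x_0 + B\vec z$, the new system $(CB)\vec z \le \vec d - C\vec x_0$ has entries of magnitude up to $d\cdot \norminf{C}\cdot \norminf{B}$, i.e.\ already of order $(rM)^{r}$ times the original data; applying a Cramer-type estimate to \emph{this} system and then composing with $B$ again does not obviously collapse back to $(d{+}1)(sM)^s$. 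Your sentence ``the composition with $B$ whose entries are already bounded within the same estimate'' papers over exactly the step where the exponents would multiply rather than add. The clean bound in terms of $s = \rank\bigl(\begin{smallmatrix}A\\C\end{smallmatrix}\bigr)$ in the cited formulation comes from applying Cramer's rule directly to $s\times s$ minors of the \emph{combined} system, not from composing two separate estimates; if you want to recover the stated bound via your two-step route, you would need an additional argument that the relevant subdeterminants of $CB$ factor through subdeterminants of $\bigl(\begin{smallmatrix}A\\C\end{smallmatrix}\bigr)$ (e.g.\ via the Cauchy--Binet formula), rather than treating $CB$ as a fresh matrix.
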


\noindent
Let $S = \{ (\vec u_i, E_i) : i \in I\}$ be the set of pairs
given by~\Cref{prop:vzGS} on the linear inequalities
of~$\Psi$, as written in line~\ref{algo:ip-to-gcd-a1:VZG},
and given $(\vec u, E) \in S$ consider the system~$\Psi''$
defined in line~\ref{algo:ip-to-gcd-a1:change-of-variables}.
Following~\Cref{prop:vzGS}, $\Psi''$ is interpreted over
$\NN$. By definition of $G$, in $\Psi$, every variable~$x$
appearing in a GCD~constraint also appears in a (non-zero)
sign constraint $x \leq -1$ or $x \geq 1$. This means that
in the system $\vec x = \vec u + E \cdot \vec y$, the row
corresponding to $x$ is of the form $x = f(\vec y)$ where
$f$ is a linear polynomial having coefficients and constant
with the same polarity, i.e., they are all negatives (if $x
\leq -1$) or positives (if $x \geq 1$). Therefore, all
GCD~constraints in~$\Psi''$ are of the form $\gcd(f,g) \sim
c$ where $f$ and $g$ are polynomials with coefficients and
constant having the same polarity.
Line~\ref{algo:ip-to-gcd-a1:adjust-polarity} modifies
$\Psi''$ so that every polynomial in it becomes of positive
polarity, thanks to the equalities $\gcd(f,g) = \gcd(-f,g)$
and $\gcd(f,g) = \gcd(g,f)$. 
% As $\Psi''$ is interpreted over~$\NN$, this implies that
% polynomials in it only evaluate to positive integers.
What is left is to translate~$\Psi''$ into a system of
divisibilities. This is done in
lines~\ref{algo:ip-to-gcd-a1:trans-positive}
and~\ref{algo:ip-to-gcd-a1:trans-negative} by simply relying
on the following two tautologies: 
\begin{equation}\label{eqn:tautologies}
\begin{aligned}
  \gcd(f,g) = c \land f \neq 0 \land g \neq 0
  &\iff 
  \exists w \in \NN :~ c \div f ~\land~ c \div g ~\land~ f \div w ~\land~ g \div w+c\,,
  \\[4pt]
  \gcd(f,g) \geq c 
  &\iff 
  \exists z \in \NN :~ z + c \div f ~\land~ z+c \div g.
\end{aligned}
\end{equation}
Above, note that we can assume $f \neq 0 \land g \neq 0$ in
$\Psi''$, again because of the sign constraints appearing in
$\Psi$. While the second tautology should be
self-explanatory, the first one merits a formal proof:%
\begin{flalign*}
  &\gcd(f,g) = c \land f \neq 0 \land g \neq 0\\
  \iff\,& \exists a , b \in \ZZ :~ c \div f ~\land~ c \div g
  ~\land~ a \cdot f + b \cdot g = c &\text{B\'ezout's
  identity} \\
  \iff\,& \exists w, z \in \ZZ :\, w \leq 0 \,\land\, c \div
  f \,\land\, c \div g \,\land\, f \div w \,\land\, g \div z
  \,\land\, w + z = c &\text{set $w = a \cdot f$ and $z = b
  \cdot g$} \\
  &&\hspace{-3cm}\text{B\'ezout's identity allows picking $w
  \leq 0$} \\
  \iff\,& \exists w \in \ZZ :\, w \leq 0 \,\land\, c \div f
  \,\land\, c \div g \,\land\, f \div -w \,\land\, g \div
  c-w &\hspace{-3cm}\text{eliminate $z$, and $f \div w
  \Leftrightarrow f \div -w$} \\
  \iff\,& \exists w \in \NN :\, c \div f \,\land\, c \div g
  \,\land\, f \div w \,\land\, g \div w+c
  &\hspace{-3cm}\text{change of variable $-w \gets w$.}
\end{flalign*}
Note that the divisibilities in \eqref{eqn:tautologies}
ensure that~$\Psi''$ satisfies the constraints required
by~\gcdtodiv triples. After translating GCDs into
divisibilities, the procedure computes a matrix $E'$ by
enriching~$E$ with zero columns corresponding to the new
variables~$z$ and~$w$, and adds the resulting triple
$(\Psi'',\vec u, E')$ to $B$
(line~\ref{algo:ip-to-gcd-a1:add-to-B}). We obtain the
following result:

\begin{lemma}
  \label{lemma:gcd-to-div}
  \Cref{algorithm:from-IP-GCD-to-systems-of-divisibilities}
  respects its specification. Given as input a system $\Phi$
  with $d$ variables and~$k$ GCD~constraints, every triple
  $(\Psi,\vec u, E)$ in the output set $B$ is such that
  $\Psi$ has at most $d+k$ variables and $4 k$
  divisibilities, $E$ has at most $d$ non-zero columns,
  and~$\norminf{\Psi}, \norminf{\vec u}, \norminf{E} \leq
  (d+1)^{d+2} (\norminf{\Phi}+1)^{d+1}$\!.
\end{lemma}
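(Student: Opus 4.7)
The plan is to verify the two claims separately: first that \Cref{algorithm:from-IP-GCD-to-systems-of-divisibilities} respects its specification, i.e.\ every output triple is a \gcdtodiv triple and $\sem{B}$ equals the integer solution set of $\Phi$, and second the quantitative bounds on the number of variables, divisibilities, non-zero columns, and norms.

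For correctness, I trace through the algorithm step by step. The preliminary decomposition of line~\ref{algo:ip-to-gcd-a1:setA} is equivalence preserving by the list of tautologies stated right below the algorithm, and each $\Psi \in G$ has at most $k$ GCD~constraints, all of the form $\gcd(y,z) = c$ or $\gcd(y,z) \geq c$, and such that each variable appearing in a GCD~constraint also appears in a sign constraint $y \le -1$ or $y \ge 1$ of $\Psi$. For each such $\Psi$: (i)~the change of variables in line~\ref{algo:ip-to-gcd-a1:change-of-variables} preserves the solution set by~\Cref{prop:vzGS}; (ii)~the polarity flip of line~\ref{algo:ip-to-gcd-a1:adjust-polarity} is justified by $\gcd(f,g) = \gcd(-f,g) = \gcd(g,f)$ and is always applicable because the sign constraints force each polynomial occurring in a GCD after substitution to have coefficients and constant of uniform polarity; and (iii)~the GCD-to-divisibility rewrites in lines~\ref{algo:ip-to-gcd-a1:trans-positive}--\ref{algo:ip-to-gcd-a1:trans-negative} are justified by the two tautologies~\eqref{eqn:tautologies}, the first of which is proved in the text right after. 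The \gcdtodiv conditions~\ref{lemma:gcd-to-div:item0}--\ref{lemma:gcd-to-div:item3} follow by inspecting the structure of the output $\Psi''$: the shape of each divisibility (left-hand side with positive constant, positive coefficients everywhere), of the $z+c$ left-hand sides, and of the right-hand sides $w$ and $w+c$, all come directly from the patterns produced by lines~\ref{algo:ip-to-gcd-a1:trans-positive} and~\ref{algo:ip-to-gcd-a1:trans-negative}; condition~\ref{lemma:gcd-to-div:item3} is ensured by construction of $E'$ in line~\ref{algo:ip-to-gcd-a1:add-to-B}.

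For the bounds: the number of variables in $\Psi''$ is at most $(d-r)+k \leq d+k$, where $d-r$ bounds the dimension of $\vec y$ from~\Cref{prop:vzGS} (with $r$ the rank of the equalities of $\Psi$, possibly zero) and $k$ bounds the number of fresh $z, w$ variables (one per GCD~constraint of $\Psi$). The number of divisibilities is at most $4k$ since each GCD~constraint contributes $4$ (in the $=$ case) or $2$ (in the $\ge$ case). The non-zero columns of $E'$ are precisely those of $E \in \ZZ^{d \times (d-r)}$, so there are at most $d$. For the norms, \Cref{prop:vzGS} with $s \le d$ and $\max(2, \norminf{A}, \norminf{\vec b}) \le \norminf{\Phi}+1$ yields $\norminf{\vec u},\norminf{E} \le (d+1)\bigl(d \cdot (\norminf{\Phi}+1)\bigr)^{d} \le (d+1)^{d+1}(\norminf{\Phi}+1)^{d}$. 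A polynomial $f(\vec x) = \vec a^\intercal \vec x + c$ of $\Psi$ becomes $(\vec a^\intercal E)\vec y + (\vec a^\intercal \vec u + c)$ after substitution, whose norm is at most $(d+1)\norminf{\Phi} \cdot \max(\norminf{\vec u}, \norminf{E}, 1)$; combining with the vzGS estimate gives $\norminf{\Psi''} \le (d+1)^{d+2}(\norminf{\Phi}+1)^{d+1}$, which also dominates the constants $c$ introduced by lines~\ref{algo:ip-to-gcd-a1:trans-positive}--\ref{algo:ip-to-gcd-a1:trans-negative} since those satisfy $c \le \norminf{\Phi}$.

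The main obstacle is purely bookkeeping: faithfully matching the variables $\vec z, \vec y, \vec w$ of the \gcdtodiv definition to what the algorithm emits (so that conditions~\ref{lemma:gcd-to-div:item1}--\ref{lemma:gcd-to-div:item2.5} can be checked uniformly for both the $=$ and $\ge$ branches), and carefully tracking how the norm compounds when the affine substitution $\vec x \gets \vec u + E\vec y$ is applied to every polynomial of $\Psi$. No conceptual difficulty arises since all the relevant identities and the quantitative tool (\Cref{prop:vzGS}) are already available.
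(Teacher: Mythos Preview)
Your proposal is correct and follows essentially the same approach as the paper: correctness is argued by tracing the algorithm and invoking the tautologies and \Cref{prop:vzGS}, and the bounds are obtained by counting fresh variables/divisibilities and propagating the norm estimate from \Cref{prop:vzGS} through the affine substitution. One small imprecision: the rewriting in line~\ref{algo:ip-to-gcd-a1:setA} (specifically the tautology for $\gcd(y,z)\neq c$) can increase the constants in the GCD constraints by~$1$, so the constants $c$ appearing in lines~\ref{algo:ip-to-gcd-a1:trans-positive}--\ref{algo:ip-to-gcd-a1:trans-negative} satisfy $c \le \norminf{\Phi}+1$ rather than $c \le \norminf{\Phi}$; this is harmless since it is already absorbed by your final bound $(d+1)^{d+2}(\norminf{\Phi}+1)^{d+1}$.
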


\begin{proof}
  The fact that
  \Cref{algorithm:from-IP-GCD-to-systems-of-divisibilities}
  respects its specification follows from the discussion
  given above. In particular, $\{\vec a \in \ZZ^d : \vec a \text{ solution of }
  \Phi \} = \sem{B}$ stems from the fact that the
  procedure treats the original formula $\Phi$ by only
  relying on tautologies and on~\Cref{prop:vzGS}.

  Let us study the bounds on $(\Psi,\vec u, E)$. For the
  bound on the number of variables in $\Psi$ and non-zero
  columns in $E$, note that by~\Cref{prop:vzGS}, the change
  of variables of
  line~\ref{algo:ip-to-gcd-a1:change-of-variables} does not
  increase the number of variables, and therefore the only
  lines where the number of variables increases are
  lines~\ref{algo:ip-to-gcd-a1:trans-positive}
  and~\ref{algo:ip-to-gcd-a1:trans-negative}. Overall, these
  two lines introduce $k$ many variables, one for each
  GCD~constraint; so the number of variables in~$\Psi$ is
  bounded by $d+k$. Each new variable introduces a zero
  column in $E$, which has thus at most $d$ non-zero columns
  (line~\ref{algo:ip-to-gcd-a1:add-to-B}). For the bound on
  the number of divisibilities, only
  lines~\ref{algo:ip-to-gcd-a1:trans-positive}
  and~\ref{algo:ip-to-gcd-a1:trans-negative} matter, and
  they introduce at most $4$ divisibilities per
  GCD~constraint; hence $\Psi$ has at most $4k$
  divisibilities. Lastly, let us derive the bound on the
  infinity norm of $\Psi$, $\vec u$ and $E$. The rewritings
  done in line~\ref{algo:ip-to-gcd-a1:setA} increase the
  infinity norm by at most $1$; this occurs when relying on
  the tautology $\gcd(y,z) \neq c+2 \iff \gcd(y,z) \leq c+1
  \lor \gcd(y,z) \geq c+3$. The bound on $\vec u$ and $E$
  then follows from a simple application
  of~\Cref{prop:vzGS}: $\norminf{\vec u},\norminf{E} \leq
  (d+1) \cdot (d \cdot (\norminf{\Phi}+1))^d$. The change of
  variables in
  line~\ref{algo:ip-to-gcd-a1:change-of-variables} yields
  $\norminf{\Psi''} \leq (d+1) \cdot \max( \norminf{\vec u},
  \norminf{E} ) \cdot (\norminf{\Phi} + 1)$.
  Lines~\ref{algo:ip-to-gcd-a1:adjust-polarity}--\ref{algo:ip-to-gcd-a1:add-to-B}
  do not change the infinity norm, and therefore we obtain
  the bound in the statement of the lemma.
\end{proof}

\paragraph{Step II: force increasingness.}
We now move
to~\Cref{algorithm:from-IP-GCD-to-increasing-systems}, whose
role is to translate the systems of divisibility constraints  computed
by~\Cref{algorithm:from-IP-GCD-to-systems-of-divisibilities}
into~$3$-increasing systems. As such, the procedure takes as
input a set~$B$ of~\gcdtodiv triples. We first need the
following result:

\begin{algorithm}[t]
  \caption{Translates the systems in \gcdtodiv triples into $3$-increasing form}
  \label{algorithm:from-IP-GCD-to-increasing-systems}
  \begin{algorithmic}[1]
    \medskip
    \Require A finite set $B$ of \gcdtodiv triples.
    \Ensure 
    \begin{minipage}[t]{0.92\linewidth}
      A finite set $C$ of \gcdtodiv triples such that~$\sem{B} = \sem{C}$\\ and for every $(\Psi,\vec u, E) \in C$, $\Psi$~is a~$3$-increasing system of divisibility constraints.
    \end{minipage}
    \medskip
    \State $C \coloneqq \emptyset$ 
    \label{gcd-to-increasing:line-D}
    \itComment{Set to be returned by the procedure}
    \While{$(\Psi,\vec u, E) \gets \text{pop}(B)$} 
    \label{gcd-to-increasing:line-pop-B}
    \itComment{exits when $B$ becomes empty}
        \If{ $\module_f(\Psi) \cap \ZZ = \{0\}$ for every non-constant $f$ primitive part of some l.h.s.~in~$\Psi$}
        \label{gcd-to-increasing:line-test-increasing}
        \State add to $C$ the triple $(\Psi, \vec u, E)$ 
        \itComment{$\Psi$ in increasing form}
        \Else
          \State $f \coloneqq$ non-constant primitive part of some l.h.s.~in~$\Psi$, satisfying~$\module_f(\Psi) \cap \ZZ \neq \{0\}$
          \label{gcd-to-increasing:line-find-f}
          \State $\lambda_1,\dots,\lambda_j \coloneqq $ the variables appearing in $f$
          \State $c \coloneqq $ minimum positive integer in $\module_f(\Psi)$ 
          \label{gcd-to-increasing:line-compute-c}
          \For{$\vec \nu \colon \{\lambda_1,\dots,\lambda_j\} \to [0,c]$ such that $f(\vec \nu(\lambda_1),\dots,\vec \nu(\lambda_j))$ divides $c$}
          \label{gcd-to-increasing:line-for-loop}
            \State $\Psi_{\vec \nu }\coloneqq \Psi\substitute{\vec \nu(\lambda_i)}{\lambda_i : i \in [1,j]}$
            \label{gcd-to-increasing:line-new-Psi}
            \itComment{$\Psi_{\vec \nu}$ has fewer variables than $\Psi$}
            \State $\vec u_{\vec \nu} \coloneqq \vec u + \sum_{i=1}^j \vec \nu(\lambda_i) \cdot \vec p_i$ where $\vec p_i$ is the column of $E$ corresponding to the 
            variable~$\lambda_i$%
            \label{gcd-to-increasing:line-new-u}
            \State $E_{\vec \nu} \coloneqq$ $E$ without the columns corresponding to $\lambda_1,\dots,\lambda_j$
            \label{gcd-to-increasing:line-new-E}
            \State add to $B$ the triple $(\Psi_{\vec \nu}, \vec u_{\vec \nu}, E_{\vec \nu})$
            \itComment{triple to be considered again in line~\ref{gcd-to-increasing:line-pop-B}}
            \label{gcd-to-increasing:line-new-triple}
          \EndFor
        \EndIf
    \EndWhile
    \State \textbf{return} $C$
    \label{gcd-to-increasing:line-return}
  \end{algorithmic}
\end{algorithm}

\begin{lemma}
  \label{lemma:gcd-non-increasing}
  Let $(\Psi,\vec u, E)$ be a \gcdtodiv triple. If the
  system~$\Psi$ is not in increasing form, then there is a
  non-constant polynomial~$f$ primitive part of a left-hand
  side in $\Psi$ such that $\module_f(\Psi) \cap \ZZ \neq
  \{0\}$. If~$\Psi$ is in increasing form, then it is
  increasing for $\vec z \incord \vec y \incord \vec w$,
  where $\vec z$, $\vec y$ and $\vec w$ are the families of
  variables appearing in the definition of \gcdtodiv triple.
\end{lemma}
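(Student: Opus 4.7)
The plan is to exploit the rigid three-layer structure of divisibilities in a \gcdtodiv triple. By properties~\ref{lemma:gcd-to-div:item1}--\ref{lemma:gcd-to-div:item2.5}, every divisibility of $\Psi$ is either of Type Z ($z+c \div f(\vec y)$ with $z \in \vec z$, $c \in \pZZ$) or of Type Y ($f(\vec y) \div h(\vec w)$ with $h \in \{w, w+c\}$ for $w \in \vec w$, $c \in \pZZ$). This layering singles out $\vec z \incord \vec y \incord \vec w$ as the only natural ordering for $\Psi$, and I would prove both parts of the lemma via the single characterization: \emph{$\Psi$ is in increasing form for $\vec z \incord \vec y \incord \vec w$ if and only if $\module_f(\Psi) \cap \ZZ = \{0\}$ for every non-constant primitive $f$ that is the primitive part of some left-hand side of $\Psi$.}

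Once this characterization is established, both parts of the lemma fall out immediately. Part~1 is the contrapositive of the ``if'' direction: if $\Psi$ is not in increasing form (equivalently, not for $\vec z \incord \vec y \incord \vec w$), then some non-constant primitive left-hand side $f$ must satisfy $\module_f(\Psi) \cap \ZZ \ne \{0\}$. For Part~2, if $\Psi$ is in increasing form for \emph{any} order $\incord'$, then directly from the definition of increasing form, $\module_f(\Psi) \cap \ZZ \subseteq \module_f(\Psi) \cap \ZZ[x_1, \ldots, \lv_{\incord'}(f)] = \ZZ f$, and since $\ZZ f \cap \ZZ = \{0\}$ for non-constant primitive $f$ (independently of the order), the ``only if'' direction of the characterization applies and yields increasing form for $\vec z \incord \vec y \incord \vec w$.

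To prove the characterization, I would compute $\module_f(\Psi)$ explicitly, treating separately the two possible shapes of a non-constant primitive left-hand side of $\Psi$: either $f = z+c$ for some $z \in \vec z$, or $f \in \ZZ[\vec y]$ primitive. In both cases, the closure expands layer by layer: starting from $f$, Type Z divisibilities with left-hand side $z+c$ lift to $\ZZ[\vec y]$ generators (only relevant when $f = z+c$, by property~\ref{lemma:gcd-to-div:item2}), and Type Y divisibilities whose left-hand side is a $\ZZ$-multiple of $f$ lift further to $\ZZ[\vec w]$ generators. A short induction on closure iterations will show that, under the hypothesis $\module_f(\Psi) \cap \ZZ = \{0\}$, the closure stabilizes in at most two steps with a concrete finite list of generators in $\ZZ f + \ZZ[\vec y] + \ZZ[\vec w]$, and in particular contains no $\vec z$-variable other than (possibly) $z$ itself. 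Intersecting this explicit description with any $\ZZ[x_1, \ldots, \lv_{\incord}(f)]$, for $\incord$ a total order extending $\vec z \incord \vec y \incord \vec w$, will kill the $\vec w$-terms and the unwanted $\vec y$-terms, leaving exactly $\ZZ f$ as required.

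The main obstacle will be controlling the stabilization of the closure under the hypothesis $\module_f(\Psi) \cap \ZZ = \{0\}$: I must verify that no further closure iteration picks up a new left-hand side $g$ of $\Psi$ through a non-trivial combination of the generators computed so far. This reduces to the following fact: any $\ZZ$-linear combination of $\vec w$-right-hand sides that becomes a polynomial in $\vec z \cup \vec y$ (after cancellation of the $\vec w$-terms) must actually collapse to a constant. Property~\ref{lemma:gcd-to-div:item2.5} (each $w \in \vec w$ appears only in $w$ and $w+c$ with $c \in \pZZ$) is what makes this possible, as a coefficient-wise computation on each $w$ shows that non-trivial cancellations can leave at most a constant residue; the hypothesis then forces this residue to be zero, preventing the creation of spurious left-hand-side matches.
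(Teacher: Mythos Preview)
Your proposal is correct and arrives at the same characterization the paper uses (``$\Psi$ is increasing for $\vec z \incord \vec y \incord \vec w$ iff $\module_f(\Psi)\cap\ZZ=\{0\}$ for every non-constant primitive left-hand-side part $f$''), deriving both parts of the lemma from it just as the paper does. The execution of the non-trivial direction differs, however. The paper argues by contrapositive: assuming $\Psi$ is not increasing for $\vec z \incord \vec y \incord \vec w$, it picks a witness $g\in\module_f(\Psi)\cap\ZZ[x_1,\ldots,\lv(f)]\setminus\ZZ f$ and invokes the spanning-set description $\module_f(\Psi)=\ZZ f+\sum_i\ZZ(c_ig_i)$ from \Cref{lemma:module-span}; since every right-hand side $g_i$ carries only $\vec y$- or $\vec w$-variables, a two-case argument on whether $\lv(f)\in\vec z$ or $\lv(f)\in\vec y$ immediately produces a nonzero constant in $\module_f(\Psi)$. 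Your route is the direct implication: under the hypothesis $\module_f(\Psi)\cap\ZZ=\{0\}$ you iterate the closure $\module_f^0\subseteq\module_f^1\subseteq\cdots$ by hand, show it stabilizes in at most two steps, and read off $\module_f(\Psi)\cap\ZZ[x_1,\ldots,\lv(f)]=\ZZ f$ from the explicit generators. This is more laborious but self-contained (you do not need \Cref{lemma:module-span}; you re-derive just enough of its fixpoint computation). One small point to make explicit in the write-up: increasing form is required for \emph{all} primitive $f$, not only primitive parts of left-hand sides; for the remaining $f$ one has $\module_f(\Psi)=\ZZ f$ trivially, so the condition holds automatically.
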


\begin{proof}
  For the first statement, we prove a stronger result: if
  $\Psi$ is not increasing for $\vec z \incord \vec y
  \incord \vec w$, then there is a non-constant
  polynomial~$f$ primitive part of a left-hand side in
  $\Psi$ s.t.~${\module_f(\Psi) \cap \ZZ \neq \{0\}}$.
  Observe that then, by definition of divisibility module
  and increasing form, $\Psi$ cannot be in increasing form
  for any order; which shows the second statement in the
  lemma by contrapositive. 

  Consider an order $x_1 \incord \dots \incord x_d$ of the
  variables in $\Psi$ that belongs to $\vec z \incord \vec y
  \incord \vec w$, and suppose that~$\Psi$ is not in
  increasing form for this order. Therefore, there is a
  primitive part $f$ of a left-hand side of a divisibility
  in $\Psi$ such that $\module_f(\Psi) \cap
  \ZZ[x_1,\dots,x_j] \neq \ZZ f$, where $x_j = \lv(f)$. Let
  $g \in \module_f(\Psi) \cap \ZZ[x_1,\dots,x_j] \setminus
  \ZZ f$. We show that $g$ must be a constant polynomial. We
  distinguish two cases, depending on whether the leading
  variable of $f$ belongs to $\vec z$ or to $\vec y$ (note
  that it cannot belong to $\vec w$, as no left-hand sides
  with variables from this family exists).

  \begin{description}
    \item[case {\rm{$\lv(f)$}} is in $\vec z$.] Since
    $\lv(g) \incordeq \lv(f)$, all variables in $g$ are from
    $\vec z$. By~Property~\ref{lemma:gcd-to-div:item1}
    of~\gcdtodiv triple, each divisibility in $\Psi$ is of
    the form $h(\vec z) \div h'(\vec y)$ or of the form
    $h(\vec y) \div h'(\vec w)$.
    By~\Cref{lemma:module-span}, a set spanning
    $\module_f(\Psi)$ is given by $\{f,c_1 \cdot g_1, \dots,
    c_m \cdot g_m\}$ where $c_i \in \NN$ and $g_i$ is a
    right-hand side of a divisibility in $\Psi$, for every
    $i \in [1,m]$. This means that every $g_i$ has variables
    from~$\vec y$ or~$\vec w$. Since~$g$ does not have any
    variable from $\vec y$ or $\vec w$ and belongs to~$\ZZ
    f$, we conclude that it must be a constant polynomial.
    
    \item[case {\rm{$\lv(f)$}} is in $\vec y$.] Again
    from~Property~\ref{lemma:gcd-to-div:item1} of~\gcdtodiv
    triple, $f$ only appears as left-hand side in
    divisibilities of the form $a \cdot f(\vec y) \div
    h(\vec w)$, with $a \in \ZZ \setminus \{0\}$. Since no
    non-constant polynomial $h(\vec w)$ appears in a
    left-hand side of $\Psi$, the set $\{f,c_1 \cdot g_1,
    \dots, c_m \cdot g_m\}$ spanning $\module_f(\Psi)$
    computed via~\Cref{lemma:module-span} is such that $c_i
    \neq 0$ if and only if $g_i$ only has variables from
    $\vec w$, for every $i \in [1,m]$. Since~$\incord$
    belongs to $\vec z \incord \vec y \incord \vec w$, from
    $\lv(g) \incord \lv(f)$ we then conclude that $g$ must
    be a constant polynomial.%
    \qedhere
  \end{description}
\end{proof}

Consider $(\Psi,\vec u, E) \in B$.
\Cref{algorithm:from-IP-GCD-to-increasing-systems} relies on
\Cref{lemma:gcd-non-increasing} to test whether $\Psi$ is
increasing
(line~\ref{gcd-to-increasing:line-test-increasing}). If it
is not, it computes the minimum positive integer $c \in
\module_f(\Psi)$, for some $f$ non-constant
(line~\ref{gcd-to-increasing:line-compute-c}). By definition
of divisibility module, for every primitive polynomial $f$
and polynomial $g \in \module_f(\Psi)$, we have that $\Psi$
entails $f \div g$, that is for every $\vec a \in \ZZ^m$
solution to $\Psi$, $f(\vec a)$ divides $g(\vec a)$; and
therefore $\Psi$ entails $f \div c$. We can now eliminate
all variables that occur in $f$: by definition of~\gcdtodiv
triple, $f$ has coefficients and constant that are all
positive, and $\Psi$ is interpreted over $\NN$. We conclude
that every solution of $\Psi$ is such that it assigns an
integer in $[0,c]$ to every variable in $f$. The
\textbf{for} loop in
line~\ref{gcd-to-increasing:line-for-loop} iterates over the
subset of these (partial) assignments satisfying $f \div c$.
Each of these assignments~$\vec \nu$ yields a new triple
$(\Psi_{\vec \nu}, \vec u_{\vec \nu}, E_{\vec \nu})$,
defined as in
lines~\ref{gcd-to-increasing:line-new-Psi}--\ref{gcd-to-increasing:line-new-E},
which is a \gcdtodiv triple thanks to the lemma below (that
follows directly from the definition of \gcdtodiv triple).

\begin{lemma}
  \label{lemma:gcd-to-div-substitution}
  Let $(\Psi,\vec u, E)$ be a \gcdtodiv triple, with $\vec u
  \in \ZZ^d$, and $X$ be a subset of the variables appearing
  in left-hand sides of $\Psi$. Consider a map $\vec \nu
  \colon X \to \ZZ$. Let $\Psi' \coloneqq
  \Psi\substitute{\vec \nu(x)}{x : x \in X}$, $\vec u' \in
  \ZZ^d$, and $E'$ be the matrix obtained from $E$ by
  removing the columns corresponding to variables in $X$.
  The triple $(\Psi', \vec u', E')$ is a \gcdtodiv triple. 
\end{lemma}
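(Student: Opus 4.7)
The plan is to verify each of the five defining clauses of a \gcdtodiv triple for $(\Psi', \vec u', E')$ one at a time; each verification is a direct consequence of the corresponding clause for $(\Psi, \vec u, E)$ plus the observation that substituting values for variables in left-hand sides cannot destroy the syntactic shape stipulated by the definition.

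First I would set up the partitioning of $X$. By Clause~\ref{lemma:gcd-to-div:item1}, left-hand sides of divisibilities in $\Psi$ are polynomials in $\vec z$ or in $\vec y$, so no $\vec w$-variable can appear in a left-hand side. Hence $X \subseteq \vec z \cup \vec y$, and I set $X_z \coloneqq X \cap \vec z$, $X_y \coloneqq X \cap \vec y$, together with the new variable families $\vec z' \coloneqq \vec z \setminus X_z$, $\vec y' \coloneqq \vec y \setminus X_y$, and $\vec w' \coloneqq \vec w$. Then $\Psi'(\vec z', \vec y', \vec w)$ is a system of divisibility constraints in $m - \card{X}$ variables, and by construction $E'$ is the $d \times (m - \card{X})$ matrix whose columns are the surviving columns of $E$. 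This gives Clause~\ref{lemma:gcd-to-div:item0}.

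Next I would verify Clauses~\ref{lemma:gcd-to-div:item1}--\ref{lemma:gcd-to-div:item2.5}. For Clause~\ref{lemma:gcd-to-div:item1}, each original divisibility $h(\vec z) \div f(\vec y)$ becomes $h'(\vec z') \div f'(\vec y')$, and each original $f(\vec y) \div g(\vec w)$ becomes $f'(\vec y') \div g(\vec w)$ (since $\vec w$-variables are untouched, the right-hand side $g$ remains non-constant of the required form). Non-negativity of coefficients and constants is preserved because in the relevant use $\vec\nu$ takes non-negative values and substitution into a polynomial with non-negative coefficients adds a non-negative quantity to the constant; the same monotonicity shows that a strictly positive left-hand side constant stays strictly positive after substitution (and for left-hand sides $h = z+c$ with $z \in X_z$, the new constant is $\vec\nu(z)+c > 0$). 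For Clause~\ref{lemma:gcd-to-div:item2}, any $z \in \vec z'$ was not substituted, so its unique occurrence in the polynomial $z+c$ and its precisely two left-hand-side occurrences survive verbatim. For Clause~\ref{lemma:gcd-to-div:item2.5}, $\vec w$ is untouched, and the polynomials $g_1(\vec w) = w$ and $g_2(\vec w) = w+c$ are unchanged.

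Finally, Clause~\ref{lemma:gcd-to-div:item3} is immediate: $E'$ is obtained from $E$ by deleting precisely the columns corresponding to variables in $X$, and the columns of $E$ indexed by $\vec z$ or $\vec w$ were already zero by Clause~\ref{lemma:gcd-to-div:item3} for $(\Psi, \vec u, E)$; the surviving columns indexed by $\vec z' \cup \vec w$ form a subset of those and remain zero in $E'$. The only mildly subtle point in the whole argument is the observation at the end of the second paragraph, namely that substituting a variable from a left-hand side $h(\vec z) = z+c$ or from a left-hand side $f(\vec y)$ leaves the left-hand-side constant positive; this is what keeps Clause~\ref{lemma:gcd-to-div:item1} from breaking, and it is the only place where the sign conditions on $\vec \nu$ (implicitly, non-negativity, as arising from line~\ref{gcd-to-increasing:line-for-loop} of \Cref{algorithm:from-IP-GCD-to-increasing-systems}) are used. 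Since $\vec u' \in \ZZ^d$ is unconstrained by the definition of \gcdtodiv triple, no further verification is needed on that front, and the lemma follows.
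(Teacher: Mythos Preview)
Your proposal is correct and matches the paper's approach exactly: the paper states that the lemma ``follows directly from the definition of \gcdtodiv triple'' and gives no further proof, so your clause-by-clause verification simply spells out what the paper leaves implicit. You even go beyond the paper by explicitly flagging that Clause~\ref{lemma:gcd-to-div:item1} (non-negativity of coefficients and strict positivity of left-hand-side constants) is only preserved when $\vec\nu$ takes non-negative values---a hypothesis missing from the lemma as stated but satisfied in its sole application (line~\ref{gcd-to-increasing:line-for-loop} of \Cref{algorithm:from-IP-GCD-to-increasing-systems}, where $\vec\nu$ ranges over $[0,c]$); the paper does not remark on this point.
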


% \begin{proof}
%   Properties~\ref{lemma:gcd-to-div:item0},~\ref{lemma:gcd-to-div:item2}
%   and~\ref{lemma:gcd-to-div:item3} are trivially
%   satisfied. Since $\vec \nu$ only concerns variables
%   appearing in left-hand sides, the variables in $\vec w$
%   are not replaces, and so
%   Properties~\ref{lemma:gcd-to-div:item1}
%   and~\ref{lemma:gcd-to-div:item2.5} are satisfied.
%   \end{proof}

The key equivalence, from which the correctness of the
algorithm directly stems, is:
\begin{equation}
  \label{eq:to-increasing:key-equivalence}
  \{ 
    \vec u + E \cdot \vec \lambda : \vec \lambda \in \NN^m \text{ solution for } \Psi
  \} 
  = 
  \!\bigcup_{\substack{\vec \nu~\text{substitution}\\ \text{considered in line~\ref{gcd-to-increasing:line-for-loop}}}}\! 
  \{
    \vec u_{\vec \nu} + E_{\vec \nu} \cdot \vec \lambda : \vec \lambda \in \NN^{m-j} \text{ solution for } \Psi_{\vec \nu}
  \},
\end{equation}
where $j \geq 1$ is the number of variables in $f$. The
procedure adds each triple $(\Psi_{\vec \nu}, \vec u_{\vec
\nu}, E_{\vec \nu})$ to the set~$B$
(line~\ref{gcd-to-increasing:line-new-triple}), so that it
will be tested for increasingness in a later iteration of
the \textbf{while} loop of
line~\ref{gcd-to-increasing:line-pop-B}. Termination is
guaranteed from the fact that~$f$ is non-constant and so
each $\Psi_{\vec \nu}$ has strictly fewer variables
than~$\Psi$.

\begin{lemma}
  \label{lemma:from-IP-GCD-to-increasing-systems}
  \Cref{algorithm:from-IP-GCD-to-increasing-systems}
  respects its specification. On input~$B$ such that, for
  every $(\Psi, \vec u, E) \in B$, $\Psi$ has at most $d$
  variables and $k$ GCD~constraints, and~$E$ has at most
  $\ell$ non-zero columns, each output triple $(\Psi',\vec
  u', E') \in C$ is such that $\Psi'$ has at most~$d$
  variables and~$k$ GCD~constraints, $E'$ has at most $\ell$
  non-zero columns, $\norminf{\Psi'} \leq  2^{15} (d+1)
  \cdot (\norminf{B}+1)^7$, $\norminf{\vec u'} \leq (\ell+1)
  \cdot \norminf{B}^2$, and $\norminf{E'} \leq \norminf{B}$.
\end{lemma}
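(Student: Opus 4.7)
\emph{Correctness and termination.}
The semantic equivalence $\sem{B_{\text{in}}} = \sem{C_{\text{out}}}$ is obtained by induction on the number of iterations of the \textbf{while} loop on line~\ref{gcd-to-increasing:line-pop-B}. In each iteration either a triple is moved unchanged from $B$ to $C$, or the popped triple $(\Psi,\vec u, E)$ is replaced in $B$ by the triples $(\Psi_{\vec\nu},\vec u_{\vec\nu},E_{\vec\nu})$ enumerated by the \textbf{for} loop of line~\ref{gcd-to-increasing:line-for-loop}, and Equation~\eqref{eq:to-increasing:key-equivalence} guarantees that this replacement preserves $\sem{B}$. Lemma~\ref{lemma:gcd-to-div-substitution} ensures each produced triple is again a \gcdtodiv triple. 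Whenever a triple is moved to $C$, the test on line~\ref{gcd-to-increasing:line-test-increasing} combined with the contrapositive of the first statement of Lemma~\ref{lemma:gcd-non-increasing} implies that $\Psi$ is in increasing form, and the second statement of the same lemma upgrades this to increasingness for every order in $(\vec z \incord \vec y \incord \vec w)$; hence $\Psi$ is $3$-increasing for the partition $(\vec z, \vec y, \vec w)$. Termination follows since $f$ is non-constant on line~\ref{gcd-to-increasing:line-find-f}, so every $\Psi_{\vec\nu}$ produced on line~\ref{gcd-to-increasing:line-new-Psi} has strictly fewer variables than $\Psi$, giving a well-founded measure on recursion.

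\emph{Structural bounds.}
Each output triple $(\Psi',\vec u',E') \in C$ descends from a starting triple $(\Psi_0,\vec u_0,E_0) \in B$ by a sequence of substitutions. A single substitution only removes variables, only removes columns of~$E$, and never introduces divisibilities, which yields the bounds on the number of variables in~$\Psi'$ (at most $d$), on the number of divisibilities in~$\Psi'$ (at most $k$), and on the number of non-zero columns of $E'$ (at most $\ell$). Moreover, every column of $E'$ is a verbatim column of $E_0$, so $\norminf{E'} \leq \norminf{E_0} \leq \norminf{B}$.

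\emph{Main obstacle: bounding the value of $c$.}
The crux of the bounds on $\norminf{\vec u'}$ and $\norminf{\Psi'}$ is to show that the integer $c$ computed on line~\ref{gcd-to-increasing:line-compute-c} is bounded by a polynomial in $\norminf{B}$, \emph{independently} of the current $\norminf{\Psi}$. The plan is a case analysis on the location of $\lv(f)$ using Properties~(\ref{lemma:gcd-to-div:item1})--(\ref{lemma:gcd-to-div:item2.5}) of \gcdtodiv triples. If $\lv(f) \in \vec y$, then by Property~(\ref{lemma:gcd-to-div:item1}) $f$ appears as a left-hand side in some divisibility with right-hand side~$g(\vec w)$; Property~(\ref{lemma:gcd-to-div:item2.5}) forces the $\vec w$-variable in $g$ to appear in $\Psi$ paired with an offset $c' \leq \norminf{B}$, and tracing through module closure via Lemma~\ref{lemma:module-span} places a positive integer of size at most $c'$ inside $\module_f(\Psi)$. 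If instead $\lv(f) \in \vec z$, then Property~(\ref{lemma:gcd-to-div:item2}) provides two divisibilities $f \div h(\vec y)$ that, combined with subsequent $\vec y$--$\vec w$ divisibilities, reduce to the previous case. The essential observation is that $\vec w$-variables are never substituted (they never appear in any $f$), so the offsets $c'$ appearing in right-hand sides of the form $w+c'$ are carried over unchanged from $\Psi_0 \in B$ throughout all iterations.

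Given this bound $c \leq \mathrm{poly}(\norminf{B})$, the remaining bounds fall out as follows. Each substitution leaves the coefficients of the unsubstituted variables unchanged (they remain inherited from $\Psi_0$, hence bounded by $\norminf{B}$), while constants grow by at most $j \cdot \norminf{\Psi} \cdot c$ per substitution, with $j \leq d$; tracking this over the at most $d$ successive substitutions yields the claimed polynomial bound on $\norminf{\Psi'}$. For $\vec u'$, each substitution adds $\sum_{i=1}^j \vec\nu(\lambda_i) \vec p_i$ where $|\vec\nu(\lambda_i)| \leq c$, each $\vec p_i$ is a column of $E$ of norm at most $\norminf{B}$, and at most $\ell$ columns are non-zero; summing over all substitutions gives $\norminf{\vec u'} \leq (\ell+1)\norminf{B}^2$. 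The specific prefactor $2^{15}$ and exponent $7$ in the bound on $\norminf{\Psi'}$ arise from carrying out the induction carefully using the exact bound on~$c$ obtained in the case analysis above.
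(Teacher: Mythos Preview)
Your overall strategy matches the paper's: trace a chain of substitutions back to a starting triple in $B$, bound the integer $c$ produced on line~\ref{gcd-to-increasing:line-compute-c} uniformly in $\norminf{B}$ by a case split on whether $\lv(f)\in\vec y$ or $\lv(f)\in\vec z$, and propagate invariants. The correctness paragraph and the structural bounds are fine.

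There is, however, a real gap in the $\vec z$ case. When $f=z+c$, the two divisibilities $z+c\div f_j(\vec y)$ and $z+c\div g_j(\vec y)$ do \emph{not} directly ``reduce to the previous case'': the polynomials $f_j,g_j$ need not themselves be left-hand sides of $\vec y$--$\vec w$ divisibilities. The paper splits into two sub-cases. If $(\ZZ f_j+\ZZ g_j)\cap\ZZ\neq\{0\}$, one must bound a non-zero integer in that intersection, which is done via a kernel computation (Proposition~\ref{prop:vzGS}) and yields a bound of order $(\norminf{B}+1)^3$. If $(\ZZ f_j+\ZZ g_j)\cap\ZZ=\{0\}$, one must first locate polynomials $f',g'$ with $f'\div w$ and $g'\div w+c'$ in $\Psi$ and scalars $a',b'$ with $a'f',b'g'\in\ZZ f_j+\ZZ g_j$, and then $a'b'c'\in\module_f(\Psi)$; bounding $a',b'$ again needs Proposition~\ref{prop:vzGS} and produces $\widehat c\le 2^{14}(\norminf{B}+1)^7$. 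This is precisely where the exponent $7$ and the $2^{15}$ prefactor originate, and your proposal does not supply this step.

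A second, smaller gap is in the $\vec u'$ bound. In the $\vec z$ case $\widehat c$ can be of order $(\norminf{B}+1)^7$, so your estimate ``$|\vec\nu(\lambda_i)|\le c$ times a column of norm $\le\norminf{B}$'' would give the wrong bound. The paper resolves this by observing (Property~\ref{lemma:gcd-to-div:item3} of \gcdtodiv triples) that columns of $E$ corresponding to $\vec z$-variables are zero, so only substitutions of $\vec y$-variables contribute to $\vec u'$; for those, the $\vec y$-case analysis gives $\vec\nu(\lambda)\le\norminf{B}$, and summing over at most $\ell$ non-zero columns yields $(\ell+1)\norminf{B}^2$. You should make this distinction explicit.
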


\noindent
Above, $\norminf{B}$ is the maximum among $\norminf{\Psi}$,
$\norminf{\vec u}$, and $\norminf{E}$, over all \gcdtodiv
triples $(\Psi,\vec u, E) \in B$. The most difficult parts
of the proof are the bounds on $\Psi'$ and $\vec u'$. These,
however, follow from the properties of \gcdtodiv triples
and, in particular, from the special shape of the
divisibility constraints that they allow.
Together,~\Cref{lemma:gcd-to-div,lemma:from-IP-GCD-to-increasing-systems}
imply~\Cref{prop:to-three-increasing} in~\Cref{sec:intro-gcd-ip}.

\begin{proof}
  The fact
  that~\Cref{algorithm:from-IP-GCD-to-increasing-systems}
  respects its specification follows from the discussion
  given above, and in particular
  from~\Cref{lemma:gcd-non-increasing}
  and~\Cref{eq:to-increasing:key-equivalence}. Let us then
  focus on the bounds on an output triple $(\Psi', \vec u',
  E')$. Note that $\norminf{B} \geq 1$, if $B$ contains at
  least one divisibility. Following the \textbf{while} loop
  of~\Cref{gcd-to-increasing:line-pop-B}, there is a
  sequence of triples 
  \[ 
    (\Psi_1, \vec u_1, E_1) \ \to \ (\Psi_2, \vec u_2, E_2) \ \to \ \dots \ \to (\Psi_k, \vec u_k, E_k) = (\Psi', \vec u', E'),
  \]
  where $(\Psi_1, \vec u_1, E_1) \in B$ and for every $i \in
  [1,k-1]$, the triple $(\Psi_{i+1}, \vec u_{i+1}, E_{i+1})$
  is computed from $(\Psi_{i}, \vec u_{i}, E_i)$ following
  lines~\ref{gcd-to-increasing:line-find-f}--\ref{gcd-to-increasing:line-new-triple}.
  In particular, given $i \in [1,k-1]$:
  
  \begin{itemize}
    \item  there is a non-constant
    polynomial~$\widehat{f}_i$ that is the part of a
    left-hand side~in~$\Psi_i$
    satisfying~${\module_{\widehat{f}_i}(\Psi_i) \cap \ZZ
    \neq \!\{0\}}$ and with variables $\widehat{\vec
    \lambda}_i \coloneqq
    (\lambda_{i,1},\dots,\lambda_{i,j_i})$, and 
    \item there is a map $\vec \nu_i :
    \{\lambda_{i,1},\dots,\lambda_{i,j_i}\} \to
    [0,\widehat{c}_i]$ such that  $\widehat{f}_i(\vec
    \nu_i(\widehat{\vec \lambda}_i))$ divides
    $\widehat{c}_i$, where $\widehat{c}_i$ is the minimum
    positive integer in $\module_{\widehat{f}_i}(\Psi_i)$,
  \end{itemize}
  such that $\Psi_{i+1} = \Psi_i\substitute{\vec
  \nu_i(\lambda_{i,r})}{\lambda_{i,r} : r \in [1,j_i]}$,
  $\vec u_{i+1} = \vec u_{i} + \sum_{r=1}^j \vec
  \nu_i(\lambda_{i,r}) \cdot \vec p_r$, where $\vec p_r$ is
  the column of $E_i$ corresponding to the
  variable~$\lambda_{i,r}$, and $E_{i+1}$ is obtained from
  $E_i$ by removing the columns corresponding to variables
  in~$\widehat{\vec \lambda}_i$. Note that this implies that
  $\norminf{E'} \leq \norminf{E_i} \leq \norminf{B}$ and
  that $E'$ and $E_i$ have at most $\ell$ non-zero columns,
  as required by the lemma.
  
  We show the remaining bounds in the statement of the lemma
  by induction on $i \in [1,k]$, with the induction
  hypothesis stating that $(\Psi_i, \vec u_i, E_i)$ is
  a~\gcdtodiv triple where:
  \begin{enumerate}[(A)]
    \item\label{ih:to-increasing-bounds:0} $\Psi_i$ is a
    system with at most $d$ variables and $k$
    GCD~constraints, having the form 
    \[
      \Psi_i = \bigwedge_{j=1}^l c_j \div f_j(\vec y) \land 
      \bigwedge_{j=l+1}^n 
      \Big(z_j + c_j \div f_j(\vec y) \land z_j + c_j \div g_j(\vec y) \Big) 
      \land
      \bigwedge_{j=n+1}^m 
      \Big(f_j(\vec y) \div w_j \land g_j(\vec y) \div w_j+c_j\Big),
    \]
    where $\vec y$, $\vec z = (z_{l+1},\dots,z_n)$ and $\vec
    w = (w_{n+1},\dots,w_m)$ are disjoint families of
    variables (according to the definition of~\gcdtodiv
    triple), $c_j \in \pZZ$ for every $j \in [1,m]$, and 
    \item\label{ih:to-increasing-bounds:1} for every $j \in
    [1,l]$, $c_j \leq 2^{15} \cdot (2 + \norminf{B})^7$, and
    for every $j \in [l+1,m]$, $c_j \leq \norminf{B}$, and
    \item\label{ih:to-increasing-bounds:2} for every $j \in
    [l+1,m]$,  $h(\vec y) \in \{f_j(\vec y),\, g_j(\vec
    y)\}$ has variable coefficients bounded
    by~$\norminf{B}$, and constant bounded by $(d+1-d')
    \cdot \norminf{B}^2$, where $d'$ is the number of
    variables in $h$, and
    \item\label{ih:to-increasing-bounds:3} if $i \in [2,k]$,
    then for every $r \in [1,j_{i-1}]$, if $\lambda_{i-1,r}$
    belongs to $\vec y$ then $\vec \nu_i(\lambda_{i-1,r})
    \leq \norminf{B}$, and if~$\lambda_{i-1,r}$ belongs to
    $\vec z$ then $\vec \nu_i(\lambda_{i-1,r}) \leq 2^{14}
    (2 + \norminf{B})^7$.
  \end{enumerate}
  Note that \Cref{ih:to-increasing-bounds:3} implies
  $\norminf{\vec u'} \leq (\ell+1) \cdot \norminf{B}^2$,
  since all non-zero columns of $E_1$ correspond to
  variables in $\vec y$, by definition of~\gcdtodiv triple.
  \Cref{ih:to-increasing-bounds:1,ih:to-increasing-bounds:2}
  imply $\norminf{\Psi'} \leq 2^{15} (d+1) \cdot
  (\norminf{B}+1)^7$.

  \begin{description}
    \item[base case $i = 1$.] In this case $(\Psi_1, \vec
    u_1, E_1) \in B$ and the hypothesis above trivially
    holds since $(\Psi_1, \vec u_1, E_1)$ is a~\gcdtodiv
    triple and
    Properties~\ref{lemma:gcd-to-div:item1}--\ref{lemma:gcd-to-div:item2.5}
    ensure that $\Psi_1$ has the form
    in~\Cref{ih:to-increasing-bounds:0}.
    \item[induction step $i+1 \geq 2$.] 
      We assume the induction hypothesis for $(\Psi_i,\vec
      u_i, E_i)$, and establish it for $(\Psi_{i+1}, \vec
      u_{i+1}, E_{i+1})$.
      By~\Cref{lemma:gcd-to-div-substitution}, $(\Psi_{i+1},
      \vec u_{i+1}, E_{i+1})$ is a~\gcdtodiv triple,
      hence~\Cref{ih:to-increasing-bounds:0} follows. So,
      let us focus on establishing the part of the induction
      hypothesis related to the infinity norm of
      $\Psi_{i+1}$ and $\vec \nu_{i}$
      (\Cref{ih:to-increasing-bounds:1,ih:to-increasing-bounds:2,ih:to-increasing-bounds:3}).
      Let $\vec z$, $\vec y$ and $\vec w$ be the families of
      variables witnessing that $(\Psi_{i}, \vec u_{i},
      E_{i})$ is a~\gcdtodiv triple, according to the
      definition of such triples.
      By~Property~\ref{lemma:gcd-to-div:item1},
      $\widehat{f}_i$ has variables from either $\vec z$ or
      $\vec y$ (not both). We divide the proof depending on
      these two cases.
      \begin{description}
        \item[case $\widehat{f}_i$ has only variables from $\vec y$.] 
          From Property~\ref{lemma:gcd-to-div:item1} of
          \gcdtodiv triples, $\widehat{f}_i$ only appears as
          a left-hand side in divisibilities of the form $a
          \cdot \widehat{f}_i(\vec y) \div h(\vec w)$, with
          $a \in \ZZ \setminus \{0\}$. From
          Property~\ref{lemma:gcd-to-div:item2.5} of
          \gcdtodiv triples together with the fact
          that~$\module_{\widehat{f}_i}(\Psi_{i}) \cap \ZZ
          \neq \{0\}$, we conclude that there must be a
          variable $w$ in $\vec w$ and $c \in \pZZ$ such
          that $a_1 \cdot \widehat{f}_i \div w$ and $a_2
          \cdot \widehat{f}_i \div w + c$ are divisibilities
          in~$\Psi_i$, for some $a_1,a_2 \in \ZZ \setminus
          \{0\}$. Then, $c \in
          \module_{\widehat{f}_i}(\Psi_i)$ and by definition
          $\widehat{c}_i \leq c$. By induction hypothesis
          (\Cref{ih:to-increasing-bounds:1}), $\widehat{c}_i
          \leq \norminf{B}$, which shows
          \Cref{ih:to-increasing-bounds:3} directly by
          definition of~$\vec \nu_i$.
          \Cref{ih:to-increasing-bounds:1} is also trivially
          satisfied: since we are replacing only variables
          in~$\vec y$, all polynomials in $\Psi_{i+1}$ with
          variables from $\vec z$ or $\vec w$ are
          polynomials in~$\Psi_{i}$, and no new coefficient
          $c'$ can appear in divisibilities of the form~$c'
          \div f(\vec y)$. 

          To prove~\Cref{ih:to-increasing-bounds:2}, let
          $h'$ be a polynomial obtained from some $h(\vec
          y)$ in $\Psi_i$ by evaluating each $\lambda_{i,r}$
          as $\vec \nu_i(\lambda_{i,r})$ ($r \in [1,j]$). By
          induction hypothesis
          (\Cref{ih:to-increasing-bounds:2}), $h$ has
          variable coefficients bounded by $\norminf{B}$,
          and constants bounded by $(d+1-d') \cdot
          \norminf{B}^2$,  
          where $d'$ is the number of variables in $h$.
          Let~$d''$ be the number of variables in $h'$.
          Because of the substitutions done by $\vec \nu_i$,
          we conclude that the coefficients of $h'$ are
          bounded by $\norminf{B}$, whereas its constant is
          bounded by $(d+1-d') \cdot \norminf{B}^2 +
          (d'-d'') \cdot \norminf{B}^2 = (d+1-d'') \cdot
          \norminf{B}^2$.
        \item[case $\widehat{f}_i$ has only variables from $\vec z$.]
          In this case, $\widehat{f}_i$ is of the form $z +
          c$ for some $c \in \pZZ$, and by
          Property~\ref{lemma:gcd-to-div:item2} of~\gcdtodiv
          triple it appears in exactly two divisibilities~$z
          + c \div f(\vec y)$ and $z + c \div g(\vec y)$. In
          order to upper bound $\widehat{c}_i$, we divide
          the proof in two cases, depending on whether $(\ZZ
          f + \ZZ g) \cap \ZZ = \{0\}$. 
          \begin{description}
            \item[case $(\ZZ f + \ZZ g) \cap \ZZ = \{0\}$.] 
              Since $\module_{\widehat{f}_i}(\Psi_i) \cap
              \ZZ \neq \{0\}$, by
              Properties~\ref{lemma:gcd-to-div:item1}
              and~\ref{lemma:gcd-to-div:item2.5} of
              \gcdtodiv triples there must be two
              polynomials $f'(\vec y)$ and $g'(\vec y)$, a
              variable $w$ in $\vec w$ and $a',b',c' \in
              \pZZ$ such that $f'(\vec y) \div w$, $g'(\vec
              y) \div w + c'$ and $\{a' \cdot f',\, b' \cdot
              g'\} \subseteq (\ZZ f + \ZZ g)$. Then, by
              definition of divisibility module, $a' \cdot
              b' \cdot c' \in
              \module_{\widehat{f}_i}(\Psi_i)$. By induction
              hypothesis $c' \leq \norminf{B}$
              (\Cref{ih:to-increasing-bounds:1}), and
              therefore to find a bound on $\widehat{c}_i$
              is suffices to bound $a'$ and $b'$. Let us
              study the case of $a'$ (the bound is the same
              for $b'$). The set $S \coloneqq \{-f',f,g\}$
              can be represented as a matrix $A \in
              \ZZ^{(d+1) \times 3}$ in which each column
              contains the coefficients and the constant of
              a distinct element of $S$. We
              apply~\Cref{prop:vzGS} on the system $A \cdot
              (x_1,x_2,x_3) = \vec 0$, and conclude that
              $a'$ is bounded by $4 \cdot (3 \cdot \max(2,
              \norminf{A}))^3 \leq 108 \cdot
              (\norminf{B}+1)^3$. Therefore, $\widehat{c}_i
              \leq 2^{14} (\norminf{B}+1)^7$.
            \item[case $(\ZZ f + \ZZ g) \cap \ZZ \neq \{0\}$.] 
              In this case, we consider the set $S \coloneqq
              \{f,g\}$ and the matrix $A \in \ZZ^{(d+1)
              \times 2}$ in which each column contains the
              coefficients and the constant of a distinct
              element in $S$, with the constant being places
              in the last row. To find a non-zero value $c'
              \in (\ZZ f + \ZZ g) \cap \ZZ$, we solve the
              system $A \cdot (x_1,x_2) + x_3 \cdot (\vec 0,
              1) = \vec 0$. By~\Cref{prop:vzGS},
              $\widehat{c}_i \leq |c'| \leq 4 \cdot (3 \cdot
              \max(2, \norminf{A}))^3 \leq 108 \cdot
              (\norminf{B}+1)^3$.
          \end{description}
          Therefore, $\vec \nu_i(z) \leq \widehat{c}_i \leq
          2^{14} (\norminf{B}+1)^7$, which
          shows~\Cref{ih:to-increasing-bounds:3} of the
          induction hypothesis.
          \Cref{ih:to-increasing-bounds:2} is trivially
          satisfied, since $\vec \nu_i$ replaces only the
          variable $z$, which does not belong to $\vec y$.
          \Cref{ih:to-increasing-bounds:1} follows from the
          fact that in the polynomial $z + c$ the integer
          $c$ is bounded by $\norminf{B}$ by induction
          hypothesis, and therefore $\vec \nu(z) + c \leq
          2^{15} (\norminf{B}+1)^7$.
          \qedhere
      \end{description}
  \end{description}
\end{proof}

\subsection{Bound on the solutions modulo primes}
\label{sec:ip-gcd:small-bound-p-adic-solutions}

Through~\Cref{algorithm:from-IP-GCD-to-systems-of-divisibilities,algorithm:from-IP-GCD-to-increasing-systems}
we are able to compute from a IP-GCD system $\Phi$ a set of
\gcdtodiv triples $C$ such that $\{\vec a \in \ZZ^d : \vec a
\text{ is a solution to } \Phi \} = \sem{C}$. To
apply~\Cref{theorem:local-to-global}, what is left is to
study bounds on the solutions modulo primes in
$\pdiff(\Psi)$, for every $(\Psi, \vec u, E) \in C$.

\LemmaIPGCDSmallSolutionsModulo*

\begin{proof}
  Let us assume there exists a solution $\vec \nu \colon
  \vec \lambda \to \ZZ$ to $\Psi(\vec \lambda)$ modulo $p$.
  We build another solution $\vec \nu' \colon \vec \lambda
  \to \ZZ$ to $\Psi(\vec \lambda)$ modulo $p$ such that
  $\norminf{\vec \nu'(\vec \lambda)} \leq (d+1) \cdot
  \norminf{\Psi}^3 p^2$. According to
  Properties~\ref{lemma:gcd-to-div:item1}--\ref{lemma:gcd-to-div:item2.5}
  of \gcdtodiv triples, the formula $\Psi$ is of the form:
  \[
    \Psi = \bigwedge_{i=1}^l c_i \div f_i(\vec y) \land 
    \bigwedge_{i=l+1}^n \Big(z_i + c_i \div f_i(\vec y) \land z_i + c_i \div g_i(\vec y) \Big) \land
    \bigwedge_{i=n+1}^m \Big(f_i(\vec y) \div w_i \land g_i(\vec y) \div w_i+c_i\Big),
  \]
  where $\vec y$, $\vec z = (z_{l+1},\dots,z_n)$ and $\vec w
  = (w_{n+1},\dots,w_m)$ are disjoint families of variables,
  and $c_i \in \pZZ$ for every $i \in [1,m]$. Recall that
  the variables $z_i$ ($i \in [l+1,n]$) are all distinct,
  and the same holds true for the variables $w_i$ $(i \in
  [n+1,m]$). We define $\mu_i \coloneqq v_p(c_i)$, $\mu
  \coloneqq \max_{i=1}^m \mu_i$, and $\vec \nu'$ as:
  \begin{align*}
    \vec \nu'(x) \coloneqq \!
    \begin{cases}
      (\vec \nu (x) \text{ modulo } p^\mu) &\text{if } x \text{ belongs to } \vec y,\\
      1 &\text{if } x = z_i \text{ for some } i \in [l+1,n] \text{ and } p \text{ divides } c_i,\\
      0 &\text{if } x = z_i \text{ for some } i \in [l+1,n] \text{ and } p \text{ does not divide } c_i,\\
      p^{\mu+1} g_i(\vec \nu'(\vec y)) \,{-}\, c_i
      &\text{if } x = w_i \text{ for some } i \,{\in}\, [n\,{+}\,1,m] \text{ and } p^{\mu_i+1} \,\text{does not divide}\, f_i(\vec \nu(\vec y)),\\
      p^{\mu+1} f_i(\vec \nu'(\vec y))
      &\text{otherwise } (x = w_i \text{ for some } i \in [n+1,m]).\\
    \end{cases}
  \end{align*}
  Note that $\vec \nu'$ is defined recursively in the last
  two cases; this recursion is on variables from $\vec y$
  and thus $\vec \nu'$ is well-defined. By definition,
  $p^{\mu+1} \leq \norminf{\Psi} \cdot p$, and therefore
  $\norminf{\vec \nu'(x)} \leq (d+1) \cdot \norminf{\Psi}^3
  p^2$ for every variable~$x$ in $\vec \lambda$. To conclude
  the proof, let us show that $\vec \nu'$ is a solution for
  $\Psi$ modulo $p$. The fact that $f(\vec \nu'(\vec
  \lambda)) \neq 0$ for every polynomial $f$ in the
  left-hand side of a divisibility in $\Psi$ stems from
  $\vec \nu'$ being defined to be non-negative for every
  variable in $\vec z$ and $\vec y$, and $f$ having a
  positive constant by Property~\ref{lemma:gcd-to-div:item1}
  of \gcdtodiv triples. So, we focus on showing that
  $v_p(f(\vec \nu'(\vec \lambda))) \leq v_p(g(\vec \nu'(\vec
  \lambda)))$ for every divisibility $f \div g$ in~$\Psi$.

  Let $i \in [1,l]$, and consider~$c_i \div f_i(\vec y)$. By
  definition of $\vec \nu'$, ${f_i(\vec \nu'(\vec y)) \equiv
  f_i(\vec \nu(\vec y)) \pmod {p^{\mu+1}}}$, and therefore
  $v_p(f_i( \vec \nu'(\vec y))) = \min(\mu+1, v_p(f_i(\vec
  \nu(\vec y))))$. By definition of~$\mu$, we have $c_i
  \not\equiv 0 \pmod {p^{\mu+1}}$, i.e., $v_p(c_i) < \mu+1$.
  We conclude that $v_p(c_i) \leq v_p(f_i( \vec \nu'(\vec
  y)))$.

  Let $i \in [l+1,n]$, and consider the divisibilities $z_i
  + c_i \div f_i(\vec y)$ and $z_i + c_i \div g_i(\vec y)$.
  By definition of $\vec \nu'$ we have $v_p(\vec
  \nu'(z_i)+c_i) = 0$, and so~$v_p(\vec \nu'(z_i)+c_i) \leq
  v_p(f_i(\vec \nu'(\vec y)))$ and $v_p(\vec \nu'(z_i)+c_i)
  \leq v_p(f_i(\vec \nu'(\vec y)))$.

  Let $i \in [n+1, m]$. Assume first that $p^{\mu_i+1}$ does
  not divide $f_i(\vec \nu(\vec y))$, and so $\vec \nu'$ is
  defined so that~$\vec \nu'(w_i) = p^{\mu+1} g_i(\vec
  \nu'(\vec y)) - c_i$. The divisibility $g_i(\vec y) \div
  w_i+c$ is trivially satisfied by~$\vec \nu'$ over the
  integers, and thus also modulo $p$. By definition of $\vec
  \nu'$ we have ${f_i(\vec \nu'(\vec y)) \equiv f_i(\vec
  \nu(\vec y)) \pmod {p^{\mu+1}}}$ and therefore
  $p^{\mu_i+1}$ does not divide $f_i(\vec \nu'(\vec y))$. By
  definition of~$\mu_i$, this implies ${v_p(f_i(\vec
  \nu'(\vec y))) \leq v_p(c_i)}$. From the definition of
  $\mu$, $v_p(p^{\mu+1} g_i(\vec \nu'(\vec y))) > v_p(c_i)$
  and therefore $v_p(\vec \nu'(w_i)) = v_p(c_i)$, which
  yield $v_p(f_i(\vec \nu'(\vec y))) \leq v_p(\vec
  \nu'(w_i))$. Let us now assume that $p^{\mu_i+1}$ divides
  $f_i(\vec \nu(\vec y))$, and so $\vec \nu'$ is defined so
  that~$\vec \nu'(w_i) = p^{\mu+1} f_i(\vec \nu'(\vec y))$.
  Clearly, the divisibility $f_i(\vec y) \div w_i$ is
  satisfied by $\vec \nu'$ over the integers, and thus also
  modulo $p$. Since $\vec \nu$ is a solution for $\Psi$
  modulo $p$, and $p^{\mu+1}$ divides $f_i(\vec \nu(\vec
  y))$, we conclude that $p^{\mu+1}$ divides $\vec
  \nu(w_i)$. Then, by definition of $\mu$, $v_p(\vec
  \nu(w_i)) > v_p(c_i)$ and therefore $v_p(g_i(\vec \nu(\vec
  y))) \leq v_p(\vec \nu(w_i)+c_i) = v_p(c_i)$. By
  definition of $\vec \nu'$, ${g_i(\vec \nu'(\vec y)) \equiv
  g_i(\vec \nu(\vec y)) \pmod {p^{\mu+1}}}$ and ${v_p(\vec
  \nu'(w_i)+c_i) = v_p(c_i)}$. We conclude that
  ${v_p(g_i(\vec \nu'(\vec y))) \leq v_p(\vec
  \nu'(w_i)+c_i)}$.
\end{proof}

\subsection{Proof
of~\Cref{thm:small-model}}\label{sec:proof-of-small-model}

Thanks
to~\Cref{lemma:bound-on-pzero,lemma:gcd-to-div,lemma:from-IP-GCD-to-increasing-systems,lemma:ip-gcd:small-solutions-modulo},
we obtain the part of~\Cref{thm:small-model} not concerning
optimization as a corollary
of~\Cref{theorem:local-to-global}.

\begin{restatable}{corollary}{CorollaryIPGCDSmallModel}
  \label{corollary:IP-GCD-small-model}
  Each feasible IP-GCD system has a solution of polynomial
  bit length.
\end{restatable}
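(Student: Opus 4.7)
The plan is to chain together the three main ingredients already at our disposal: the two algorithmic reductions of Step~I and Step~II, the bound on solutions modulo primes, and the local-to-global theorem. Given a feasible IP-GCD system $\Phi(\vec x)$ with $d$ variables and $k$ GCD constraints, I first apply~\Cref{lemma:gcd-to-div} (\Cref{algorithm:from-IP-GCD-to-systems-of-divisibilities}) to obtain a finite set $B$ of \gcdtodiv triples with $\{\vec a \in \ZZ^d : \vec a \text{ solves } \Phi\} = \sem{B}$, and every $(\Psi, \vec u, E) \in B$ has parameters bounded polynomially in $\bitlength{\Phi}$. I then apply~\Cref{lemma:from-IP-GCD-to-increasing-systems} (\Cref{algorithm:from-IP-GCD-to-increasing-systems}) to obtain a set $C$ of \gcdtodiv triples with $\sem{B} = \sem{C}$ in which every $\Psi$ appearing in a triple is $3$-increasing, and for which $\norminf{\Psi}$, $\norminf{\vec u}$, and $\norminf{E}$ are still all bounded polynomially in $\bitlength{\Phi}$.

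Since $\Phi$ is feasible, $\sem{C}$ is non-empty, so I can fix some $(\Psi, \vec u, E) \in C$ such that $\Psi$ has a solution $\vec \lambda \in \NN^m$; it suffices to produce such a $\vec \lambda$ of polynomial bit length, because the candidate integer solution $\vec u + E \cdot \vec \lambda$ to $\Phi$ then inherits a polynomial bit length from the polynomial bounds on $\norminf{\vec u}$ and $\norminf{E}$. To obtain a small $\vec \lambda$, I invoke~\Cref{theorem:local-to-global} on $\Psi$, which is $3$-increasing, so $r = 3$ is a fixed constant and the factor $(\cdot)^{O(r)}$ in the theorem's bound is merely cubic. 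The inputs required by that theorem are solutions $\vec b_p$ to $\Psi$ modulo every prime $p \in \pdiff(\Psi)$, which exist because $\Psi$ is integer-satisfiable.

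The two remaining bounds to combine are that (i) by~\Cref{lemma:bound-on-pzero}, $\log_2(\Pi\,\pdiff(\Psi))$, and hence $\bitlength{\pdiff(\Psi)}$, is polynomial in $\bitlength{\Psi}$ and thus polynomial in $\bitlength{\Phi}$; and (ii) by~\Cref{lemma:ip-gcd:small-solutions-modulo}, for each $p \in \pdiff(\Psi)$ one can choose $\vec b_p$ with $\norminf{\vec b_p} \leq (d+1)\cdot \norminf{\Psi}^3 \cdot p^2$, so $\maxbl{\vec b_p}$ is polynomial in $\bitlength{\Psi}$ and $\bitlength{p}$, and hence polynomial in $\bitlength{\Phi}$. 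Plugging these estimates into the bound $(\bitlength{\Psi} + \max\{\maxbl{\vec b_p} : p \in \pdiff(\Psi)\})^{O(r)}$ of~\Cref{theorem:local-to-global} and using $r = 3$, the resulting $\vec \lambda$ has polynomial bit length, and therefore so does the integer solution $\vec u + E \cdot \vec \lambda$ to $\Phi$.

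There is essentially no conceptual obstacle left once the pieces are assembled; the only care required is bookkeeping the polynomial dependencies through the two reductions (the parameters $d$, $k$, $\norminf{\Phi}$) and through the bound of~\Cref{theorem:local-to-global}, making sure that the cubic $(\cdot)^{O(3)}$ of a polynomial remains polynomial, and that the composition $\vec u + E \cdot \vec \lambda$ does not blow the bit length up beyond a polynomial. This is a routine calculation since all intermediate quantities are controlled by the cited lemmas.
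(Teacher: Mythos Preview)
Your proposal is correct and follows essentially the same approach as the paper: the paper states that the corollary follows by combining \Cref{lemma:gcd-to-div}, \Cref{lemma:from-IP-GCD-to-increasing-systems}, \Cref{lemma:bound-on-pzero}, and \Cref{lemma:ip-gcd:small-solutions-modulo} with \Cref{theorem:local-to-global}, and your write-up spells out precisely this chain of reductions with the correct bookkeeping.
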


Let us now discuss the related integer programming
optimization problem. Consider an \mbox{IP-GCD} system
$\Phi(\vec x)$ and the problem of minimizing (or maximizing)
a linear objective $\vec c^\intercal \vec x$ subject
to~$\Phi(\vec x)$. We
apply~\Cref{lemma:gcd-to-div,lemma:from-IP-GCD-to-increasing-systems}
on $\Phi(\vec x)$ to obtain a set $C$ of \gcdtodiv triples
only featuring $3$-increasing systems of divisibility constraints , and
with $\{ \vec a \in \ZZ^d : \vec a \text{ solution to }\Phi\} = \sem{C}$. We
show the following characterization that implies the
optimization part of~\Cref{thm:small-model}:
\begin{enumerate}[I.]
  \item\label{ip-gcd:optim:case1} if for every $(\Psi,\vec
  u, E) \in C$, $\Psi$ is unsatisfiable over $\NN$, then
  $\Phi$ is unsatisfiable;
  \item\label{ip-gcd:optim:case2} else, if there is
  $(\Psi,\vec u, E) \in C$ such that $\Psi$ is satisfiable
  over $\NN$ and the linear polynomial~${\vec c^\intercal
  (\vec u + E \cdot \vec \lambda)}$ has a variable in~$\vec
  \lambda$ with strictly negative (resp.~positive)
  coefficient, then an optimal solution minimizing
  (resp.~maximizing)~$\vec c^\intercal \vec x$ subject to
  $\Phi(\vec x)$ does not exist;
  \item\label{ip-gcd:optim:case3} else, an optimal solution
    does exist, and in particular one with polynomial bit
    length with respect to~$\bitlength{\Phi}$ and
    $\bitlength{\vec c}$.
\end{enumerate}
\Cref{ip-gcd:optim:case1} follows directly from the
equivalence $\{ \vec a \in \ZZ^d : \vec a \text{ solution ot }\Phi\} =
\sem{C}$. Let us focus on~\Cref{ip-gcd:optim:case2}, which
we show for the case of minimization (the case of
maximization being analogous). Consider a triple $(\Psi,\vec
u, E) \in C$ such that $\Psi$ is satisfiable and the linear
polynomial~$f(\vec \lambda) \coloneqq \vec c^\intercal (\vec
u + E \cdot \vec \lambda)$ has a variable in~$\vec \lambda$
with strictly negative coefficient. Let~$\vec z$, $\vec y$
and $\vec w$ be the disjoint families of variable witnessing
the fact that $(\Psi, \vec u, E)$ is a \gcdtodiv triple,
according to the definition of such triples.
By~\Cref{lemma:gcd-non-increasing}, $\Psi$ is increasing for
$\vec z \incord \vec y \incord \vec w$, and from
Property~\ref{lemma:gcd-to-div:item3} of~\gcdtodiv triples,
all variables appearing in $f(\vec \lambda)$ with a non-zero
coefficient are from $\vec y$. Let $\widehat{y}$ be a
variable appearing in $f$ with a negative coefficient, and
consider an order $(\incord) \in (\vec z \incord \vec y
\incord \vec w)$ for which $\widehat{y}$ is the largest of
the variables appearing in $\vec y$. Since $\Psi$ is
satisfiable over $\NN$, it is satisfiable modulo every prime
in $\PP(\Psi)$, and we can
apply~\Cref{func:new-solveincreasing} to compute a
solution~$\vec \nu$ over $\NN$ satisfying the property
highlighted in~\Cref{remark:loc-to-glo:inf-solutions}: the
formula $\Psi\substitute{\vec \nu(x)}{ x : x \incord
\widehat{y}\,}$ has a solution for infinitely many positive
values of~$\widehat{y}$. Since $\widehat{y}$ is the largest
(for~$\incord$) variable appearing in $f$, and its
coefficient in $f$ is negative, we conclude that $\min\{
f(\vec \lambda) \in \ZZ : \vec \lambda \text{ is a solution
to } \Psi \}$ is undefined, which in turn implies that  an
optimal solution minimizing $\vec c^\intercal \vec x$
subject to $\Phi(\vec x)$ does not exist.

Lastly, let us consider~\Cref{ip-gcd:optim:case3}. Again we
focus on the case of minimization. Below, let $C' \coloneqq
\{(\Psi, \vec u, E) \in C : \Psi \text{ is satisfiable over
} \NN \}$ and note that $\{ \vec x \in \ZZ^d : \Phi(\vec
x)\} = \sem{C'}$.
As~\Cref{ip-gcd:optim:case1,ip-gcd:optim:case2} do not hold,
$C' \neq \emptyset$ and every \gcdtodiv triple $(\Psi,\vec
u,E) \in C'$ is such that the linear polynomial $\vec
c^\intercal (\vec u +E \cdot \vec \lambda)$ only has
non-negative coefficients. Since the variables $\vec
\lambda$ are interpreted over $\NN$, this means that $\ell
\coloneqq \min\{ \vec c^\intercal \vec u : (\Psi, \vec u, E)
\in C' \}$ is a lower bound to the values that $\vec
c^{\intercal} \vec x$ can take when $\vec x$ is a solution
to $\Phi$; i.e., the optimal solution exists.
\Cref{lemma:gcd-to-div,lemma:from-IP-GCD-to-increasing-systems}
ensure that the lower bound~$\ell$ has polynomial bit length
with respect to $\bitlength{\Phi}$ and $\bitlength{\vec c}$.
We also have an upper bound~$u$ to the optimal solution: it
suffices to take the minimum of the values $ (\vec u +E
\cdot \vec \lambda)$, where $(\Psi, \vec u, E) \in C'$ and
$\vec \lambda$ is the positive integer solution to~$\Psi$
computed with~\Cref{func:new-solveincreasing} using the
solutions modulo $p \in \PP(\Psi)$
of~\Cref{lemma:ip-gcd:small-solutions-modulo}. Again, $u$
has polynomial bit length with respect to $\bitlength{\Phi}$
and $\bitlength{\vec c}$, thanks
to~\Cref{lemma:bound-on-pzero,lemma:gcd-to-div,lemma:from-IP-GCD-to-increasing-systems},
and~\Cref{theorem:local-to-global}.
\Cref{ip-gcd:optim:case3} then follows by reduction from the
feasibility problem of IP-GCD systems: it suffices to find
the minimal $v \in [\ell,u]$ such that the IP-GCD system
$\Phi_v(\vec x) \coloneqq \Phi(\vec x) \land (\vec
c^\intercal \vec x \leq v)$ is feasible. Since every $v \in
[\ell,u]$ is of polynomial bit length,
by~\Cref{corollary:IP-GCD-small-model} if $\Phi_v(\vec x)$
is satisfiable, then it has a solution $\vec x \in \ZZ^d$
such that $\bitlength{\vec x} \leq
\poly{\bitlength{\Phi},\bitlength{\vec c}}$.

\clearpage
\appendix 
\section{\Cref{lem:extended-brun}: proof of~\Cref{lem:extended-brun:left-term}}
\label{appendix:sieve}

In this appendix, we present the technical manipulation yielding~\Cref{lem:extended-brun:left-term}, hence finishing the proof of~\Cref{lem:extended-brun}. Below, $\mu$ and $\omega$ stand for the M\"obius function and the prime omega function, respectively. Recall that $\mu(n) = (-1)^{\omega(n)}$ and $\omega(n) = \card{\PP(n)}$, for every $n \in \pZZ$.

\begin{proposition}[M\"obius inversion~{\cite[Theorem~266]{Hardy75}}]
    \label{lemma:Mobius-inversion}
    Consider two functions $f,g \colon \pZZ \to \RR$ such that for every $n \in \pZZ$, $f(n) = \sum_{d \in \divisors(n)} g(d)$. For every $m \in \pZZ$,  
    $g(m) = \sum_{d \in \divisors(m)} f(d) \cdot \mu(\frac{m}{d})$.
\end{proposition}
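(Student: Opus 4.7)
The plan is to reduce this classical M\"obius inversion identity to the fundamental fact $\sum_{k \in \divisors(n)}\mu(k) = [n=1]$ (writing $[\cdot]$ for the Iverson bracket) via direct calculation. Concretely, I would first substitute the hypothesis $f(d) = \sum_{e \in \divisors(d)} g(e)$ into the right-hand side, rewriting $\sum_{d \in \divisors(m)} f(d) \cdot \mu(m/d)$ as $\sum_{d \in \divisors(m)} \mu(m/d) \sum_{e \in \divisors(d)} g(e)$. Swapping the order of summation so that $e$ becomes the outer index yields $\sum_{e \in \divisors(m)} g(e) \cdot \sum_{d \,:\, e \in \divisors(d),\ d \in \divisors(m)} \mu(m/d)$, which is valid because the index set is finite.

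Next, I would reindex the inner sum by setting $d = e \cdot k$; then $k$ ranges over $\divisors(m/e)$ and $m/d = (m/e)/k$. Via the further substitution $j \coloneqq (m/e)/k$, the inner sum becomes $\sum_{j \in \divisors(m/e)} \mu(j)$, which equals $1$ when $m/e = 1$, i.e., $e = m$, and vanishes otherwise. Hence the outer sum collapses to the single surviving contribution $g(m)$, establishing the claimed formula.

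The sole ancillary fact used, $\sum_{j \in \divisors(n)}\mu(j) = [n=1]$, follows from the multiplicativity of both sides together with the direct computation $\mu(1) + \mu(p) + \cdots + \mu(p^k) = 1 + (-1) + 0 + \cdots + 0 = 0$ on every prime power $p^k$ with $k \geq 1$. There is no genuine obstacle here: the entire argument is essentially two lines of rewriting once this elementary M\"obius identity is taken as background, which is why the paper presumably omits (or merely sketches) the proof and cites Hardy's text as the source.
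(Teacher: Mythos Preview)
Your proof is correct and is the standard argument for M\"obius inversion. The paper does not supply any proof of this proposition at all: it simply states the result and cites Hardy's text~\cite[Theorem~266]{Hardy75}, exactly as you anticipated in your final paragraph.
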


\begin{proposition}[M\"obius sums~{\cite[Theorem~263]{Hardy75}}]
    \label{lemma:Mobius-sums}
    For $n \in \pZZ$ greater than~$1$, $\sum_{s \in \divisors(n)} \mu(s) = 0$.
\end{proposition}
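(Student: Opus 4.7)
The plan is to exploit the fact that $\mu$ vanishes on non-squarefree integers and is $\pm 1$ on squarefree ones, reducing the identity to a binomial sum over subsets of $\PP(n)$.

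First, I would observe that the only divisors $s$ of $n$ contributing a non-zero value are the squarefree ones, since $\mu(s) = 0$ whenever $s$ has a repeated prime factor. The squarefree divisors of $n$ are in bijection with the subsets $T \subseteq \PP(n)$ via $T \mapsto \prod_{p \in T} p$, and the divisor associated to $T$ satisfies $\mu(\prod_{p \in T}p) = (-1)^{\card{T}}$.

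Second, applying this bijection, the sum rewrites as
\[
\sum_{s \in \divisors(n)} \mu(s) \,=\, \sum_{T \subseteq \PP(n)} (-1)^{\card{T}} \,=\, \prod_{p \in \PP(n)} \big(1 + (-1)\big) \,=\, 0,
\]
where the factorization uses the standard expansion of a product of binomials over a finite index set, and the final equality uses that $n > 1$ forces $\PP(n)$ to be non-empty, so at least one factor $(1 + (-1)) = 0$ appears.

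The proof is entirely routine and has no genuine obstacle; the only point worth noting is that the same conclusion can be obtained by induction on $\omega(n)$, writing $n = p^a \cdot m$ with $\gcd(p,m) = 1$ and $a \geq 1$, and observing that the inner sum $\sum_{i = 0}^{a} \mu(p^i) = 1 + (-1) + 0 + \dots + 0 = 0$ already annihilates the contribution of every divisor of $m$. The combinatorial argument above is however the shortest and matches the level of detail at which the other propositions in this appendix are stated.
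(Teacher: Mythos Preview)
Your proof is correct and entirely standard. The paper does not actually prove this proposition: it is stated as a citation to Hardy's textbook (Theorem~263) and used as a black box, so there is no ``paper's own proof'' to compare against. Your argument via the bijection between squarefree divisors and subsets of $\PP(n)$, followed by the product expansion $\prod_{p \in \PP(n)}(1+(-1))=0$, is exactly the classical proof one finds in Hardy--Wright and would be a perfectly appropriate justification if the paper had chosen to include one.
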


The following lemma tells us what to expect when we truncate the sum of the previous proposition so that it only considers elements with at most $\ell$ divisors.

\begin{lemma}
    \label{lemma:Mobius-Pascal}
    Let $n, \ell\,{\in}\,\NN$ with $n$ square-free. If $\omega(n) > \ell$ then  $\sum_{r \in \divisors(n),\, \omega(r) \leq \ell}\,\mu(r) = (-1)^\ell {\omega(n)-1 \choose \ell}$.
\end{lemma}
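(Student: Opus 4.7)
The plan is to exploit square-freeness to convert the Möbius-weighted divisor sum into an alternating binomial sum, then recognize the resulting identity as a standard consequence of Pascal's rule. Set $k \coloneqq \omega(n)$. Since $n$ is square-free, every divisor $r$ of $n$ is itself square-free and is uniquely determined by the set $\PP(r) \subseteq \PP(n)$ of its prime factors. In particular, $\mu(r) = (-1)^{\omega(r)}$ and, for each $j \in [0,k]$, there are exactly $\binom{k}{j}$ divisors of $n$ with $\omega(r) = j$. Grouping the summands by the value of $\omega(r)$ yields
\[
  \sum_{\substack{r \in \divisors(n)\\ \omega(r) \leq \ell}} \mu(r) \;=\; \sum_{j=0}^{\ell} (-1)^j \binom{k}{j}.
\]

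The statement therefore reduces to the purely combinatorial identity
\[
  \sum_{j=0}^{\ell} (-1)^j \binom{k}{j} \;=\; (-1)^\ell \binom{k-1}{\ell},
\]
which holds whenever $k > \ell$. The plan is to prove this by induction on $\ell$. The base case $\ell = 0$ is immediate: both sides evaluate to $1$. For the inductive step, assuming the identity for some $\ell$ with $\ell + 1 \leq k - 1$, I would use Pascal's rule $\binom{k}{\ell+1} = \binom{k-1}{\ell} + \binom{k-1}{\ell+1}$ to rewrite
\[
  \sum_{j=0}^{\ell+1} (-1)^j \binom{k}{j} = (-1)^\ell \binom{k-1}{\ell} + (-1)^{\ell+1}\left(\binom{k-1}{\ell} + \binom{k-1}{\ell+1}\right) = (-1)^{\ell+1}\binom{k-1}{\ell+1},
\]
as desired. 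As a sanity check, one can also view the partial sum as the coefficient of $x^\ell$ in the power series $(1-x)^k \cdot (1 + x + x^2 + \cdots) = (1-x)^{k-1}$, which immediately gives $(-1)^\ell \binom{k-1}{\ell}$.

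I do not anticipate a serious obstacle here: once the square-freeness of $n$ is used to collapse the divisor sum to an alternating binomial sum, the remaining identity is standard, and both the inductive Pascal argument and the generating-function argument are routine. The only point requiring attention is to correctly track the hypothesis $\omega(n) > \ell$, which is exactly what makes $\binom{k-1}{\ell}$ meaningful (and nonzero when $k - 1 \geq \ell$) and justifies the inductive step for every $\ell$ in the relevant range.
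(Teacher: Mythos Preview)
Your proof is correct and takes essentially the same approach as the paper: both arguments proceed by induction on $\ell$ and rely on Pascal's rule for the inductive step. The only cosmetic difference is that you first rewrite the Möbius sum as the alternating binomial sum $\sum_{j=0}^{\ell}(-1)^j\binom{k}{j}$ before inducting, whereas the paper inducts directly on the divisor sum and only invokes the count $\binom{\omega(n)}{\ell}$ of divisors with $\omega(r)=\ell$ inside the inductive step; the generating-function remark you add is a pleasant bonus not present in the paper.
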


\begin{proof}
    We write LHS (resp.~RHS) for the left-hand (resp.~right-hand) side of the equivalence in the statement. 
    Note that $\omega(n) > \ell$ implies $n \geq 1$.
    The proof is by induction on $\ell$. 

    \proofparagraph{Base case: $\ell = 0$} In this case, 
    $\text{LHS}  = \mu(1) = 1 = (-1)^0 {\omega(n)-1 \choose 0} = \text{RHS}$.

    \proofparagraph{Induction step: $\ell \geq 1$}We have,
    {\allowdisplaybreaks
    \begin{align*}
        \text{LHS} 
        &= \sum_{r \in \divisors(n),\, \omega(r) < \ell} \mu(r) 
        + \sum_{s \in \divisors(n),\, \omega(r) = \ell} \mu(s)\\
        &= (-1)^{\ell-1}{\omega(n) - 1 \choose \ell-1}
        + \sum_{s \in \divisors(n),\, \omega(r) = \ell} \mu(s)
        &
        \begin{aligned}
            \text{by induction hypothesis;}\\ 
            \text{recall $\omega(n) > \ell$}
        \end{aligned}\\
        &= (-1)^{\ell-1} \left( {\omega(n) - 1 \choose \ell-1} - \sum_{r \in \divisors(n),\, \omega(r) = \ell} 1 \right)
        &\text{since $\mu(r) = (-1)^\ell$ iff $\omega(r) = \ell$}\\
        &= (-1)^{\ell-1} \left( {\omega(n) - 1 \choose \ell-1} - 
        {\omega(n) \choose \ell} \right)
        &\text{from $n$ square-free}\\ 
        & = (-1)^{\ell} {\omega(n) -1 \choose \ell} = \text{RHS}
        &\text{Pascal's rule.}
        & \qedhere
    \end{align*}
    }
\end{proof}

% \begin{lemma}
%     \label{lemma:inverse-primes}
%     Denote by $p_i$ the $i$-th prime. For $n \geq 2$ we have $\ \sum_{i=1}^n \frac{1}{p_i} \leq 2 + \ln \ln n$.
% \end{lemma}

% \begin{proof}
%     The proof follows using a simple Riemann integration, tailored to 
%     avoid the poles of $\frac{1}{x \ln x}$ and $\ln \ln x$ at $x=1$, 
%     together with the fact that  $p_i \geq i \cdot \ln i$ for every $i \in \pZZ$~\cite{Rosser39}.
%     \begin{align*}
%         \sum_{i=1}^n \frac{1}{p_i} 
%         &= \ 
%          \frac{1}{2} + \frac{1}{3} + \sum_{i=3}^n \frac{1}{p_i}
%         \ \leq \ \frac{5}{6} +\sum_{i=3}^n \frac{1}{i \cdot \ln i}\\
%         & 
%         \leq \ \frac{5}{6} + \int_{2}^{n} \frac{1}{x \cdot \ln x} \textup{d}x
%         \ = \ \frac{5}{6} + (\ln \ln n - \ln \ln 2)
%         \ \leq \ 2 + \ln \ln n. 
%         &&\qedhere
%     \end{align*}
% \end{proof}

We are now ready to prove~\Cref{lem:extended-brun:left-term}:

\BrunLBLeftTerm*

\noindent
Let us recall the hypothesis under which this claim must be proved: $\ell \in \NN_+$ is odd, $d \geq 1$, $Q$ is a non-empty finite set of primes, $Q(\ell) \coloneqq \{ r \in \divisors(\setprod Q) : \omega(r) \leq \ell \}$, 
$m$ is a multiplicative function such that $m(q) \leq q-1$ and $m(q) \leq d$ on all $q \in Q$, and $W_m(Q) \coloneqq \prod_{q \in Q} \left( 1 - \frac{m(q)}{q} \right)$.

\begin{proof}
We start by defining the truncated M\"obius function $\mu_{\ell}$ and its companion function $\psi_\ell$:
\[
    \mu_{\ell}(x) \coloneqq 
    \begin{cases}
        \mu(x) &\text{if $\omega(x) \leq \ell$}\\
        0       &\text{otherwise}
    \end{cases}
    \qquad\text{ and }\qquad
    \psi_{\ell}(x) \coloneqq \sum_{r \in \divisors(x)} \mu_{\ell}(x).
\]
The proof proceeds by performing two term manipulations. 
In the first one, we use the fact that $m$ is multiplicative, 
together with properties of the M\"obius function (e.g.~\Cref{lemma:Mobius-inversion}), to show that
\begin{equation}
    \label{eq:claim-3:first}
    \sum_{r \in Q(\ell)} \frac{\mu(r) \cdot m(r)}{r} = W_m(Q) \cdot 
    \left( 
        1    
        +
        \sum_{\substack{s \in \divisors(\setprod Q)\\ \omega(s) > \ell}}
        \frac{\psi_{\ell}(s) \cdot m(s)}{s \cdot W_m(\PP(s))}
    \right).
\end{equation}
In the second manipulation, we look at the sum $\sum_{s \in \divisors(\setprod Q) \setminus \{1\}}
\frac{\psi_{\ell}(s) \cdot m(s)}{s \cdot W_m(\PP(s))}$ from the equation above, and (also thanks to~\Cref{lemma:Mobius-Pascal}) bound it in absolute terms as follows: 
\begin{equation} 
    \label{eq:claim-3:second}
    \left| 
        \sum_{\substack{s \in \divisors(\setprod Q)\\ \omega(s) > \ell}}
        \frac{\psi_{\ell}(s) \cdot m(s)}{s \cdot W_m(\PP(s))}
    \right|
    \leq \left(\frac{e \cdot \alpha}{\ell}\right)^\ell \cdot \alpha \cdot e^{\alpha}, \ 
    \text{ where }\alpha \coloneqq (d+1)^2 (2 + \ln \ln (\card{Q} + 1)).
\end{equation}

\noindent
\Cref{lem:extended-brun:left-term}
follows directly from~\Cref{eq:claim-3:first} and~\Cref{eq:claim-3:second}. Note that these equations can be used to also establish the upper bound to $\sum_{r \in Q(\ell)} \frac{\mu(r) \cdot m(r)}{r}$ required for the upper bound of~\Cref{lem:extended-brun}.

\proofparagraph{Manipulation resulting in~\Cref{eq:claim-3:first}}
{\allowdisplaybreaks
\begin{flalign*}
    &\ \ \sum_{r \in Q(\ell)}\frac{\mu(r) \cdot m(r)}{r}\\ 
    =\,& \sum_{r \in \divisors(\setprod Q)}\frac{\mu_\ell(r) \cdot m(r)}{r}
    &\text{by def.~of $\mu_\ell$}\\
    =\,& \sum_{r \in \divisors(\setprod Q)}\frac{\left(\sum_{s \in \divisors(r)}\psi_\ell(s) \cdot \mu \left(\frac{r}{s}\right)\right) \cdot m(r)}{r} 
    &\text{by~\Cref{lemma:Mobius-inversion}}\\
    =\,& \sum_{r \in \divisors(\setprod Q)}\sum_{s \in \divisors(r)}\frac{\psi_\ell(s) \cdot \mu\left(\frac{r}{s}\right) \cdot m(r)}{r}\\
    =\,& \sum_{s \in \divisors(\setprod Q)} \sum_{r \in \divisors\left(\frac{\setprod Q}{s}\right)}\frac{\psi_\ell(s) \cdot \mu (r) \cdot m(r \cdot s)}{r \cdot s} 
    &
    \begin{aligned}
        \text{invert summations using the}\\
        \text{change of variable $r \gets r \cdot s$}
    \end{aligned}\\ 
    =\,& \sum_{s \in \divisors(\setprod Q)}\frac{\psi_\ell(s) \cdot m(s)}{s} \cdot \sum_{r \in \divisors\left(\frac{\setprod Q}{s}\right)}\frac{ \mu(r) \cdot m(r)}{r}
    &\text{multiplicity of~$m$}\\
    =\,& \sum_{s \in \divisors(\setprod Q)}\frac{\psi_\ell(s) \cdot m(s)}{s} \cdot \prod_{q \in Q \setminus \divisors(s)}\left(1 + \frac{\mu(q) \cdot m(q)}{q}\right)
    &
    \begin{aligned}
        \text{multiplicity of~$\mu$ and~$m$;}\\
        \text{factorization thanks to $r$ being}\\
        \text{square-free, for all $r \in \divisors\!\textstyle\left(\frac{\setprod Q}{s}\right)$}
    \end{aligned}\\
    =\,& \sum_{s \in \divisors(\setprod Q)}\frac{\psi_\ell(s) \cdot m(s)}{s} \cdot \frac{ \prod_{q \in Q}\left(1 - \frac{m(q)}{q}\right) }{\prod_{q \in \PP(s)}\left(1 - \frac{m(q)}{q}\right)}
    &
    \begin{aligned}
        \text{$\mu(q) = -1$ for $q$ prime}\\
        \text{and simple manipulation}
    \end{aligned}\\
    =\,& \sum_{s \in \divisors(\setprod Q)}\frac{\psi_\ell(s) \cdot m(s)}{s} \cdot \frac{ W_m(Q) }{W_m(\PP(s))}
    &
    \text{by def.~of~$W_m$}\\
    =\,&\, W_m(Q) \cdot \sum_{s \in \divisors(\setprod Q)}\frac{\psi_\ell(s) \cdot m(s)}{s \cdot W_m(\PP(s))}\\
    =\,&\,W_m(Q) \cdot 
    \left( 
        \sum_{s \in Q(\ell)}
        \frac{\psi_{\ell}(s) \cdot m(s)}{s \cdot W_m(\PP(s))}    
        +
        \sum_{\substack{s \in \divisors(\setprod Q)\\ \omega(s) > \ell}}
        \frac{\psi_{\ell}(s) \cdot m(s)}{s \cdot W_m(\PP(s))}
    \right) 
    &
    \begin{aligned}
        \text{split depending on $\omega(s) \leq \ell$,}\\
        \text{and by def.~of~$Q(\ell)$}
    \end{aligned}
    \\
    =\,&\,W_m(Q) \cdot 
    \left( 
        \sum_{s \in Q(\ell)}
        \frac{\left(\sum_{r \in \divisors(s)} \mu(r)\right) \cdot m(s)}{s \cdot W_m(\PP(s))}    
        +
        \sum_{\substack{s \in \divisors(\setprod Q)\\ \omega(s) > \ell}}
        \frac{\psi_{\ell}(s) \cdot m(s)}{s \cdot W_m(\PP(s))}
    \right) 
    \hspace{-1.4cm}
    &\text{def.~of~$\psi_\ell$}\\
    =\,&\,W_m(Q) \cdot 
    \left( 
        1    
        +
        \sum_{\substack{s \in \divisors(\setprod Q)\\ \omega(s) > \ell}}
        \frac{\psi_{\ell}(s) \cdot m(s)}{s \cdot W_m(\PP(s))}
    \right) 
    &
    \begin{aligned}
        \text{in the left summation:}\\ 
        \text{for $s = 1$ the addend is $1$,}\\
        \text{and for $s > 1$ the addend is $0$}\\ 
        \text{by~\Cref{lemma:Mobius-sums}.}
    \end{aligned}
\end{flalign*}
}%
\proofparagraph{Manipulation resulting in~\Cref{eq:claim-3:second}}
{\allowdisplaybreaks
    \begin{flalign*}
        \left|
            \sum_{\substack{s \in \divisors(\setprod Q)\\ \omega(s) > \ell}}
            \frac{\psi_{\ell}(s) \cdot m(s)}{s \cdot W_m(\PP(s))}
        \right|
        \leq\,&
        \sum_{\substack{s \in \divisors(\setprod Q)\\ \omega(s) > \ell}} 
        {\omega(s) - 1 \choose \ell} \cdot \frac{m(s)}{s \cdot W_m(\PP(s))}
        &\text{by~\Cref{lemma:Mobius-Pascal} and def.~of~$\psi_\ell$}\\ 
        =\,& 
        \sum_{k = \ell + 1}^{\card{Q}} \left(\binom{k-1}{\ell} \cdot \sum_{\substack{s \in \divisors(\setprod Q)\\ 
        \omega(s) = k}} \frac{m(s)}{s \cdot W_m(\PP(s))} \right)
        &\text{split on the value of $\omega(s)$.} 
    \end{flalign*}
}%
    \noindent
    We focus on the summation~$\sum_{s \in \divisors(\setprod Q)
    ,\,\omega(s) = k} \frac{m(s)}{s \cdot W_m(\PP(s))}$.
    Since the function $m$ is multiplicative, and similarly~$W_m(A \cup B) = W_m(A) \cdot W_m(B)$ for $A,B$ disjoint finite sets of primes (and $W_m(\emptyset) = 1$ by definition), for $k \geq 1$ we have:
    \begin{align*}
        \sum_{\substack{s \in \divisors(\setprod Q)\\ 
        \omega(s) = k}} 
        \frac{m(s)}{s \cdot W_m(\PP(s))}
        &= \sum_{q_1<...<q_k \in Q}\left( \prod_{i=1}^k\frac{m(q_i)}{q_i \cdot W_m(\{q_i\})} \right)
        \leq \frac{1}{k!} \sum_{q_1,...,q_k \in Q}\left( \prod_{i=1}^k\frac{m(q_i)}{q_i \cdot W_m(\{q_i\})}\right)\\
        &= \frac{1}{k!}\left(\sum_{q \in Q} \frac{m(q)}{q \cdot W_m(\{q\})}\right)^k
        = \frac{1}{k!}\left(\sum_{q \in Q} \frac{m(q)}{q - m(q)} \right)^k.
    \end{align*}
    We further analyse the summation $\sum_{q \in Q} \frac{m(q)}{q-m(q)}$. Below, we write $Q_{d+1}$ for the set of the first $\min(\card{Q},d+1)$ many primes in $Q$ (recall $d \geq 1$), and denote by $p_i$ the $i$-th prime.
{\allowdisplaybreaks
    \begin{flalign*}
        \sum_{q \in Q} \frac{m(q)}{q - m(q)}
        &= 
        \sum_{q \in Q_{d+1}} \frac{m(q)}{q - m(q)} 
        + 
        \sum_{q \in Q \setminus Q_{d+1}} \frac{m(q)}{q - m(q)} \\
        &\leq \sum_{q \in Q_{d+1}} d 
        + 
        \sum_{q \in Q \setminus Q_{d+1}} \frac{m(q)}{q - m(q)}
        &
        \begin{aligned}
            \text{since } m(q) \leq d\\
            \text{and } q-m(q) \geq 1
        \end{aligned}\\
        &\leq d \cdot (d+1) + 
        \sum_{q \in Q \setminus Q_{d+1}} \frac{d}{q - d}
        & \text{$m(q) \leq d < q$, for all $q \in Q \setminus Q_{d+1}$}\\
        &\leq d \cdot (d+1) +
        \sum_{i=d+2}^{\card{Q}}
        \frac{d}{p_i - d}
        & \begin{aligned}
            \text{def.~of~$Q \setminus Q_{d+1}$}\\
            \text{and $p_i > d$ for $i \geq d+2$}
        \end{aligned}\\
        & \leq d \cdot (d+1) + 
        d \cdot \sum_{i=d+2}^{\card{Q}} \frac{1}{(i \ln i) - d}
        & 
        \begin{aligned}
            p_i \geq i \ln i \text{ \cite{Rosser39}}\\
            \text{and $i \ln i > d$ for $i \geq d+2$}
        \end{aligned}\\
        & \leq d \cdot (d+1) + d \cdot (d+1) \sum_{i=d+2}^{\card{Q}} \frac{1}{i \ln i}
        &
        \begin{aligned}
            \text{since } \ \frac{1}{x \ln x - y} \leq \frac{y+1}{x \ln x}\\ 
            \text{for all } x \geq 3 \text{ and } 0 \leq y \leq x-1 
        \end{aligned}\\
        & \leq d \cdot (d+1) \cdot \Big( 1 + \sum_{i=3}^{\card{Q}} \frac{1}{i \ln i} \Big)\\
        & \leq d \cdot (d+1) \cdot \Big( 1 + 
        \int_{2}^{\card{Q}+1} \frac{1}{x \ln x}\textup{d}x\Big)
        &\begin{aligned}
            \text{Riemann over-approximation}\\ 
            \text{note: $\card Q + 1 \geq 2$} 
        \end{aligned}\\
        & \leq d \cdot (d+1) \cdot (1 + \ln \ln (\card Q+1) - \ln \ln 2)\\
        & \leq (d+1)^2 (2 + \ln \ln (\card{Q} + 1)) \ = \ \alpha.
    \end{flalign*}%
}
We combine this bound with the previous two to obtain complete the proof of~\Cref{eq:claim-3:second}:
{\allowdisplaybreaks
\begin{flalign*}
    \left|
        \sum_{\substack{s \in \divisors(\setprod Q)\\ \omega(s) > \ell}}
        \frac{\psi_{\ell}(s) \cdot m(s)}{s \cdot W_m(\PP(s))}
    \right|
    &\leq
    \sum_{k = \ell + 1}^{\card{Q}} \left(\binom{k-1}{\ell} \cdot \frac{1}{k!} \cdot \alpha^k \right)\\
    &= 
    \sum_{j = 0}^{\card{Q}-\ell-1} \left(\binom{\ell+j}{\ell} \cdot \frac{1}{(\ell+1+j)!} \cdot \alpha^{\ell+1+j} \right)
    &
    \begin{aligned}
        \text{change of variable}\\ 
        k \gets \ell + 1 + j
    \end{aligned}
    \\
    &= 
    \sum_{j = 0}^{\card{Q}-\ell-1} \left(\frac{(\ell+j)!}{\ell! \cdot j! \cdot (\ell+1+j)!} \cdot \alpha^{\ell+1+j} \right)\\
    & \leq 
    \frac{\alpha^{\ell+1}}{\ell!} \cdot \sum_{j = 0}^{\infty} \frac{\alpha^{j}}{j!}
    &
    \hspace{-0.5cm}
    \begin{aligned}
        \text{note: all terms in the}\\
        \text{summation are non-negative}
    \end{aligned}
    \\
    & \leq \frac{\alpha^{\ell+1}}{\ell!} \cdot e^{\alpha} 
    & \begin{aligned}
        \text{def.~of $e^x$ as a series}\\
        \text{i.e.,~$\textstyle e^x = \sum_{i=0}^\infty \frac{x^i}{i!}$}
    \end{aligned}\\
    & \leq \left(\frac{e \cdot \alpha}{\ell}\right)^\ell \cdot \alpha \cdot e^{\alpha}
    & \text{from } x! \geq \frac{x^x}{e^x}.
\end{flalign*}
}
This completes the proof 
of~\Cref{lem:extended-brun:left-term}.
\end{proof}

\section{\Cref{thm:mixed-crt}: proofs of~\Cref{thm:mixed-crt:claim1} and~\Cref{claim:CRT:bound-on-W}}
\label{appendix:crt}

The mathematical objects appearing in the statements of the two claims below are defined in the proof of~\Cref{thm:mixed-crt} and the statement of~\Cref{lem:extended-brun}; see~\Cref{section:CRT}.

\ThmMixedCRTClaimOne*

\begin{proof} 
  Recall that $A = [k,k+z] \cap S_M$, and so $A \cap
  S_{\alpha,r} = [k,k+z] \cap S_M \cap S_{\alpha,r}$. Since that elements in $M \cup Q$
  are pairwise coprime and $M \cap Q = \emptyset$, 
  we can apply the CRT and conclude that 
  $S_M \cap S_{\alpha,r}$ is an
  arithmetic progression with period $r \cdot \setprod M$. Let
  $u$ be the largest element of $S_M  \cap S_{\alpha,r}$
  that is strictly smaller than $k$. By definition of $u$
  and from the fact that $S_M  \cap S_{\alpha,r}$ has period
  $r \cdot \setprod M$, we get $\card{(A \cap S_{\alpha,r})} =
  \floor{\frac{k+z-u}{r \cdot \setprod M}}$. Similarly, because
  $S_M$ is periodic in $\setprod M$, $\floor{\frac{k+z-u}{\setprod
  M}}$ is over counting $\card{A}$ by at most $r-1$, i.e.,
  there is $\tau_{\alpha,r} \in [0,r-1]$ such that $\card{A} =
  \floor{\frac{k+z-u}{\setprod M}} - \tau_{\alpha,r}$. Since
  $\floor{\frac{a}{b}} = \floor{\frac{\floor{a}}{b}}$ for
  every $a \in \RR$ and $b \in \pZZ$, we get $\card{(A \cap
  S_{\alpha,r})} = \floor{\frac{1}{r} \cdot (\card{A} +
  \tau_{\alpha,r})}$. With a simple manipulation using
  $\floor{a} + \floor{b} \leq \floor{a+b} \leq \floor{a} +
  \floor{b} + 1$ and $\floor{\frac{\tau_{\alpha,r}}{r}} = 0$,
  we derive $\frac{\card{A}}{r} - 1 \leq \card{(A \cap
  S_{\alpha,r})} \leq \frac{\card{A}}{r} + 1$.
  \end{proof}

  \ClaimCRTBoundOnW*

  \begin{proof}
    Let $Q_d$ be the set containing the $\min(\card{Q},d)$ smallest primes in $Q$.
    Recall
    that by definition $m(q) \leq d \leq q-1$ for every $q \in Q$. We have,
    \[ 
      W_m(Q)^{-1} = \prod_{q \in Q} \frac{q}{q-m(q)} 
      \ \leq \prod_{q \in Q} \frac{q}{q-d}
      \ \leq \ \prod_{q \in Q_d} \frac{q}{q-d} \cdot\! \prod_{q \in Q \setminus Q_d} \frac{q}{q-d}
      \ \leq \ (d+1)^d \cdot\! \prod_{q \in Q \setminus Q_d} \frac{q}{q-d},
    \]
    where the last inequality holds because $\frac{x}{x-c} \leq c+1$ for every $x \geq c+1$ and $c \in \pZZ$. 
    Below, let us denote by $p_i$ the $i$-th prime.
    We further inspect the product $\prod_{q \in Q \setminus Q_d} \frac{q}{q-d}$:

    \begin{flalign*}
      &\prod_{q \in Q \setminus Q_d} \frac{q}{q-d} 
      \leq \prod_{i = d+1}^{\card{Q}}\frac{p_i}{p_i-d}
      \leq \prod_{i = d+1}^{\card{Q}}\frac{i \cdot \ln i}{i \cdot \ln i-d}
      &
      \hspace{-5cm}
      \begin{aligned}
        \text{$p_i \geq i \cdot \ln i$ for all $i \in \pZZ$, see~\cite{Rosser39};}\\
        \text{$x \mapsto \frac{x}{x-d}$ decreasing for $x > 1$}
      \end{aligned}\\
      \leq \,
      &\exp\left({\sum_{i=d+1}^{\card{Q}} \ln\Big(\frac{i \cdot \ln i}{i \cdot \ln i-d}\Big) }\right)
      = \exp\left({-\sum_{i=d+1}^{\card{Q}} \ln\Big(1-\frac{d}{i \cdot \ln i}\Big) }\right)
      \\
      \leq \, 
      & 
      \exp\left({\sum_{i=d+1}^{\card{Q}} \frac{3 \cdot d}{i \cdot \ln i} }\right) 
      \leq \exp\left({\sum_{i=2}^{\card{Q}} \frac{3 \cdot d}{i \cdot \ln i} }\right)
      &
      \hspace{-10cm}
      \begin{aligned}
        \text{first term from $\ln\Big(1-\frac{1}{x}\Big) \geq -\frac{3}{x}$ for all $x \geq \ln 3$;}\\
        \text{for corner case $d = 1$ and $i=2$, note $2 \ln 2 > \ln 3$}
      \end{aligned}\\
      \leq \, 
      & 
      \exp\left(\frac{3 \cdot d}{2 \cdot \ln 2} + {\sum_{i=3}^{\card{Q}} \frac{3 \cdot d}{i \cdot \ln i} }\right)
      \leq 
      \exp\left(\frac{3 \cdot d}{2 \cdot \ln 2} + 
      \int_2^{\card{Q}+1} \frac{3 \cdot d}{x \ln x} \textup{d}x
      \right)
      &
      \hspace{-5cm}
      \begin{aligned}
        \text{Riemann over-approximation}\\
        \text{note: $\card{Q}+1 \geq 2$}
      \end{aligned}\\
      \leq \, 
      & 
      \exp\left(\frac{3 \cdot d}{2 \cdot \ln 2} + 
      3 \cdot d \cdot \big( \ln \ln (\card{Q}+1) - \ln \ln 2\,\big)
      \right)
      \leq 
      \exp\left(
      3 \cdot d \cdot \big(2 + \ln \ln (\card{Q}+1)\big)
      \right).
    \end{flalign*}
    We plug this bound on the afore-derived 
    bound for $W_m(Q)^{-1}$ to complete the proof
    of~\Cref{claim:CRT:bound-on-W}:
    \begin{align*}
      W_m(Q)^{-1} 
      &\leq\, (d+1)^d  \exp\left(
        3 \cdot d \cdot \big(2+ \ln \ln (\card{Q}+1)\big)
        \right)
        \,\leq\, 
        (d+1)^d \cdot e^{6 \cdot d} \ln(\card{Q}+1)^{3 \cdot d}\\
      &\leq\, (d+1)^d \cdot 2^{9 \cdot d} \ln(\card{Q}+1)^{3 \cdot d}
      \,\leq\, (d+1)^{10 \cdot d} \ln(\card{Q}+1)^{3 \cdot d}.
      \qedhere
    \end{align*}
  \end{proof}

\section{Algorithms related to the elimination property}
\label{appendix:module-basis}

In this appendix we establish~\Cref{lemma:module-span} and~\Cref{lemma:add-elimination-property}. 
Proving these lemmas require the standard notion of kernel and Hermite normal form of a matrix, which we now recall for completeness. 
Consider a matrix~${A \in \ZZ^{n \times d}}$. 
The \emph{kernel} of $A$ is the vector space $\ker(A) \coloneqq \{ \vec v \in \ZZ^d : A \cdot \vec v = \vec 0 \}$. 
We represent~\emph{bases} of $\ker(A)$ as matrices~$K \in \ZZ^{d \times (d-r)}$, where $r$ is the rank of $A$ and $\ker(A) = \{ K \cdot \vec v : \vec v \in \ZZ^{d-r} \}$.
A matrix $H \in \ZZ^{n \times d}$ is said to be the \emph{column-style Hermite normal form} of $A$ (\textit{HNF}, in short) if there is a square unimodular matrix $U \in \ZZ^{d \times d}$ such that $H = A \cdot U$ and
\begin{enumerate}
  \item $H$ is lower triangular,  
  \item the \emph{pivot} (i.e., the first non-zero entry in a column, from the top) of a non-zero column is positive and it is strictly below the pivot of the column before it, and
  \item elements to the right of pivots are $0$ and elements to the left are non-negative and smaller than the pivot. 
\end{enumerate} 
Recall that $U$ being unimodular means that it is invertible over the integers.

Given a vector $\vec v$, we write $\vec v[i]$ 
for the $i$-th entry of $\vec v$, starting at $i = 1$. 
Similarly, for a matrix~$A$, we write $A[i]$ for its $i$-th row, again starting at $i = 1$.

\begin{proposition}[{\cite[Section~4.2]{Schrijver99}}]
  \label{propositoin:HNF-lattice}
  The HNF~$H$ of a matrix $A \in \ZZ^{n \times d}$ always exits, it is unique, and $A$ and $H$ generate the same lattice, i.e., $\{ A \cdot \vec \lambda : \vec \lambda \in \ZZ^d \} = \{ H \cdot \vec \lambda : \vec \lambda \in \ZZ^d\}$. 
\end{proposition}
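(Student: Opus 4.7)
The statement bundles three assertions — existence, uniqueness, and lattice invariance — each of which admits a standard argument. I would address lattice invariance first, since it is essentially immediate: if $H = A \cdot U$ with $U \in \ZZ^{d \times d}$ unimodular, then $U$ induces a bijection $\ZZ^d \to \ZZ^d$ (its inverse $U^{-1}$ lies in $\ZZ^{d \times d}$ because $\det U \in \{-1,+1\}$), and hence $\{A \cdot \vec \lambda : \vec \lambda \in \ZZ^d\} = \{A \cdot U \cdot (U^{-1} \vec \lambda) : \vec \lambda \in \ZZ^d\} = \{H \cdot \vec \mu : \vec \mu \in \ZZ^d\}$. This step is bookkeeping and sets up the language used for the remaining two.

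For existence, the plan is to exhibit a constructive reduction using integer column operations, each of which corresponds to right-multiplication by a unimodular matrix (swapping two columns, negating a column, and adding an integer multiple of one column to another). Process the rows of $A$ from top to bottom. At row $i$, use repeated Euclidean reductions among the entries of row $i$ lying in columns $\geq i$ to concentrate their gcd in a single column, which we swap into position $i$ and negate if necessary to make the pivot positive; all other entries of row~$i$ in columns $> i$ become~$0$. Then, for each column $j < i$, subtract the appropriate integer multiple of column~$i$ to make the entry in row $i$ lie in $[0, H_{i,i})$. These operations preserve columns $< i$ below row $i-1$ by construction, so the triangular and pivot conditions accumulate as $i$ increases, and the composition of the unimodular matrices used yields the required~$U$.

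Uniqueness is the step where I expect the main subtlety, so I would treat it last. Suppose $H$ and $H'$ are both HNFs of $A$; by the already-proved lattice invariance they generate the same lattice $L \subseteq \ZZ^n$. The plan is to prove column-by-column that $H = H'$. The set of indices of non-zero columns and the row-position of each pivot are determined by~$L$ alone: the $i$-th pivot row is the smallest~$k$ such that $L$ contains a vector supported on rows $\leq k$ that is linearly independent from those supported on rows~$< k$, and the pivot value is the smallest positive entry in row~$k$ among vectors in $L$ whose entries in rows~$< k$ vanish. Hence $H$ and $H'$ share pivot positions and pivot values. For the remaining entries, work top-down: the difference of the $i$-th non-zero columns of $H$ and $H'$ lies in $L$, has zero entries in row~$i$ and in all later pivot rows (because of the triangular shape), and its entries in earlier pivot rows lie in the open interval $(-H_{k,k}, H_{k,k})$ (by the reduction condition). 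One then argues inductively, using that any vector of $L$ with zero at the $i$-th and later pivot rows must be an integer combination of the columns of $H$ with indices~$< i$ and must have the reduced form, forcing the difference to be~$\vec 0$.
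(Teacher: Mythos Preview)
The paper does not prove this proposition at all: it is stated with a citation to Schrijver's textbook and used as a black box. Your proof plan is the standard textbook argument and is correct, but there is no ``paper's own proof'' to compare against here.
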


The following proposition refers to the {LLL-based} algorithm for the HNF~in~\cite{HavasMM98}. A basis for the integer kernel can be retrieved from the HNF~together with the associated unimodular matrix.

\begin{proposition}[\cite{Vanderkellen00}]
  \label{prop:HNF}
  There is a \ptime algorithm computing a basis $K$
  of the integer kernel and the HNF~$H$ 
  of an input matrix $A \in \ZZ^{n \times d}$. 
  The algorithm yields ${\norminf{K},\norminf{H}
  \leq (n \cdot
  \norminf{A} + 1)^{\bigO{n}}}$.
\end{proposition}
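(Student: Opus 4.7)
The plan is to obtain both objects from a single invocation of a polynomial-time Hermite normal form algorithm, together with standard size estimates.

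First, I would invoke the LLL-based HNF algorithm of Havas, Majewski, and Matthews~\cite{HavasMM98} on $A$, which in polynomial time outputs the HNF $H = A \cdot U$ together with a unimodular matrix $U \in \ZZ^{d \times d}$ whose intermediate entries are kept polynomially bounded by the use of LLL reduction on the transformation. Since $U$ is unimodular and $H = A \cdot U$, the columns of $U$ indexed by the zero columns of $H$ are $\ZZ$-linearly independent and lie in $\ker(A)$; conversely, any element of $\ker(A)$ is an integer combination of these columns by~\Cref{propositoin:HNF-lattice}. Hence the matrix $K$ formed by those columns of $U$ is a basis of $\ker(A)$ with the correct dimensions, and both outputs are produced in polynomial time.

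Second, I would establish the norm bounds. For $H$: each pivot of the HNF divides some $r \times r$ minor of $A$ (where $r$ is the rank), and by Hadamard's inequality every such minor is bounded in absolute value by $\prod_{i=1}^{n} \norminf{A[i]} \cdot \sqrt{n}^n \leq (n \cdot \norminf{A} + 1)^{O(n)}$; the off-pivot entries of $H$ are, by the definition of column-style HNF, in absolute value strictly smaller than the pivot in their row, so the same bound governs all entries of $H$. For $K$: the LLL-based variant of the algorithm in~\cite{HavasMM98} additionally guarantees that the output transformation columns spanning the kernel are LLL-reduced with respect to the lattice $\ker(A)$; the first successive minimum of $\ker(A)$ is polynomial in $n \cdot \norminf{A}$ by Minkowski together with Hadamard applied to any full-rank submatrix of $A$, and an LLL-reduced basis loses only a factor $2^{O(d-r)}$ on top of the successive minima, yielding $\norminf{K} \leq (n \cdot \norminf{A} + 1)^{O(n)}$ as required.

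The main obstacle is the bound on $K$: a naive unimodular transformation can have entries of uncontrolled size even when $H$ itself is small, which is why a plain Kannan--Bachem-style HNF computation is not sufficient, and the LLL-reduction on the transformation columns corresponding to the kernel is essential. The bound on $H$ is essentially classical via Hadamard and the divisibility properties of pivots. Once both bounds are in place, the conclusion follows immediately from the construction of $K$ as the selected columns of $U$.
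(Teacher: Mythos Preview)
The paper does not prove this proposition; it is quoted as a known result from~\cite{Vanderkellen00}, with only the one-line remark that the HNF algorithm of Havas--Majewski--Matthews~\cite{HavasMM98} is used and that the kernel basis is read off from the unimodular transformation matrix. So there is no ``paper's proof'' to compare against beyond that pointer, and your plan of running one HNF computation and extracting both $H$ and $K$ from it is exactly the intended route.

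Your size argument for $H$ via Hadamard on the pivots is standard and fine. The argument for $K$, however, has a genuine gap. You bound the kernel basis by (first successive minimum) $\times\, 2^{O(d-r)}$, and then assert this is at most $(n\cdot\norminf{A}+1)^{O(n)}$. But $d-r$ can be arbitrarily large compared to $n$: for a single-row matrix ($n=1$) with $d$ columns, $d-r \geq d-1$, so your bound is exponential in $d$ while the claimed bound is $(\norminf{A}+1)^{O(1)}$. The LLL approximation factor depends on the \emph{kernel dimension}, not on the number of rows, and that is the wrong parameter here. (You also only control the first successive minimum, whereas you need all basis vectors bounded.)

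The fix is to bypass the LLL factor entirely: construct an explicit kernel basis by Cramer's rule. Choose $r$ linearly independent columns of $A$; for each of the remaining $d-r$ columns, solve the resulting $r\times r$ system to express it as a rational combination of the chosen columns, and clear denominators. Every entry of the resulting kernel vector is (up to sign) an $r\times r$ minor of $A$, hence bounded by $(\sqrt{r}\,\norminf{A})^{r} \leq (n\,\norminf{A})^{n}$ via Hadamard, with no dependence on $d$. This is essentially what underlies the bound recorded in~\cite{Vanderkellen00}, and it also explains the statement of \Cref{corollary:kernel-gcd} in the paper, where the exponent is the rank rather than anything involving~$d$.
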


Note that we can also upper bound the GCDs of the rows of the integer kernel $K$ in terms of the rank of $A$ by appealing to~\Cref{prop:vzGS}.

\begin{corollary}
  \label{corollary:kernel-gcd}
  Consider a basis $K$ of the integer kernel of a matrix $A \in \ZZ^{n \times d}$. Let $r \coloneqq \rank(A)$. 
  For every $i \in [1,d]$, $\norminf{\gcd(K[i])} \leq (d+1) \cdot (r \cdot \max(2,\norminf{A}))^r$.
\end{corollary}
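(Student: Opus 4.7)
The plan is to transfer the bound in~\Cref{prop:vzGS} to $\gcd(K[i])$ by applying that proposition to the homogeneous system $A \vec x = \vec 0$ (with no inequality constraints, so that the relevant rank parameter $s$ equals $r$). The first step is to reinterpret $\gcd(K[i])$ as the non-negative generator $c$ of the projection subgroup $\pi_i(\ker(A)) \subseteq \ZZ$, where $\pi_i$ returns the $i$-th coordinate: every element of $\ker(A)$ has the form $K \vec \lambda$ for some $\vec \lambda \in \ZZ^{d-r}$, so its $i$-th coordinate $K[i] \cdot \vec \lambda$ runs exactly over $\gcd(K[i]) \cdot \ZZ$ as $\vec \lambda$ varies. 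If $c = 0$ the claimed inequality is trivial, so from now on I assume $c > 0$.

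Next, I would apply~\Cref{prop:vzGS} to $A \vec x = \vec 0$ to obtain a parametric decomposition $\ker(A) = \bigcup_{j \in I}\{\vec u_j + E_j \vec y : \vec y \in \NN^{d-r}\}$, where every $\vec u_j$ and $E_j$ satisfies $\norminf{\vec u_j}, \norminf{E_j} \leq (d+1)(r \cdot \max(2, \norminf{A}))^r$. The key observation is that each column of every $E_j$ itself lies in $\ker(A)$: both $\vec u_j$ and $\vec u_j + E_j \vec e_k$ are kernel vectors (the latter because $\vec e_k \in \NN^{d-r}$), so their difference $E_j \vec e_k$ is also in $\ker(A)$. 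Consequently, each entry $E_j[i][k]$ lies in $\pi_i(\ker(A)) = c\ZZ$ and is therefore a multiple of $c$.

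The final step is to locate some $E_j[i][k] \neq 0$. Otherwise, every vector in the parametric union would have its $i$-th coordinate in the finite set $\{u_j[i] : j \in I\}$, contradicting the fact that $\pi_i(\ker(A)) = c\ZZ$ is infinite (using $c > 0$). Picking any such $j, k$, divisibility gives $c \leq |E_j[i][k]| \leq \norminf{E_j} \leq (d+1)(r \cdot \max(2, \norminf{A}))^r$, which is the bound in the statement.

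The main subtlety I anticipate lies in the transition from the $\NN$-cone decomposition produced by~\Cref{prop:vzGS} to a statement about the \emph{subgroup} $\pi_i(\ker(A))$: the non-negativity of $\vec y$ only encodes non-negative combinations of the columns of $E_j$ and, on its own, does not directly control arbitrary $\ZZ$-linear combinations of kernel vectors. The subtraction trick $E_j \vec e_k = (\vec u_j + E_j \vec e_k) - \vec u_j$ resolves this by recovering individual columns of $E_j$ as bona fide elements of $\ker(A)$, after which the divisibility argument is immediate.
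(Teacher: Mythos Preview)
Your proposal is correct and follows exactly the route the paper intends: the paper states this result as an immediate corollary of \Cref{prop:vzGS} without spelling out the derivation, and your argument supplies the missing details (identifying $\gcd(K[i])$ with the generator of $\pi_i(\ker(A))$, extracting the columns of the $E_j$ as kernel vectors via the subtraction trick, and then reading off the divisibility). Nothing more is needed.
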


\subsection{Computing a set spanning the divisibility module}

\LemmaModuleSpan*

This lemma follows from the forthcoming~\Cref{prop:correctness-span-divmod} and~\Cref{theorem:runtime-span-divmod}.

For the whole section, let $\Phi \coloneqq \bigwedge_{i=1}^m f_i \div g_i$ and $f$ be a primitive polynomial.
As already explained in~\Cref{sec:divisibility-algo}, 
the algorithm~\Cref{lemma:module-span} refers to performs a fix-point
computation where, at the $\ell$-th iteration, the values contained in $\vec
v$ characterize a spanning set of a particular submodule $\module_f^\ell(\Phi)$ of $\module_f(\Phi)$.
More precisely, we define $\module^0_f(\Phi) \subseteq
\module^1_f(\Phi) \subseteq \dots \subseteq \module^\ell_f(\Phi)
\subseteq \dots$ to be the sequence of sets given by 
\begin{enumerate}
  \item\label{Mellf:base-case} $\module_f^0(\Phi) \coloneqq \ZZ f$, and
  \item\label{Mellf:ind-case} for $\ell \in \NN$, $\module_f^{\ell+1}(\Phi) \coloneqq \module_f^{\ell}(\Phi) + \left\{ \textstyle\sum_{j=1}^m a_j \cdot
  g_j : \text{ for all } i \in [1,m],\, a_i \in \ZZ \text{ and } a_i \cdot f_i \in
  \module_f^{\ell}(\Phi)\right\}\!.$
\end{enumerate}
Let $\ell \in \NN$.
Note that, by definition, $\module_f^{\ell}(\Phi)$ is a $\ZZ$-module and moreover if $\ZZ f_i \cap \module_{f}^\ell(\Phi) = \{0\}$ for some $i \in [1,m]$, then $a_i$ in the definition of $\module_f^{\ell+1}(\Phi)$ equals $0$.
We define the \emph{canonical representation} of $\module_f^\ell(\Phi)$ as the vector $(v_1,\dots,v_m) \in \NN^{m}$ such that for every $i \in [1,m]$,
\begin{itemize}
  \item if $\ell = 0$ then $v_i \coloneqq 0$, 
  \item if $\ell \geq 1$ then $v_i \coloneqq \gcd\{\lambda \in \NN : \lambda \cdot f_i \in \module_f^{\ell-1}(\Phi)\}$.
\end{itemize}
\Cref{lemma:spanMellf} shows that this vector represents a spanning set of~$\module_f^\ell(\Phi)$, but first we need an auxiliary lemma.

\begin{lemma}
  \label{lemma:Mellf2}
  Let $\ell \in \NN$. Let $(v_1,\dots,v_m)$ and
  $(v_1',\dots,v_m')$ be the canonical
  representations of $\module_f^\ell(\Phi)$ and
  $\module_f^{\ell+1}(\Phi)$, respectively. For every $i \in
  [1,m]$, $v_i = v_i' = 0$ or $v_i'$ divides
  $v_i$ (so, $v_i' \neq 0$ if $v_i \neq 0$).
  % \begin{enumerate}[(A)] \item\label{Mellf2:A}
  %   \item\label{Mellf2:B} if no $j \in [1,m]$
  %   satisfies $v_j = 0$ and $v_j' \neq 0$, then for
  %   every $k \in \NN$ and $i \in [1,m]$, $\ZZ g_i
  %   \cap \module_f^{\ell}(\Phi) = \emptyset$ if and only if
  %   \ $\ZZ g_i \cap \module_f^{\ell+k}(\Phi) = \emptyset$.
  %   \am{not clear (B) is needed now.} \end{enumerate}
\end{lemma}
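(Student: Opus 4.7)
The plan is to obtain the statement as a direct consequence of the ascending chain $\module_f^0(\Phi) \subseteq \module_f^1(\Phi) \subseteq \dots$, combined with a routine fact relating GCDs and set inclusion.

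The first step will be to observe that $\module_f^\ell(\Phi) \subseteq \module_f^{\ell+1}(\Phi)$ for every $\ell \in \NN$; this is immediate from clause~\ref{Mellf:ind-case} of the definition of the sequence, since $\module_f^{\ell+1}(\Phi)$ is obtained from $\module_f^\ell(\Phi)$ as a Minkowski sum. The base case $\ell = 0$ of the lemma is then dispatched by noting that $v_i = 0$ holds by definition: either $v_i' = 0$ (left disjunct), or $v_i' \neq 0$ and $v_i' \div 0 = v_i$ by the paper's convention on divisibility recalled in the footnote of~\Cref{ssec:np-at-a-glance} (right disjunct).

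For the inductive case $\ell \geq 1$, I would fix $i \in [1,m]$ and compare the two sets
\[
  L_\ell \coloneqq \{\lambda \in \NN : \lambda \cdot f_i \in \module_f^{\ell-1}(\Phi)\},
  \qquad
  L_{\ell+1} \coloneqq \{\lambda \in \NN : \lambda \cdot f_i \in \module_f^{\ell}(\Phi)\},
\]
whose GCDs are $v_i$ and $v_i'$, respectively. The inclusion from the previous step gives $L_\ell \subseteq L_{\ell+1}$, and the standard fact that the GCD of a superset divides the GCD of a subset (using $\gcd(\{0\} \cup S) = \gcd S$ and noting that $0 \in L_\ell$ always) yields $v_i' \div v_i$ whenever $v_i \neq 0$. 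If $v_i = 0$ then either $v_i' = 0$ too, or $v_i'$ is a nonzero divisor of $0$; in either case the conclusion of the lemma holds. The parenthetical claim is then automatic from $L_\ell \subseteq L_{\ell+1}$: a nonzero witness of $v_i \neq 0$ sitting in $L_\ell$ also lies in $L_{\ell+1}$, forcing $v_i' \neq 0$.

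I do not foresee any real obstacle; the only delicate point is making sure $0$ is handled consistently. This is settled by the paper's convention that $m \div n$ means $n = q \cdot m$ for a unique $q \in \ZZ$ (so $m \div 0$ for every $m \neq 0$) combined with the GCD convention that ignores the always-present~$0$ in~$L_\ell$.
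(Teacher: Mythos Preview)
Your proposal is correct and follows essentially the same approach as the paper: both arguments hinge on the inclusion $\module_f^{\ell-1}(\Phi) \subseteq \module_f^{\ell}(\Phi)$, which forces any $\lambda$ witnessing $\lambda \cdot f_i \in \module_f^{\ell-1}(\Phi)$ to also satisfy $\lambda \cdot f_i \in \module_f^{\ell}(\Phi)$, and hence the gcd defining $v_i'$ divides $v_i$. The paper phrases this more tersely by directly noting $v_i \cdot f_i \in \module_f^{\ell-1}(\Phi) \subseteq \module_f^{\ell}(\Phi)$ rather than introducing the sets $L_\ell$ and $L_{\ell+1}$, but the underlying idea is identical.
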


\begin{proof}
  %\inproof{\eqref{Mellf2:A}}
  Let $i \in [1,m]$. If $v_i = 0$ then either $v_i'$ is
  $0$ or it divides $v_i$, hence the statement is
  trivially satisfied for that particular $i$.  
  Suppose that $v_i \neq 0$. By definition of canonical
  representation, $\ell \geq 1$ and $v_i \cdot f_i \in M^{\ell-1}_f(\Phi)$. By definition of~$\module_f^{\ell}(\Phi)$, 
  we conclude that $v_i \cdot f_i
  \in M^{\ell}_f(\Phi)$. By definition of canonical
  representation $v_i' = \gcd\{\lambda \in \NN
  : \lambda \cdot f_i \in \module_f^{\ell}(\Phi)\}$, and
  therefore $v_i'$ divides $v_i$. 
  % \inproof{\eqref{Mellf2:B}} Suppose that there is no
  % $j \in [1,m]$ such that $v_j = 0$ and $v_j' \neq
  % 0$. We show~\eqref{Mellf2:B} by induction on $k \in
  % \NN$, with induction hypothesis: \am{todo
  % continue.}
\end{proof}

\begin{lemma}
  \label{lemma:spanMellf}
  Let $\ell \in \NN$ and let $(v_1,\dots,v_m) \in
  \NN^{m}$ be the canonical representation of
  $\module_f^{\ell}(\Phi)$. Then, the set of linear
  polynomials $\{f,\,v_1 \cdot g_1,\,\dots,\,v_m \cdot g_m\}$
  spans $M^\ell_f(\Phi)$.
\end{lemma}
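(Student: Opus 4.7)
The plan is to proceed by induction on $\ell$, with the key observation being that the sets $I_i \coloneqq \{a \in \ZZ : a \cdot f_i \in \module_f^{\ell}(\Phi)\}$ are ideals of $\ZZ$, hence principal, and thus their generators coincide (up to sign) with the entries of the canonical representation of $\module_f^{\ell+1}(\Phi)$.

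For the base case $\ell = 0$, the canonical representation is $(0,\dots,0)$ by definition, so $\{f,\,v_1 \cdot g_1, \dots, v_m \cdot g_m\} = \{f, 0, \dots, 0\}$, which spans $\ZZ f = \module_f^0(\Phi)$.

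For the inductive step, I would fix $\ell \geq 0$, let $(v_1,\dots,v_m)$ and $(v_1',\dots,v_m')$ be the canonical representations of $\module_f^{\ell}(\Phi)$ and $\module_f^{\ell+1}(\Phi)$, respectively, and assume by induction that $\{f, v_1 g_1, \dots, v_m g_m\}$ spans $\module_f^{\ell}(\Phi)$. The first step is to note that, because $\module_f^{\ell}(\Phi)$ is a $\ZZ$-module, each $I_i$ is closed under addition and under multiplication by any integer, i.e., it is an ideal of $\ZZ$. Hence $I_i = v_i' \cdot \ZZ$, where $v_i'$ is precisely the non-negative generator $\gcd\{\lambda \in \NN : \lambda \cdot f_i \in \module_f^{\ell}(\Phi)\}$ appearing in the canonical representation of $\module_f^{\ell+1}(\Phi)$ (with the convention that $v_i' = 0$ if no non-zero $\lambda$ exists). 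Consequently, the ``new'' elements added in moving from $\module_f^{\ell}(\Phi)$ to $\module_f^{\ell+1}(\Phi)$—those of the form $\sum_{j=1}^m a_j \cdot g_j$ with $a_j \cdot f_j \in \module_f^{\ell}(\Phi)$—are exactly the $\ZZ$-linear combinations of $v_1' \cdot g_1, \dots, v_m' \cdot g_m$.

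Combining this with the inductive hypothesis yields
\[
  \module_f^{\ell+1}(\Phi) \;=\; \ZZ f \,+\, \sum_{j=1}^m \ZZ \cdot (v_j \cdot g_j) \,+\, \sum_{j=1}^m \ZZ \cdot (v_j' \cdot g_j).
\]
To finish, I would invoke \Cref{lemma:Mellf2} to eliminate the middle summand: for each $j$, either $v_j = 0$ (and then $v_j \cdot g_j = 0$ contributes nothing), or $v_j' \mid v_j$ (and then $v_j \cdot g_j \in \ZZ \cdot (v_j' \cdot g_j)$). Either way, $\ZZ \cdot (v_j \cdot g_j) \subseteq \ZZ \cdot (v_j' \cdot g_j)$, so $\module_f^{\ell+1}(\Phi) = \ZZ f + \sum_{j=1}^m \ZZ \cdot (v_j' \cdot g_j)$, which is precisely the span of $\{f,\,v_1' \cdot g_1, \dots, v_m' \cdot g_m\}$.

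There is no serious obstacle here; the only subtlety is making sure the definition of canonical representation is read correctly (the entries for $\module_f^{\ell+1}(\Phi)$ are GCDs taken against $\module_f^{\ell}(\Phi)$, not $\module_f^{\ell+1}(\Phi)$), which is exactly what makes the ideals $I_i$ appearing in the definition of $\module_f^{\ell+1}(\Phi)$ match the canonical representation of $\module_f^{\ell+1}(\Phi)$ itself.
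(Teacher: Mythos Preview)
Your proof is correct and follows essentially the same approach as the paper: induction on $\ell$, with the inductive step appealing to \Cref{lemma:Mellf2} to absorb the old generators $v_j \cdot g_j$ into the new ones $v_j' \cdot g_j$. Your phrasing in terms of the ideals $I_i = \{a \in \ZZ : a \cdot f_i \in \module_f^{\ell}(\Phi)\}$ being principal with generator $v_i'$ is a slightly cleaner way of saying what the paper expresses as ``$v_i \mid a_i$'', but the underlying argument is identical.
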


\begin{proof}
  The statement follows by induction on $\ell \in \NN$. 
  \begin{description}
    \item[base case $\ell = 0$.]
    From $\module_f^0(\Phi) = \ZZ f$ we have
    $(v_1,\dots,v_m) = (0,\dots,0)$ and $\{f\}$ spans
    $M^0_f(\Phi)$.
    \item[induction step $\ell \geq 1$.] From the induction
      hypothesis, ${\{f,\,v_1^* \cdot g_1,\,\dots,\,v_m^* \cdot g_m\}}$ 
      spans $M^{\ell-1}_f(\Phi)$; with $(v_1^*,\dots,v_m^*)$ being the canonical representation
       of $M^{\ell-1}_f(\Phi)$.
      We consider the two inclusions of the equivalence
      $\ZZ f + \ZZ (v_1 \cdot g_1) + \dots + \ZZ(v_m \cdot g_m) = M^\ell_f(\Phi)$.

      \proofSubset This direction follows directly by definition of $M^\ell_f(\Phi)$.

      \proofSupset
      Let $h \in \module_f^{\ell}(\Phi)$.
      By definition, 
      $h = h_1 + h_2$ where ${h_1 \in \ZZ f + \ZZ (v_1^* \cdot g_1) + \dots + \ZZ(v_m^* \cdot g_m)}$
      and $h_2 = \sum_{i=1}^m a_i \cdot g_i \in
      \module_f^\ell(\Phi)$ satisfying ${a_i \cdot f_i \in
      \module_f^{\ell-1}(\Phi)}$ for every~$i \in [1,m]$. 
      By~\Cref{lemma:Mellf2} $\ZZ(v_i^* \cdot g_i) \subseteq \ZZ(v_i \cdot g_i)$ and therefore $h_1 \in \ZZ f + \ZZ (v_1 \cdot g_1) + \dots + \ZZ(v_m \cdot g_m)$.
      By definition $v_i = \gcd\{\lambda \in \NN : \lambda
      \cdot f_i \in \module_f^{\ell-1}(\Phi)\}$ and thus $v_i \div a_i$.
      So, $h \in \ZZ f + \ZZ (v_1 \cdot g_1) + \dots + \ZZ(v_m \cdot g_m)$.
      \qedhere
  \end{description}
\end{proof}

%% COMMENTED LEMMA NOT NEEDED ANYMORE
% \begin{lemma}
%   \label{lemma:Mellf2bis}
%   Let $\ell \in \NN$ and $(v_1,\dots,v_m)$,
%   $(v_1^\prime,\dots,v_m^\prime)$ and
%   $(v_1^{\prime\prime},\dots,v_m^{\prime\prime})$ be
%   the canonical representations of $\module_f^\ell(\Phi)$,
%   $\module_f^{\ell+1}(\Phi)$ and $\module_f^{\ell+2}(\Phi)$,
%   respectively. If for every $i \in [1,m]$ $v_i = 0$
%   implies $v_i^\prime = 0$, then for every $j \in
%   [1,m]$ $v_i^\prime = 0$ implies $v_i^{\prime\prime} =
%   0$. \am{I don't remember why I took 3 representations, maybe the statement I had in mind was different.}
% \end{lemma}

\begin{lemma}
  \label{lemma:Mellf3}
  \tabto{2.3cm} {(\textlabel{A}{lemma:Mellf3A})} \ For
  every $\ell \in \NN$, $\module_f^{\ell} \subseteq
  \module_f^{\ell+1} \subseteq \module_f(\Phi)$.

  \vspace{2pt}
  \tabto{2.3cm} {(\textlabel{B}{lemma:Mellf3B})} \
  There is $\ell \in \NN$ such that $\module_f^\ell(\Phi) =
  \module_f^{\ell+1}(\Phi)$.

  \vspace{2pt}
  \tabto{2.3cm} {(\textlabel{C}{lemma:Mellf3C})} \ For
  every $\ell \in \NN$, if $\module_f^\ell(\Phi) =
  \module_f^{\ell+1}(\Phi)$ then $\module_f^\ell(\Phi) =
  \module_f(\Phi)$.
\end{lemma}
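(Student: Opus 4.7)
My plan is to handle the three parts of~\Cref{lemma:Mellf3} in order, exploiting~\Cref{lemma:Mellf2} and~\Cref{lemma:spanMellf} established earlier in the appendix.

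For part~(A), the inclusion $\module_f^{\ell}(\Phi) \subseteq \module_f^{\ell+1}(\Phi)$ is immediate from the inductive clause in the definition of $\module_f^{\ell+1}(\Phi)$, which writes it as the Minkowski sum of $\module_f^\ell(\Phi)$ with a set containing $0$. The inclusion $\module_f^{\ell+1}(\Phi) \subseteq \module_f(\Phi)$ goes by induction on $\ell$: the base case $\module_f^0(\Phi) = \ZZ f \subseteq \module_f(\Phi)$ uses properties~\eqref{divmod:prop1} and~\eqref{divmod:prop2} of divisibility modules; in the inductive step, any new summand $\sum_{j=1}^{m} a_j g_j$ added at level $\ell+1$ satisfies $a_i f_i \in \module_f^\ell(\Phi) \subseteq \module_f(\Phi)$, so~\eqref{divmod:prop3} gives $a_i g_i \in \module_f(\Phi)$, and~\eqref{divmod:prop2} closes under $\ZZ$-linear combinations.

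For part~(B), I track the canonical representations $\vec v^{(\ell)} = (v_1^{(\ell)}, \dots, v_m^{(\ell)})$ across iterations.~\Cref{lemma:Mellf2} says that each coordinate $v_i^{(\ell)}$ either stays $0$ or strictly divides the previous value while remaining positive. Since divisibility on $\pZZ$ is well-founded (a proper divisor has strictly smaller absolute value), each coordinate stabilises after finitely many steps, and with only $m$ coordinates the whole tuple $\vec v^{(\ell)}$ eventually stabilises, say from some step $\ell$ onwards. By~\Cref{lemma:spanMellf}, $\module_f^\ell(\Phi)$ and $\module_f^{\ell+1}(\Phi)$ are spanned by $\{f, v_1^{(\ell)} g_1, \dots, v_m^{(\ell)} g_m\}$ and $\{f, v_1^{(\ell+1)} g_1, \dots, v_m^{(\ell+1)} g_m\}$, respectively; equality of the canonical representations hence forces equality of the two modules.

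For part~(C), assume $\module_f^\ell(\Phi) = \module_f^{\ell+1}(\Phi)$. The inclusion $\module_f^\ell(\Phi) \subseteq \module_f(\Phi)$ is already given by part~(A). For the reverse inclusion I would verify that $\module_f^\ell(\Phi)$ satisfies the three closure conditions defining $\module_f(\Phi)$, so minimality yields $\module_f(\Phi) \subseteq \module_f^\ell(\Phi)$. Condition~\eqref{divmod:prop1} follows from $f \in \module_f^0(\Phi) \subseteq \module_f^\ell(\Phi)$ via~(A); condition~\eqref{divmod:prop2} is built into the definition of $\module_f^\ell(\Phi)$ as a Minkowski sum of $\ZZ$-modules; and condition~\eqref{divmod:prop3} is precisely where the fix-point equation is used, since for every $f_i \div g_i$ in $\Phi$ and $b \in \ZZ$ with $b \cdot f_i \in \module_f^\ell(\Phi)$, the definition of $\module_f^{\ell+1}(\Phi)$ directly yields $b \cdot g_i \in \module_f^{\ell+1}(\Phi) = \module_f^\ell(\Phi)$.

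The main conceptual work is part~(B), whose only substantive ingredient is the well-foundedness of divisibility on $\pZZ$ applied coordinate-wise via~\Cref{lemma:Mellf2}; parts~(A) and~(C) are routine unfoldings of the axiomatic definition of the divisibility module against the inductive definition of $\module_f^\ell(\Phi)$.
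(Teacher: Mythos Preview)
Your proof is correct and follows essentially the same approach as the paper: part~(A) by induction using properties~\eqref{divmod:prop1}--\eqref{divmod:prop3}, part~(B) via~\Cref{lemma:Mellf2} and~\Cref{lemma:spanMellf} and well-foundedness of divisibility, and part~(C) by verifying that the fix-point satisfies the three defining properties of the divisibility module. One small imprecision in your~(B): \Cref{lemma:Mellf2} does not say each coordinate ``stays $0$ or strictly divides''; a coordinate may transition from $0$ to a nonzero value (this can happen at most once per coordinate) and may also remain at the same positive value, so your stabilisation argument should account for these cases explicitly, as the paper does.
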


\begin{proof}
  \inproof{\eqref{lemma:Mellf3A}} By definition,
  $\module_f^{\ell} \subseteq \module_f^{\ell+1}$. 
  An
  induction on~$\ell \in \NN$ shows $M^\ell_f(\Phi)
  \subseteq \module_f(\Phi)$:
  \begin{description}
    \item[base case $\ell = 0$:]
     By~definition of $\module_f^\ell(\Phi)$ and of divisibility module, $M^0_f(\Phi) = \ZZ f
     \subseteq \module_f(\Phi)$.
    \item[induction case $\ell \geq 1$:]
      From the induction hypothesis,
      $M^{\ell-1}_f(\Phi) \subseteq \module_f(\Phi)$. 
      By definition, 
      $M^\ell_f(\Phi)$ is defined from
      $M^{\ell-1}_f(\Phi)$ by taking linear
      combinations of elements in $M^{\ell-1}_f(\Phi)$
      together with elements $b \cdot h$ such that $b
      \cdot g \in M^{\ell-1}_f(\Phi)$ and $g \div h$ is
      a divisibility of $\Phi$. From the definition of
      divisibility module, $\module_f(\Phi)$ is closed under
      such combinations, since for every $b \cdot g \in
      \module_f(\Phi)$ and $g \div h$ divisibility of $\Phi$,
      $b \cdot h \in \module_f(\Phi)$ (see
      Property~\ref{divmod:prop3} in the def.~of
      divisibility module). From $M^{\ell-1}_f(\Phi)
      \subseteq \module_f(\Phi)$ we then conclude that
      $M^{\ell}_f(\Phi) \subseteq \module_f(\Phi)$.
  \end{description}

  \inproof{\eqref{lemma:Mellf3B}} This statement
  follows from~\Cref{lemma:Mellf2}. Indeed, for a given
  $\ell \in \NN$, consider the canonical
  representations $(v_1,\dots,v_m)$ and
  $(v_1',\dots,v_m')$ of $\module_f^\ell(\Phi)$ and
  $\module_f^{\ell+1}(\Phi)$, respectively.
  By~\Cref{lemma:Mellf2}, if $\module_f^\ell(\Phi) \neq
  \module_f^{\ell+1}(\Phi)$ then one of the following holds:
  \begin{enumerate}
    \item\label{Mellf3BI1} there is $i \in [1,m]$ such
    that $v_i = 0$ and $v_i' \neq 0$, or
    \item\label{Mellf3BI2} there is $i \in [1,m]$ such
    that $v_i \neq 0$, $v_i' \neq v_i$ and $v_i'$
    divides $v_i$.
  \end{enumerate}
  Again from~\Cref{lemma:Mellf2}, for
  every $j \in [1,m]$, if $v_j \neq 0$ then $v_j'$
  divides $v_j$. This implies that both
  Items~\eqref{Mellf3BI1} and~\eqref{Mellf3BI2} cannot
  occur infinitely often, and therefore $\module_f^{r}(\Phi)
  = \module_f^{r+1}(\Phi)$ for some $r \in \NN$.\\[6pt]
  \inproof{\eqref{lemma:Mellf3C}}  
  From Part~\eqref{lemma:Mellf3A}, $\module_f^\ell(\Phi) \subseteq \module_f(\Phi)$. 
  We show that $\module_f^\ell(\Phi)$ satisfies the Properties~\mbox{\ref{divmod:prop1}--\ref{divmod:prop3}}
  of divisibility modules. Then, $\module_f(\Phi) \subseteq \module_f^\ell(\Phi)$
  follows from the minimality condition required by these modules.
  Properties~\ref{divmod:prop1}
  and~\ref{divmod:prop2} are trivially satisfied. To establish
  Property~\ref{divmod:prop3}, consider $b \cdot g
  \in \module_f^\ell(\Phi)$ and a divisibility $g \div h$ 
  of~$\Phi$. By~definition $b
  \cdot h \in \module_f^{\ell+1}(\Phi)$, and from $\module_f^{\ell}
  = \module_f^{\ell+1}(\Phi)$ we get $b \cdot h \in
  \module_f^{\ell}(\Phi)$. Therefore, $\module_f^{\ell}(\Phi)$
  satisfies Property~\ref{divmod:prop3}.
\end{proof}

In view of~\Cref{lemma:spanMellf,lemma:Mellf3}, the
algorithm required by~\Cref{lemma:module-span} presents
itself: it suffices to iteratively compute canonical
representations of every $\module_f^\ell(\Phi)$ until
reaching a fix-point. \Cref{procedure:span-divmod} performs
this computation.
\begin{algorithm}[t]
  \caption{Computes a set spanning a divisibility module}
  \label{procedure:span-divmod}
  \begin{algorithmic}[1]
      \Require A system of divisibility constraints $\Phi(\vec x) = \bigwedge_{i=1}^m f_i(\vec x) \div g_i(\vec x)$ and 
      a primitive polynomial $f$.
      \Ensure A tuple $(c_1,\dots,c_m) \in \NN^{m}$ such that $\{f,\,c_1 \cdot g_1,\,\dots,\,c_m \cdot g_m\}$ spans $\module_f(\Phi)$.
  \State\label{span-divmod:Vect}% 
    $\vec v \coloneqq (0,\dots,0) \in \NN^{m}$
  \While{\textbf{true}}
    \label{span-divmod:While}
    \State\label{span-divmod:CopyT}%
    $\vec u \coloneqq \vec v$
    \For{$i$ \textbf{in} $[1,m]$}
      \label{span-divmod:For}
      \State\label{span-divmod:Psi}% 
        $F_i \coloneqq \{-f_i,\,f,\,\vec u[1] \cdot g_1,\,\dots,\,\vec u[m] \cdot g_m \}$
      \State\label{span-divmod:Kernel}% 
        $K_i \coloneqq$ basis of the integer kernel of the matrix representing $F_i$
      \State\label{span-divmod:GCD}%  
        $\vec v[i] \gets \gcd(\text{row of $K_i$ corresponding to $-f_i$})$ 
    \EndFor
    \If{$\vec v = \vec u$}
        \label{span-divmod:If}%
        \textbf{return} $\vec v$
      \EndIf
  \EndWhile
\end{algorithmic}
\end{algorithm}
In a nutshell, during the $\ell$-th iteration ($\ell \geq 1$) of the \textbf{while} loop of
line~\ref{span-divmod:While}, the variable $\vec u$ contains the canonical representation of $\module_f^{\ell-1}(\Phi)$, and the algorithm updates the vector $\vec v$ with the canonical representation of $\module_f^{\ell}(\Phi)$. 
To update the value $\vec v[i]$ associated to~$g_i$ 
the algorithm needs to compute $\gcd\{\lambda \in \NN : \lambda \cdot f_i \in \module_f^{\ell-1}(\Phi) \}$ \mbox{(line~\ref{span-divmod:GCD})}.
This is done by finding a finite representation for all the scalars $\lambda$, which is given by those entries corresponding to $-f_i$ in a basis of the integer kernel of the matrix for the set $F_i$ defined in line~\ref{span-divmod:Psi}. 
As explained in~\Cref{subsec:divisibility:notion-elimination}, 
a set of polynomials $F \coloneqq \{h_1,\dots,h_\ell\}$ in variables $x_1 \incord \dots \incord x_d$ (where $\incord$ is an arbitrary order) can be represented as the matrix~$A \in \ZZ^{(d+1) \times \ell}$ in which each column $(a_d,\dots,a_1,c)$ contains the coefficients and the constant of a distinct element~$h$ of $F$, 
with $a_i$ being the coefficient of~$x_i$ for $i \in [1,d]$, and $c$ being the constant of~$h$. This matrix is unique up-to permutation of columns. 

It might not be clear for the moment whether \Cref{procedure:span-divmod}
runs in \ptime: in each iteration, the integer kernel computation
done in line~\ref{span-divmod:Kernel} might a priori
increase the bit length of the entries in the canonical
representation by a polynomial factor, 
yielding entries of exponential bit length
after polynomially many iterations -- an effect similar to
na\"ive implementations of Gaussian elimination or kernel
computations via suboptimal algorithms for the Hermite
normal form of a matrix. We show later that our worries are
unjustified, as the GCD~computed in
line~\ref{span-divmod:GCD} prevents this blow-up. For the
moment, let us formally argue on the correctness of~\Cref{procedure:span-divmod}.

\begin{proposition} 
  \label{prop:correctness-span-divmod}
  \Cref{procedure:span-divmod} respects its specification.
\end{proposition}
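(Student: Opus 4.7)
The plan is to establish correctness by proving, by induction on the iteration count $\ell \geq 0$ of the \textbf{while} loop at line~\ref{span-divmod:While}, the following invariant: at the start of the $\ell$-th iteration, the vector $\vec v$ equals the canonical representation $(v_1,\dots,v_m)$ of $\module_f^\ell(\Phi)$. Once this invariant is established, termination will follow from~\Cref{lemma:Mellf3B}, and correctness from~\Cref{lemma:Mellf3C} together with~\Cref{lemma:spanMellf}.

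The base case $\ell = 0$ is immediate from line~\ref{span-divmod:Vect}, since $\module_f^0(\Phi) = \ZZ f$ has the all-zero canonical representation. For the inductive step, assume the invariant for $\ell$; I must show that the inner \textbf{for} loop replaces $\vec v$ with the canonical representation of $\module_f^{\ell+1}(\Phi)$. Fix $i \in [1,m]$. By induction, the variable $\vec u$ set at line~\ref{span-divmod:CopyT} is the canonical representation of $\module_f^{\ell}(\Phi)$, so by~\Cref{lemma:spanMellf}, the set $\{f,\,\vec u[1]g_1,\dots,\vec u[m]g_m\}$ spans $\module_f^{\ell}(\Phi)$. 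The key observation is that an integer $\lambda$ satisfies $\lambda \cdot f_i \in \module_f^{\ell}(\Phi)$ if and only if there exist $\mu_0, \mu_1, \dots, \mu_m \in \ZZ$ with $\lambda \cdot f_i = \mu_0 \cdot f + \sum_{j=1}^m \mu_j \cdot \vec u[j] \cdot g_j$, which is exactly the condition that the tuple $(\lambda,\mu_0,\mu_1,\dots,\mu_m)$ lies in the integer kernel of the matrix representing $F_i = \{-f_i, f, \vec u[1]g_1, \dots, \vec u[m]g_m\}$ (with the column for $-f_i$ placed first). Thus, the set of admissible $\lambda$'s is precisely the projection of this integer kernel onto its first coordinate, which is the $\ZZ$-module generated by the entries of the row of $K_i$ corresponding to $-f_i$. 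Taking the GCD of these entries in line~\ref{span-divmod:GCD} therefore yields exactly $\gcd\{\lambda \in \NN : \lambda \cdot f_i \in \module_f^{\ell}(\Phi)\}$, which by definition is the $i$-th component of the canonical representation of $\module_f^{\ell+1}(\Phi)$. This closes the induction.

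With the invariant in place, the rest of the argument is short. By~\Cref{lemma:Mellf3B}, the chain $\module_f^0(\Phi) \subseteq \module_f^1(\Phi) \subseteq \cdots$ stabilizes, and since two consecutive $\module_f^\ell(\Phi)$ agree exactly when their canonical representations coincide (by~\Cref{lemma:spanMellf}), the test at line~\ref{span-divmod:If} will eventually succeed, so the algorithm terminates. At termination, say after iteration $\ell$ with $\vec v = \vec u$, the invariant and~\Cref{lemma:spanMellf} give $\module_f^{\ell-1}(\Phi) = \module_f^{\ell}(\Phi)$, and then~\Cref{lemma:Mellf3C} yields $\module_f^{\ell}(\Phi) = \module_f(\Phi)$. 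Applying~\Cref{lemma:spanMellf} once more, the returned vector $\vec v$ satisfies that $\{f,\,\vec v[1]g_1,\dots,\vec v[m]g_m\}$ spans $\module_f(\Phi)$, as required.

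The main obstacle I anticipate is the bookkeeping around the projection-of-kernel step: one must be careful that taking a basis of the integer kernel (rather than, say, a rational kernel) and then taking the GCD along a single coordinate really does recover the full $\ZZ$-module of admissible $\lambda$'s. This is essentially a standard fact about integer lattices, but it is where the argument can silently go wrong if the kernel computation is interpreted over $\QQ$ rather than $\ZZ$; I would make this step fully explicit. The polynomial bound on the $c_i$ and the polynomial running time promised by~\Cref{lemma:module-span} are separate matters, to be handled afterwards via~\Cref{prop:HNF} and~\Cref{corollary:kernel-gcd} controlling the bit-growth of the intermediate kernels.
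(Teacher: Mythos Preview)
Your proposal is correct and follows essentially the same approach as the paper: both prove by induction that after the $\ell$-th iteration $\vec v$ (equivalently $\vec u_\ell$) is the canonical representation of $\module_f^\ell(\Phi)$, reducing the inductive step to the observation that the admissible $\lambda$'s with $\lambda f_i \in \module_f^\ell(\Phi)$ are exactly the first coordinates of the integer kernel of the matrix for $F_i$, and then invoke \Cref{lemma:spanMellf} and \Cref{lemma:Mellf3} to conclude. One small imprecision: your ``exactly when'' in the termination paragraph only holds in one direction via \Cref{lemma:spanMellf} (the canonical representation of $\module_f^\ell$ is defined from $\module_f^{\ell-1}$, not from $\module_f^\ell$ itself), but termination still follows since once the chain stabilizes at $\ell^*$ the canonical representations of $\module_f^{\ell^*+1}$ and $\module_f^{\ell^*+2}$ are computed from the same module and hence coincide.
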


\begin{proof}
  We write $\vec u_\ell$ for the value that the tuple
  $\vec u$ declared in line~\ref{span-divmod:CopyT}
  of~\Cref{procedure:span-divmod} takes during the
  $(\ell+1)$-th iterations of the \textbf{while} loop of
  line~\ref{span-divmod:While}, with $\ell \in \NN$
  and assuming that the \textbf{while} loop is iterated
  at least $\ell+1$ times. We show the following claim: 
  \begin{claim}
    \label{claim:canonical-representation}
    For every $\ell \in \NN$, the tuple $\vec u_\ell$ is
    the canonical representation of~$\module_f^\ell(\Phi)$.
  \end{claim}
  Since~\Cref{procedure:span-divmod} terminates when
  $\vec u_{\ell-1}$ is found to be equal to $\vec
  u_{\ell}$ for some $\ell \geq 1$, its correctness
  follows directly from~\Cref{lemma:spanMellf} and~\Cref{lemma:Mellf3}.
  The proof of this claim is by induction on $\ell$.
  \begin{description}
    \item[base case.] 
    We have $\vec u_{0} = (0,\dots,0)
    \in \NN^{m}$, which is the canonical representation of $\module_f^0(\Phi)$. 
    \item[induction step.] 
      By induction hypothesis, let us assume that 
      $\vec u_{\ell} = (v_1,\dots,v_m)$ is the canonical representation of $\module_f^\ell(\Phi)$. 
      We show that when exiting the \textbf{for} loop of line~\ref{span-divmod:For}, for every $i \in [1,m]$, $\vec v[i]$ equals $v_i' \coloneqq \gcd\{\lambda \in \NN : \lambda \cdot f_i \in
      \module_f^{\ell}(\Phi)\}$.
      Thanks to the declaration of line~\ref{span-divmod:CopyT},
      this implies that $\vec u_{\ell+1}$ is the canonical representation of $\module_f^{\ell+1}(\Phi)$. 
      Since $\vec u_{\ell} = (v_1,\dots,v_m)$ is the canonical representation of $\module_f^\ell(\Phi)$, by~\Cref{lemma:spanMellf} we have $\module_f^\ell(\Phi) = \ZZ f + \ZZ (v_1 \cdot g_1) + \dots + \ZZ (v_m \cdot g_m)$.
      Therefore, $v_i' = \gcd\{\lambda \in \NN : \lambda \cdot f_i = \mu_0 \cdot f + \textstyle\sum_{i=1}^m \mu_i \cdot (v_i \cdot g_i) \text{ for some } \mu_0,\dots,\mu_m \in \ZZ\}$.
      The set of tuples $(\lambda,\mu_0,\dots,\mu_m) \in \ZZ^{m+2}$ such that~$\lambda \cdot f_i = \mu_0 \cdot f + \textstyle\sum_{i=1}^m \mu_i \cdot (v_i \cdot g_i)$
      corresponds to the solutions to the system of equations $A \cdot (\lambda,\mu_0,\dots,\mu_m) = \vec 0$ over the integers,
      where~$A$ is the matrix representing the set $\{-f_i,\,f,\, v_i \cdot g_1,\,\dots,\,\vec v_m \cdot g_m \}$, i.e., $F_i$ in
      line~\ref{span-divmod:Psi}.
      This set corresponds to $\ker(A)$, and so can be finitely represented with an integer kernel basis, i.e.,~$K_i$ in line~\ref{span-divmod:Kernel}. Computing $v_i'$ only requires to compute the 
      GCD~of the row of $K_i$ corresponding to the variable $\lambda$ of $-f_i$. This is exactly how $\vec v[i]$ is defined in line~\ref{span-divmod:GCD}.
      \qedhere
  \end{description}
\end{proof}

We move to the runtime analysis
of~\Cref{procedure:span-divmod}. We need the following lemma
studying the growth of the GCDs~of the rows of bases $K$ of $\ker(A)$
when columns of $A$ are
scaled by positive integers.
In the lemma below, $\diag(c_1,\dots,c_d)$ stands for the $d \times d$ diagonal matrix having $c_1,\dots,c_d$ in the main diagonal.

% \begin{lemma}
%   \label{lemma:gcdhalves}
%   Consider $0 < a < b$. Then $\gcd(a,b) \leq \frac{b}{2}$.
% \end{lemma}

% \begin{proof}
% Note that $a,b > 0$ implies $\gcd(a,b) \leq \min(a,b)$, 
% and recall that, by properties of the GCD, $\gcd(a,b) = \gcd(b-a,b)$. 
% If $a \leq \frac{b}{2}$ the statement trivially follows from the former inequality. 
% If $a > \frac{b}{2}$ then $0 < b-a < \frac{b}{2}$ and we conclude that $\gcd(a,b) = \gcd(b-a,b) \leq \min(b-a,b) < \frac{b}{2}$. 
% \end{proof}

% \begin{lemma}
%   Consider linear polynomials $g_1,\dots,g_k$ and positive integers $c_1,\dots,c_k$. 
%   The equation $\Phi(\vec \alpha) \coloneqq g_1 \cdot \alpha_1 + \dots + g_k \cdot \alpha_k$ is satisfiable if and only if so is $\Psi(\vec \beta) \coloneqq (c_1 \cdot g_1) \cdot \beta_1 + \dots + (c_k \cdot g_k) \cdot \beta_k$. Moreover, if $\Phi$ has a solution $\vec \alpha \in \ZZ^{k+1}$, then $\Psi$ has a solution $\vec \beta \in \ZZ^{k+1}$ such that 
%   $\norminf{\vec \beta} \leq \norminf{\vec \alpha} \cdot \lcm(c_1,\dots,c_k)$.
% \end{lemma}

% \begin{proof}
%   The proof of this lemma is straightforward. 
%   \proofRightarrow
%   Let $\vec \alpha = (\alpha_1,\dots,\alpha_k)$ be a solution of $\Phi$, and let $C \coloneqq \lcm(c_1,\dots,c_k)$. 
%   Then, $\vec \beta \coloneqq (\frac{C}{c_1} \cdot \alpha_1, \dots, \frac{C}{c_k} \cdot \alpha_k)$ is a solution for $\Psi$.
%   \proofLeftarrow 
%   Let $\vec \beta = (\beta_1,\dots,\beta_k)$ be a solution of $\Psi$. Then, $\vec \alpha = (c_1 \cdot \beta_1,\dots,c_k \cdot \beta_k)$ is as solution for $\Phi$.
% \end{proof}

\begin{lemma}
  \label{lemma:gcd-kernel-lemma}
  Consider a matrix $A \in \ZZ^{n \times d}$ of rank
  $r$, integers $c_1,\dots,c_d > 0$, and let $K,K' \in
  \ZZ^{d \times (d-r)}$ be bases of the integer kernels of $A$
  and $A' \coloneqq A \cdot \diag(c_1,\dots,c_d)$, respectively. For
  every $i \in [1,d]$,
  \begin{enumerate}
    \item\label{gcd-kernel-lemma:itemA} if $\gcd(K[i])
    = 0$ then $\gcd(K'[i]) = 0$, and
    \item\label{gcd-kernel-lemma:itemB} if $\gcd(K[i])
    > 0$ then $\gcd(K'[i]) \neq 0$ and $\gcd(K'[i])$ divides
    $\lcm(c_1,\dots,c_d) \cdot \gcd(K[i])$.
  \end{enumerate}
\end{lemma}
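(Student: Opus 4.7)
The plan is to exploit the fact that $\vec v \in \ker(A')$ if and only if $D \vec v \in \ker(A)$, where $D \coloneqq \diag(c_1,\dots,c_d)$. First I would record the intrinsic meaning of $\gcd(K[i])$: for any basis $K \in \ZZ^{d \times (d-r)}$ of $\ker(A)$, every element of $\ker(A)$ is of the form $K \vec \lambda$ with $\vec \lambda \in \ZZ^{d-r}$, so the set of possible $i$-th components of vectors in $\ker(A)$ equals $\{ K[i] \cdot \vec \lambda : \vec \lambda \in \ZZ^{d-r}\} = \gcd(K[i]) \cdot \ZZ$ (interpreting $0 \cdot \ZZ$ as $\{0\}$). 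The analogous statement holds for $K'$ and $\ker(A')$. Thus $\gcd(K[i])$ is the non-negative generator of the projection of $\ker(A)$ onto its $i$-th coordinate, and independent of the choice of basis.

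For Item~\ref{gcd-kernel-lemma:itemA}, if $\gcd(K[i]) = 0$ then every vector in $\ker(A)$ has $i$-th component $0$. Given any $\vec v \in \ker(A')$, the vector $D\vec v$ lies in $\ker(A)$ and its $i$-th component is $c_i v_i$; hence $c_i v_i = 0$, and since $c_i > 0$, we get $v_i = 0$. So every vector in $\ker(A')$ has $i$-th component $0$, i.e., $\gcd(K'[i]) = 0$.

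For Item~\ref{gcd-kernel-lemma:itemB}, set $L \coloneqq \lcm(c_1,\dots,c_d)$ and $G \coloneqq \gcd(K[i])$. By the intrinsic characterization, there exists $\vec w \in \ker(A)$ whose $i$-th component equals $G$. Define $\vec v \in \ZZ^d$ by $v_j \coloneqq (L/c_j)\,w_j$; the entries are integers since $c_j \mid L$. Then $A'\vec v = A\,D\,\vec v = L \cdot A\vec w = \vec 0$, so $\vec v \in \ker(A')$, and its $i$-th component equals $LG/c_i$, which is a strictly positive integer. Consequently $\gcd(K'[i])$ divides $LG/c_i$, and therefore also divides $L G$; moreover $\gcd(K'[i]) \neq 0$ because it divides the positive integer $LG/c_i$.

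There is no real obstacle: the argument is a direct exploitation of the injection $D \colon \ker(A') \hookrightarrow \ker(A)$ together with the explicit pre-image construction $\vec w \mapsto L D^{-1} \vec w$, which lands in $\ker(A')$ precisely because $c_j \mid L$ for every $j$.
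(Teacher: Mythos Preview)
Your proof is correct and follows essentially the same approach as the paper: both first characterize $\gcd(K[i])$ intrinsically as the generator of the $i$-th coordinate projection of $\ker(A)$, then use the map $D$ between the two kernels for Item~\ref{gcd-kernel-lemma:itemA} and the explicit construction $v_j = (L/c_j)\,w_j$ for Item~\ref{gcd-kernel-lemma:itemB}. The only cosmetic differences are that the paper argues Item~\ref{gcd-kernel-lemma:itemA} by contrapositive whereas you argue it directly, and you observe the slightly sharper divisibility $\gcd(K'[i]) \mid LG/c_i$ before relaxing to $LG$.
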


\begin{proof}
  Note that $A'$ is the matrix
  obtained from $A$ by scaling the $j$-th column of $A$
  by $c_j$ ($j \in [1,d]$). 
  Let $i \in [1,d]$ and $(M,J) \in
  \{(A,K),(A',K')\}$. By definition of kernel, 
  ${\{ J \cdot
    \vec \lambda : \vec \lambda \in \ZZ^m \}}
    = \{ \vec x \in \ZZ^d : M \cdot \vec x = \vec 0 \}$. 
  This fact has three direct consequences:

  \begin{enumerate}[(A)]
    \item\label{gcd-kernel-lemma:item2} if $\gcd(J[i])
    = 0$, then no vector $\vec x = (x_1,\dots,x_d) \in
    \ZZ^d$ satisfies both $x_i \neq 0$ and $M \cdot
    \vec x = 0$, 
    \item\label{gcd-kernel-lemma:item3} if $\gcd(J[i])
    > 0$, then there is $\vec x = (x_1,\dots,x_d) \in
    \ZZ^d$ such that $x_i = \gcd(J[i])$ and $M \cdot
    \vec x = 0$, 
    \item\label{gcd-kernel-lemma:item4} if $\gcd(J[i])
    > 0$, then for every $\vec x = (x_1,\dots,x_d) \in
    \ZZ^d$ satisfying $M \cdot \vec x = 0$ we have
    $\gcd(J[i]) \div x_i$.
  \end{enumerate}
  \Cref{gcd-kernel-lemma:itemA,gcd-kernel-lemma:itemB} in the statement of the lemma are derived from these three properties.\\[6pt]
  \inproof{\eqref{gcd-kernel-lemma:itemA}} By
  contrapositive, assume that $\gcd(K'[i]) \neq 0$.
  Hence, $\gcd(K'[i]) > 0$ and by
  \Cref{gcd-kernel-lemma:item3} there is $\vec x
  = (x_1,\dots,x_d) \in \ZZ^d$ such that $x_i =
  \gcd(K'[i])$ and $A' \cdot \vec x = \vec 0$. Let
  $\vec y \coloneqq (c_1 \cdot x_1, \dots, c_d \cdot
  x_d)$. We have $A \cdot \vec y = A \cdot
  (\diag(c_1,\dots,c_d) \cdot \vec x) = (A \cdot
  \diag(c_1,\dots,c_d)) \cdot \vec x = A' \cdot \vec x
  = \vec 0$. Since $c_i > 0$ we have $c_i \cdot x_i
  \neq 0$, which together with $A \cdot \vec y = \vec
  0$ implies $\gcd(K[i]) \neq 0$ by
  \Cref{gcd-kernel-lemma:item2}.\\[6pt]
  \inproof{\eqref{gcd-kernel-lemma:itemB}} Suppose
  $\gcd(K[i]) > 0$. By
  \Cref{gcd-kernel-lemma:item3}, there is $\vec x
  = (x_1,\dots,x_d) \in \ZZ^d$ with $A \cdot \vec x = \vec 0$ and $x_i =
  \gcd(K[i])$. Define $C
  \coloneqq \lcm(c_1,\dots,c_d)$ and $\vec y \coloneqq
  (\frac{C}{c_1} \cdot x_1, \dots, \frac{C}{c_d} \cdot
  x_d)$. Note that $\vec y \in \ZZ^d$ is well-defined,
  since $c_1,\dots,c_d > 0$. Moreover,
  $\frac{C}{c_i} \cdot x_i = \frac{C}{c_i} \cdot
  \gcd(K[i]) > 0$. We have, 
  \[
    \begin{aligned}
      A' \cdot \vec y &\textstyle= A' \cdot
      (\diag(\frac{C}{c_1}, \dots, \frac{C}{c_d}) \cdot
      \vec x) \ = \ (A \cdot \diag(c_1,\dots,c_d)) \cdot
      (\diag(\frac{C}{c_1}, \dots, \frac{C}{c_d}) \cdot
      \vec x)\\ 
      &\textstyle = A \cdot (\diag(c_1,\dots,c_d) \cdot
      \diag(\frac{C}{c_1}, \dots, \frac{C}{c_d})) \cdot
      \vec x \ = \ C \cdot A \cdot \vec x \ = \ \vec 0.
    \end{aligned}
  \]

  \noindent
  Then, by
  \Cref{gcd-kernel-lemma:item2}, $\gcd(K'[i]) >
  0$, which in turn implies that $\gcd(K'[i]) \div \frac{C}{c_i} \cdot x_i$, directly from
  \Cref{gcd-kernel-lemma:item4}. Therefore,
  $\gcd(K'[i])$ divides $\lcm(c_1,\dots,c_d) \cdot
  \gcd(K[i])$.
\end{proof}

We are now ready to discuss the runtime
of~\Cref{procedure:span-divmod}.

\begin{proposition}
  \label{theorem:runtime-span-divmod}
  \Cref{procedure:span-divmod} runs in \ptime, and on an
  input $(\Phi,f)$ such that $\Phi = \bigwedge_{i=1}^m f_i \div g_i$ 
  it returns a vector $\vec v$ satisfying
  $\norminf{\vec v} \leq ((m+3) \cdot (\norminf{\Phi}+2))^{(m+3)^3}$.
\end{proposition}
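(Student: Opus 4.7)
The plan is to prove both assertions by exhibiting a uniform upper bound $D \coloneqq ((m{+}3)(\norminf{\Phi}+2))^{(m+3)^3}$ on every entry of the vector $\vec u$ maintained by~\Cref{procedure:span-divmod} throughout its execution, and then bounding the number of iterations polynomially. Polynomial runtime will then follow, since each iteration performs $m$ integer kernel basis computations (\Cref{prop:HNF}) and $m$ GCD computations on integer matrices of polynomial bit length, while the claimed norm bound on the returned vector is immediate from the uniform bound.

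First I would exploit the monotonicity granted by~\Cref{lemma:Mellf2}: once an entry $\vec u_\ell[j]$ becomes positive, every subsequent value at that position divides it, so the largest value the entry ever attains equals its first non-zero value. Listing the entries in the order they first become positive, say $j_1, j_2, \dots$ at iterations $\ell_1 \leq \ell_2 \leq \dots$, I set $\Lambda_s \coloneqq \lcm(\vec u_{\ell_1}[j_1], \dots, \vec u_{\ell_s}[j_s])$ with $\Lambda_0 \coloneqq 1$. At iteration $\ell_{s+1}$, writing $I$ for the set of indices $i$ with $\vec u_{\ell_{s+1}-1}[i] > 0$, the matrix $F_{j_{s+1}}$ built in line~\ref{span-divmod:Psi} equals $A \cdot \diag(1, 1, (\vec u_{\ell_{s+1}-1}[i])_{i \in I})$, where $A$ represents the unscaled set $\{-f_{j_{s+1}}, f\} \cup \{g_i : i \in I\}$; zero columns contribute only trivially to the first row of a kernel basis. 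Combining~\Cref{lemma:gcd-kernel-lemma} with~\Cref{corollary:kernel-gcd} applied to $A$ yields $\vec u_{\ell_{s+1}}[j_{s+1}] \mid \Lambda_s \cdot C$, where $C \coloneqq (m{+}3)((m{+}2)\max(2,\norminf{\Phi}))^{m+2}$. Hence $\Lambda_{s+1} \leq C \cdot \Lambda_s$, so $\Lambda_m \leq C^m$, and a short calculation using $C \leq ((m{+}3)(\norminf{\Phi}+2))^{m+3}$ gives $C^m \leq D$. This bounds $\norminf{\vec u}$ throughout execution, and in particular bounds the returned vector $\vec v$.

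To bound the iteration count, I would argue that in each iteration of the while loop with $\vec v \neq \vec u$, either some entry transitions from $0$ to a positive value (at most $m$ such events overall) or some positive entry strictly shrinks to a proper divisor of its previous value (at most $\log_2(D)$ times per entry, by~\Cref{lemma:Mellf2}). The total number of iterations is therefore at most $m + m\log_2(D)$, polynomial in $\bitlength{\Phi}$ and $m$. Since each integer kernel basis computation on matrices whose entries are bounded by $\norminf{\Phi} \cdot D$ runs in polynomial time by~\Cref{prop:HNF}, and GCD computations are also polynomial, the algorithm overall runs in polynomial time.

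The main obstacle is avoiding the doubly exponential blow-up of $\Lambda_s$ that a direct analysis would produce. Applying~\Cref{corollary:kernel-gcd} straight to the scaled matrix $F_{j_{s+1}}$ yields a bound in which $\norminf{F_{j_{s+1}}} \leq \norminf{\Phi}\cdot \Lambda_s$ appears raised to the $(m{+}2)$-nd power, compounding over $m$ rounds to exponent of order $(m{+}2)^m$. The crucial trick is to invoke~\Cref{lemma:gcd-kernel-lemma} so that the scaling factor $\Lambda_s$ is extracted \emph{linearly} outside the GCD computation, leaving a bound on the gcd of the unscaled kernel that depends only on $\norminf{\Phi}$ and $m$; this keeps the recurrence linear in $\Lambda_s$ and produces the single-exponential bound $C^m$.
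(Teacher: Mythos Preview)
Your proposal is correct and shares the paper's key device: using \Cref{lemma:gcd-kernel-lemma} to pull the scaling factors $\vec u_\ell[i]$ out linearly before applying \Cref{corollary:kernel-gcd} to the unscaled matrix, thereby avoiding the doubly-exponential blow-up. The paper organizes this via auxiliary ``hat'' objects $\widehat{\vec u}_\ell$, $\widehat{F}_{\ell,j}$, $\widehat{v}_{\ell,j}$ and an induction on the iteration index~$\ell$, obtaining $\vec u_m[j] \leq N^{m^2}$ (and $\lcm(\vec u_\ell) \leq N^{m^3}$); your decomposition by first-nonzero values with the recurrence $\Lambda_{s+1} \leq C\Lambda_s$ is a cleaner packaging of the same idea and yields the sharper bound $C^m$. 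Your iteration count via event-counting is likewise a streamlined version of the paper's two-phase argument.

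Two small points. First, the claim ``$\vec u_{\ell_{s+1}}[j_{s+1}] \mid \Lambda_s \cdot C$'' conflates the upper bound $C$ with the actual gcd of the unscaled kernel row; \Cref{lemma:gcd-kernel-lemma} only gives divisibility by $\Lambda_s \cdot G$ for some $G \leq C$, though $\Lambda_{s+1} = \lcm(\Lambda_s, \vec u_{\ell_{s+1}}[j_{s+1}]) \mid \Lambda_s G \leq C\Lambda_s$ still follows. Second, your bound $C$ tacitly requires $\norminf f \leq \norminf\Phi$, since $f$ appears as a column of the unscaled matrix $A$. The paper secures this by noting that if any entry ever becomes positive then some $v_{0,j} \neq 0$ already at iteration~$1$, where $F_{0,j} = \{-f_j, f\}$; this forces $f_j = a f$ for some $a \neq 0$, whence $\norminf f \leq \norminf{f_j} \leq \norminf\Phi$.
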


\begin{proof}
  As done in the proof of~\Cref{prop:correctness-span-divmod},
  let $\vec u_\ell \in \ZZ^m$ be the value that the
  tuple $\vec u$ declared in
  line~\ref{span-divmod:CopyT} takes during the
  $(\ell+1)$-th iteration of the \textbf{while} of
  line~\ref{span-divmod:While}, with $\ell \in \NN$
  and assuming that the \textbf{while} loop is iterated
  at least $\ell+1$ times. Similarly, given $j \in
  [1,m]$, let $F_{\ell,j}$ and $K_{\ell,j}$ be
  the set of polynomial and matrix declared in
  lines~\ref{span-divmod:Psi}
  and~\ref{span-divmod:Kernel}, respectively, during
  the $(\ell+1)$-th iteration of the \textbf{while}
  loop and at the end of the iteration of the
  \textbf{for} loop of line~\ref{span-divmod:Psi}
  where the index variable $i$ takes value $j$. Lastly,
  following the code in line~\ref{span-divmod:GCD}, we
  define $v_{\ell,j} \coloneqq \gcd(\text{row of
  $K_{\ell,j}$ corresponding to $-f_j$})$. A few preliminary remarks that follow
  directly form the definitions above:

  For the runtime of the algorithm, first consider the
  case where $\module_f(\Phi) \cap \ZZ f_j = \{0\}$ for
  every $j \in [1,m]$, which implies $\module_f(\Phi) = \ZZ
  f$, by definition of divisibility module. Focus on
  the first execution of the body of the~\textbf{while}
  loop. Since $\vec u_0 = (0,\dots,0)$, for every $j
  \in [1,m]$, $F_{0,j} = \{-f_j,\,f\}$. Since $\module_f(\Phi) \cap \ZZ
  f_j = \{0\}$, the row of $K_{0,j}$
  corresponding to $-f_j$ contains only zeros. This implies $\vec v = (0,\dots,0) = \vec u_0$ in line~\ref{span-divmod:If}, and~\Cref{procedure:span-divmod}
  returns $(0,\dots,0)$ after a single
  iteration of the~\textbf{while}.
  
  Consider now the case where $\module_f(\Phi) \cap \ZZ f_j
  \neq \emptyset$ for some $j \in [1,m]$. Note that
  this implies $f_j = a \cdot f$ for some $a \in \ZZ
  \setminus \{0\}$ and $j \in [1,m]$, hence $\bitlength{f}
  \leq \poly{\bitlength{\Phi}}$. This allows us to bound the
  size of the output of~\Cref{procedure:span-divmod}
  in terms of $\Phi$, hiding factors that depend
  on~$f$ (as done in the statement of the
  proposition). A few auxiliary definitions are handy ($\ell \in \NN$ and $j \in [1,m]$):
  \begin{itemize}
    \item We associate to $\vec u_{\ell}$ the vector
    $\widehat{\vec u}_\ell \in \{0,1\}^{m}$ given by
    $\widehat{\vec u}_\ell[i] = 1$ iff $\vec u_\ell[i]
    \neq 0$, for every $i \in [1,m]$.
    \item We associate to $F_{\ell,j}$ the set $\widehat{F}_{\ell,j} \coloneqq \{-f_j,\,f,\,\widehat{\vec u}_{\ell}[1] \cdot g_1,\,\dots,\,\widehat{\vec u}_{\ell}[m] \cdot g_m \}$. 
    \item We associate to $K_{\ell,j}$ a basis $\widehat{K}_{\ell,j}$ for the integer kernel of the matrix representing $\widehat{F}_{\ell,j}$.
    \item We associate to $v_{\ell,j}$ the integer $\widehat{v}_{\ell,j} \coloneqq \gcd(\text{row of $\widehat{K}_{\ell,j}$ corresponding to $-f_j$})$.
  \end{itemize}
  In a nutshell, $\widehat{\vec u}_\ell$ ``forgets'' the magnitude of the integers stored in $\vec u_{\ell}$, keeping only whether their value was $0$ or not. The other objects defined above reflect this change at the level of matrices, kernels and GCDs.
  Up to permutation of columns, the matrix representing $F_{\ell,j}$ can be obtained by multiplying the matrix of $\widehat{F}_{\ell,j}$ by a diagonal matrix having in the main diagonal (a permutation of) $(1,1,\vec u_{\ell}[1],\dots, \vec u_{\ell}[m])$. 
  From the definition of $\widehat{K}_{\ell,j}$ and
  by~\Cref{lemma:gcd-kernel-lemma}, we conclude that 
  \begin{equation}
    \tag{$\dagger$}
    \label{divmod-runtime-pB} 
    \text{if $\widehat{v}_{\ell,j} = 0$
      then $v_{\ell,j} = 0$,
      \ 
      and \ if $\widehat{v}_{\ell,j} \neq 0$ then $v_{\ell,j} \neq 0$ and
      $v_{\ell,j}$ divides 
      $\lcm(\vec u_{\ell}) \cdot \widehat{v}_{\ell,j}$.} 
  \end{equation}

  Recall that the matrix representing $\widehat{F}_{\ell,j}$ has $d+1$ rows and $m+2$ columns.
  Since $\norminf{\widehat{F}_{\ell,j}} \leq \norminf{\Phi}$ for every $\ell \in \NN$ and $j \in [1,m]$, by~\Cref{corollary:kernel-gcd} 
  there an integer
  $N \in [2,\,{((m+3) \cdot (\norminf{\Phi}+2))^{(m+3)}}]$
  such that $N$ is greater 
  than~$\widehat{v}_{\ell,j}$, for every $\ell \in \NN$ and $j
  \in [1,m]$. We use~\eqref{divmod-runtime-pB} above to bound the
  number of iterations and magnitude of the entries
  of~$\vec u_{\ell}$ during the procedure. We show that
  \begin{enumerate}
    \item\label{span-divmod:totcor1} $\max_{\ell \in
    \NN}(\lcm(\vec u_\ell)) = \max_{\ell =
    0}^m(\lcm(\vec u_\ell)) \leq N^{m^3}$ and for every
    $j \in [1,m]$, $\vec u_m[j] \leq N^{m^2}$, and
    \item\label{span-divmod:totcor2} the
    \textbf{while} loop of
    line~\ref{span-divmod:While} is iterated at most
    $m^3 \cdot \log_2(N) + m$ many times.
  \end{enumerate}
  In Item~\eqref{span-divmod:totcor1} above we
  slightly abused our notation, as $\vec u_\ell$ is
  undefined for $\ell \in \NN$ greater or equal than
  the number of iterations of the \textbf{while} loop
  performed by the algorithm. In these cases, we
  postulate $\lcm(\vec u_\ell) = 0$ in order to make
  the equivalence in Item~\eqref{span-divmod:totcor1}
  well-defined. From the bound $N \leq
  ((m+3) \cdot (\norminf{\Phi}+2))^{(m+3)}$,
  Items~\eqref{span-divmod:totcor1}
  and~\eqref{span-divmod:totcor2} imply
  that~\Cref{procedure:span-divmod} runs in \ptime
  and outputs a vector~$\vec v$ with $\norminf{\vec v} \leq ((m+3) \cdot (\norminf{\Phi}+2))^{(m+3)^3}$;
  proving the proposition.\\[6pt]
  \inproof{\eqref{span-divmod:totcor1}} 
  Informally, Item~\eqref{span-divmod:totcor1} states
  that $\lcm(\vec u)$ is always bounded by $N^{m^2}$,
  and that $\lcm(\vec u)$ achieves its maximum at most
  after the first $m$ iterations of the \textbf{while}
  loop. We start by proving that~${\max_{\ell =
  0}^m(\lcm(\vec u_\ell)) \leq N^{m^3}}$ and that for
  every $j \in [1,m]$, $\vec u_m[j] \leq N^{m^2}$ This
  is done by induction on $\ell \in [1,m]$, by showing
  that (whenever defined) $\vec u_\ell$ is such that, for
  every $j \in [1,m]$, if ${\vec u_\ell[j] \neq 0}$ then
  ${\widehat{v}_{\ell-1,j}\neq 0}$ and $\vec
  u_\ell[j]$ divides $\big(\widehat{v}_{\ell-1,j}
  \cdot \prod_{i=0}^{\ell-2}
  \lcm(\widehat{v}_{i,1},\dots,\widehat{v}_{i,m})\big)$.
  Note that then $\vec u_\ell[j] \leq N^{m(\ell-1)+1}$,
  since $N$ is an upper bound on every
  $\widehat{v}_{\ell,j}$, and thus for $\ell = m$ we
  get $\vec u_m[j] \leq N^{m^2}$ and $\lcm(\vec u_m)
  \leq N^{m^3}$, as required. Below, let $\vec u_{\ell}
  = (c_1,\dots,c_m)$. Note that, from line~\ref{span-divmod:GCD} of the algorithm,
  if $\ell \geq
  1$, then $c_j = v_{\ell-1,j}$ for every $j \in [1,m]$.
  \begin{description}
    \item[base case $\ell = 1$.]
    From~$\vec u_0 =
    (0,\dots,0)$ we have $F_{0,j} = \widehat{F}_{0,j} = \{-f_j,f\}$ for every $j \in
    [1,m]$. This implies $\widehat{v}_{0,j} = v_{0,j}$.
    From $c_j = v_{0,j}$, we conclude that $c_j = \widehat{v}_{0,j}$, completing the base case.
    \item[induction step $\ell \geq 2$.]  
      Let  $j \in [1,m]$ such that $c_j \neq 0$. 
      From~\eqref{divmod-runtime-pB} and $c_j = v_{\ell-1,j}$, we get
      $\widehat{v}_{\ell-1,j} \neq 0$ and
      $c_j \div (\lcm(\vec u_{\ell-1}) \cdot
      \widehat{v}_{\ell-1,j})$. Let $\vec u_{\ell-1}
      = (c_1^*,\dots,c_m^*)$. From the induction
      hypothesis, for every~${k \in [1,m]}$, if $c_k^*
      \neq 0$ then $\widehat{v}_{\ell-2,k} \neq 0$
      and $c_k^* \div \big(\widehat{v}_{\ell-2,k} \cdot
      \prod_{i=0}^{\ell-3}
      \lcm(\widehat{v}_{i,1},\dots,\widehat{v}_{i,m})\big)$.
      Therefore,
      \[ 
        \lcm(\vec u_{\ell-1}) \ \div \ \lcm\big(
          (\widehat{v}_{\ell-2,1} \cdot \prod_{i=0}^{\ell-3} \lcm(\widehat{v}_{i,1},\dots,\widehat{v}_{i,m})), 
          \dots, 
          (\widehat{v}_{\ell-2,m} \cdot \prod_{i=0}^{\ell-3} \lcm(\widehat{v}_{i,1},\dots,\widehat{v}_{i,m}))
        \big).
      \]
      From the equivalence $\lcm(a \cdot b, c \cdot b) =
      \lcm(a,c) \cdot b$,
      the right-hand side of the divisibility above
      equals
      $\prod_{i=0}^{\ell-2}
      \lcm(\widehat{v}_{i,1},\dots,\widehat{v}_{i,m})$.
      Then, the fact that $c_j$ divides $\big(\widehat{v}_{\ell-1,j} \cdot
      \prod_{i=0}^{\ell-2}
      \lcm(\widehat{v}_{i,1},\dots,\widehat{v}_{i,m})\big)$
      follows
      directly from $c_j \div (\lcm(\vec u_{\ell-1})
      \cdot \widehat{v}_{\ell-1,j})$. 
  \end{description}

  To complete the proof of~\eqref{span-divmod:totcor1}, 
  we now show that $\max_{\ell \in \NN}(\lcm(\vec
  u_\ell)) = \max_{\ell = 0}^m(\lcm(\vec u_\ell))$.
  Directly from~\Cref{claim:canonical-representation}
  in the proof of~\Cref{prop:correctness-span-divmod},
  we have that for every $\ell \geq 1$, the vector
  $\vec u_\ell$ is the canonical representation of
  $\module_f^\ell(\Phi)$. We have,
  \begin{enumerate}[(A)]
    \item\label{theo:runntime-span-P1-itemA} for every
    $j \in [1,m]$, if $\vec u_\ell[j] \neq 0$ then
    $\vec u_{\ell+1}[j]$ divides $\vec u_\ell[j]$
    (assuming both $\vec u_{\ell}$ and $\vec
    u_{\ell+1}$ defined). 

    This follows directly from~\Cref{lemma:Mellf2}.

    \item\label{theo:runntime-span-P1-itemB} If $\vec
    u_\ell$, $\vec u_{\ell+1}$ and $\vec u_{\ell+2}$
    are defined, and $\vec u_\ell$ and $\vec
    u_{\ell+1}$ have the same zero entries, then 
    also~$\vec u_\ell$ and $\vec u_{\ell+2}$ have the same
    zero entries. 
    
    Indeed, in this case $\widehat{\vec u}_{\ell} = \widehat{\vec u}_{\ell+1}$ which implies $\widehat{v}_{\ell,j} = \widehat{v}_{\ell+1,j}$ for every $j \in [1,m]$. Now, if $\vec u_{\ell+2}[j] \neq 0$ then $v_{\ell+1,j} \neq 0$ and so $\widehat{v}_{\ell+1,j} \neq 0$ by~\eqref{divmod-runtime-pB}. Then $\widehat{v}_{\ell,j} \neq 0$, and again by~\eqref{divmod-runtime-pB} we get $v_{\ell,j} \neq 0$. 
    If instead $\vec u_{\ell+2}[j] = 0$, then $\vec u_{\ell}[j] = 0$ follows 
    from~\Cref{lemma:Mellf2}.
  \end{enumerate}
  Since $\vec u$ is a tuple with $m$ entries,
  Item~\eqref{theo:runntime-span-P1-itemB} above
  ensures that every $\vec u_\ell$ and $\vec u_{r}$
  with $\ell,r \geq m$ share the same zero entries.
  Item~\eqref{theo:runntime-span-P1-itemA} states
  instead that every non-zero entry of $\vec u_\ell$
  upper bounds the corresponding entry of $\vec u_{\ell+r}$, 
  for every $r \in
  \NN$, and that this latter entry is always non-zero.
  Together,
  Items~\eqref{theo:runntime-span-P1-itemA}
  and~\eqref{theo:runntime-span-P1-itemB}  
  imply that $\max_{\ell \in \NN}(\lcm(\vec u_\ell)) =
  \max_{\ell = 0}^m(\lcm(\vec u_\ell))$.\\[6pt]
  \inproof{\eqref{span-divmod:totcor2}}
  Assume that the \textbf{while} loop iterates at least $m+1$
  times (otherwise~\eqref{span-divmod:totcor2}
  is trivially satisfied). 
  From~\eqref{span-divmod:totcor2}, 
  the vector $\vec
  u_m$ such that $\vec u_m[j] \leq N^{m^2}$ for every
  $j \in [1,m]$. As we have just discussed above,
  by~Item~\eqref{theo:runntime-span-P1-itemB}, every
  subsequent $\vec u_{m+r}$ with $r \in \NN$ has the
  same zero entries as~$\vec u_m$. Whenever $\vec
  u_{m+r}$ and $\vec u_{m+r+1}$ are both defined
  (meaning in particular that $\vec u_{m+r} \neq \vec
  u_{m+r+1}$), there must be $j \in [1,m]$ such that
  $\vec u_{m+r}[j] \neq \vec u_{m+r+1}[j]$, and
  moreover by
  Item~\eqref{theo:runntime-span-P1-itemA}, $\vec
  u_{m+r+1}[i]$ divides $\vec u_{m+r}[i]$ for every $i
  \in [1,m]$, which in particular implies that $\vec
  u_{m+r+1}[j] \leq \frac{\vec u_{m+r}[j]}{2}$. 
  Therefore, the product of all non-zero entries of $\vec u$ (at least) halves at each
  iteration of the \textbf{while} loop after the $m$-th
  one. By~\eqref{span-divmod:totcor1}, for every $j \in [1,m]$ we have $\vec
  u_m[j] \leq N^{m^2}$, so the product of all
  non-zero entries in $\vec u_m$ is bounded by $N^{m^3}$. We
  conclude that the number of iterations of the
  \textbf{while} loop after the $m$-th one is bounded
  by $\log_2(N^{m^3}) = m^3 \cdot \log_2(N)$; i.e., $m^3
  \cdot \log_2(N) + m$ many iterations overall.
\end{proof}

\subsection{Closing a system of divisibility constraints under the elimination property}

\LemmaAddEliminationProperty*

\begin{proof}
  The algorithm is simple to state: 
  \begin{algorithmic}[1]
    \State\label{elim-algo:lineF} $F \coloneqq \{ f \text{ primitive } : \text{$a \cdot f$ is in the left-hand side of a divisibility of $\Phi$, for some $a \in \ZZ \setminus \{0\}$} \}$
    \For{$f \in F$}
      \State\label{elim-algo:lineSpan} $\vec v \coloneqq (c_1,\dots,c_m) \in \ZZ^m$ s.t.~$\{f,c_1 \cdot g_1,\dots, c_m \cdot g_m\}$ spans $\module_f(\Phi)$ 
      \Comment{\Cref{lemma:module-span}}
      \State\label{elim-algo:lineH} $H$ $\coloneqq$ HNF~of the matrix representing $\{f,c_1 \cdot g_1,\dots, c_m \cdot g_m\}$
      \Comment{\Cref{prop:HNF}}
      \State\label{elim-algo:linePurge} $\Phi$ $\gets$ $\Phi$ purged of all divisibilities of the form $f \div g$ for some polynomial $g$
      \For{$(a_d,\dots,a_1,a_0)$ non-zero column of $H$}
        \label{elim-algo:lineUpFor}  
        \State\label{elim-algo:lineUp} $\Phi$ $\gets$ $\Phi \land (f \div a_d \cdot x_d + \dots + a_1 \cdot x_1 + a_0)$
      \EndFor
    \EndFor
    \State \textbf{return} $\Phi$
  \end{algorithmic}

  Below, let $\Psi$ be the system returned by the algorithm on input $\Phi$.

  The fact that $\Psi$ has the elimination property follows from properties of
  the Hermite normal form. Consider~$F$ defined as in
  line~\ref{elim-algo:lineF}, and $f \in F$. Starting from the matrix~$A \in
  \ZZ^{(d+1) \times (m+1)}$ representing the spanning set~$S \coloneqq \{f,c_1 \cdot g_1,\dots, c_m \cdot g_m\}$ computed in~line~\ref{elim-algo:lineSpan}, by
  \Cref{propositoin:HNF-lattice} we conclude that $H$ in
  line~\ref{elim-algo:lineH} spans~$\module_f(\Phi)$. 
  Moreover, by properties of the HNF, all non-zero columns of $H$ are linearly independent, hence the \textbf{for} loop in line~\ref{elim-algo:lineUpFor} is adding divisibilities~$f \div h_1,\dots, f \div h_\ell$ where $h_1,\dots,h_\ell$ is a basis of~$\module_f(\Phi)$; and~$\ell \leq m+1$. Note that line~\ref{elim-algo:linePurge} has previously removed all divisibilities of the form $f \div g$. Hence, in $\Psi$ only the divisibilities $f \div h_1,\dots, f \div h_\ell$ have $f$ as a left-hand side. Recall now that each column $(a_d,\dots,a_1,c)$ of the matrix~$A$ contains the coefficients and the constant of a distinct element~$h \in S$, 
  with $a_i$ being the coefficient of~$x_i$ for $i \in [1,d]$, and $c$ being the constant of~$h$. Again since $H$ is in HNF, it is lower triangular, and the pivot of each non-zero column is strictly below the pivot of the column before it. Following the order $x_1 \incord \dots \incord x_d$, this allows us to conclude that, for every $k \in [0,d]$, the family $\{ g_1,\dots,g_{j} \} \coloneqq \{ g : \lv(g) \incordeq x_k \text{ and } f \div g \text{ appears in } \Psi \}$ is such that $g_1,\dots,g_j$ are linearly independent polynomials forming a basis for $\module_f(\Phi) \cap \ZZ[x_1,\dots,x_k]$; i.e., $\Psi$ has the elimination property. We also note that, by virtue of the updates done in~\ref{elim-algo:lineUp}, \Cref{lemma:add-elim-property:item-1,lemma:add-elim-property:item-2} in the statement of~\Cref{lemma:add-elimination-property} directly follow.

  The fact that $\Psi$ and $\Phi$ are equivalent both over~$\ZZ$ and for solutions modulo a prime follows from~\Cref{lemma:add-elim-property:item-1,lemma:add-elim-property:item-2} together with the following property of divisibility modules:
  given a system of divisibility constraints $\Phi'$ and a primitive term $f$,
  \begin{itemize}
    \item for every $\vec a$ integer solution of $\Phi'$ and every $g \in \module_f(\Phi')$, $f(\vec a)$ divides $g(\vec a)$, 
    \item for every $p \in \PP$, $\vec b$ solution of $\Phi'$ modulo $p$ and every $g \in \module_f(\Phi')$, $v_p(f(\vec b)) \leq v_p(g(\vec b))$. 
    
    Here, note that given polynomials $g_1$ and $g_2$ with $v_p(f(\vec b)) \leq v_p(g_1(\vec b))$ and ${v_p(f(\vec b)) \leq v_p(g_2(\vec b))}$ we have $v_p(f(\vec b)) \leq v_p(a_1 \cdot g_1(\vec b)+ a_2 \cdot g_2(\vec b))$ for every $a_1,a_2 \in \ZZ$, as the $p$-adic evaluation satisfies $v_p(x \cdot y) = v_p(x) + v_p(y)$ and $\min(v_p(x),v_p(y)) \leq v_p(x+y)$, for all $x,y \in \ZZ$.
  \end{itemize}

  Let us now move to the bounds on $\Psi$ stated in~\Cref{lemma:add-elim-property:item-3}.
  Directly from $\card{F} \leq m$ and the fact that $H$ is lower triangular we conclude that at most $m \cdot (d+1)$ divisibilities are added, and so $\Psi$ has at most $m \cdot (d+2)$ divisibilities.
  We analyze the norm of $\Psi$. 
  It suffices to consider a single $f \in F$. By definition, $\norminf{f} \leq \norminf{\Phi}$, and
  from~\Cref{lemma:module-span}, the infinity norm of the matrix $A$ representing $\{f,c_1 \cdot g_1,\dots, c_m \cdot g_m\}$ is bounded by 
  $((m+3) \cdot (\norminf{\Phi}+2))^{(m+3)^3} \cdot \norminf{\Phi}$.
  Note that $A$ has $d+1$ many rows.
  By~\Cref{prop:HNF}, the matrix $H$ in line~\ref{elim-algo:lineH} is such that 
  \begin{align*}
    \norminf{H} 
    &\leq ((d+1) \cdot \norminf{A} + 1)^{\bigO{d}}\\
    &\leq \Big((d+1) \cdot \big(((m+3) \cdot (\norminf{\Phi}+2))^{(m+3)^3} \cdot \norminf{\Phi}\big) + 1\Big)^{\bigO{d}}\\
    &\leq (d+1)^{\bigO{d}}(m + \norminf{\Phi} + 2)^{\bigO{m^3  d}}.
  \end{align*}
  From the updates done in line~\ref{elim-algo:lineUp}, we conclude that $\norminf{\Psi} \leq (d+1)^{\bigO{d}}(m + \norminf{\Phi} + 2)^{\bigO{m^3 d}}$.
\end{proof}

\LemmaSubstitAndElim*

\begin{proof}
  Let $f$ be a primitive polynomial. By definition of
  divisibility module, the lemma is true as soon as we prove
  \begin{enumerate*}[(i)]
    \item\label{substit-and-elim-prop:1} $f \in
    \module_f(\Psi(\vec \nu(\vec x),\vec y))$,
    \item\label{substit-and-elim-prop:2}
    $\module_f(\Psi(\vec \nu(\vec x),\vec y))$ is a
    $\ZZ$-module, and
    \item\label{substit-and-elim-prop:3} for every
    divisibility $g' \div h'$ (with $g'$ non-zero) appearing
    in $\Phi(\vec \nu(\vec x),\vec y)$, if $b \cdot g' \in
    \module_f(\Psi(\vec \nu(\vec x),\vec y))$ for some $b
    \in \ZZ$, then $b \cdot h' \in \module_f(\Psi(\vec
    \nu(\vec x),\vec y))$. 
  \end{enumerate*}
    Indeed, by definition $\module_f(\Phi(\vec \nu(\vec
    x),\vec y))$ is the smallest set fulfilling these three
    properties, and therefore it must then be included in
    $\module_f(\Psi(\vec \nu(\vec x),\vec y))$.

    The first two properties trivially follow by definition
    of $\module_f(\Psi(\vec \nu(\vec x),\vec y))$, hence let
    us focus on Property~\eqref{substit-and-elim-prop:3}.
    Consider a divisibility $g' \div h'$ appearing in
    $\Phi(\vec \nu(\vec x), \vec y)$ and such that $b \cdot
    g' \in \module_f(\Psi(\vec \nu(\vec x),\vec y))$. By
    definition of $\Phi(\vec \nu(\vec x), \vec y)$, there is
    a divisibility $g \div h$ appearing in $\Phi$ such that
    ${(g \div h)\substitute{\vec \nu (\vec x)}{\vec x} = (g'
    \div h')}$. We split the proof depending on whether $g$
    is a primitive polynomial. 
    \begin{description}
      \item[$g$ is not a primitive polynomial.]
      By~\Cref{lemma:add-elim-property:item-1}
      in~\Cref{lemma:add-elimination-property} the
      divisibility $g \div h$ occurs in $\Psi$. So, $g' \div
      h'$ is in $\Psi(\vec \nu(\vec x), \vec y)$ and
      directly by definition of divisibility module, $b
      \cdot h' \in \module_f(\Psi(\vec \nu(\vec x), \vec
      y))$.

      \item[$g$ is a primitive polynomial.] Let
      $\widetilde{g}$ and $c' \in \ZZ \setminus \{0\}$ be
      such that $g' = c' \cdot \widetilde{g}$.
      By~\Cref{lemma:add-elim-property:item-2}
      in~\Cref{lemma:add-elimination-property}, since $g
      \div h$ appears in $\Phi$,  
        $h \in \module_g(\Psi)$. By the elimination property
        of $\Psi$, there are divisibilities $g \div h_1,
        \dots, g \div h_k$ such that $h = \lambda_1 \cdot
        h_1 + \dots + \lambda_k \cdot h_k$ for some
        $\lambda_1,\dots,\lambda_k \in \ZZ \setminus \{0\}$.
        Every divisibility $(g \div h_i)\substitute{\vec \nu
        (\vec x)}{\vec x}$ with $i \in [1,k]$ appears in
        $\Psi(\vec \nu(\vec x), \vec y)$. Since $g' = g(\vec
        \nu(\vec x), \vec y)$ and $b \cdot g' \in
        \module_f(\Psi(\vec \nu(\vec x), \vec y))$ we have
        $b \cdot h_i(\vec \nu(\vec x), \vec y) \in
        \module_{f}(\Psi(\vec \nu(\vec x), \vec y))$ for
        every $i \in [1,k]$. Note that $h' = h(\vec \nu(\vec
        x), \vec y) = \lambda_1 \cdot h_1(\vec \nu(\vec x),
        \vec y) + \dots + \lambda_k \cdot h_k(\vec \nu(\vec
        x), \vec y)$, and therefore since the divisibility
        module is a $\ZZ$-module, $b \cdot h' \in
        \module_f(\Psi(\vec \nu(\vec x), \vec y))$.
        \qedhere
    \end{description}
\end{proof}

\section{Bounding the number of difficult primes}
\label{appendix:bound-pzero}

In this appendix, we establish~\Cref{lemma:simple-primes,lemma:bound-sterms,lemma:bound-on-pzero}.

\LemmaSimplePrimes*

\begin{proof}
  We remark that $p$ not dividing any coefficients nor
  constants appearing in the left-hand sides of~$\Phi$
  implies that all the left-hand sides are non-zero. We show
  that the system of non-congruences defined by $f_i
  \not\equiv 0 \pmod p$  for every $i \in [1,m]$, admits a
  solution~$\vec b$. This solution can clearly be taken with
  entries in $[0,p-1]$. Furthermore, $v_p(f_i(\vec b)) = 0$
  and $f_i(\vec b) \neq 0$ for every $i \in [1,m]$, and
  therefore~$\vec b$ is a solution for $\Phi$ modulo $p$ no
  matter the values of $v_p(g_i(\vec b))$ ($i \in [1,m]$).

  Consider an arbitrary ordering $x_1 \incord \dots \incord
    x_d$ on the variables in $\vec x$. We construct $\vec b$
    by induction on $k \in [0,d]$. At the $k$-th step of the
    induction we deal with the linear terms $h$ having
    ${\lv(h) = x_k}$. Below, we write $F_0$ for the set of
    the left-hand sides in $\Phi$ that are constant
    polynomials, and $F_k$ with $k \in [1,d]$ for the set of
    the left-hand sides $f$ in $\Phi$ such that $\lv(f)
    \incordeq x_k$.

    \begin{description}
        \item[base case: $k = 0$.] 
        Every $f \in F_0$ is a non-zero integer. Then, $f
        \not\equiv 0 \pmod p$ directly follows from the
        hypothesis that $p$ does not divide any constant
        appearing in the left-hand sides of $\Phi$.
        \item[induction step: $k \geq 1$.] 
        From the induction hypothesis, there is $\vec
        b_{k-1} = (b_1,\dots,b_{k-1}) \in \ZZ^{k-1}$ such
        that for every $f \in F_{k-1}$, $f(\vec b_{k-1})
        \not\equiv 0 \pmod p$. We find a value $b_k$ for
        $x_k$ so that the following system of
        non-congruences is satisfied 
        \begin{align*}
          f(\vec b_{k-1},x_k)
          \not\equiv 0 \pmod p
          && f \in F_{k} \setminus F_{k-1}.
        \end{align*}
        Linear polynomials $f$ in $F_k \setminus F_{k-1}$
        are of the form $f(\vec x) = f'(x_1,\dots,x_{k-1}) +
        c_f \cdot x_k$. Since by hypothesis $p \nmid c_f$,
        we consider the multiplicative inverse $c_f^{-1}$ of
        $c_f$ modulo $p$, and rewrite the above system as
        $x_k \not\equiv -c_f^{-1} \cdot f'$ for every $f \in
        F_{k} \setminus F_{k-1}$. This system as a solution
        directly from the fact that $p > m \geq \card(F_k
        \setminus F_{k-1})$.
        \qedhere
    \end{description}
\end{proof}

Before proving~\Cref{lemma:bound-sterms,lemma:bound-on-pzero},
we need the following result on system of divisibility constraints with the elimination property, that will later be used also in the proof of~\Cref{claim:still-increasing}.

\begin{lemma}\label{lemma:elim-prop-lv} Let
  $\Phi(x_1,\dots,x_d)$ be a system of divisibility with
  the elimination property for the order $x_1 \incord
  \dots \incord x_d$. For every primitive term $f$ and $j
  \in [1,d]$, the set $F \coloneqq \{g : {(f
  \div g)} \text{ appears in } \Phi \}$ has at most one
  element with leading variable $x_j$.
\end{lemma}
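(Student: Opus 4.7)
My plan would be a short proof by contradiction, exploiting the linear-independence clause of the elimination property.

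First I would dispose of a trivial case: if no divisibility with $f$ on the left-hand side occurs in $\Phi$, then $F = \emptyset$ and the conclusion holds vacuously. Otherwise $f$ does appear as a left-hand side and, being already primitive, is its own primitive part, so the elimination property for $\incord$ applies to $f$. For every $k \in [0,d]$ this yields that the set
\[
  F_k \coloneqq \{g \in F : \lv(g) \incordeq x_k\}
\]
is linearly independent over $\ZZ$ and spans $\module_f(\Phi) \cap \ZZ[x_1,\dots,x_k]$.

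Now suppose towards a contradiction that there exist distinct $g_1, g_2 \in F$ with $\lv(g_1) = \lv(g_2) = x_j$, and let $a_1, a_2 \in \ZZ \setminus \{0\}$ be the respective coefficients of $x_j$ in $g_1$ and $g_2$. The polynomial $h \coloneqq a_2 \cdot g_1 - a_1 \cdot g_2$ lies in $\ZZ[x_1,\dots,x_{j-1}]$ because the $x_j$-terms cancel, and it lies in $\module_f(\Phi)$ because the module is closed under $\ZZ$-linear combinations and both $g_1, g_2 \in \module_f(\Phi)$. Applying the elimination property at level $k = j-1$, I can write $h$ as a $\ZZ$-linear combination of the elements of $F_{j-1}$, say $h = \sum_{b \in F_{j-1}} c_b \cdot b$.

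Rearranging gives $a_2 \cdot g_1 - a_1 \cdot g_2 - \sum_{b \in F_{j-1}} c_b \cdot b = 0$, a $\ZZ$-linear combination of elements of $F_j$ (note $g_1, g_2 \in F_j$ and $F_{j-1} \subseteq F_j$, and $g_1 \neq b$ for every $b \in F_{j-1}$ since the leading variables differ). This combination is non-trivial because its coefficient on $g_1$ is $a_2 \neq 0$, contradicting the linear independence of $F_j$ furnished by the elimination property at level $j$. The only subtle point, which I would flag explicitly but which is immediate, is that $g_1$ is genuinely distinct from every element of $F_{j-1}$, so that the coefficient $a_2$ really is a coefficient appearing in the dependence relation rather than being absorbed into some $c_b$.
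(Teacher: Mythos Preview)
Your proof is correct and follows essentially the same approach as the paper's: both dispose of the empty case, invoke the elimination property to get that each $F_k$ is linearly independent and spans $\module_f(\Phi) \cap \ZZ[x_1,\dots,x_k]$, form the combination $a_2 g_1 - a_1 g_2$ (which the paper writes as the $S$-polynomial $S(g_1,g_2)$), express it over $F_{j-1}$, and derive a nontrivial dependence in $F_j$. The only cosmetic difference is that the paper phrases the argument as an induction on $j$, but the inductive hypothesis is never actually used in the step, so your direct version is arguably cleaner.
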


\begin{proof}
  If $f$ does not appear in the left-hand side of a divisibility of $\Phi$,
  then $F = \emptyset$ and the lemma holds. Suppose $f$ in a left-hand side.
  For simplicity, let us define $x_0 \coloneqq \bot$. By
  definition, for every ${k \in [0,d]}$, the elimination
  property forces $\{ g_1,\dots,g_\ell \}
  \coloneqq \{ g :{ \lv(g) \incordeq x_k} \text{ and } {f
  \div g} \text{ appears in } \Phi \}$ to be such that
  $g_1,\dots,g_\ell$ are linearly independent
  polynomials forming a basis for $\module_f(\Phi) \cap
  \ZZ[x_1,\dots,x_k]$. Given $k \in [0,d]$, let us write
  $F_k \coloneqq \{g : \lv(g) \incordeq x_k
  \text{ and }(f \mid g) \text{ appear in }\Phi\}$. For $j
  \in [1,d]$, by the elimination property, $F_{j-1}$ and
  $F_j$ are sets of linearly independent vectors, that
  respectively generates $\module_f(\Phi) \cap
  \ZZ[x_1,\dots,x_{j-1}]$ and $\module_f(\Phi) \cap
  \ZZ[x_1,\dots,x_{j}]$. 
  To conclude the proof, we show by induction on $j$ that the set 
  $F_j$ has at most one element with leading variable $x_j$.
  \begin{description}
      \item[base case $j = 0$.] 
          In this case $F_0$ only contains constant polynomials (and might be empty, in that case it generates the subspace $\{0\}$). By elimination property, $F$ is a set of linearly independent vectors, hence $F_0$ contains at most one element.
      \item[induction step $j \geq 1$.]
          \emph{Ad absurdum}, suppose there are two distinct $g_1,g_2 \in F_j \setminus F_{j-1}$ such that $\lv(g_1) = \lv(g_2) = x_j$. By definition of $S$-polynomial, $S(g_1,g_2) \in \module_f(\Phi) \cap \ZZ[x_1,\dots,x_{j-1}]$. Since $F_{j-1}$ generates $\module_f(\Phi) \cap \ZZ[x_1,\dots,x_{j-1}]$, there is a sequence of integers $(\lambda_h)_{h \in F_{j-1}}$ such that 
          $\sum_{h \in F_{j-1}} \lambda_h \cdot h = S(g_1,g_2)$. 
          However, $F_{j-1} \cup \{g_1,g_2\} \subseteq F_j$ (by definition) and $F_j$ is a set of linearly independent vectors. Therefore, every $\lambda_h$ above must be $0$, and we obtain $S(g_1,g_2) = 0$, i.e., $g_1$ and $g_2$ are linearly dependent, in contradiction with $g_1,g_2 \in F_j$. 
          \qedhere
  \end{description}
\end{proof}

\LemmaBoundSterms*

\begin{proof}
  Consider a primitive term $f$. If $f$ is not a primitive part of any $f_i$, with $i \in [1,m]$,
  then $\sfterms{f}{\Phi} = \terms(\Phi)$ and so $\sfterms{f}{\Phi}$ is included in any $\sfterms{f'}{\Phi}$ where $f'$ is a primitive part of a left-hand side of $\Phi$. Hence, we can upper bound $\card{\sterms(\Phi)}$ and $\bitlength{\norminf{\Phi}}$ by only looking at these primitive parts.\\[6pt]
  \inproof{\ref{lemma:bound-sterms-one}}
    For $f$ primitive part of some polynomials in a left-hand side of $\Phi$, the elements of
    $\sfterms{f}{\Phi}$ have the form
        $S\big(g_k, S (g_{k-1}, \dots S(g_1, h))\big)$
    where $h \in \terms(\Phi)$ and $f
    \mid g_i$ is a divisibility in $\Phi$, for all $i \in [1,k]$. Moreover, each
    $g_i$ has the same leading variable as
    $h_i \coloneqq S(g_{i-1},S(g_{i-2}, \dots, S(g_1,h)))$.
    Since $\Phi$ has the elimination property, by \Cref{lemma:elim-prop-lv},  
    given $h_i$ there is at most one $g$ 
    such that $f \div g$ and $\lv(g) = \lv(h_i)$; that is~$g_i$.
    Therefore, each element of $\sfterms{f}{\Phi}$ 
    can be characterized 
    by a pair $(k,h)$ where $h \in \terms(\Phi)$ and $k \in [0,d+1]$, i.e., $\card{\sfterms{f}{\Phi}} \leq \card{\terms(\Phi)} \cdot (d+2) \leq 2 \cdot m \cdot (d+2)$, since $\card{\terms(\Phi)} \leq 2 \cdot m$.
    The number of $f$ to be considered is bounded by $m$, i.e., the number of left-hand sides, which means 
    $\card{\sterms(\Phi)} \leq 2 \cdot m^2 (d+2)$.\\[6pt]
    \inproof{\ref{lemma:bound-sterms-two}}
    Recall that $\maxbl{f}$ is the maximum bit length of a coefficient or  constant of a polynomial~$f$, and that $\maxbl{R} = \max_{f \in R}\maxbl{f}$ for a finite set $R$ of polynomials. By examinating the definition of $S$-polynomial, we get that for every $f$ and $g$,
    ${\maxbl{S(f,g)}} \leq {\maxbl{f}} + {\maxbl{g}} + 1$. Let $f$ be a primitive polynomial. As discussed in the proof of~\eqref{lemma:bound-sterms-one}, an element of $\sfterms{f}{\Phi}$ is of the form $S\big(g_k, S (g_{k-1}, \dots S(g_1, h))\big)$, where $h \in \terms(\Phi)$, $f \mid g_i$ is a divisibility in $\Phi$, for all $i \in [1,k]$, and $ k \leq d+1$. 
    Then, $\maxbl{S\big(g_k, S (g_{k-1}, \dots S(g_1, h))\big)} \leq \maxbl{h} + \big(\sum_{i=1}^{k} \maxbl{g_i}\big) + k$.
    We conclude that $\maxbl{\sterms(\Phi)} \leq (d+2) \cdot (\maxbl{\Phi}+1)$.
  \end{proof}

\LemmaBoundOnPzero*

\begin{proof}
  We first analyse $\log_2(\Pi \pdiff(\Phi))$.
  Recall that $\pdiff(\Phi)$ is the set of those primes $p$ such that 
  either (i) $p \leq m$ or (ii) $p$ divide a coefficient
  or a constant of a left-hand side of $\Phi$. 
  The product of the primes satisfying (i) is bounded by $m! \leq m^m$. 
  The product of the primes satisfying (ii) is bounded by the product of
  the coefficients or the constants in the left-hand sides of
  $\Phi$, which is at most
  $\norminf{\Phi}^{m \cdot (d+1)}$.
  From these two bounds,
  we obtain the bound on $\log_2(\Pi \pdiff(\Phi))$ stated in the lemma.

  Let us analyse~$\log_2(\Pi \pzero(\Phi))$.
  Without loss of generality, assume that the order
  $\incord$ is such that ${x_1 \incord \dots  \incord x_d}$.
  We consider the three conditions defining $\pzero(\Phi)$
  separately, and establish upper bounds for each of them.
  Recall that the number of primes dividing $n \in \ZZ$ is
  bounded by $\log_2(n)$, and that \Cref{lemma:bound-sterms}
  implies $\card{S(\sterms(\Phi))} \leq 8 \cdot m^4 
  (d+2)^2$ and $\maxbl{S(\sterms(\Phi))} \leq 2 \cdot (d+2)
  \cdot (\maxbl{\Phi}+1)+1$.

  \begin{description}
    \item[\ref{pzero:1}:] 
    Directly from the bounds above, the primes satisfying~\ref{pzero:1} 
    are at most ${8 \cdot m^4 (d+2)^2}$, and thus the $\log_2$ of their product is at most ${8 \cdot m^4 (d+2)^2 \log_2(8 \cdot m^4 (d+2)^2)}$, which is bounded by ${64 \cdot m^5 (d+2)^3}$.

    \item[\ref{pzero:2}:] The product of the primes dividing a coefficient or constant of a polynomial $f$ in $S(\sterms(\Phi))$ is bounded by the product of these coefficients and constants. There are at most $(d+1) \cdot \card{S(\sterms(\Phi))}$ such coefficients and constants. Therefore, the $\log_2$ of this product is bounded by $(d+1) \cdot \card{S(\sterms(\Phi))} \cdot \maxbl{S(\sterms(\Phi))}$, which is bounded by ${16 \cdot m^4 (d+2)^4 (\maxbl{\Phi}+2)}$.

    \item[\ref{pzero:3}:] 
    If $f$ is a primitive term such that $a \cdot f$ does not occur in the left-hand sides of $\Phi$, for any $a \in \ZZ \setminus \{0\}$, then $\sfterms{f}{\Phi} = \terms(\Phi)$ and $\module_f(\Phi) = \ZZ f$, and therefore $\lambda$, if it exists, equals to~$1$.
    Consider $f$ primitive such that $a \cdot f \in \terms(\Phi)$ appears on the left-hand side of a divisibility in $\Phi$, for some $a \in \ZZ \setminus \{0\}$, and consider 
    $g \in \sfterms{f}{\Phi}$. We first compute a bound on the minimal positive $\lambda$ such that $\lambda \cdot g \in \module_f(\Phi)$, if such a $\lambda$ exists. 
    Let $x_j \coloneqq \lv(g)$, with $j \in [0,d]$ and $x_0 \coloneqq \bot$.
    Consider the set $\{h_1,\dots,h_\ell\} \coloneqq \{h : \lv(h) \leq \lv(g) \text{ and } f \div h \text{ is in } \Phi \}$; where $\ell \leq m$. 
    From the elimination property, this set is a basis for $\module_f(\Phi) \cap \ZZ[x_1,\dots,x_j]$, and therefore $\lambda$ exists if and only if ${\ZZ g \cap \ZZ h_1 + \dots + \ZZ h_\ell \neq \{0\}}$.
    Then let $K$ be a basis for the kernel of the matrix representing the set $\{-g,h_1,\dots,h_{\ell}\}$. As observed in the context of~\Cref{procedure:span-divmod}, if $\lambda$ exists then it is the GCD~of the row of $K$ corresponding to $-g$. 
    From~\Cref{corollary:kernel-gcd},
    $\lambda \leq \big (m+3)^{m+3}\max(2,\norminf{\Phi})^{m+2}$.
    In the proof of~\Cref{lemma:bound-sterms} we have shown $\card{\sfterms{f}{\Phi}} \leq 2 \cdot m \cdot (d+2)$, hence the number of pairs $(f,g)$ to consider is bounded by~$2 \cdot m^2 \cdot (d+2)$.
    Similarly to \ref{pzero:2}, the product of the primes dividing all $\lambda$s is bounded by the product of these $\lambda$s, which is at most $\big( (m+3)^{m+3}\max(2,\norminf{\Phi})^{m+2}\big)^{2 \cdot m^2 \cdot (d+2)}$.
    Therefore, 
    the $\log_2$ of the product of the primes satisfying~\ref{pzero:3} is at most 
    $32 \cdot m^4 (d+2) \cdot (\maxbl{\Phi} + 1)$.
  \end{description}
  Summing up the bounds we have just obtained yield 
  the bound stated in the lemma.
\end{proof}

\section{\Cref{theorem:local-to-global}: proofs
of~\Cref{claim:still-increasing}
and~\Cref{claim:new-primes-are-ok}}
\label{appendix:lemmas-local-to-global}

In this section, we prove~\Cref{claim:still-increasing}
and~\Cref{claim:new-primes-are-ok}, which are required to
establish~\Cref{theorem:local-to-global}. In the context of
this theorem, recall that $\Psi(\vec x, \vec y)$ is a
formula that is increasing for $(X_1
\incord \dots \incord X_r)$ and has the elimination property
for an order~$(\incord) \in (X_1
\incord \dots \incord X_r)$. Here, $\vec x =
(x_1,\dots,x_d)$ are the variables appearing in $X_1$,
ordered as $x_1 \incord \dots \incord x_d$, and $\vec y$ are
the variables appearing in $\bigcup_{j=2}^{r} X_j$. We also
have solutions~$\vec b_p$ for $\Psi$ modulo $p$, for
every~$p \in \pzero(\Psi)$, and we have inductively computed
a map $\vec \nu \colon X_1 \to \ZZ$ the following three
properties:
\begin{description}
    \item[\textlabel{IH1}{apx:th:l-t-g:internal:IH1}:] 
    For every $p \in \pzero(\Psi)$ and $x \in X_1$, \ $\vec
    \nu(x) \equiv b_{p,x} \pmod {p^{\mu_p + 1}}$, where
    $b_{p,x}$ is the entry of $\vec b_p$ corresponding to
    $x$, and $\mu_p \coloneqq \max \{ v_p(f(\vec b_p)) \in
    \NN : f \text{ is in the left-hand side of a
    divisibility of } \Psi \}$.
    \item[\textlabel{IH2}{apx:th:l-t-g:internal:IH2}:] 
    For every prime $p \notin \pzero(\Psi)$ and for every
    $h,h' \in \sterms(\Psi)$ with leading variable in $X_1$,
    if $S(h, h')$ is not identically zero, then $p$ does not
    divide both $h(\vec \nu (\vec x))$ and $h'(\vec \nu
    (\vec x))$. 
    \item[\textlabel{IH3}{apx:th:l-t-g:internal:IH3}:]
    $h(\vec \nu (\vec x)) \neq 0$ for every $h \in
    \sterms(\Psi)$ that is non-zero and with $\lv(h) \in
    X_1$.
\end{description}

\noindent
The formula $\Psi'(\vec y)$ considered
in~\Cref{claim:still-increasing}
and~\Cref{claim:new-primes-are-ok} is defined as $\Psi'
\coloneqq \Psi\substitute{\vec \nu(x)}{x : x \in X_1}$.

\ClaimStillIncreasing*

At first glance,~\Cref{claim:still-increasing} might appear
intuitively true: since the notion of $r$-increasing form is
mainly a property on sets $X_1 \incord \dots \incord X_r$ of
orders of variables, and during the proof
of~\Cref{theorem:local-to-global} we are inductively
handling the smallest set $X_1$, it might seem trivial that
instantiating the variables in $X_1$ preserve increasingness
for $X_2 \incord \dots \incord X_r$. However, in general,
this is not the case. 
To see this, we repropose the example given in~\Cref{sec:intro-local-global}.
Consider the system of
divisibility constraints~$\Psi$ in increasing form for the order $u
\incord v \incord x \incord y \incord z$ and with the
elimination property for that order: 
\begin{align*}
    v &\div u + x + y\\
    v &\div x\\ 
    y+2 & \div z+1\\
    v &\div z\,.
\end{align*}
From the first two divisibilities, we have $(u+y) \in
\module_v(\Psi)$; i.e., $(u-2)+(y+2) \in \module_v(\Psi)$.
Therefore, if $u$ were to be instantiated as $2$, the
resulting formula $\Psi'$ would satisfy $(y+2) \in
\module_v(\Psi')$ and hence $(z+1) \in \module_v(\Psi')$,
from the third divisibility. Then, $1 \in \module_v(\Psi')$
would follow from the last divisibility, violating the
constraints of the increasing form. Fortunately, due to the
definition of $\sfterms{f}{\Psi}$, $u = 2$ contradicts the
property~\eqref{apx:th:l-t-g:internal:IH3} kept during the
proof of~\Cref{theorem:local-to-global}, meaning that the
above issue does not occur in our setting. Indeed, note that
$S(y+2,u+x+y) = 2-u-x$ is in $\sfterms{v}{\Psi}$, and so is
$S(2-u-x,x) = 2-u$. Then,~\eqref{apx:th:l-t-g:internal:IH3}
forces $2-u \neq 0$, excluding $u = 2$ as a possible
solution. This observation is the key to
establish~\Cref{claim:still-increasing}.

Given a set $A$ of polynomials, an integer $a \in \ZZ$ and a
variable $x$ occurring in those polynomials, we define
$A\substitute{a}{x} \coloneqq \{ f(a,\vec y) : f(x,\vec y)
\in A\}$, that is the set obtained by partially evaluating
$x$ as $a$ in all polynomials in $A$. This notion is
extended to sequences of value-variable pairs as
$A\substitute{a_i}{x_i : i \in I}$.

\begin{proof}[Proof of~\Cref{claim:still-increasing}]
    To show the statement, we consider an order $\incord'$
    in $(X_1 \incord \dots \incord X_r)$. Note that any
    order $(X_2 \incord \dots \incord X_r)$ can be
    constructed from elements in $(X_1 \incord \dots
    \incord X_r)$ by simply forgetting $X_1$. Let $\vec y =
    (y_1,\dots, y_j)$, with $y_1 \incord' \dots \incord'
    y_j$, be the variables in $\bigcup_{i=2}^r X_i$. To
    simplify the presentation, we denote by $a',b',\dots$
    and $f',g',\dots$ integers and polynomials related to
    $\Psi'$, and by $a,b,\dots$ and $f,g,\dots$ integers and
    polynomials related to $\Psi$. By definition of
    increasing form, we need to establish that for every $k
    \in [1,j]$ and primitive polynomial $f'(\vec y)$ such
    that $a' \cdot f'$ appears in the left-hand side of a
    divisibility in $\Psi'$, for some $a' \in \ZZ \setminus
    \{0\}$, and $\lv(f') = y_k$, we have
    $\module_{f'}(\Psi') \cap \ZZ[y_1,\dots,y_k] = \ZZ f'$.
    By definition of $\Psi'$ and since $a' \cdot f'$ appears
    in a left-hand side, there is a primitive polynomial
    $f(\vec x, \vec y)$ and a scalar $a \in \ZZ \setminus
    \{0\}$ such that $a \cdot f$ is in the left-hand side of
    some divisibility in $\Psi$, and $a' \cdot f'(\vec y) =
    a \cdot f(\vec \nu(\vec x),\vec y)$. Note that this
    implies $a \div a'$ and $\lv(f) \not\in X_1$. We prove
    that $\frac{a'}{a} \cdot \module_{f'}(\Psi') \subseteq
    \module_{f}(\Psi)\substitute{\vec \nu(x)}{x : x \in
    X_1}$. Note that this inclusion implies $\Psi'$ in
    increasing form. To see this, take $g' \in
    \module_{f'}(\Psi') \cap \ZZ[y_1,\dots,y_k]$. We have
    $\frac{a'}{a} \cdot g' \in
    \module_{f}(\Psi)\substitute{\vec \nu(x)}{x : x \in
    X_1}$, and thus there is $g(\vec x, \vec y) \in
    \module_f(\Psi)$ such that $\frac{a'}{a} \cdot g' =
    g(\vec \nu(\vec x), \vec y)$. Since $\lv(g') \incord'
    y_k$, we have $\lv(g) \incord' y_k$. Since $\Psi$ is
    increasing for $\incord'$, we conclude that $g \in \ZZ
    f$. Note that $(\ZZ f)\substitute{\vec \nu(x)}{x \in
    X_1} \subseteq \ZZ f'$. Then $\frac{a'}{a} \cdot g' \in
    \ZZ f'$. Since $f'$ is primitive, we get $g' \in \ZZ
    f'$. This shows $\module_{f'}(\Psi') \cap
    \ZZ[y_1,\dots,y_k] \subseteq \ZZ f'$, and the other
    inclusion directly follows by definition of
    $\module_{f'}(\Psi')$. We conclude that $\Psi'$ is
    increasing. 

    To conclude the proof of~\Cref{claim:still-increasing},
    let us show that $\frac{a'}{a} \cdot \module_{f'}(\Psi')
    \subseteq \module_{f}(\Psi)\substitute{\vec \nu(x)}{x :
    x \in X_1}$. By definition of $\module_{f'}(\Psi')$,
    this follows as soon as we prove the following three
    properties:
    \begin{enumerate}[(A)]
        \item\label{still-incr-A} $\frac{a'}{a} \cdot f'$
        belongs to $\module_{f}(\Psi)\substitute{\vec
        \nu(x)}{x:x\in X_1}$,
        \item\label{still-incr-B}
        $\module_{f}(\Psi)\substitute{\vec \nu(x)}{x:x\in
        X_1}$ is a $\ZZ$-module, and 
        \item\label{still-incr-C} If $g' \div h'$ is a
        divisibility in $\Psi'$ and $b' \cdot g' \in
        \module_{f}(\Psi)\substitute{\vec \nu(x)}{x:x\in
        X_1}$ for some $b' \in \ZZ\setminus\{0\}$, then $b'
        \cdot h' \in  \module_{f}(\Psi)\substitute{\vec
        \nu(x)}{x:x\in X_1}$.
    \end{enumerate}
    By definition of divisibility module, $\frac{a'}{a}
    \cdot \module_{f'}(\Psi')$ is the smallest set that
    satisfies the three properties above, and therefore it
    must be included in $\module_{f}(\Psi)\substitute{\vec
    \nu(x)}{x:x\in X_1}$.\\[6pt]
    \inproof{\ref{still-incr-A}} By definition of $f$,\,
    $a' \cdot f' = a \cdot f(\vec \nu(\vec x), \vec y)$ and
    $a \div a'$, hence $\frac{a'}{a} \cdot f' = f(\vec
    \nu(\vec x), \vec y)$, and by definition of divisibility
    module $f(\vec \nu(\vec x), \vec y) \in
    \module_{f}(\Psi)\substitute{\vec \nu(x)}{x:x\in
    X_1}$.\\[6pt]
    \inproof{\ref{still-incr-B}} This follows directly
    from the definition of divisibility module being a
    $\ZZ$-module. Indeed, substitutions preserve the notion
    of $\ZZ$-module.\\[6pt]
    \inproof{\ref{still-incr-C}} This property follows
    from our definition of $\sfterms{f}{\Psi}$ together with
    the property~\eqref{apx:th:l-t-g:internal:IH3} and the
    fact that $\Psi$ has the elimination property for the
    order~$\incord$ (not to be confused with the
    order~$\incord'$, which does not guarantee the
    elimination property). Consider a divisibility $g'(\vec
    y) \div h'(\vec y)$ occurring in $\Psi'$ and $b' \in \ZZ
    \setminus \{0\}$ such that $b' \cdot g' \in
    \module_{f}(\Psi)\substitute{\vec \nu(x)}{ x : x \in
    X_1}$. By definition of $\Psi'$, there is a divisibility
    $g(\vec x, \vec y) \div h(\vec x, \vec y)$ in $\Psi$
    such that $g' = g(\vec \nu(\vec x), \vec y)$ and $h' =
    h(\vec \nu(\vec x), \vec y)$. Also, by definition of
    $\module_{f}(\Psi)\substitute{\vec \nu(x)}{ x : x \in
    X_1}$, there is a polynomial $\widehat{g}(\vec x, \vec
    y) \in \module_{f}(\Psi)$ such that $b' \cdot g' =
    \widehat{g}(\vec \nu(\vec x), \vec y)$.

    To conclude the proof, it suffices to show that $b'
    \cdot g = \widehat{g}$. Indeed, since $g \div h$ appears
    in $\Psi$ and $\widehat{g} \in \module_f(\Psi)$, we then
    get $b' \cdot h \in \module_f(\Psi)$ by the definition
    of divisibility module, which implies $b' \cdot h' \in
    \module_{f}(\Psi)\substitute{\vec \nu(x)}{x:x\in X_1}$
    by definition of $h$; concluding the proof.

    Since $\widehat{g} \in \module_f(\Psi)$ and $\Psi$ has
    the elimination property for~$\incord$, there are
    linearly independent polynomials $h_1,\dots,h_\ell$ such
    that the divisibilities $f \div h_i$ appear in $\Psi$
    and there are ${\lambda_1, \dots,\lambda_{\ell} \in \ZZ
    \setminus \{0\}}$ such that $\widehat{g} =
    \sum_{i=1}^\ell \lambda_i \cdot h_i$. Thanks
    to~\Cref{lemma:elim-prop-lv}, we can arrange these
    polynomials so that $\lv(h_1) \incord \dots \incord
    \lv(h_\ell)$. We write $c_i$ for the coefficient
    corresponding to the leading variable of~$h_i$. Since
    $\lv(f) \not\in X_1$ (stated earlier) and $\Psi$ is
    increasing, ${\lv(h_i) \in \bigcup_{k=2}^r X_k}$ holds
    for every $i \in [1,\ell]$. From $g' = g(\vec \nu(\vec
    x), \vec y)$ and $b' \cdot g' = \widehat{g}(\vec
    \nu(\vec x), \vec y)$ we directly get $b' \cdot g(\vec
    \nu(\vec x), \vec y) = \widehat{g}(\vec \nu(\vec x),
    \vec y)$. Therefore, $(b' \cdot g - \widehat{g})(\vec
    \nu(\vec x), \vec y) = 0$, implying that $b' \cdot g -
    \widehat{g}$ is either constant or has its leading
    variable in $X_1$. This implies that $b' \cdot g -
    \sum_{i = 1}^\ell \lambda_i \cdot h_i$ is either
    constant or has its leading variable in $X_1$. Since the
    $\lambda_i$ are non-zero, and moreover $\lv(h_i)$ is not
    in $X_1$ and $\lv(h_1) \incord \dots \incord
    \lv(h_\ell)$, we have $\lv(b' \cdot g - \sum_{i =
    k+1}^\ell \lambda_i \cdot h_i) = \lv(h_k)$ for every $k
    \in [1,\ell]$, and the coefficient corresponding to the
    leading variable of $b' \cdot g - \sum_{i
    =k+1}^{\ell}\lambda_i\cdot h_i$ is exactly $\lambda_k
    \cdot c_k$. 
    
    We show by induction on $k \in [1,\ell+1]$, with base
    case $k = \ell+1$, that $\alpha_{k} \cdot (b' \cdot g -
    \sum_{i=k}^\ell \lambda_i \cdot h_i) = b' \cdot
    S(g,h_\ell,\dots,h_k)$, where $\alpha_k \coloneqq
    \prod_{i=k}^\ell c_i$, and $S(f_1,\dots,f_n)$ is short
    for $S( \dots (S(f_1,f_2),\dots),f_n)$; e.g.,
    $S(f_1,f_2,f_3) = S(S(f_1,f_2),f_3)$.

    \begin{description}
        \item[base case $k = \ell+1$:] For the base case,
        $\alpha_{\ell+1} = 1$ and the equivalence becomes
        $b' \cdot g = b' \cdot g$. 
        \item[induction step $k \leq \ell$:]
        we have ${\alpha_{k+1} (b' \cdot g -
        \sum_{i=k+1}^\ell \lambda_i \cdot h_i) = b' \cdot
        S(g,h_\ell,\dots,h_{k+1})}$ by induction hypothesis.
        Note that, from the left-hand side of this equation,
        the coefficient corresponding to the leading
        variable of $b' \cdot S(g,h_\ell,\dots,h_{k+1})$ is
        $c_k \cdot \alpha_{k+1} \cdot \lambda_k$. Then,
        {\allowdisplaybreaks
        \begin{flalign*}
            & \alpha_k \cdot (b' \cdot g - \sum_{i=k}^\ell
            \lambda_i \cdot h_i)\\ 
            =\ & c_k \cdot \alpha_{k+1} (b' \cdot g -
            \sum_{i=k}^\ell \lambda_i \cdot h_i)
            &\text{definition of}~\alpha_{k}\\
            =\ & c_k \cdot \alpha_{k+1} (b' \cdot g -
            \sum_{i=k+1}^\ell \lambda_i \cdot h_i) - c_k
            \cdot \alpha_{k+1} \cdot \lambda_{k} \cdot
            h_{k}\\
            =\,& c_k \cdot (b' \cdot
            S(g,h_\ell,\dots,h_{k+1})) - (c_k \cdot
            \alpha_{k+1} \cdot \lambda_{k}) \cdot h_{k}
            &\text{induction hypothesis}\\
            =\,& S(b' \cdot S(g,h_\ell,\dots,h_{k+1}), h_k)
            &\text{coeff.~leading var.}~h_k~\text{is}~c_k\\
            &&\hspace{-3cm}\text{coeff.~leading var.}~(b'
            \cdot
            S(g,h_\ell,\dots,h_{k+1}))~\text{is}~ c_k \cdot
            \alpha_{k+1} \cdot \lambda_k\\
            =\,& b' \cdot S(g,h_{\ell},\dots,h_k)
            &\hspace{-3cm}S(b' \cdot f_1,f_2) = b' \cdot
            S(f_1,f_2), \text{by definition of }
            S\text{-polynomial}.
        \end{flalign*}
        }
    \end{description}
    Thanks to the equality $\alpha_{k} \cdot (b' \cdot g -
    \sum_{i=k}^\ell \lambda_i \cdot h_i) = b' \cdot
    S(g,h_\ell,\dots,h_k)$ we just established, we conclude
    that $\alpha_{1} \cdot (b' \cdot g - \widehat{g}) = b'
    \cdot S(g,h_\ell,\dots,h_1)$. Moreover, from $\lv(b'
    \cdot g - \sum_{i = k+1}^\ell \lambda_i \cdot h_i) =
    \lv(h_k)$ we conclude that
    $\lv(S(g,h_\ell,\dots,h_{k+1})) = \lv(h_{k})$, for every
    $k \in [1,\ell]$. Then, since ${g \in \terms(\Psi)}$ and
    the divisibilities ${f \div h_1},\, \dots,\, {f \div
    h_\ell}$ appear in $\Psi$, by definition of
    $\sfterms{f}{\Psi}$, we conclude that
    ${S(g,h_\ell,\dots,h_1) \in \sfterms{f}{\Psi}}$. Recall
    that $b' \cdot g - \widehat{g}$ is either constant or
    has its leading variable in $X_1$. The same is true for
    $S(g,h_\ell,\dots,h_1)$, and we have $(\alpha_{1} \cdot
    (b' \cdot g - \widehat{g}))(\vec \nu(\vec x)) = b' \cdot
    S(g,h_\ell,\dots,h_1)(\vec \nu(\vec x))$. From $(b'
    \cdot g - \widehat{g})(\vec \nu(\vec x)) = (b' \cdot g -
    \widehat{g})(\vec \nu(\vec x),\vec y) = 0$ and $b' \neq
    0$ we get ${S(g,h_\ell,\dots,h_1)(\vec \nu(\vec x)) =
    0}$. From the
    property~\eqref{apx:th:l-t-g:internal:IH3}, this can
    only occur when $S(g,h_\ell,\dots,h_1) = 0$, and so
    $\alpha_1 \cdot (b' \cdot g - \widehat{g}) = 0$. By
    definition $\alpha_1 \neq 0$, and therefore $b' \cdot g
    = \widehat{g}$, concluding the proof
    of~\eqref{still-incr-C}.
  \end{proof}
  
\ClaimNewPrimesAreOk* 

\begin{proof}
The first statement of the claim follows
from~\eqref{apx:th:l-t-g:internal:IH1} and the definition of
$\mu_p$ (the reasoning is analogous to the one in the base
case $r=1$ of the induction
of~\Cref{theorem:local-to-global}). For the second
statement, consider a prime $p$ not belonging
to~$\pzero(\Psi)$. We provide a solution~$\vec b_p$ for
$\Psi'(\vec y)$ modulo~$p$. Let $\vec y = (y_1,\dots,y_j)$
with $y_1 \incord \dots \incord y_j$. To compute $\vec b_p =
(b_{p,1},\dots,b_{p,j})$, where $b_{p,k}$ is the value
assigned to~$y_k$, we consider two cases that depend on
whether $p$ divides some~$\alpha_i$ appearing in the first
block of divisibilities
of~\Cref{eq:l-t-g:outer-induction-psi} (i.e., these are
the~$\alpha_i$~with~$i \in [\ell+1,n]$).

\begin{description}
  \item[case {$p \nmid \alpha_i$} for all {$i \in
    [\ell+1,n]$}.] This case is relatively simple. Starting
    from $y_1$ and proceeding in increasing order of
    variables, we compute $b_{p,k+1}$ ($k \in \NN$) by
    solving the system
    \begin{align}
      \label{eq:case-p-alphai}
      h(b_{p,1},\dots,b_{p,k},\,y_{k+1}) &\not\equiv 0 && \pmod p &
      h \in \terms(\Psi') \text{ s.t.}~\lv(h) = y_{k+1}.
    \end{align}
    With respect to the $h$ above, let us write
    $h(b_{p,1},\dots,b_{p,k}\,y_{k+1}) = c_h + a_h \cdot
    y_{k+1}$ where $c_h$ is the constant term obtained by
    partially evaluating $h$ with respect to
    $(b_{p,1},\dots,b_{p,k})$ and $a_h$ is the coefficient
    of $y_{k+1}$ in $h$. By definition of $\Psi'$, the term
    $h$ is obtained by substituting $\vec x$ for $\vec
    \nu(\vec x)$ in a polynomial of $\Psi$, and in that
    polynomial~$y_{k+1}$ has coefficient $a_h$. Since $p
    \not\in \pzero(\Psi)$, from Condition~\ref{pzero:2} we
    conclude that $p \nmid a_h$, and so $a_h$ has an inverse
    $a_h^{-1}$ modulo $p$. The system of non-congruences
    above is equivalent to the system $\mcS_{k+1}$ given by
    \begin{align*}
      y_{k+1} &\not\equiv -a_h^{-1} \cdot c_h && \pmod p &
      h \in \terms(\Psi') \text{ s.t.}~\lv(h) = y_{k+1}.
    \end{align*}
    From Condition~\ref{pzero:1} we have $p >
    \card{\terms(\Psi)} \geq \card{\terms(\Psi')}$, and so
    it suffices to take~$b_{p,k+1}$ to be an element in
    $[0,p-1]$ that differs from every $-a_h^{-1} \cdot c_h$
    appearing in the rows of~$\mcS_{k+1}$.

    The solution $\vec b_p$ resulting from the systems of
    non-congruences~$\mcS_{1},\dots,\mcS_j$ is such that,
    for every $h \in \terms(\Psi')$, $v_p(h(\vec b_p)) = 0$.
    Therefore, $\vec b_p$ is a solution for $\Psi'$ modulo
    $p$.

  \item[case {$p \div \alpha_i$} for some {$i \in
  [\ell+1,n]$}.] This case is  involved. Since $p$ divides
  some $\alpha_i = f_i(\vec \nu(\vec x))$, and $p \not \in
  \pzero(\Psi)$, by Condition~\ref{pzero:2} we have $p
  \div f(\vec \nu(\vec x))$, where $f$ is the primitive
  polynomial obtained by dividing every coefficient and
  constant of $f_i$ by $\gcd(f_i)$. 
  Recall that $\vec x = (x_1,\dots,x_d)$ with $x_1 \incord \dots \incord x_d \incord y_1 \incord \dots \incord y_j$, 
  and note that $\lv(f)
  \incordeq x_d$. Below, let us define $u \coloneqq v_p(f(\vec
  \nu(\vec x)))$. The idea is to use $f$ to iteratively
  construct the solution $\vec b_p$ for $\vec y =
  (y_1,\dots,y_j)$. We rely on the following induction
  hypotheses ($k \in [0,j]$):
  \begin{description}
    \item[\textlabel{IH1$^\prime$}{th:l-t-g:internal:IH1p}:] 
      \begin{minipage}[t]{0.95\linewidth}
        for every non-zero polynomial $g(\vec
        x,y_1,\dots,y_t) \in \terms(\Psi)$ such that $t \leq
        k$,\\
        \hphantom{.}\quad if $\ZZ g \cap \module_f(\Psi)
        \neq \{0\}$ then $v_p(g(\vec \nu(\vec
        x),b_{p,1},\dots,b_{p,t})) = u$, and 
      \end{minipage}

      \item[\textlabel{IH2$^\prime$}{th:l-t-g:internal:IH2p}:] 
      \begin{minipage}[t]{0.95\linewidth}
        for every non-zero polynomial $h(\vec
        x,y_1,\dots,y_t) \in \sfterms{f}{\Psi}$ such that $t
        \leq k$,\\ 
        \hphantom{.}\quad if $\ZZ h \cap \module_f(\Psi) =
        \{0\}$ then $v_p(h(\vec \nu(\vec
        x),b_{p,1},\dots,b_{p,t})) = 0$.
      \end{minipage}
  \end{description}

  Let us first show that by constructing $\vec b_p$ so that
  it satisfies the hypotheses above for $k = j$ implies that
  $\vec b_p$ is a solution for $\Psi'$ modulo $p$. Consider
  a divisibility $\alpha_i + f_i'(\vec  y) \mid \beta_i  +
  g_i'(\vec  y)$ in~$\Psi'$, with $i \in [\ell+1,m]$ and
  $f_i' = 0$ if $i \leq n$. Recall that $\alpha_i = f_i(\vec
  \nu(\vec x))$ and $\beta_i = g_i(\vec \nu(\vec x))$, and
  given $h \coloneqq f_i + f_i'$ and $h' \coloneqq g_i +
  g_i'$, the divisibility $h \div h'$ occurs~in $\Psi$. We
  have two cases:
  \begin{itemize}
    \item $\ZZ h \cap \module_f(\Psi) \neq \{0\}$. In this
    case, by definition of $\module_f(\Psi)$ we have $\ZZ h'
    \cap \module_f(\Psi) \neq \{0\}$. According
    to~\eqref{th:l-t-g:internal:IH1p}, $v_p(h(\vec \nu(\vec
    x), \vec b_p)) = v_p(h'(\vec \nu(\vec x), \vec b_p)) =
    u$. By definition of $h$ and $h'$, we get $v_p(\alpha_i
    + f_i'(\vec b_p)) = v_p(\beta_i  + g_i'(\vec b_p)) = u$.
    Note that $f(\vec \nu(\vec x))$ is non-zero
    by~\eqref{apx:th:l-t-g:internal:IH3}, hence its $p$-adic
    evaluation $u$ belongs to $\NN$, which forces $\alpha_i
    + f_i'(\vec b_p)$ to be non-zero.
    \item $\ZZ h \cap \module_f(\Psi) = \{0\}$. Recall that
    $\terms(\Psi) \subseteq \sfterms{f}{\Psi}$, by
    definition. Hence, directly from
    \eqref{th:l-t-g:internal:IH2p} we get $v_p(h(\vec
    \nu(\vec x), \vec b_p)) = v_p(\alpha_i + f_i'(\vec b_p))
    = 0$. This implies $\alpha_i + f_i'(\vec b_p)$ non-zero,
    and moreover $v_p(\alpha_i + f_i'(\vec  b_p)) \leq
    v_p(\beta_i  + g_i'(\vec  b_p))$ no matter what is the
    value of $v_p(\beta_i  + g_i'(\vec  b_p))$.
  \end{itemize}
  Note moreover that~\eqref{th:l-t-g:internal:IH1p}
  and~\eqref{th:l-t-g:internal:IH2p} directly imply~$\max \{
  v_p(g(\vec b_p)) \in \NN : g \in \terms(\Psi') \} \leq u$. 

  To conclude the proof, we show how to construct $\vec b_p$
  satisfying~\eqref{th:l-t-g:internal:IH1p}
  and~\eqref{th:l-t-g:internal:IH2p}. 

  \begin{description}
    \item[base case $k = 0$.] 
      We establish~\eqref{th:l-t-g:internal:IH1p}
      and~\eqref{th:l-t-g:internal:IH2p} for polynomials
      with variables in~$\vec x$, by showing the three
      properties below, for every non-zero polynomial $h \in
      \sterms(\Psi)$ with~${\lv(h) \incordeq x_{d}}$.
      \begin{enumerate}[(A)]
        \item\label{p-adic-zero} Either~${\ZZ f \cap \ZZ h
        \neq \{0\}}$ or~$p \nmid h(\vec \nu(\vec x))$.
        \item\label{p-adic-A} If $\ZZ f \cap \ZZ h \neq
        \{0\}$, then $v_p(h(\vec \nu(\vec x))) = v_p(f(\vec
        \nu(\vec x)))$.
       
        \item\label{p-adic-B} If $p \nmid h(\vec \nu(\vec
        x))$ then $v_p(h(\vec \nu(\vec x))) = 0$ and $\ZZ h
        \cap \module_f(\Psi) = \{0\}$.
      \end{enumerate}

  These three items imply~\eqref{th:l-t-g:internal:IH1p}
  and~\eqref{th:l-t-g:internal:IH2p}. To
  establish~\eqref{th:l-t-g:internal:IH1p}, take $g(\vec x)
  \in \terms(\Psi)$ such that $\ZZ g \cap \module_f(\Psi)
  \neq \{0\}$. From~\eqref{p-adic-B} we must have $p \div
  g(\vec \nu(\vec x))$. Hence, ${\ZZ f \cap \ZZ h \neq
  \{0\}}$ by~\eqref{p-adic-zero}, and from~\eqref{p-adic-A}
  we get $v_p(h(\vec \nu(\vec x))) = v_p(f(\vec \nu(\vec
  x)))$. For~\eqref{th:l-t-g:internal:IH2p}, take $h(\vec x)
  \in \sfterms{f}{\Psi}$ such that $\ZZ h \cap
  \module_f(\Psi) = \{0\}$. By definition of
  $\module_f(\Psi)$, $\ZZ h \cap \ZZ f = \{0\}$ and so $p
  \nmid h(\vec \nu(\vec x))$ by~\eqref{p-adic-zero}.
  From~\eqref{p-adic-B}, $v_p(h(\vec \nu(\vec x))) = 0$. We
  conclude the base case by
  establishing~\eqref{p-adic-zero}--\eqref{p-adic-B}.\\[6pt]
  \inproof{\eqref{p-adic-zero}} Since~$\Psi$ has the
      elimination property, $f \in \terms(\Psi)$.
      Then,~\eqref{p-adic-zero} follows directly
      from~\eqref{apx:th:l-t-g:internal:IH2}; remark that
      $S(f,h)=0$ is equivalent to $\ZZ f \cap \ZZ h \neq
      \{0\}$.\\[6pt]
  \inproof{\eqref{p-adic-A}} By $\ZZ f \cap \ZZ h \neq
  \{0\}$ there are $\lambda_1,\lambda_2 \in \ZZ \setminus
  \{0\}$ such that $\lambda_1 \cdot f = \lambda_2 \cdot h$.
  Without loss of generality, $\gcd(\lambda_1,\lambda_2) =
  1$, and thus $\gcd(\lambda_2,\gcd(f)) = \lambda_2$. The
  polynomial $f$ is primitive, hence $\lambda_2 = 1$ and we
  get $h = \lambda_1 \cdot f$. Since $p \not \in
  \pzero(\Psi)$, from Condition~\ref{pzero:2} and
  $\lambda_1 \div \gcd(h)$ we derive $p \nmid \lambda_1$.
  Therefore, $v_p(h(\vec \nu(\vec x))) = v_p(\lambda_1 \cdot
  f(\vec \nu(\vec x))) = v_p(f(\vec \nu(\vec x)))$.\\[6pt]
  \inproof{\eqref{p-adic-B}} Trivially, $p \nmid h(\vec
  \nu(\vec x))$ equals $v_p(h(\vec \nu(\vec x))) = 0$. To
  show ${\ZZ h \cap \module_f(\Psi) = \{0\}}$, first note
  that $\ZZ h \cap \ZZ f = \{0\}$, directly from~$p \div
  f(\vec \nu(\vec x))$ and~\eqref{p-adic-A}. \emph{Ad
  absurdum}, assume $\ZZ h \cap \module_f(\Psi) \not=
  \{0\}$. Since $\Psi$ is increasing for $\vec \chi
  \coloneqq (X_1 \incord \dots \incord X_r)$, and $\lv(h)$
  and $\lv(f)$ are both in $X_1$, $\Psi$ is increasing no
  matter the order of the variables imposed on $X_1$. Take
  an~order $(\incord^\prime) \in \vec \chi$ for which
  $\lv_{\incord^\prime}(h) \, \incordeq^\prime \,
  \lv_{\incord^\prime}(f)$, and let ${x_1^\prime
  \incord^\prime \dots \incord^\prime x_d^\prime}$ be the
  order for the variables $x_1,\dots,x_d$. Since $\Psi$ is
  increasing for~$\incord^\prime$, $\module_f(\Psi) \cap
  \ZZ[x_1^\prime,\dots,x_{\lv_{\incord^\prime}(f)}^\prime] =
  \ZZ f$. However, $\ZZ h \subseteq
  \ZZ[x_1^\prime,\dots,x_{\lv_{\incord^\prime}(f)}^\prime]$
  by definition of $\incord^\prime$, hence from $\ZZ h \cap
  \module_f(\Psi) \not= \{0\}$ we obtain $\ZZ h \cap \ZZ f
  \not= \{0\}$, a contradiction. This
  proves~\eqref{p-adic-B}. 

  \item[induction step.] Let us assume that
  $b_{p,1},\dots,b_{p,k}$ are defined for the variables
  $y_1,\dots,y_{k}$ with $k \in [0,j-1]$, so that the
  induction hypotheses hold. We provide the value
  $b_{p,k+1}$ for~$y_{k+1}$ while
  keeping~\eqref{th:l-t-g:internal:IH1p}
  and~\eqref{th:l-t-g:internal:IH2p} satisfied. We divide
  the proof into two cases, depending on whether there is a
  term $g \in \terms(\Psi)$ with $\lv(g) = y_{k+1}$ such
  that $\ZZ g \cap \module_f(\Psi) \neq \{0\}$.

  \begin{description}
    \item[case $g$ does not exist.] 
      In this case,~\eqref{th:l-t-g:internal:IH1p} is
      fulfilled no matter the value of $b_{p,k+1}$, so we
      focus on finding such a value
      satisfying~\eqref{th:l-t-g:internal:IH2p}. It suffices
      to consider the system  
      \begin{align*}
        h(b_{p,1},\dots,b_{p,k},\,y_{k+1}) &\not\equiv 0 && \pmod p &
        h \in \sfterms{f}{\Psi} \text{ s.t.}~\lv(h) = y_{k+1}.
      \end{align*}
      Similarly to the system in~\Cref{eq:case-p-alphai},
      writing $c_h + a_h \cdot y_{k+1}$ for
      $h(b_{p,1},\dots,b_{p,k},\,y_{k+1})$, we obtain the
      equivalent system of non-congruences 
      \begin{align*}
        y_{k+1} &\not\equiv -a_h^{-1} \cdot c_h && \pmod p &
        h \in \sfterms{f}{\Psi} \text{ s.t.}~\lv(h) = y_{k+1}.
      \end{align*}
      Since $p \not\in \pzero(\Psi)$ and
      from~\ref{pzero:1}, this system admits a solution
      $b_{p,k+1}$ in $[0,p-1]$. Note
      that~\eqref{th:l-t-g:internal:IH2p} is satisfied,
      since every polynomial in that hypothesis is
      considered in these non-congruence systems. 
    \item[case $g$ exists.] Recall that $g$ is a polynomial
    in $\terms(\Psi)$ such that $\lv(g) = y_{k+1}$ and $\ZZ
    g \cap \module_f(\Psi) \neq \{0\}$. Let $u \coloneqq
    v_p(f(\vec \nu(\vec x)))$. In order to
    satisfy~\eqref{th:l-t-g:internal:IH1p} it suffices to
    find $b_{p,k+1} \in \ZZ$ satisfying the following
    (non-empty) system of non-congruences 
    \begin{align*}
      &\forall g \in \terms(\Psi) \text{ s.t.}~\lv(g) = y_{k+1} \text{ and } \ZZ g \cap \module_f(\Psi) \neq \{0\},\\
      &\qquad\begin{aligned}
        g(b_{p,1},\dots,b_{p,k},\,y_{k+1}) &\equiv 0 && \pmod {p^u}\\
        g(b_{p,1},\dots,b_{p,k},\,y_{k+1}) &\not\equiv 0 && \pmod {p^{u+1}}.
        \end{aligned}
    \end{align*}
    Similarly to the system in~\Cref{eq:case-p-alphai},
      writing $c_g + a_g \cdot y_{k+1}$ for
      $g(b_{p,1},\dots,b_{p,k},\,y_{k+1})$, we obtain the
      equivalent system of non-congruences 
    \begin{align}
      \label{eq:l-to-g:hard-system}
      &\forall g \in \terms(\Psi) \text{ s.t.}~\lv(g) = y_{k+1} \text{ and } \ZZ g \cap \module_f(\Psi) \neq \{0\},\\
      &\notag\qquad\begin{aligned}
        y_{k+1} &\equiv -a_g^{-1} \cdot c_g && \pmod {p^u}\\
        y_{k+1} &\not\equiv -a_g^{-1} \cdot c_g && \pmod {p^{u+1}}.
        \end{aligned}
    \end{align}
    Focus on the congruences $y_{k+1} \equiv -a_g^{-1} \cdot
    c_g \pmod {p^u}$ of this system. These only have a
    solution if the right-hand side is the same modulo $p^u$
    for every $g \in \terms(\Psi)$ with $\lv(g) = y_{k+1}$
    and $\ZZ g \cap \module_f(\Psi) \neq \{0\}$. We prove
    that this is indeed the case. Consider~$g_1$ and $g_2$
    such that $g_i \in \terms(\Psi)$ with $\lv(g_i) =
    y_{k+1}$ and ${\ZZ g_i \cap \module_f(\Psi) \neq
    \{0\}}$, for~${i \in \{1,2\}}$. Let $\lambda_1$ and
    $\lambda_2$ be the smallest positive integers such that
    both $\lambda_1 \cdot g_1$ and $\lambda_2 \cdot g_2$
    belong to~$\module_f(\Psi)$. By definition of
    divisibility module and \mbox{$S$-polynomial},
    ${S(\lambda_1 \cdot g_1, \lambda_2 \cdot g_2) \in
    \module_f(\Psi) \cap \ZZ[x_1,\dots,x_d,y_1,\dots,y_k]}$.
    According to the elimination property of $\Psi$, there
    is a (finite) basis $B$ for $\module_f(\Psi) \cap
    \ZZ[x_1,\dots,x_d,y_1,\dots,y_k]$ such that for every
    ${h \in B}$, $f \div h$ is a divisibility in $\Psi$.
    Moreover, $\lv(h) \incordeq y_k$ and thus
    by~\eqref{th:l-t-g:internal:IH1p} we get $v_p(h(\vec
    \nu(\vec x),b_{p,1},\dots,b_{p,k})) = u$. Now, since
    $S(\lambda_1 \cdot g_1, \lambda_2 \cdot g_2)$ is a
    linear combination of elements in $B$, we conclude that
    $p^u \div S(\lambda_1 \cdot g_1, \lambda_2 \cdot g_2)$.
    By writing $g_i(\vec x, y_1,\dots,y_{k+1})$ as
    $g_i'(\vec x, y_1,\dots,y_k) + a_i \cdot y_{k+1}$, for
    $i \in \{1,2\}$, this divisibility can be rewritten as
    the congruence:
    \[ 
      (\lambda_2 \cdot a_{2}) \cdot (\lambda_1 \cdot g_1') \equiv (\lambda_1 \cdot a_1) \cdot (\lambda_2 \cdot g_2') \pmod {p^u}.
    \]
    From~$p \not \in \pzero(\Psi)$,~\ref{pzero:2}
    and~\ref{pzero:3}, we conclude that $p \nmid \lambda_1
    \cdot \lambda_2 \cdot a_1 \cdot a_2$. By multiplying
    both sides of the above congruence by the inverse
    $(\lambda_1 \cdot \lambda_2 \cdot a_1 \cdot a_2)^{-1}$
    of $\lambda_1 \cdot \lambda_2 \cdot a_1 \cdot a_2$
    modulo $p^u$, we conclude that $a_{1}^{-1} \cdot g_1'
    \equiv a_2^{-1} \cdot g_2' \pmod {p^u}$. This shows that
    the right-hand side is the same across all the
    congruences and non-congruences of the system
    in~\Cref{eq:l-to-g:hard-system}. Moreover, $p >
    \card{\terms(\Psi)}$ by~\ref{pzero:1}, and therefore
    this system is feasible, and more precisely has a
    solution $b_{p,k+1}$ of the form $b_{p,k+1} \coloneqq
    p^u \cdot \gamma$ for some $\gamma \in [1,p-1]$. Pick
    such a solution, which by construction
    satisfies~\eqref{th:l-t-g:internal:IH1p}.
     
    We show that $b_{p,k+1}$ also
    satisfies~\eqref{th:l-t-g:internal:IH2p}. Here is where
    the existence of the polynomial ${g \in \terms(\Psi)}$
    satisfying $\lv(g) = y_{k+1}$ and $\ZZ g \cap
    \module_f(\Psi) \neq \{0\}$ plays a role. From $\ZZ g
    \cap \module_f(\Psi) \neq \{0\}$ and since $\Psi$ has
    the elimination property, we can find a polynomial~$g_0$
    such that $f \div g_0$ is in $\Psi$, and $\lv(g_0) =
    y_{k+1}$. We prove~\eqref{th:l-t-g:internal:IH2p}
    arguing by contraposition. Let $h \in \sfterms{f}{\Psi}$
    such that $\lv(h) = y_{k+1}$ and $p \div h(\vec \nu(\vec
    x),b_{p,1},\dots,b_{p,k+1})$. If~$S(h,g_0)$ is zero,
    i.e.,~$h$ and $g_0$ are linearly dependent, then $\ZZ h
    \cap \module_f(\Psi) \neq \{0\}$ follows by definition
    of $g_0$, and~\eqref{th:l-t-g:internal:IH2p} holds for
    $h$. Suppose that~$S(h,g_0)$ is non-zero. From the
    construction of $b_{p,k+1}$ and since~$g_0$ is a
    polynomial considered in~\Cref{eq:l-to-g:hard-system},
    we have $p \div g_0(\vec \nu(\vec
    x),b_{p,1},\dots,b_{p,k+1})$. Then, by definition of
    $S$-polynomial, $p \div S(h,g_0)(\vec \nu(\vec
    x),b_{p,1},\dots,b_{p,k})$. By definition of
    $\sfterms{f}{\Psi}$, note that $h \in \sfterms{f}{\Psi}$
    and $g_0 \in \terms(\Psi)$ implies $S(h,g_0) \in
    \sfterms{f}{\Psi}$. Since $S(h,g_0)$ is non-zero, the
    induction hypothesis~\eqref{th:l-t-g:internal:IH2p}
    implies that $\ZZ S(h,g_0) \cap \module_f(\Psi) \neq
    \{0\}$. Then, $\ZZ h \cap \module_f(\Psi) \neq \{0\}$
    follows directly from the fact that $f \div g_0$ appears
    in $\Psi$ (and so $\ZZ g_0 \cap \module_f(\Psi)$). Once
    more, we conclude that~\eqref{th:l-t-g:internal:IH2p}
    holds for $h$.
  \end{description}
  \end{description}
\end{description} 
Following the case analysis above, we construct solutions
$\vec b_p$ for $\Psi'(\vec y)$ modulo $p$, for every $p \in
\pzero(\Psi')$. This concludes the proof
of~\Cref{claim:new-primes-are-ok}.
\end{proof}

\section{\Cref{theorem:local-to-global}: proof
of~\Cref{eq:gamma-inductive-bound}}
\label{appendix:l-t-g:bounds-on-gamma}

We recall that $\underline{O} \in \pZZ$ is the minimal
positive integer greater or equal than $4$ such that the map
${x \mapsto \underline{O}(x+1)}$ upper bounds the linear
functions hidden in the $\bigO{.}$ appearing
in~\Cref{lemma:add-elimination-property}. 
The integer~$\Gamma(r,\ell,w,m,d)$, with $r,\ell,w,m,d \in \pZZ$ and $r \leq d$,
is the maximum bit length of the minimal positive solution
of any system of divisibility constraints $\Phi$ such that:
\begin{itemize}
  \item $\Phi$ is $r$-increasing.
  \item The maximum bit length of a coefficient or constant
  appearing in $\Phi$, i.e., $\maxbl{\Phi}$, is at most
  $\ell$.
  \item For every $p \in \pdiff(\Phi)$, consider a solution
  $\vec b_p$ of $\Phi$ modulo $p$ minimizing $\mu_p
  \coloneqq \max\{{v_p(f(\vec b_p))} : f \text{ is in the
  left-hand side of a divisibility in } \Phi \}$. Then,
  $\log_2\left(\prod_{p \in \pdiff(\Phi)} p^{\mu_p + 1}
  \right) \leq  w$.
  \item $\Phi$ has at most $m$ divisibilities.
  \item $\Phi$ has at most $d$ variables.
\end{itemize}
Since we want to find an upper bound for $\Gamma$, assume
without loss of generality that $\Gamma(r,\ell,w,m,d)$ is
always at least $\min(\ell,w)$. Let us prove~\Cref{eq:gamma-inductive-bound}.

\EqGammaInductiveBound*

\proofparagraph{Analysis on~$\Gamma(1,\ell,w,m,d)$} 
This case corresponds to the base case of the main
induction, where the solutions are found thanks to the
system of congruences in~\Cref{eq:l-t-g:base-case}, where
for $p \in \pdiff(\Phi)$, $\mu_p \coloneqq \max \{
v_p(f(\vec b_p)) : f \text{ is in the left-hand side of a
divisibility of } \Phi \}$. From the Chinese remainder
theorem, this system of congruences has a solution where
every variable is in $[1,\prod_{p \in \pdiff(\Phi)}
p^{\mu_p+1}]$. Therefore, every variable is bounded by
$2^{w}$ by definition of $w$, and therefore its bit length
is bounded by $w + 3$, since $\bitlength{x} = 1 +
\ceil{\log_2(|x|+1)} \leq \ceil{\log_2(|x|)} + 2 \leq
\log_2(|x|) + 3$, and $w$ is positive.

% by~\Cref{lemma:bound-on-pzero} \am{to update bounds
% because of new Lemma} we have $\card P \leq 12 \cdot (2
% \cdot m+1)^7 \cdot (d+2) \cdot \bitlength{\norminf{\Phi}}$
% and the bit length of the largest prime in $P$ is bounded
% by $(m+3)^2 \cdot (\maxbl{\Phi}+1)$. Note that
% $\bitlength{a \cdot b} \leq \bitlength{a} +
% \bitlength{b}$. Therefore, the bit length of the  minimal
% positive solution to~\Cref{eq:l-t-g:base-case} is bounded
% by

% \am{Additional steps for the analysis below commented
% out.} \begin{align*} \card{P} \cdot \maxbl{P} \cdot
% (\mu+1) & \leq 12 \cdot (2 \cdot m+1)^7 \cdot (d+2) \cdot
% \bitlength{\norminf{\Phi}} \cdot (m+3)^2 \cdot
% (\maxbl{\Phi}+1) \cdot (\mu+1)\\
%   %   & \leq %   24 \cdot (2 \cdot m+1)^9 \cdot (d+2) %
%   \cdot (\maxbl{\Phi}+1)^2 \cdot (\maxbl{\Phi} +
%   \bitlength{d} + \eta+1)\\
%   % & \leq 24 \cdot (2 \cdot m+1)^9 \cdot (d+2) % \cdot
%   (\maxbl{\Phi}+1)^2 \cdot ((\maxbl{\Phi} + 1) \cdot
%   (\bitlength{d} + 1) \cdot (\eta+1))\\
%   & \leq 24 \cdot (2 \cdot m+1)^9 \cdot (d+2)^2 \cdot
%   (\maxbl{\Phi}+1)^3 \cdot (\eta+1). \end{align*}

\proofparagraph{Analysis on~$\Gamma(r,\ell,w,m,d)$ with $r \geq 2$} 
This case corresponds to the induction step of the main
induction, where the solutions are found thanks to the
system of (non)congruences in~\Cref{eq:l-to-g:non-cong-sys}.
At the start of the induction, we add the elimination
property to $\Phi$. According
to~\Cref{lemma:add-elimination-property}, we obtain a system
$\Psi$ with $n \leq m \cdot (d+2)$ divisibilities and
$\maxbl{\Psi} 
\leq \underline{O}(m^3d + 1) \cdot \log_2((d+1) (m+\norminf{\Phi}+2))+3$. We find solutions $\vec b_p$ for $\Psi$ modulo $p$,
for every $p \in \pzero(\Psi)$. For $p \in \pdiff(\Phi)$,
these are the solutions $\vec b_p$ for $\Phi$ modulo $p$
stated in the hypothesis of the theorem. For $p \in
\pzero(\Psi) \setminus \pdiff(\Phi)$, we compute $\vec b_p$
as a solution for $\Phi$ modulo $p$, taken such that for
every $f$ left-hand side of a divisibility in $\Phi$,
$v_p(f(\vec b_p)) = 0$. The existence of such a solution is
guaranteed by~\Cref{lemma:simple-primes}, and as discussed
when presenting the procedure the vector~$\vec b_p$ is a
solution for $\Psi$ modulo $p$ such that for every $f$
left-hand side of a divisibility in $\Psi$, $v_p(f(\vec
b_p)) = 0$. As usual, given $p \in \pzero(\Psi)$, let $\mu_p
\coloneqq \max \{ v_p(f(\vec b_p)) : f \text{ is in the
left-hand side of a divisibility of } \Psi \}$. 

Suppose that the set $X_1 = \{x_1,\dots,x_{d'}\}$ of
variables considered in this step is ordered as ${x_1
\incord \dots \incord x_{d'}}$ (with $d' \leq d$). Recall
that the values assigned to these variables are chosen
inductively, starting with $x_1$ and following the
order~$\incord$. Let $\vec \nu$ be the map computed in this
way. Given $k \in [0,d-1]$, at the $(k+1)$-th iteration we
defined the set $P_k$ as 
\[
    P_k \coloneqq \left\{ p \in \PP : p \in \pzero(\Psi) \text{ or there is } h \in S(\sterms(\Psi)) \backslash \{0\} \text{ s.t.}~\lv(h) \incordeq x_k \text{ and } p \mid h(\vec  \nu(x_1,\dots,x_k)) \right\},
\]
and added to it the smallest prime not in $\pzero(\Psi)$, if
the above definition yields $P_k = \pzero(\Psi)$.

For simplicity, below let $s \coloneqq
\card{S(\sterms(\Psi))}$, $t \coloneqq
\norminf{S(\sterms(\Psi))}$ and $\wprime \coloneqq
\log_2(\prod_{p\in\pzero(\Psi)}p^{\mu_p+1})$, which are all
at least $1$. 

Inductively on $k \in [0,d-1]$, we show that
$\log_2(\vec \nu(x_{k+1})) \leq B$ where 
\[ 
  B \coloneqq C \cdot (\log_2(C))^3 
  \quad \text{and} \quad 
  C \coloneqq 
  2^4 \cdot \wprime \cdot s^3 \cdot 
  \big(5 + \log_2\log_2(t \cdot (d+1))\big)^2.
\]
Therefore, $\bitlength{\vec \nu(x_{k+1})} \leq B+3 \leq
2^{18} \cdot s^4 \cdot \big(5 + \log_2\log_2(t \cdot
(d+1))\big)^3 \cdot \wprime \cdot (\log_2(\wprime)+2)^3$,
where this last inequality follows from a straightforward
computation together with the fact that $(\log_2(x))^3 \leq
5 \cdot x$ for every $x \geq 1$. Note that we do not
simplify $(\log_2(\wprime+2))^3$ into $5 \cdot (\wprime+2)$,
as this would yield an exponentially worse bound for
$\Gamma(r,\ell,\eta,m,d)$ later on.

\begin{description}
  \item[base case $k = 0$.] 
    In this case, $P_0 = \pzero(\Psi) \cup \{p\}$ where $p$
    is the smallest prime not in $\pzero(\Psi)$. Then,
    $\card{P_0} = \card{\pzero(\Psi)} + 1$. We bound $\vec
    \nu(x_1) \in \pZZ$ by applying~\Cref{thm:mixed-crt} to
    the system of (non)congruences
    in~\Cref{eq:l-to-g:non-cong-sys}. We get:
    \begin{align*}
      \vec \nu(x_1) 
      &\leq \Big(\prod_{p\in\pzero(\Psi)}p^{\mu_p+1}\Big) \cdot \big((s+1) \cdot \card{(P_0 \setminus \pzero(\Psi))}\big)^{4 \cdot (s+1)^2 (3 + \ln \ln (\card{(P_0 \setminus \pzero(\Psi))}+1))}\\
      &\leq \Big(\prod_{p\in\pzero(\Psi)}p^{\mu_p+1}\Big) \cdot (s+1)^{12 \cdot (s+1)^2}
    \end{align*}
    Therefore, $\log_2(\vec \nu(x_1)) \leq \wprime + 12 \cdot
    (s+1)^2 \log(s+1)$.
  \item[induction step $k \geq 1$.] 
    Let us first bound $\card{(P_k \setminus
    \pzero(\Psi))}$. By definition, 
    \[ 
      P_k \setminus \pzero(\Psi) = \{ p \in \PP \setminus \pzero(\Psi) : \lv(h) \incordeq x_k \text{ and } p \div h(\vec \nu(x_1,\dots,x_k))\text{ for some }h \in S(\sterms(\Psi)) \setminus \{0\}\}.
    \]
    By induction hypothesis, for every $h \in
    S(\sterms(\Psi))$, $\abs{h(\vec \nu(x_1,\dots,x_k))}
    \leq (k \cdot 2^B + 1) \cdot t$, and therefore
    $\card{(P_k \setminus \pzero(\Psi))} \leq s \cdot
    \log_2((k \cdot 2^B + 1) \cdot t) \leq s \cdot
    \log_2(2^B \cdot t \cdot (d+1))$. Note that $s \cdot
    \log_2(2^B \cdot t \cdot (d+1)) \geq 1$, hence this
    bound on $\card{(P_k \setminus \pzero(\Psi))}$ already
    capture the case where one prime had to be added to
    $P_k$ in order to make this set different form
    $\pzero(\Psi)$. We bound $\vec \nu(x_1) \in \pZZ$ by
    applying~\Cref{thm:mixed-crt} to the system of
    (non)congruences in~\Cref{eq:l-to-g:non-cong-sys}:
    \begin{align*}
      \vec \nu(x_{k+1}) 
        &\leq \Big(\prod_{p\in\pzero(\Psi)}p^{\mu_p+1}\Big) \cdot \big((s+1) \cdot \card{(P_k \setminus \pzero(\Psi))}\big)^{4 \cdot (s+1)^2 (3 + \ln \ln (\card{(P_k \setminus \pzero(\Psi))}+1))}\\
        &\leq \Big(\prod_{p\in\pzero(\Psi)}p^{\mu_p+1}\Big) \cdot \big((s+1)^2 \cdot\log_2(2^B t \cdot (d+1))\big)^{4 \cdot (s+1)^2 (3 + \ln \ln (1+ s \cdot \log_2(2^B t \cdot (d+1))))}.
    \end{align*}
    Then, a simple analysis using properties of logarithms
    shows that  
    $\log_2(\vec \nu (x_{k+1}))$ is at most  
    \begin{align*} 
      &2^4 \cdot \wprime \cdot s^3 \cdot 
      \big(5 + \log_2\log_2(t \cdot (d+1))\big)^2 \cdot (\log_2(B))^2\\
      =\,& C \cdot (\log_2(B))^2
      &\text{definition~of~$C$.}\\ 
      \leq\,& B,
    \end{align*}

    % STEPS TO PROVE ABOVE RESULT: \begin{itemize} \item
    % apply $\log_2$ to previous centred equation \item
    % define $T = \log_2(t \cdot (d+1))$; $t$ and $d+1$ will
    % always be together \item $\ln(1+x) \leq x$ ; $x \geq
    % 0$ \item $x+y \leq x(y+1)$ ; $x \geq 0 ,y \geq 1$
    % \item $(s+1)^2 \ln(s) \log_2(s+1) \leq 2s^3$ ; $s \geq
    % 1$ \item $\log_2(x+y) \leq \log_2(x) + \log_2(y)+1$
    % \item addition to multiplication \end{itemize}

    where the latter inequality holds from the fact that,
    whenever $C \geq 45$, every element~$x_i$ of the
    recurrence relation ${\big(x_0 = C,\ x_{i+1} = C \cdot
    (\log_2(x_{i}))^2\big)}$ is bounded by $C \cdot
    (\log_2(C))^3$, i.e., $B$.

\end{description}

We have established that the bit length of the solutions for
the variables in $X_1$ can be bounded with $B+3$. Next, we
want to bound $B+3$ using the arguments of $\Gamma$. To do
so, we first derive upper bounds for $s$, $t$ and $\wprime$.
For $s$ and $t$, from~\Cref{lemma:bound-sterms} we obtain $s
\leq 8 \cdot m^4 \cdot (d+2)^6$ and $\log_2(t) \leq 2 \cdot
(d+2) \cdot (\maxbl{\Phi}+1)+1$. For $\wprime$, we have 
{\allowdisplaybreaks
\begin{flalign*}
  \wprime 
  & \leq \log_2\Bigl( \prod_{p \in \pzero(\Psi)}p^{\mu_p + 1} \Bigr)\\
  & \leq \log_2\Bigl( \prod_{p \in \pzero(\Psi) \setminus \pdiff(\Phi)}p^{\mu_p + 1} \cdot \prod_{p \in \pdiff(\Phi)}p^{\mu_p + 1} \Bigr)\\
  & \leq \log_2\Bigl( \prod_{p \in \pzero(\Psi) \setminus \pdiff(\Phi)}p^{\mu_p + 1} \Bigr) + w\\
  & \leq \log_2\Bigl( \prod_{p \in \pzero(\Psi) \setminus \pdiff(\Phi)}p \Bigr) + w 
  & \hspace{-2cm}\mu_p = 0 \text{ for all } p \not\in \pdiff(\Phi)\\
  & \leq \log_2\Bigl( \prod_{p \in \pzero(\Psi)}p \Bigr) + w\\
  & \leq 64 \cdot n^5 (d+2)^4 (\maxbl{\Psi}+2) + w 
  & \text{by~\Cref{lemma:bound-on-pzero}}\\
  % & \leq 64 \cdot (m \cdot (d+2))^5 (d+2)^4 (\maxbl{\Psi}+2) + w\\
  & \leq 64 \cdot (m \cdot (d+2))^5 (d+2)^4 (\underline{O}(m^3d + 1) \cdot \log_2((d+1) (m+\norminf{\Phi}+2))+5) + w\\
  % & \leq 64 \cdot m^8 (d+2)^{11} (\underline{O} \cdot \log_2(m+\norminf{\Phi}+2)+5) + w\\
  % & \leq 64 \cdot m^8 (d+2)^{11} (\underline{O} \cdot (\log_2(m+1) + \log_2(\norminf{\Phi}+1)) + 5) + w\\
  % & \leq 64 \cdot m^8 (d+2)^{11} \underline{O} \cdot (\log_2(m+1) + 1) \cdot (\log_2(\norminf{\Phi}+1)+1) + w\\
  % & \leq 128 \cdot \underline{O} \cdot m^9 (d+2)^{11} \cdot (\log_2(\norminf{\Phi}+1)+1) + w\\
  & \leq 128 \cdot \underline{O} \cdot m^9 (d+2)^{11} \cdot (\ell + w).
\end{flalign*}
Then, $B+3$ is bounded as follows:
\begin{flalign*}
    B+3 & \leq 2^{18} \cdot s^4 \cdot \big(5 + \log_2\log_2(t \cdot (d+1))\big)^3 \cdot \wprime \cdot (\log_2(\wprime)+2)^3\\
    & \leq 2^{30} \cdot m^{16} (d+2)^{24}\big(5 + \log_2\log_2(t \cdot (d+1))\big)^3 \cdot \wprime \cdot (\log_2(\wprime)+2)^3
    & \text{bound on } s 
    \\
    & \leq 2^{38} \cdot m^{16} (d+2)^{25}  (1+\log_2(\maxbl{\Psi}+1))^3 \cdot \wprime \cdot (\log_2(\wprime)+2)^3
    & \text{bound on $\log_2(t)$}
    \\ 
    & \leq 2^{54} \cdot m^{17} (d+2)^{26} \log_2( \underline{O})^3 \cdot (2 + \log_2(\ell))^3 \cdot \wprime \cdot (\log_2(\wprime)+2)^3 & \text{bound on $\maxbl{\Psi}$}
    \\ & \leq 2^{104} \cdot m^{27} (d+2)^{38} \underline{O} \cdot \log_2( \underline{O})^6  \cdot (\ell + w) \cdot (\log_2(\ell + w))^6 & \text{bound on $\wprime$}.
\end{flalign*}
}

The procedure continues by recursively computing a positive
integer solution for the formula $\Phi'(\vec y) \coloneqq
\Phi\substitute{\vec \nu(\vec x)}{x : x \in X_1}$, which is
$s$-increasing for some $s \leq r-1$. In the recursion, the
procedure uses solutions $\vec b_p$ for $\Phi'$ modulo $p$
for every $p \in \pdiff(\Phi')$, computed according
to~\Cref{claim:phi-prime:new-primes-are-ok}. Hence, to
conclude the analysis on $\Gamma$, it suffices to find
positive integers $\ell', w', m', d'$ such that $\Phi'$ is
one of the formulae considered for $\Gamma(r-1, \ell', w',
m', d')$. Let us bound these integers:

\begin{itemize}
  \item $\Phi'$ has fewer variables and divisibilities than
  $\Phi$, therefore we can choose $m' = m$ and $d' = d$.
  \item The coefficients of the variables in the polynomials
  of $\Phi'$ are all from $\Phi$, therefore their bit-length
  is bounded by $\ell$. Let us bound the constants of the
  polynomials in $\Phi'$. These constants have the form
  $f(\vec \nu (\vec x))$ with $f$ being a polynomial with
  coefficients and constant bounded from~$\Phi$. So,
  $\maxbl{f(\vec \nu (\vec x))} \leq \bitlength{2^B \cdot
  \norminf{\Phi} \cdot d + \norminf{\Phi}}$, and from the
  bounds on $B+3$ we can set
  \begin{align*}
    \ell' = 2^{105} \cdot m^{27} (d+2)^{38} \underline{O} \cdot \log_2( \underline{O})^6 \cdot (\ell + w) \cdot (\log_2(\ell + w))^6.
  \end{align*}

  % \begin{align*}
  %   \maxbl{\Phi'} 
  %   &\leq \bitlength{2^{B} \cdot \norminf{\Phi} \cdot d + \norminf{\Phi}}\\
  %   &\leq \log_2(2^{B} \cdot \norminf{\Phi} \cdot d + \norminf{\Phi})+3\\
  %   &\leq B + 3 + \log_2(\norminf{\Phi}) + \log_2(d + 1)\\
  %   &\leq 2(B + 3)\\
  %   &\leq 2^{105} \cdot m^{27} \cdot (d+2)^{38} \cdot \underline{O} \cdot \log_2( \underline{O})^6 \cdot (\ell + w) \cdot (\log_2(\ell + w))^6
  % \end{align*}

  \item Let $\mu_p \coloneqq \max\{{v_p(f(\vec b_p))} : f
  \text{ is in the left-hand side of a divisibility in }
  \Phi' \}$. Thanks
  to~\Cref{claim:phi-prime:new-primes-are-ok}, if $p \in
  \pzero(\Psi)$, then $\mu_p =  \max\{{v_p(f(\vec b_p))} : f
  \text{ is in the left-hand side of a divisibility in }
  \Psi \}$, and otherwise if $p \not\in \pzero(\Psi)$, then
  $\mu_p$ is the $p$-adic valuation of a constant left-hand
  side of $\Phi'$. We derive the following bound
  on~$\log_2\Bigl( \prod_{p \in \pdiff(\Phi')}p^{\mu_p + 1}
  \Bigr)$, which yields a value for~$w'$:
  \begin{align*}
    & \log_2\Bigl( \prod_{p \in \pdiff(\Phi')}p^{\mu_p + 1} \Bigr)\\
    =\,& \log_2\Bigl( \prod_{p \in \pdiff(\Phi') \setminus \pzero(\Psi)}p^{\mu_p+1} \Bigr) 
    + \log_2\Bigl( \prod_{p \in \pdiff(\Phi') \cap \pzero(\Psi)}p^{\mu_p + 1} \Bigr) \\ 
    \leq \,& \log_2\Bigl( \prod_{p \in \pdiff(\Phi') \setminus \pzero(\Psi)}p^{\mu_p} \Bigr) 
    + \log_2\Bigl( \prod_{p \in \pdiff(\Phi') \setminus \pzero(\Psi)}p \Bigr) 
    + \log_2\Bigl( \prod_{p \in \pzero(\Psi)}p^{\mu_p + 1} \Bigr) \\ 
    \leq \,& \log_2\Bigl( \prod_{\substack{\alpha \text{ constant and}\\ 
    \text{left-hand side~in $\Phi'$}}} \alpha \Bigr) 
    + \log_2\Bigl( \prod_{p \in \pdiff(\Phi')}p \Bigr) 
    + \wprime 
    &\text{from~\Cref{claim:phi-prime:new-primes-are-ok}}\\
    \leq \,& m \cdot \maxbl{\Phi'}
    + \log_2\Bigl( \prod_{p \in \pdiff(\Phi')}p \Bigr) 
    + \wprime\\
    \leq \,&m \cdot \maxbl{\Phi'}
    + m^2(d+2)(\maxbl{\Phi'}+2)
    + \wprime
    &\text{from~\Cref{lemma:bound-on-pzero}}\\
    \leq \,& 2^{109} \cdot m^{29} (d+2)^{39} \underline{O} \cdot \log_2( \underline{O})^6 \cdot (\ell + w) \cdot (\log_2(\ell + w))^6 = w'.
  \end{align*}
\end{itemize}

Note that since the bound we obtained for $\ell'$ is greater
than $B+3$, the value 
\[\Gamma(r-1, \ 2^{104} \cdot m^{27} (d+2)^{38}
 \underline{O} \cdot \log_2( \underline{O})^6 \cdot (\ell +
 w) \cdot (\log_2(\ell + w))^6, \ w', \ m, \ d)\] bounds not
 only the bit length of the minimal positive solution of
 $\Phi'$, but also of the solutions assigned to variables
 in~$X_1$. This concludes the proof of~\Cref{eq:gamma-inductive-bound}.

\clearpage
\bibliographystyle{plainnat}
\bibliography{bibliography}

\end{document}